\newcommand{\todo}[1]{}
\newcommand{\vote}[1]{}
\definecolor{rltred}{rgb}{0.75,0,0}
\definecolor{rltgreen}{rgb}{0,0.5,0}
\definecolor{rltblue}{rgb}{0,0,0.75}
\newtheorem{theorem}{Theorem}
\newtheorem{lemma}[theorem]{Lemma}
\newtheorem{corollary}[theorem]{Corollary}
\newtheorem{definition}[theorem]{Definition}
\def\paragraph#1{\textbf{* #1}\\}
\def\figfontsize#1#2{\tmp{9}{9}}
\newcommand{\inputfig}[1]{
    \let\tmp\fontsize%
    \let\fontsize\figfontsize%
    \input{#1}%
    \let\fontsize\tmp}%}
\let\defword\emph
\newcommand{\conclusion}{\\[1ex]\nopagebreak}
\newcommand{\UV}{\textup{\textrm{UV}}}
\newcommand{\UC}{\textup{\textrm{UC}}}
\newcommand{\UCo}{\textup{\textrm{UC$^{\circ}$}}}
\newcommand{\UH}{\textup{\textrm{UH}}}
\newcommand{\LE}{\textup{\textrm{LE}}}
\newcommand{\SC}[1]{\bar{#1}}
\newcommand{\IT}{\mathbb{T}}
\newcommand{\rX}[1]{r\!_{\SC{#1}}}
\newcommand{\rA}{\rX{A}}
\newcommand{\rB}{\rX{B}}
\newcommand{\CSP}{P}
\newcommand{\argmax}{\textrm{argmax}}
\newcommand{\cal}{}
\newcommand{\Lray}[1]{\overleftarrow{#1}}
\newcommand{\Rray}[1]{\overrightarrow{#1}}
\newcommand{\LrayI}[1]{\smash{\overset{{%
\setlength{\unitlength}{0,18ex}%
\begin{picture}(10,6)(0,4.5)
\put(3,5){\line(1,0){6}}%
\put(3,5){\circle*{4}}%
\end{picture}}}{#1}}}
\newcommand{\RrayI}[1]{\smash{\overset{{%
\setlength{\unitlength}{0,18ex}%
\begin{picture}(10,1)(0,4.5)
\put(0,5){\line(1,0){6}}%
\put(8,5){\circle*{4}}%
\end{picture}}}{#1}}}
\newcommand{\LrayP}[1]{\smash{\overset{{%
\setlength{\unitlength}{0,18ex}%
\begin{picture}(10,6)(0,4.5)
\put(4.5,5){\line(1,0){5}}%
\put(2.5,5){\circle{4}}%
\end{picture}}}{#1}}}
\newcommand{\RrayP}[1]{\smash{\overset{{%
\setlength{\unitlength}{0,18ex}%
\begin{picture}(10,1)(0,4.5)
\put(0,5){\line(1,0){6}}%
\put(8,5){\circle{4}}%
\end{picture}}}{#1}}}
\newcommand{\DiSeP}{\textup{\textsc{DisjointSet$_{n,k}$}}}
\newcommand{\floor}[1]{{\lfloor{#1}\rfloor}}
\newcommand{\Rset}{\mathbb{R}}
\newcommand{\pc}[1]{\textcolor{rltgreen}{\sc\lowercase{#1}}} %   functions/procedures/methods
\newcommand{\PC}[1]{\textcolor{rltgreen}{\sc\lowercase{#1}}} %   functions/procedures/methods
\gdef\inwtype#1{\hfill\hbox{#1}&}
\gdef\inftype#1{#1}
\gdef\inarg#1{#1}
\gdef\innarg{\hfil\cr}%
\gdef\inlinearg{,}%
\gdef\inwret#1{#1~}
\gdef\infret#1{$#1$}
\gdef\away#1{}
\gdef\Tsituation#1{\textbf{Situation:} #1\\}
\gdef\Msituation#1{\textbf{Situation:}\;\Indp {\let\mynewline\; #1}\;\Indm}
\gdef\Tresult#1{\textbf{Result:} #1 \\}
\gdef\Mresult#1{\textbf{Result:}\;\Indp {\let\mynewline\; #1}\;\Indm}
\gdef\AlgSit#1{{\textbf{Pre:} {\def\mynewline{\;\quad} #1} } \; }
\gdef\AlgResult#1{{\textbf{Post:} {\def\mynewline{\;\quad} #1} }\; }
\long\gdef\defproc#1#2#3#4#5#6#7{% #1 shorthandname,  #2 longname, 
% #3 Argument list using \wtype,\ftype,\narg  \sit
% #4 Return type, using \res (resulting invariant)
% #5 pseudocode
\expandafter\gdef\csname sit#1\endcsname{#5}%
\expandafter\gdef\csname res#1\endcsname{#6}%
\expandafter\gdef\csname pc#1\endcsname{{\PC{#2}}}%
\expandafter\gdef\csname pt#1\endcsname{%
  \let\wtype\inwtype\let\ftype\inftype\let\arg\inarg\let\narg\innarg%
  \let\wret\inwret\let\fret\infret%
  {#4}%
  {\PC{#2}%
  \(\left(\begin{matrix}#3\end{matrix}\right)\)}}%
\expandafter\gdef\csname st#1\endcsname{
  \let\wtype\away\let\ftype\away\let\arg\inarg\let\narg\inlinearg%
  \let\wret\away\let\fret\infret%
  {#4}%
  {\PC{#2}%
  \hbox{$\left(\begin{matrix}#3\end{matrix}\right)$}}}
\expandafter\gdef\csname stWith#1\endcsname##1##2{
  \let\wtype\away\let\ftype\away\let\arg\inarg\let\narg\inlinearg%
  \let\wret\away\let\fret\infret%
  {\(##1\)#4}%
  {\PC{#2}%
  \hbox{\(\left(\begin{matrix} #3 \end{matrix}\,=\, ##2\right) \)}}}
\expandafter\gdef\csname code#1\endcsname{
  \begin{algorithm2e}
  \def\\{\;\leftskip=0pt\skiptotal=0.5em}
  \def\>{\Indp}%\skiptotal=0.5em
  \caption{#2}
  \label{alg:#1}
  \BlankLine
  \hbox{\csname pt#1\endcsname}
  \BlankLine
  \let\sit\AlgSit\let\res\AlgResult\let\mynewline\;%
  #5 #6\BlankLine #7%
  \end{algorithm2e}
  \expandafter\gdef\csname code#1\endcsname{} % the code should never be printed twice
}%
\immediate\write\allOut{\noexpand\code{#1}\noexpand\filbreak}}
\def\code#1{\ifcsname code#1\endcsname\csname code#1\endcsname \else \errmessage{no pseudocode for #1 provided}\fi}
\def\pt#1{\par\hbox{\ifcsname pt#1\endcsname\csname pt#1\endcsname\else \errmessage{no pseudocode for #1 provided}\fi}\par}
\def\pcwo#1{\hbox{\ifcsname pc#1\endcsname\csname pc#1\endcsname\else \errmessage{no pseudocode for #1 provided}\fi}}
\def\st#1{\hbox{\ifcsname st#1\endcsname\csname st#1\endcsname\else \errmessage{no pseudocode for #1 provided}\fi}}
\def\FullCall#1#2#3{$\left[
    \begin{array}[c]{c}
      #2 = x(#3)\\
      \hbox{\ifcsname pt#1\endcsname\csname pt#1\endcsname\else \errmessage{no pseudocode for #1 provided}\fi}
    \end{array}
  \right]$}
\def\FullCallWith#1#2#3{%
{\let\sit\Tsituation\def\mynewline{; }\csname sit#1\endcsname}$ #2 = x(#3)$\\
\hbox{\ifcsname pt#1\endcsname\csname pt#1\endcsname\else \errmessage{no pseudocode for #1 provided}\fi}\\
{\let\res\Tresult\def\mynewline{; }\csname res#1\endcsname}\\
}
\def\FullCallNoArgs#1{%
{\let\sit\Tsituation\def\mynewline{; }\csname sit#1\endcsname}%
\hbox{\ifcsname pt#1\endcsname\csname pt#1\endcsname\else \errmessage{no pseudocode for #1 provided}\fi}\\
{\let\res\Tresult\def\mynewline{; }\csname res#1\endcsname}\\
}
\def\myCall#1#2#3{\Call({\hbox{#1 (Algorithm~\ref{alg:#3})}}){#2}}
\def\AlgFullCallWithOut#1#2#3#4{% #2#3 ignored
  {\lnl{#4}\expandafter\myCall{\hbox{\ifcsname st#1\endcsname\csname st#1\endcsname\else \errmessage{no pseudocode for #1 provided}\fi}}{%
  {\let\sit\AlgSit\csname sit#1\endcsname}% $ #2 = x(#3)$\\
  {\let\res\AlgResult\csname res#1\endcsname}
}{#1}}%\BlankLine
}
\def\AlgFullCallWith#1#2#3#4{%
  {\lnl{#4}\expandafter\myCall{\hbox{\ifcsname st#1\endcsname\csname stWith#1\endcsname{#2}{#3}\else \errmessage{no pseudocode for #1 provided}\fi}}{%
  {\let\sit\AlgSit\csname sit#1\endcsname}% $ #2 = x(#3)$\\
  {\let\res\AlgResult\csname res#1\endcsname}
}{#1}}%\BlankLine
}
\def\ItemFullCall#1#2{%
\item Line~\ref{#2} %
  \st{#1}\\%%
  \def\mynewline{\\}%
  {\let\sit\Tsituation\csname sit#1\endcsname}% $ #2 = x(#3)$\\
  {\let\res\Tresult\csname res#1\endcsname}
}
\def\XFullCall#1#2#3{\FullCall{#1}{}{#3}\textcolor{red}{$ #2 $}}
\def\XFullCall#1#2#3{$\hbox{\csname pt#1\endcsname}(#3)\textcolor{red}{#2}$}
\newwrite\allOut
\begin{document}
%\svnInfo $Id$
%\svnKeyword $HeadURL$

%\begin{frontmatter}

%\title{Amortized Optimal Dynamic Planar Convex Hull by Querying}
\title{Dynamic Planar Convex Hull}

\author{Riko Jacob%
}
\email{rikj@itu.dk}
\urladdr{http://www.itu.dk/people/rikj/}
%\email{jacob@in.tum.de}
%\urladdr{http://www14.in.tum.de/personen/jacob/}
%\email{rjacob@inf.ethz.ch}
%\urladdr{http://www.inf.ethz.ch/personal/rjacob/}
\address{%
IT University of Copenhagen,
Rued Langgaards Vej 7, 
2300 Copenhagen S, Denmark
}
\thanks{Main part of the work done while staying at BRICS, Basic Research in Computer
  Science, Department of Computer Science, Aarhus University {\tt www.brics.dk}.   
  Partially supported by the
      Future and Emerging Technologies programme of the EU under
      contract number IST-1999-14186 (ALCOM-FT).
      Supported by the Carlsberg Foundation (contract number
      ANS-0257/20).}

\author{Gerth St{\o}lting Brodal%
}
\email{gerth@cs.au.dk}
\urladdr{www.cs.au.dk/\~\/gerth}

\address{MADALGO (Center for Massive Data Algorithmics - a Center of
  the Danish National Research Foundation), Department of Computer Science, Aarhus University,
  {\AA}bogade 34, 8200 Aarhus~N, Denmark.}

%%%%%%%%%%%%%%%%%%%%%%%%%%%%%%%%%%%%%%%%%%%%%%%%%%%%%%%%%%%%%%%%%%%%%%

\vote{is ``by querying'' good?}
\begin{abstract}
  In this article, we determine the amortized computational complexity
  of the planar dynamic convex hull problem by querying.
  We present a data structure that maintains a set of~$n$ points in
  the plane under the insertion and deletion of points in
  amortized~$O(\log n)$ time per operation.
  The space usage of the data structure is~$O(n)$.
  The data structure supports extreme point queries in a given 
  direction, tangent queries through a given point, and queries for
  the neighboring points on the convex hull in $O(\log n)$ time.  
  The extreme point queries can be used to decide whether or not a given
  line intersects the convex hull, and the tangent queries to determine
  whether a given point is inside the convex hull.
  We give a lower bound on the amortized asymptotic time complexity
  that matches the performance of this data structure.

\textbf{Categories and Subject Descriptors:}
 E.1 [Data]: Data Structures; F.2.2 [Analysis of Algorithms and
 Problem Complexity]: Nonnumerical Algorithms and
 Problems---geometrical problems and computations 
% General Terms: Algorithms, Theory
% Additional Key Words and Phrases: Computational geometry, convex
% hulls, dynamic data structures
% \sep

\textbf{Additional Keywords:}
Planar computational geometry, dynamic convex hull, lower
bound, data structure, search trees, finger searches

%{\tiny\svnInfoHeadURL\ Rev: \svnInfoRevision}\\
%{\hbox{\tiny\svnstat}}
\end{abstract}

%\end{keyword}
% PACS codes here, in the form: \PACS code \sep code

%\end{frontmatter}

\maketitle
% main text
%\clearpage
%% \show\contentsline
%% \let\xxcon\contentsline
%% \def\contentline{\vspace*{-2ex}\xxcon}
%\tableofcontents
%\clearpage

%%%%%%%%%%%%%%%%%%%%%%%%%%%%%%%%%%%%%%%%%%%%%%%%%%%%%%%%%%%%%%%%%%%%%%%%%%%%%
\section{Introduction}
\label{sec:intro}
%%%%%%%%%%%%%%%%%%%%%%%%%%%%%%%%%%%%%%%%%%%%%%%%%%%%%%%%%%%%%%%%%%%%%%%%%%%%%

\footnotetext[1]{The results presented here have been published in the
  conference 
  version~\cite{FOCS02} of this article, and in the PhD-Thesis of Riko
  Jacob~\cite{RikoPhD}.}

The convex hull of a set of points in the plane is a
well studied object in computational geometry.
It is well known that
the convex hull of a static set of~$n$ points can be computed in 
optimal worst-case $O(n\log n)$ time, e.g., with Graham's
scan~\cite{graham72} or 
Andrew's vertical sweep line variant~\cite{andrew79} of it.
%The more general task of computing a Delauny diagram of a set of points can also be done in expected $O(n \log n)$ time by a randomized  incremental construction~\cite{guibas92}.\todo{find worst case complexity; really mention delauny?}
Optimal output sensitive convex hull algorithms are due to Kirkpatrick and
Seidel~\cite{kirkpatrick86} and also to Chan~\cite{chan96}, who
achieve $O(n\log k)$ time, where~$k$ denotes the number of vertices on
the convex hull.

In this article we consider the dynamic setting, where we have a set~$S$
of points in the plane that can be changed by the insertion and deletion
of points.
Observing that a single insertion or deletion can change the convex
hull of~$S$ by $|S|-2$ points, reporting the changes to the convex
hull is in many applications not desirable. Instead of reporting the
changes, one maintains a data structure that allows queries for points
on the convex hull, an approach taken in most of the previous work
(very clearly in~\cite{chan01}, but implicitly already
in~\cite{OvL81}). 
Typical examples of queries are the extreme point in a given
direction~$d$, the 
tangent line(s) to the hull that passes through a given point~$q$, whether or
not a point~$p$ is inside the convex hull, and the segments of the convex hull
intersected by a given line~$\ell$.
%, the bridges (common tangents) between another convex hull~$C$.
%
These queries are illustrated in Figure~\ref{fig:queries}.
Furthermore, we might want to report (some consecutive subsequence of)
the points on the convex hull or count their cardinality.

We restrict our attention to work with the upper hull of the set of points.
Maintaining the lower hull is completely symmetrical.
Together, the upper and lower hull clearly represent the convex hull of a set of points.
% We give a precise definition of the interface of our
%data structure in Section~\ref{sec:ProblemSpecification}.

\begin{figure}[htb]
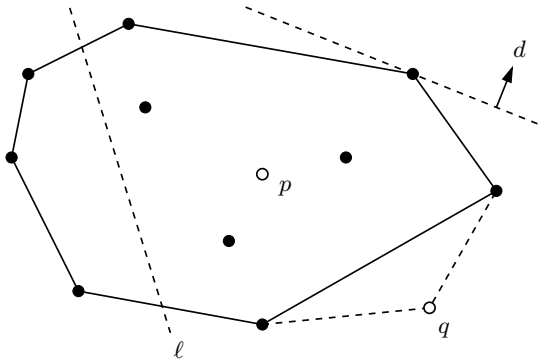

  \begin{center}
	\inputfig{queries}
%WRITEMAG% queries .7
    \caption[Different convex hull queries]
    {Different queries on the convex hull of a set of points.}
    \label{fig:queries}
  \end{center}
\end{figure}

Before summarizing the previous work on dynamic convex hull problems,
we state the main result of this article:
\begin{theorem}
\label{thm:fast}
  There exists a data structure for the fully dynamic planar convex
  hull problem supporting the insertion and deletion of points in
  amortized 
  $O(\log n)$ time, and extreme point queries, tangent queries
  and neighboring-point queries in % worst-case (and amortized)
  $O(\log n)$ time, where $n$ denotes the size of the stored point set
  before the operation.  The space usage is $O(n)$.
\end{theorem}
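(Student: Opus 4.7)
The plan is to build a balanced binary tree over the points sorted by $x$-coordinate, where each internal node $v$ partitions its points into two subsets $A$ (left) and $B$ (right) separated by a vertical line, and stores a constant amount of extra information so that $\UH(A\cup B)$ can be recovered from $\UH(A)$ and $\UH(B)$. Concretely, at each node I would maintain the $O(1)$ bridges that join $\UH(A)$ with $\UH(B)$, tangent certificates witnessing that each bridge is a common tangent, and the cutting points at which bridges intersect the two lower hulls. Descending the tree then answers an extreme-point query, a tangent query, or a neighbor query in $O(\log n)$ worst-case time, and the constant information per node gives $O(n)$ total space.

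First I would reduce the update problem at each node to local certificate repair, in the style of Overmars--van Leeuwen: when a point is inserted or deleted, the bridges at every ancestor node on the root-leaf path may change, but the change is always local to the slab between two "enclosing" points on each side. The naive approach charges $O(\log n)$ per level for re-finding bridges, giving the classical $O(\log^2 n)$ bound. To push this down to $O(\log n)$ amortized, I would add the shortcut machinery $H_A,H_B$ illustrated by the algorithms \PC{CreateSC}, \PC{ApplySCn} and \PC{FindBridge}: long runs of hidden or boundary points on $\SC A$ and $\SC B$ are replaced by explicit shortcut segments, so that the amortized number of entities a bridge-finder must inspect per level is $O(1)$, with the occasional expensive walk amortized against the prior insertions that built the shortcut.

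The bulk of the work would then be packaged into the \PC{Replace}$_B$ operation laid out in the excerpt. The steps are (i) \PC{rmBr} to excise the deleted (or replaced) point and harvest affected bridges, equality points, and bypasses; (ii) \PC{ApplySCn} to migrate the "above-shortcut" marks and create or discard cutting points so that $\SC B$ continues to represent the new $B$ between the enclosing points $\SC x,\SC y$; (iii) \PC{reestablish}$_A^B$ and \PC{reestablish}$_A^B$ (plus \PC{CreateSC}) to repair the certificates in the affected slab, recursing into \PC{refine}$_A^B$ and \PC{refine}$_B^A$ which do binary-style searches using splitters that guide a single $O(\log n)$ finger search down the tree; and (iv) \PC{FindBridge} to walk monotonically across remaining equality points and establish the new bridges, using the bypasses collected in step (i) to skip over deleted regions in $O(1)$ per hop.

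The main obstacle is the amortized analysis. A potential function must account for shortcuts, cutting points, equality points, and splitter structures, and must show that: each update contributes $O(\log n)$ units of potential; each shortcut pays for its own eventual destruction by \PC{CreateSC}; each call to \PC{FindBridge} does work proportional to the number of bypasses and equality points it consumes, which have already been charged in step (i); and the refine procedures recurse only along certificate changes, not along the hidden regions. Combined with the finger-search trees on the splitters and the $O(1)$ per-level bridge adjustment made possible by the shortcuts, this gives $O(\log n)$ amortized time per update, matching the lower bound stated in the abstract.
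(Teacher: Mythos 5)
Your plan is not the paper's construction, and it contains a genuine gap rather than an alternative proof. The paper never builds a single balanced tree over the points sorted by $x$-coordinate. Its route is: a \emph{merger} that maintains $\UV(A\cup B)$ for two horizontally \emph{overlapping} hulls under deletions only (\pc{Replace} shrinks $\UC(A)$, $\UC(B)$); trees of mergers forming a Join-Delete structure; the logarithmic method of Bentley--Saxe so that all merging happens inside deletion-only blocks; an interval tree over activity intervals with chunks, lazy moves, forced moves and a witness/coin argument to keep secondary structures of size $O(\log^4 n)$; and finally two rounds of bootstrapping (Theorem~\ref{thm:boost}) starting from Preparata's structure, plus global rebuilding. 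Your proposal instead asks for an Overmars--van Leeuwen tree with vertical splits and $O(1)$ amortized bridge repair per level, and simply asserts that the shortcut/certificate machinery delivers it.

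The concrete gap is insertions. Every amortization in the paper rests on monotonicity: $\UC(A)$ and $\UC(B)$ only shrink, so extents and shadows shrink (Lemmas~\ref{lem:MonotonicExtent} and~\ref{lem:MonotonicShadows}), each point advances through a one-way life-cycle, and suspended splitter searches may legitimately resume where they stopped. The merger interface has no insertion at all. In your static $x$-partitioned tree, an insertion enlarges the hull of every subtree on the leaf-to-root path: bridges move back outward, shadows grow, previously answered colour inspections in a splitter become invalid, and no potential you describe pays for redoing the per-level searches --- indeed an alternating sequence of insertions and deletions can drag one node's bridge back and forth across the same long stretch arbitrarily often. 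Achieving $O(1)$ amortized per-level repair in this fully dynamic setting is precisely the obstacle the paper \emph{circumvents} (by confining mergers to deletion-only blocks and paying for queries via the interval tree and bootstrapping); your proof assumes it is solved. A secondary mismatch: with a vertical split the two child hulls lie in disjoint slabs, so there is exactly one upper-hull bridge and there are no equality points, streaks or bypasses; the machinery you import (\pc{FindBridge} with bypasses, \pc{ApplySCn}, equality points) is designed for the overlapping, deletion-only merger and does not even apply at your nodes, and rebalancing of the $x$-sorted tree under updates is likewise unaddressed.
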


This upper bound is complemented by a matching lower bound:%
\begin{theorem}
\label{thm:CHLB}
  Assume there is a semidynamic insertion-only convex hull data
  structure on the real-RAM, that supports extreme point queries in
  amortized~$q(n)$ time, and insertions in amortized~$I(n)$, where~$n$
  is the size of the current set and~$q$ and~$I$ are non-decreasing functions.
  Then $q(n) = \Omega( \log n )$ and $I(n) = \Omega( \log(n/q(n)))$.
\end{theorem}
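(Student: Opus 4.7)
The plan is to prove each of the two lower bounds by a reduction to a classical lower bound in the algebraic decision tree (ADT) model on the real-RAM. For the query bound $q(n) = \Omega(\log n)$, I would place $n$ points on a strictly convex curve, say $p_i = (i, i^2)$ for $1 \le i \le n$, so that every one of them is a hull vertex. Any extreme-point query then returns one of $n$ possible vertices, so in the ADT model its decision subtree has at least $n$ distinguishable leaves and average leaf-depth at least $\log_2 n$ by Kraft's inequality. Summing over a sequence of $m$ queries whose answers jointly span all $n$ vertices gives a total comparison count of at least $m \log_2 n$, which is also bounded by the amortized accounting $n\, I(n) + m\, q(n)$. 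Letting $m \to \infty$ absorbs the insertion term and forces $q(n) = \Omega(\log n)$.

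For the insertion bound $I(n) = \Omega(\log(n/q(n)))$, set $k := \lceil n/q(n)\rceil$ and assume $k \ge 2$ (otherwise the statement is vacuous). I would reduce from the \emph{rank-partition problem}: given $n$ reals, assign each a label in $\{1, \ldots, k\}$ indicating which of the $k$ equal-size rank-buckets it falls into. Since the number of valid outputs is the multinomial $\binom{n}{n/k,\ldots,n/k} = 2^{\Theta(n \log k)}$, this problem has ADT lower bound $\Omega(n \log k)$. The reduction inserts the $n$ input reals into the hull structure (lifted via $a \mapsto (a, a^2)$ so they lie on a convex curve), issues $k-1$ extreme-point queries whose returned vertices identify the rank-quantile boundaries, and then classifies each input in $O(n)$ further comparisons. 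The total comparison count is bounded by $n\, I(n) + (k-1)\, q(n) + O(n) \le n\, I(n) + O(n)$, which must be $\Omega(n \log k)$; rearranging yields $I(n) = \Omega(\log(n/q(n)))$.

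The delicate step is realizing the $k-1$ rank-boundary queries using only the extreme-point query interface: on the parabola lifting, the direction selecting the $(jn/k)$-th ranked hull vertex depends on the unknown input values, so a naive oblivious strategy does not apply. I would overcome this either by augmenting the input with a fixed family of $\Theta(n)$ anchor points whose positions pin each rank-quantile vertex to a direction computable without knowledge of the input, or by an entropy argument that avoids the explicit extraction: because the ADT state after the $n$ insertions must support answering any later rank-quantile query correctly, the insertion phase alone must commit $\Omega(n \log k)$ comparisons' worth of information into the structure, leaving the queries' contribution negligible. The amortized nature of $I$ and $q$ is accommodated by repeating the entire reduction over many phases so that constant-factor and one-shot overheads are absorbed.
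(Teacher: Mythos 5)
Your query bound is essentially sound (modulo tightening the per-query Kraft claim into a single Ben-Or component count of $n^m$ over the $m$ variable query directions), and it is close in spirit to how the paper argues that half. The genuine gap is in the insertion bound, and it sits exactly at the step you flag: the reduction from rank-partition cannot be executed through the given interface. On the parabola lifting, an extreme-point query returns the stored point determined by a slope \emph{you} choose, so to hit the $(jn/k)$-th ranked vertex you must already know roughly its value --- the very quantity you are trying to compute. Neither rescue closes this. Fixed anchor points change the problem: with anchors you can only learn how the inputs distribute relative to the anchors, not their ranks relative to one another, so you are no longer solving rank-partition. The ``information committed during insertions'' argument is not a valid lower-bound argument in the algebraic computation tree / real-RAM model (there is no notion of information stored in the structure that yields depth bounds there), and it is wrong in spirit: a structure may spend $O(1)$ per insertion (append to a list) and defer all work to queries; the whole content of the trade-off $I(n)=\Omega(\log(n/q(n)))$ is that query cost must be charged as well, which an ``insertions alone must pay $\Omega(n\log k)$'' argument forgoes.

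The repair is to replace rank-partition by a hard problem whose queries are oblivious to the inserted values. The paper reduces from the decision problem \textsc{DisjointSet}: given $x_1,\dots,x_n$ and $y_1<\dots<y_k$, decide whether every $x_i$ differs from every $y_j$. Ben-Or's theorem gives $(k+1)^n$ connected components (one per way of distributing the $x_i$ among the $k+1$ gaps), hence an $\Omega(n\log k)$ bound, matching your multinomial count. Crucially, the $k$ query values $y_j$ are part of the input, so the corresponding queries --- in the paper, kinetic find-min queries obtained by inserting the tangents to $y=-x^2$ at $(x_i,-x_i^2)$ and querying at $y_j$, which by duality is exactly an extreme-point query --- are computable without knowledge of the $x_i$: the answer at $y_j$ merely reveals whether $y_j$ coincides with some inserted value. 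The accounting $n\,I(n)+O(k\,q(n))=\Omega(n\log k)$ with $k=\lfloor n/q(n)\rfloor$ then yields $I(n)=\Omega(\log(n/q(n)))$, i.e.\ the same arithmetic you intended, but with a reduction that actually runs.
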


This lower bound implies that insertions have to take amortized $\Omega(\log n)$ time, as long
as queries take amortized time $O(n^{1-\varepsilon})$ for any $\varepsilon>0$.
Note that this result is stronger than just applying the well known $\Omega(n\log n)$
lower bound for the static convex hull computation
(as presented for example in the textbook by Preparata and
Shamos~\cite[Section~3.2]{preparata85}), since computing a static
convex hull using $n$ insertions and $n$ next-neighbor queries implies
in the dynamic setting only that the sum of the amortized running
times of insertions and next-neighbor queries is~$\Omega(\log n)$.

%%%%%%%%%%%%%%%%%%%%%%%%%%%%%%%%%%%%%%%%%%%%%%%%%%%%%%%%%%%%%%%%%%%%%%
\subsection{Previous Work}
%%%%%%%%%%%%%%%%%%%%%%%%%%%%%%%%%%%%%%%%%%%%%%%%%%%%%%%%%%%%%%%%%%%%%%

The dynamic convex hull problem was first studied by  
Overmars and van Leeuwen~\cite{OvL81}.
Their work gives a solution that uses~$O(\log^2 n)$ time per update
operation and maintains the vertices on the convex hull in
clockwise order in a leaf-linked balanced search tree.
Such a tree allows all of the above mentioned
queries in~$O(\log n)$ time.  The leaf-links allow to report~$k$
consecutive points on the convex hull (between two directions, tangent
lines or alike) in $O(\log n + k)$ time.

In the semidynamic variants of the problem, the data structure is not
required to support both insertions and deletions.
For the insertion-only problem Preparata~\cite{preparata79} gave
an~$O(\log n)$ worst-case time algorithm that maintains the vertices
of the convex hull in a search tree.
The deletion-only problem was solved by Hershberger and
Suri~\cite{hershberger92}, where initializing the data structure
(build) with~$n$ points and up to~$n$ deletions are accomplished in
overall~$O(n\log n)$ time.
Hershberger and Suri~\cite{hershberger96} also consider the off-line
variant of the problem, where both insertions and deletions are
allowed, but the times (and by this the order) of all insertions and
deletions are known a priori.
The algorithm processes a list of insertions and deletions in~$O(n\log
n)$ time and space, and produces a data structure that can answer
extreme point queries for any time using~$O(\log n)$ time (denoted queries in history in~\cite{hershberger96}).  Their data
structure does not provide an explicit representation of the convex
hull as a search tree.
% with the points on the convex hull.
%
The space usage can be reduced to~$O(n)$ if the queries are part of
the off-line information.
%They consider the online version of the problem a long-standing open problem.

Chan~\cite{chan99,chan01} gave a % in~1999
construction for the fully dynamic problem
with~$O(\log^{1+\varepsilon}n)$ amortized time for updates (for any
constant $\varepsilon>0$), and $O(\log n)$ time for extreme point
queries.  His construction does not maintain an explicit
representation of the convex hull.  It is based on the general
dynamization technique of logarithmic rebuilding attributed to Bentley
and Saxe~\cite{bentley80}.  Using the semidynamic deletions-only
data structure of Hershberger and Suri~\cite{hershberger92}, and a
constant number of bootstrapping steps, the construction achieves
amortized update times of~$O(\log^{1+\varepsilon}n)$ for any constant~$\varepsilon>0$.
To support fast queries, the construction uses an augmented variant of
an interval tree storing the convex hulls of the semidynamic
deletion only data structures.
Brodal and Jacob~\cite{brodal00} and independently
Kaplan, Tarjan and Tsioutsiouliklis~\cite{tarjan01} improved the
amortized update time to~$O(\log n \log\log n)$.
The two results are based on very similar ideas.
The improved update time in~\cite{brodal00} is achieved by
reconsidering the framework of Chan~\cite{chan01}, and by constructing
a semidynamic deletion-only data structure that is adapted better to
the particular use.
More precisely, this semidynamic data structure supports build
in~$O(n)$ time under the assumption that the points are already
lexicographically sorted.
Deletions are supported in~$O(\log n\log\log n)$ amortized time.
%
% The performance of the interval-tree is improved by a suitable choice
% of degree parameters, and by storing chunks of segments (instead of
% single segments) at the nodes of the tree.
% %
% All in all two bootstrapping steps achieve an amortized update time of
% $O(\log n \log\log n)$ and worst-case~$O(\log n)$ query time.
% %
All these dynamic data structures use~$O(n)$ space.

For arbitrary line queries, i.e., the problem of reporting the segments on the convex hull intersected by a query line $\ell$,
Chan~\cite{chan01} presented a 
solution with query and amortized update time $O(\log^{3/2} n)$ and
subsequently improved the bounds to  
$2^{O(\sqrt{\log\log n\log\log\log n})}\log n$~\cite{Chan12}.

Changing the model of computation towards points with integer coordinates, Demain and Patrascu determine the complexity of many dynamic planar convex hull related queries to be $\Theta(\log n /\log\log n)$~\cite{DemainePatrascu2007} if the updates take at most polylogarithmic time.

%%%%%%%%%%%%%%%%%%%%%%%%%%%%%%%%%%%%%%%%%%%%%%%%%%%%%%%%%%%%%%%%%%%%%%
\subsection{Relation to Previous Work}
%%%%%%%%%%%%%%%%%%%%%%%%%%%%%%%%%%%%%%%%%%%%%%%%%%%%%%%%%%%%%%%%%%%%%%
   
The data structure presented here continues the development of
\cite{chan01}, \cite{tarjan01}, and our preliminary construction
presented in~\cite{brodal00}.
It consists of an improved semidynamic, deletion-only convex hull data
structure 
that achieves amortized linear merge operations, and amortized
logarithmic deletions.
So far, merging two (or several) semidynamic sets was achieved by
building a new semidynamic data structure from scratch.
The new construction instead continues to use the already existing
data structures and adds a new top level data structure that
represents the merged semidynamic set.
This top level data structure, called \defword{merger}, operates on two sets of
points that are not necessarily horizontally disjoint.
It maintains the upper hull of the union of the two sets under
the deletion of points, assuming that the upper hulls of the two sets
are (recursively) maintained.

%% Using this new semidynamic data structure in the construction
%% of~\cite{brodal00} would already give an improvement, but the
%% update times remain slightly super-logarithmic. 
%% To avoid this, we augment the interval tree used in~\cite{brodal00}, which
%% goes back to~\cite{chan01}, by introducing the concept of lazy
We augment the interval tree introduced in~\cite{chan01}, by
introducing the concept of \defword{lazy movements} of points between secondary structures
to reduce the number of expensive deletions from secondary structures.
A limited number of movements of points cannot be performed lazily, denoted \defword{forced moves}.
%%and the analysis
%% has to bound in particular the time spent in such non-lazy movements.
In the analysis we distinguish between two cases of forced moves, where we charge the expensive deletions in secondary structures to either the insertion or the deletion of points in the overall data structure. 

Important to our result and the result in~\cite{chan01} is the
repeated bootstrapping of the construction, where each bootstrapping step
improves the deletion time. In~\cite{chan01} the number of
bootstrapping steps  required increases when the $\varepsilon$ approaches zero in the update time~$O(\log^{1+\varepsilon}n)$, whereas in our construction we only need two bootstrapping steps.

% \cite{sharir96}
% \vote{include reference Davenport-Schinzel?}
% \vote{Mention Andreas comments: average case Delauny}
% \vote{Include Valentina's work?}

%%%%%%%%%%%%%%%%%%%%%%%%%%%%%%%%%%%%%%%%%%%%%%%%%%%%%%%%%%%%%%%%%%%%%%
\subsection{Structure of the Article}
%%%%%%%%%%%%%%%%%%%%%%%%%%%%%%%%%%%%%%%%%%%%%%%%%%%%%%%%%%%%%%%%%%%%%%

%\relax
Section~\ref{sec:ProblemSpec} discusses the precise setting
in which we consider the convex hull problem,
Sections~\ref{sec:overall}--\ref{sec:degenerate} describe the different parts of the data
structure, and Section~\ref{sec:LowerBounds} gives the complementary lower bound.
The sections explaining the data structures start in
Section~\ref{sec:overall} by showing the top-level decomposition
into subproblems, in particular the merger, a data structure capable
of merging two upper hulls.
Section~\ref{sec:KinHeap} gives the top-level construction for a
somewhat simpler case, namely the kinetic heap, partly as an
illustration of how to use the merger.
Section~\ref{sec:IT} and Section~\ref{sec:ext} describe the
 refined version of the interval tree and how it supports efficient queries.
The Sections~\ref{sec:SeparationCertificate}--\ref{sec:mainReestablish}
explain the details of the merger.
This discussion starts in Section~\ref{sec:SeparationCertificate} with
the geometric construction that we call a separation certificate, certifying that no intersection of the two hulls have been overlooked.
Its monotonicity property when points are deleted/replaced is explained in
Section~\ref{sec:monotonicity}.
These geometric considerations motivate a data structures that allows
a special kind of search, where every call to a search takes amortized
constant time.
Section~\ref{sec:helpers} introduces this data structure, and extends
it by the possibility to suspend the search if the geometric situation
is not yet conclusive.
Section~\ref{sec:Representation} discusses the representation of the merger, mainly the separation certificate, as a data structure.
Section~\ref{sec:algRepl} introduces the call structure and some of the algorithms, while leaving the core of the algorithmic task of reestablishing a certificate to Section~\ref{sec:mainReestablish}.
Section~\ref{sec:degenerate} considers how the data structure can be adapted to deal with degenerate input.
Section~\ref{sec:LowerBounds} discusses the matching lower bound an algebraic setting.
Finally, Section~\ref{sec:openProblems} concludes with some open problems.

%% The exposition of the data structure and algorithms takes an
%% algorithmic point of view, with varying the level of detail depending
%% on how novel the presented ideas are.
%% For example, the pseudo-code of the novel geometric algorithms is
%% fairly close to an actual implementation, whereas the small
%% modifications to well established search tree algorithms are
%% explained without pseudo-code.

%

%%%%%%%%%%%%%%%%%%%%%%%%%%%%%%%%%%%%%%%%%%%%%%%%%%%%%%%%%%%%%%%%%%%%%%
\section{Problem Specification}
\label{sec:ProblemSpec}
%%%%%%%%%%%%%%%%%%%%%%%%%%%%%%%%%%%%%%%%%%%%%%%%%%%%%%%%%%%%%%%%%%%%%%

%%%%%%%%%%%%%%%%%%%%%%%%%%%%%%%%%%%%%%%%%%%%%%%%%%%%%%%%%%%%%%%%%%%%%%
\subsection{Fully Dynamic Convex Hull: Problem Definition}
\label{sec:ProblemSpecification}
%%%%%%%%%%%%%%%%%%%%%%%%%%%%%%%%%%%%%%%%%%%%%%%%%%%%%%%%%%%%%%%%%%%%%%

\relax   

The focus of this article is a data structure to store a set~$S$
of points in the plane, that allows the following: 
\begin{description}
\item[Create] creates a data structure with~$S=\emptyset$.
\item[Insert$(p)$] Changes~$S \leftarrow S\cup \{p\}$, for a new point~$p$.
\item[Delete$(p)$] Changes~$S \leftarrow S\setminus \{p\}$, for an existing point~$p$..
\item[Query$(d)$] Return the extreme point of~$S$ in direction~$d$.
\end{description}

The query can be formalized by a scalar product, where we
consider $d$ and~$p$ as 2-dimensional vectors and hence return
$\argmax_{p\in S} d\cdot p$. 
%Alternatively, we can think of~$d$ as a directed line through the origin, such
%that the extreme point is a point of~$S$ whose projection onto~$d$
%has the greatest value.
Section~\ref{sec:ext} discusses how the data structure can support
tangent and arbitrary line queries.

%We address also some variants of the above problem as detailed in the
%following. 

%
%%        \subsubsection{Additional queries in the primal setting}
%% \label{sec:AdditionalPrimalQueries}

%% Besides the extreme point query, we would like our data structure to
%% be able to answer as many queries as possible efficiently.

%% The query capability is ...

%% When making precise the notion of maintaining the convex hull of a
%% point set, 

%% Instead of a data structure that is able to answer queries, it would
%% be nice to have the points of the convex hull stored in level-linked
%% balanced search tree.

%% \begin{itemize}
%% \item tangent to point
%% \item neighboring point
%% \end{itemize}
%%   The extreme point queries can be used to decide whether or not a given
%%   line intersects the convex hull, and the tangent queries to determine
%%   whether a given point is inside the convex hull.
%% Another important query is the \defword{gift-wrapping} query. The
%% query consists of a point of the convex hull and asks for the two
%% neighboring points on the convex hull.  The gift wrapping query can be
%% seen as a degenerate case of the \defword{tangent query} that gives a
%% point and asks for the two tangents on the convex hull that pass
%% through this point (or the statement that the point is inside the
%% hull). 

%

\subsection{Computational Model}
\label{sec:CompModel}

In all our considerations,
we think of the geometric input as being given by precise points,
where the algorithms have access to some geometric primitives.
Namely, we assume that we can define new lines by two points, a new
point as the intersection of two lines, and that we can determine if a
point is above or below a line.
We also use a duality transformation (see Section~\ref{sec:duality}), but the algorithms actually do
not need to perform this transformation, they either work in the primal
world or in the dual world, never simultaneously in both.

The most commonly used model of computation in Computational Geometry
is the algebraic real-RAM, i.e., a RAM that has registers containing
real numbers that can be copied, added, multiplied, and compared at
unit cost.
The lower bound of Section~\ref{sec:LowerBounds} holds for this strong
model of computation.
In contrast, the presented algorithms and data structures do not use
the full power of the model.
They only use pointers to constant size records, i.e., no pointer
arithmetic or index calculations in arrays.
One could call this model a ``pointer machine over the reals''.
Additionally, our algorithms only evaluate constant degree polynomials
of input values.
Hence, our algorithms have the same asymptotic running time and space
usage if executed on an integer (word) RAM (allowing constant time addition and multiplication) that represents
rational values.

Throughout this article we will use the amortized analysis framework
of Tarjan~\cite{tarjan85} to analyse our data structures.

%%%%%%%%%%%%%%%%%%%%%%%%%%%%%%%%%%%%%%%%%%%%%%%%%%%%%%%%%%%%%%%%%%%%%%
\subsection{Duality and Application to $k$-Levels}
\label{sec:duality}
%%%%%%%%%%%%%%%%%%%%%%%%%%%%%%%%%%%%%%%%%%%%%%%%%%%%%%%%%%%%%%%%%%%%%%
   
There is a close connection between the upper hull of a set of points
and the lower envelope of the corresponding dual set of lines.
We define (as is standard, e.g., see~\cite[Section 8.2]{berg08}) the dual of a
point~$p=(a,b)\in\Rset^2$ to be the line $p^* \;:=\; \{(x,y)\mid
y=a\cdot x -b\}$, where~$a$ is called the slope.
For a set of points~$S$ the dual~$S^*$ consists of the lines dual to
the points in~$S$.
This concept is illustrated in Figure~\ref{fig:duality}.
%
%A non-vertical line $\{(x,y)\mid y=a\cdot x -b\}$ has \defword{slope}~$a$.

\begin{figure}[tb]
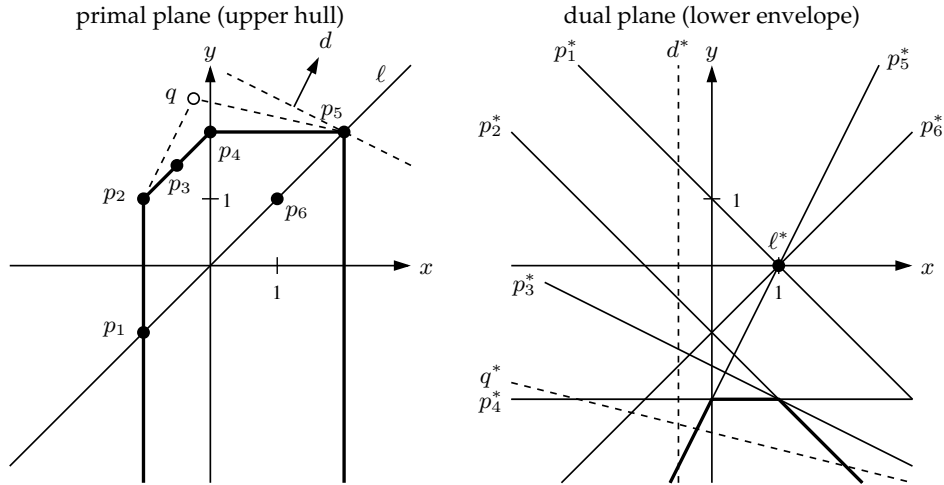

  \begin{center}
    \inputfig{duality}
%WRITEMAG% duality .7
    \caption
    {Duality of points, lines, and queries: Extreme point query $d$
      and its dual $d^*$; tangent query $q$ and its dual~$q^*$. 
      Both $p_1$ and~$p_3$ are not considered to be on the upper hull, and accordingly $p_1^*$ and~$p_3^*$ not as part of the lower envelope.}
    \label{fig:duality}
  \end{center}
\end{figure}

The essential properties of the duality transformation are captured in the lemma below, which follows from~\cite[Observation~8.3]{berg08} and is illustrated in Figure~\ref{fig:duality}.
Every non-vertical line in the plane is the graph of a linear function. 
For a finite set~$L$ of linear functions the point-wise minimum 
$m_L(t)=\min_{\ell\in L} \ell(t)$ is a piecewise linear function. The graph
of~$m_L$ is called the \defword{lower envelope} of~$L$, denoted $\LE(L)$.
A line~$\ell\in L$ is on the lower envelope of~$L$ if it defines one of
the linear segments of~$m_L$.
\begin{lemma}
\label{lem:dual}
  Let~$S$ be a set of points in the plane and $S^*$ the dual set of lines.
  A point~$p\in S$ is on the upper hull of~$S$ if and only if $p^*$ is
  on the lower envelope~$\LE(S^*)$ of~$S^*$. 
  The left-to-right order of points on the upper hull of~$S$ is the same as the
  right-to-left order of the corresponding segments of the lower
  envelope of~$S^*$.

  The extreme-point query on the upper hull of~$S$ in
  direction~$d=(-\alpha,1)$ 
  (the answer tangent line has slope~$\alpha$) is equivalent to
  evaluating $m_{S^*}(\alpha)$, the vertical line query~$d^*$ to the lower
  envelope of the lines in~$S^*$.
  A tangent-query point~$q$ is above the upper hull and has tangent
  points $a$ and $b$ on the upper hull, $a$ to the left of $b$,
  if and only if $q^*$ intersects the lines $b^*$ and $a^*$ on the
  lower envelope, and the intersection with $b^*$ is to the left of
  the intersection with $a^*$. If the query point $q$ is to the left (right) 
  of all points then $q$ only has one tangent point $a$,  $q^*$ will 
  only intersect $a^*$ on the lower envelope and $q^*$ is below the lower envelope for $x\rightarrow +\infty$ ($x\rightarrow -\infty$).
\end{lemma}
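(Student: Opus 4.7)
The plan is to verify each claim by direct computation from the duality formula $p=(a,b)\mapsto p^*:y=ax-b$, invoking the cited Observation~8.3 of~\cite{berg08} only as the underlying incidence-reversing fact. The key algebraic identity to establish first is the following symmetry: for points $p=(a,b)$ and $q=(c,d)$, the point $q$ lies above (resp.\ on, below) the line $p^*$ if and only if $p$ lies above (resp.\ on, below) the line $q^*$. This is immediate because both conditions are equivalent to $b+d>ac+ac$ versus the opposite inequality after expanding $d\gtreqless ac-b$ and $b\gtreqless ac-d$. Everything else is a consequence of this symmetry.

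For the upper-hull characterization I would use the well-known fact that $p\in S$ lies on the upper hull iff there exists a slope $m\in\Rset$ such that the line $\ell_m$ through $p$ with slope $m$ has all other points of $S$ weakly below it. Writing $p=(a,b)$ and $\ell_m:y=mx+(b-ma)$, the condition ``$q=(c,d)$ lies weakly below $\ell_m$'' rearranges to $mc-d\ge ma-b$, i.e.\ $q^*(m)\ge p^*(m)$. Therefore $p$ is on the upper hull iff there is an $m$ with $p^*(m)=\min_{q\in S}q^*(m)=m_{S^*}(m)$, which is exactly the condition that $p^*$ contributes a segment to $\LE(S^*)$. For the order reversal, note that for each upper-hull point $p=(a,b)$ the range of slopes $m$ for which $p^*$ realises $\LE(S^*)$ is precisely the range of slopes of supporting lines of $S$ at $p$. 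On the upper hull these supporting slopes decrease strictly from left to right (concavity), while the $x$-coordinate on $\LE(S^*)$ is exactly the value of $m$; hence the leftmost point on the upper hull corresponds to the rightmost segment of the lower envelope, and the orders are reversed.

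For the extreme-point query I would simply compute: with $d=(-\alpha,1)$, we have $d\cdot p = b-\alpha a = -(\alpha a - b) = -p^*(\alpha)$, so maximising $d\cdot p$ over $p\in S$ is the same as minimising $p^*(\alpha)$ over $p\in S$, i.e.\ evaluating $m_{S^*}(\alpha)$, and the argmax point is the owner of the envelope segment at $x=\alpha$. For the tangent query, let $q=(c,d)$ lie strictly above the upper hull and let its left and right tangent points be $a=(a_1,b_1)$ and $b=(a_2,b_2)$ with $a_1<a_2$. Each tangent line through $q$ and a hull point $p$ has some slope $m_p$, and by the symmetry of the first paragraph the condition ``$q$ lies on line $p^*$'' is the same as ``$q^*$ passes through $(m_p, p^*(m_p))$'', i.e.\ $q^*$ meets $p^*$ at $x=m_p$. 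Since the supporting slope decreases along the upper hull from left to right, $m_a>m_b$, so the intersection of $q^*$ with $b^*$ lies to the left of the intersection with $a^*$, as claimed. The degenerate cases in which $q$ lies horizontally past all points of $S$ correspond by the same computation to $q^*$ not crossing the envelope on one side and lying below it as $x\to\pm\infty$, by comparing the slope $c$ of $q^*$ with the extremal slopes present in $S^*$.

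The only real obstacle is bookkeeping the orientation conventions: there are choices in whether to use the $y=ax-b$ or $y=ax+b$ convention, and the signs flip correspondingly in the extreme-point query and in which envelope one ends up on. Once the ``above-iff-above'' symmetry of the paragraph one is nailed down and the fact that supporting slopes on the upper hull are strictly decreasing from left to right is used, each of the four stated items follows by one rearrangement of a linear inequality; no global geometric argument beyond this is needed.
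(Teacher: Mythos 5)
The paper does not actually prove this lemma---it defers to \cite[Observation~8.3]{berg08} and Figure~\ref{fig:duality}---so there is no argument to compare against; your direct verification from $p^*:y=ax-b$ (incidence/above--below symmetry, supporting-line rearrangement, slope-reversal, and the computation $d\cdot p=-p^*(\alpha)$) is exactly the standard textbook derivation and is correct. Two small points: the inequality written as $b+d>ac+ac$ should read $b+d>ac$, and for the first ``if and only if'' you need the observation (which you do supply in the order-reversal paragraph) that $p$ is a hull \emph{vertex} precisely when its supporting slopes form a nondegenerate interval, matching the paper's convention that $p^*$ is ``on the lower envelope'' only if it defines an entire segment of $m_{S^*}$.
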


A dynamic planar lower envelope data structure is frequently understood as a
\defword{parametric heap}~\cite{basch99,tarjan01}, a generalization of a priority queue.
We think of the lines in~$S^*$ as linear functions that describe the
linear change of a value over time.
Instead of the \pc{find-min} operation of the priority queue, a parametric heap allows queries that evaluate~$m_{S^*}(t)$, i.e., 
vertical line queries reporting the stored linear function that attains the smallest value at time~$t$.
The update operations amount to the insertion and deletion of lines.
%
% By \tpref{lem:dual} this is equivalent to a extreme
% point query on a planar convex hull.
%
By Lemma~\ref{lem:dual}, 
the data structure we summarize in Theorem~\ref{thm:fast} allows
updates and queries in amortized~$O(\log n)$ time.

A \defword{kinetic heap}~\cite{basch97}
 is a parametric heap with the restriction
that the argument (time~$t$) of kinetic queries may not decrease between
two queries.
This naturally leads to the notion of a \defword{current time} for
queries.  In Section~\ref{sec:KinHeap} we describe a data structure
supporting kinetic queries in amortized $O(1)$ time and updates
in amortized $O(\log n)$ time.  
The previous best bounds were amortized $O(\log n)$ insert and $O(\log
n\cdot\log\log n)$ delete with $O(1)$ query time~\cite{tarjan01}.

Several geometric algorithms use a parametric or kinetic heap to store
lines.
In some cases the function-calls to this data structure dominate the
overall execution time.
Our improved data structure immediately improves such algorithms.
One such example is the algorithm by Edelsbrunner and
Welzl~\cite{welzl86} solving the \defword{$k$-level problem} in the
plane.
The problem is in the dual setting and is given by a set~$S$ of~$n$
non-vertical lines in the plane.  For every vertical line we are
interested in the $k$-th lowest intersection with a line of~$S$. 
The answer is given by a collection of line-segments from lines
of~$S$.
This generalizes the notion of a lower envelope~($k=1$) and an upper
envelope~($k=n$).  The situation is exemplified in
Figure~\ref{fig:2lev}. 
\vote{complexity of it, Davenport-Schinzel sequence?}
%\vote{include other applications?}
% MH: yes if you have

\begin{figure}[htb]
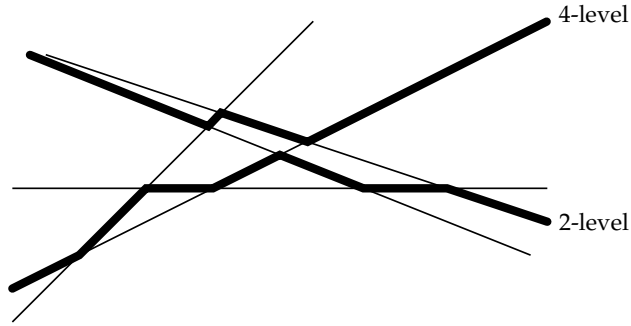

  \begin{center}
%WRITEMAG% 2-level .7
    \inputfig{2-level}
    \caption[The 2-level and 4-level of 5 lines in the plane]
    {The 2-level and 4-level of 5 lines in the plane.  Note that the 2-level
      consists of 7 segments, two of the lines define two separate
      segments, whereas the 4-level only contains 4 segments, and one
      line is not contributing to the 4-level.}
    \label{fig:2lev}
  \end{center}
\end{figure}

%\todo{Dont' you need the query answering the next segment on the LE?} 
As discussed by Chan~\cite{chan99level} we can use two
kinetic heaps to produce the $k$-level of a set of~$n$ lines
by a left-to-right sweepline algorithm maintaining the segments above and below the
$k$-level respectively, 
provided that the kinetic heap can provide the next time~$t$ when
the minimum changes.
If we have~$m$ segments on the $k$-level (the output size), the
algorithm using the data structure of the present article completes
(supporting arbitrary line intersection queries with the lower envelope)
in~$O((n+m)\log n)$ time.
This improves over the fastest deterministic algorithm by Edelsbrunner
and Welzl~\cite{welzl86}, that achieves $O(n\log n + m\log^{1+\varepsilon} n)$ time when using Chan's data structure~\cite{chan99level}. 
It is also faster than 
the randomized algorithm of Har-Peled and Sharir~\cite{sariel99} with
expected running time~$O((n+m)\alpha(n)\log{n})$,
where~$\alpha(n)$ is the slowly growing inverse of Ackerman's function.

\section{Overall Data Structure}
\label{sec:overall}
%%%%%%%%%%%%%%%%%%%%%%%%%%%%%%%%%%%%%%%%%%%%%%%%%%%%%%%%%%%%%%%%%%%%%%

This section describes the overall structure of our solution, 
including the performance of the different components. 
In Section~\ref{sec:merger-interface} we describe the interface of 
the main novel contribution of this article, 
the \emph{merger} data structure, and state its performance as Theorem~\ref{thm:merger} below.
Sections~\ref{sec:SeparationCertificate}--\ref{sec:mainReestablish} give
the details of how to achieve Theorem~\ref{thm:merger}.
Section~\ref{sec:SemidynamicInterface} describes the application of our merger to maintain the upper hulls for a collection of sets of points, the so called \emph{Join-Delete} data structure, and Sections~\ref{sec:LogMethod}--\ref{sec:ext} describe how to combine this with the logarithmic method to answer kinetic heap and dynamic convex hull queries.

% \vote{include argument passing discussion?}
%% We cover some aspects of internal communication in general
%% \EXPAND{
%% (use of pointers; copying possibility? more?)
%% %
%% This is the big picture, some concepts of global nature are part of
%% it, and only here. 
%% (Distribution of points with Extractor;
%%  More?)}

%%%%%%%%%%%%%%%%%%%%%%%%%%%%%%%%%%%%%%%%%%%%%%%%%%%%%%%%%%%%%%%%%%%%%%
%% \subsection{Upper Hull Only}
%% \label{sec:UHonly}
%%%%%%%%%%%%%%%%%%%%%%%%%%%%%%%%%%%%%%%%%%%%%%%%%%%%%%%%%%%%%%%%%%%%%%

For a finite set~$S$ of points in the plane, the convex hull of~$S$
naturally decomposes into an upper and a lower part.  In
the remaining of the article we work with the upper hull only, the lower
hull is completely symmetrical.  To represent the convex hull we store
every point of~$S$ in two data structures, one that maintains access
to the upper hull, and one for the lower hull.  This allows us to
answer the mentioned queries on the convex hull.
To simplify the exposition, we extend the upper hull of~$S$ to also
include two vertical half-lines as segments, one extending from the
leftmost point of the upper hull of~$S$ vertically downward, and another one extending
downward from the rightmost point, see Figure~\ref{fig:duality}(left).
The vertices of the upper hull are denoted 
as $\UV(S)\subseteq S$, the set of points that 
are on segments and vertices of the upper hull
of~$S$ as~$\UH(S)$, the interior of the upper hull as~$\UCo(S)$,
and the upper hull together with its interior as~$\UC(S)=\UH(S)\cup \UCo(S)$---the upper
closure of $S$.
%%\vote{are the names good?}

%% For a finite set~$A$ of points in the plane, let~$\UV(A)\subseteq A$
%% denote the points on the upper hull of~$A$, $\UH(A)$ the segments
%% forming the upper hull (including two vertical lines at the leftmost
%% and rightmost point of~$A$), $\UC(A)$ the upper closure of~$A$, i.e.,
%% the region of the plane enclosed by~$\UH(A)$, and~$\UCo(A)$ its
%% interior.

%%%%%%%%%%%%%%%%%%%%%%%%%%%%%%%%%%%%%%%%%%%%%%%%%%%%%%%%%%%%%%%%%%%%%%
\subsection{Merger Data Structure}
\label{sec:merger-interface}
%%%%%%%%%%%%%%%%%%%%%%%%%%%%%%%%%%%%%%%%%%%%%%%%%%%%%%%%%%%%%%%%%%%%%%

A central contribution of this article is a data structure denoted a
\defword{merger} that maintains the upper hull of the union of two
sets under point deletions, where (the changes to) the upper hulls of
the two participating sets is assumed to be known (because it is already computed recursively). 

In the following $A$ and $B$ denote the upper hulls of two point
sets. The deletion of a point can only be applied to a point
$r\in\UV(A\cup B)$.
Figure~\ref{fig:merger-example} gives an example of the deletion of a
point~$r$ in~$B$.
To describe the types of the arguments passed to the methods of a
merger, we adopt the ``locator'' terminology used by Goodrich and
Tamassia~\cite[Section~2.4.4]{GoodrichTamassia2002}). When passing a
``point'' we assume to pass a pointer to the \defword{base record} storing the $(x,y)$
coordinates of a point in $\Rset^2$ (the record can possibly also
store other information), whereas a ``locator'' is the abstraction of
the place of a point stored within the merger data structure.
A locator can only be given as an argument to the data structure and
be dereferenced to return the (pointer to the record of the) stored point.

%% %{pstex} merger-example 0.7
%WRITEMAG% merger-example 0.7

\begin{figure}
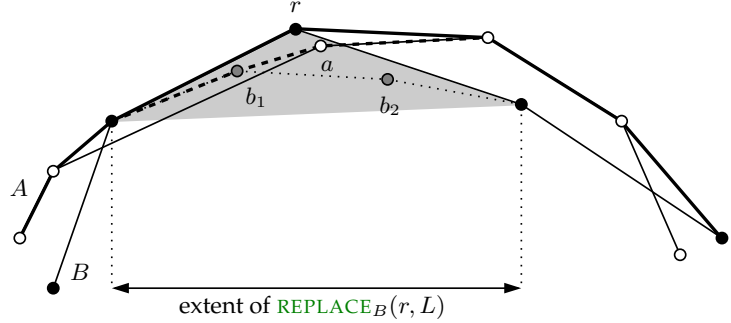

  \begin{center}
    \inputfig{merger-example}
    \caption{\pc{Replace}$_B(r,L)$: In $B$ (black points),
       the point $r\in B\cap \UV(A\cup B)$ is replaced with the list
       of points $L=(b_1,b_2)$. The merger reports that in
       $\UV(A\cup B)$, $r$ is replaced by the two points $(b_1,a)$ where
       the point $a$ is from $A$ (white points).}
    \label{fig:merger-example}
  \end{center}
\end{figure}

Our merger supports the following operations:

\begin{itemize}

\item \pc{Create}$(A,B)$ - Given two lists of points $A$ and $B$ in upper
  convex position, returns a list of locators to the points in
  $\UV(A\cup B)$.

\item \pc{Replace}$_B(r,L)$ - Given the locator to a point $r\in
  B\cap\UV(A\cup B)$ and a list $L=(b_1,b_2,\ldots)$ of points
  replacing $r$ in $B$, returns a list $L'$ of locators to points
  replacing $r$ in $\UV(A\cup B)$, i.e., the points 
  $\UV(A\cup (B\setminus\{r\}) \cup L) \setminus \UV(A\cup B)$.
\end{itemize}

\pc{Replace}$_A$ is defined analogously.  In the above it is required that
the lists of points (or locators to points) passed to/from the
operations is sorted w.r.t.\ the $x$-coordinates of the points. For
Replace$_B(r,L)$ it is required that replacing $r$ by $L$ in $B$,
$B$ remains left-to-right sorted, in upper convex position, and
$UC(B)$ shrinks (i.e., that $L$ is contained within the shaded
triangle in Figure~\ref{fig:merger-example}). 
The (vertical) slab between the two neighboring points of $r$ in $B$
is denoted the \defword{extent} of the \pc{Replace} operation.
Note that we first
return the locator to a point $p$ when $p\in \UV(A\cup B)$.
The following Theorem is proved in Sections~\ref{sec:SeparationCertificate}--\ref{sec:mainReestablish}.

\begin{theorem}
  \label{thm:merger}
  There exists a data structure supporting \pc{Create} in amortized time
  $O(|A|+|B|)$, and \pc{Replace}$_A$ and \pc{Replace}$_B$ in
  amortized time $O(1+|L|+|L'|)$. The data structure uses $O(|A|+|B|)$
  space.
\end{theorem}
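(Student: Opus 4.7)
The plan is to represent a merger by storing each of the two input hulls $A$ and $B$ as a sorted doubly-linked list together with a \emph{separation certificate}: a cyclic sequence of bridges (edges of $\UH(A\cup B)$ with one endpoint in each set) and equality points (points where $\UH(A)$ and $\UH(B)$ cross) that witnesses $\UV(A\cup B)$ and also witnesses that no bridge has been missed. On top of this we augment both hulls with an \emph{aggressive} collection of shortcuts, i.e.\ straight segments spanning maximal consecutive runs of points currently hidden below the other hull. Each hull-vertex of $\UV(A\cup B)$ carries a locator pointing to a searchable representation (a finger-search or balanced tree) of the corresponding input list, so a user of the merger can be returned a locator that dereferences in $O(1)$ to the base record of its point.

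For \pc{Create}$(A,B)$, I would perform a single left-to-right sweep that is a direct generalization of the classical Overmars--van Leeuwen upper-hull merge: simultaneously walk through both sorted input lists, at each step deciding whether the current candidate segment is a bridge or whether one of the hulls currently dominates; this discovers every bridge and equality point, and installs an initial aggressive shortcut on each maximal run of hidden points. The sweep runs in $O(|A|+|B|)$ time, which pays for itself immediately, and sets up the invariants assumed by the \pc{Replace} algorithms (the separation certificate and aggressive shortcuts described in Sections~\ref{sec:SeparationCertificate} and~\ref{sec:monotonicity}).

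For \pc{Replace}$_B(r,L)$ I would follow exactly the call structure in the pseudocode above. First \pc{rmBr} detaches $r$, records at most a constant number of bypasses around the $O(1)$ bridges and equalities that were incident to $r$, and collects the lost cutting points on $\overline{xry}$. Then \pc{ApplySCn} places $L$ into $B$'s list between the surviving neighbours $x,y$, updating the shortcut structure locally; if $r$ and the replacement were fully covered by a shortcut of $A$ then the certificate is still globally valid and we can stop after returning the locators. Otherwise \pc{reestablishBA} and \pc{reestablishAB} walk outward from $(\SC x,\SC y)$, restricting attention to the at most constantly many ``construction sites'' delimited by surviving selected points and lost equalities, and repeatedly splitting a splitter of the opposite hull using the geometric shadow tests to find the new bridges and equalities. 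Each call to \pc{FindBridge} runs the standard candidate-advancing loop but is allowed to jump over bypasses (via \pc{rmBr}'s $P$) and over shortcut intervals, so its running time is proportional to the number of newly exposed hull vertices. Finally \pc{CreateSC} reinstalls aggressive shortcuts between $u_0$ and $v_0$ so that the invariants hold again, and we walk the bridges between $X$ and $Y$ once more to return the list $L'$ of locators.

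The hard part is the amortized analysis of the two reestablish routines and the bridge search: a single replacement can in the worst case expose $\Theta(|A|+|B|)$ previously hidden points. The plan is to assign to every point that currently lies under some shortcut a unit of potential, and to every aggressive shortcut a constant potential; then the monotonicity lemma of Section~\ref{sec:monotonicity} (the set of hidden points grows under deletions in the extent of a \pc{Replace}) guarantees that the potential released when \pc{FindBridge} walks across a previously hidden point pays for visiting that point and for the corresponding shortcut bookkeeping, while \pc{ApplySCn} and \pc{CreateSC} deposit new potential at a rate proportional to $1+|L|+|L'|$, which is affordable within the claimed amortized bound. The $O(1)$ term absorbs the constant number of new bridges, equality points, and construction sites per replacement, while the $|L|$ and $|L'|$ terms absorb respectively the insertion work and the direct enumeration of changed hull vertices. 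The space bound is immediate: bridges, equality points, shortcuts, and locators are each of cardinality $O(|A|+|B|)$ and the algorithms maintain, rather than accumulate, these objects, so the total storage is $O(|A|+|B|)$.
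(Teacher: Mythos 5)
Your overall architecture (certificate plus shortcuts, linear sweep for \pc{Create}, the \pc{rmBr}/\pc{ApplySCn}/reestablish/\pc{CreateSC}/\pc{FindBridge} pipeline for \pc{Replace}) matches the paper, and your accounting for \pc{FindBridge} via bypasses and the life-cycle of surfacing points is essentially the paper's argument for that sub-step. The gap is in the amortized analysis of the reestablishing phase, which is where the real difficulty of the theorem lies. Your potential function puts units only on points currently hidden under shortcuts and on the shortcuts themselves, and pays for work by the exposure of hidden points. But the dominant cost of a \pc{Replace}$_B(r,L)$ with small $|L|$ and $|L'|$ is incurred in regions where \emph{nothing surfaces}: after $\UH(B)$ shrinks, ray intersections and tangent/boundary certificates over a streak of polarity $B$ over $A$ can become invalid even though every point of $A$ there stays strictly inside $\UC(B)$ and no point of $B$ there leaves the outer hull. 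Under your scheme those points release no potential, yet the algorithm must search among the inner-hull points for newly selectable points and possibly new equality points; done naively this costs $\Omega(\log)$ (or a full scan of $A$) per replacement, which breaks the $O(1+|L|+|L'|)$ bound. Your description of the certificate as ``a cyclic sequence of bridges and equality points'' omits exactly the components that make this affordable: the selected points with their point certificates, the tangent and boundary certificates between them, and the maximality condition on the selected set.

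The paper closes this gap with two coupled devices that your proposal does not supply substitutes for. First, the potential of Definition~\ref{def:potentialCert} assigns every non-selected locally inside point an extra $\CSP\geq 5$ units, released when the point is selected; this pays for creating the new construction-site boundaries and the recursive refinement calls, independently of whether any point surfaces. Second, the search for a new selectable point between two selected points is a \emph{suspended} protected-intervals (finger/exponential) search in a splitter (Section~\ref{sec:splitter}, Theorem~\ref{theorem:splitter}): each re-invocation costs amortized $O(1)$, with the logarithmic search cost charged against the eventual split, whose correctness across suspensions rests on the monotone shrinking of shadows (Lemma~\ref{lem:MonotonicShadows} — note the monotonicity goes the opposite way from your phrasing: extents and shadows only shrink). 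Without an analogue of the selected-point/splitter machinery and a potential that pays for point selection and construction-site boundaries, the claimed amortized bound for \pc{Replace} does not follow from exposure of hidden points alone.
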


%%%%%%%%%%%%%%%%%%%%%%%%%%%%%%%%%%%%%%%%%%%%%%%%%%%%%%%%%%%%%%%%%%%%%%
\subsection{Join-Delete Data Structure}
\label{sec:SemidynamicInterface} 
%%%%%%%%%%%%%%%%%%%%%%%%%%%%%%%%%%%%%%%%%%%%%%%%%%%%%%%%%%%%%%%%%%%%%%

In this section we show how to use the merger of the previous section to
construct a data structure maintaining the upper hulls for a
collection of sets of points under the following operations:

\begin{itemize}

\item \pc{MakeSet}$(p)$ - Given a point $p$, creates a new set $S=\{p\}$,
%%  containing the single point $p$, 
	and returns a reference to the
  set~$S$ and a locator to the point~$p$.

\item \pc{Join}$(S_1,S_2)$ - Given the references to two nonempty
  sets $S_1$ and $S_2$, creates the new merged set $S=S_1\cup
  S_2$. Returns a reference to the new set~$S$ and the list $L=\UV(S)$
  of  locators to the points on the upper hull of $S$ sorted
  left-to-right. The references to the two sets $S_1$ and $S_2$ are
  not valid any longer.

\item \pc{Delete}$(r)$ - Takes the locator to a point~$r$ in some
  set~$S$, and deletes $r$ from $S$.  
  If $r\in\UV(S)$ returns a left-to-right sorted list~$L$ of
  locators to the points replacing $p$ on the upper hull of $S$ ($L$
  is possibly an empty list). If $r\not\in\UV(S)$ returns
%  $L=\pc{NULL}$, indicating 
  that $r$ was not on the upper hull of~$S$.

\end{itemize}

In the following we describe how to compose mergers in binary trees 
to support the above operations within the following complexities.

\begin{theorem}
\label{thm:mergeDS}
  There exists a data structure storing a collection of point
  sets, supporting \pc{MakeSet} in time $O(1)$, and
  \pc{Join} and \pc{Delete} in amortized time
  $O(|S_1|+|S_2|)$ and $O(1+|L|)$, respectively. The data structure
  uses space linear in the number of \pc{MakeSet} operations.
\end{theorem}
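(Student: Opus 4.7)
The plan is to represent each set~$S$ as a binary tree whose leaves are the points of~$S$ and whose internal nodes each store a merger from Theorem~\ref{thm:merger}, combining the upper hulls of the two children so that the root merger outputs $\UV(S)$. With this in hand, \pc{MakeSet}$(p)$ creates a single leaf in $O(1)$ time; \pc{Join}$(S_1,S_2)$ creates a new root above the two existing trees and calls $\pc{Create}(\UV(S_1),\UV(S_2))$ on it, costing amortised $O(|\UV(S_1)|+|\UV(S_2)|)=O(|S_1|+|S_2|)$ by Theorem~\ref{thm:merger}; and \pc{Delete}$(r)$ removes $r$'s leaf and then walks up its ancestors, calling \pc{Replace} at each ancestor where $r$ is still on the output hull, substituting $r$ by the list $L_{i-1}$ returned at the level below (with $L_0=\emptyset$), and stopping either at the first ancestor where $r$ has fallen off the hull (reporting ``not on $\UV(S)$'') or at the root (returning the list $L$ produced there).

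Correctness rests on a monotonicity observation: since \pc{Replace} only shrinks $\UC(B)$, the upper hull at each merger can only lose points over time, so if $r$ is hidden by some point $q$ at level~$k$, then $q$ also hides $r$ at every higher ancestor, and the points newly exposed below $r$ at level~$k-1$ lie below the level-$k$ hull as well and cannot trigger any change above. Hence the set of levels on which $r$ lies on the hull is a contiguous prefix $0,1,\dots,k_r$, and the inductive invariant ``the merger at $v$ outputs the upper hull of its subtree'' is preserved by each \pc{Replace}.

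For the time bounds I would use the potential
\[
    \Phi \;=\; c_1 \sum_{p} h_p \;+\; c_2 \sum_{M} \alpha_M,
\]
where the first sum ranges over the currently stored points, $h_p$ is the depth of $p$'s leaf, and $\alpha_M$ counts the points in merger~$M$'s current input hulls $A\cup B$ that have not yet appeared on $M$'s output hull during $M$'s lifetime (non-increasing between \pc{Create}s by the monotonicity above, since each point can appear at most once). A \pc{Join}$(S_1,S_2)$ raises $h_p$ by~$1$ for every $p\in S_1\cup S_2$ and initialises $\alpha$ of the new merger to at most $|\UV(S_1)|+|\UV(S_2)|\le|S_1|+|S_2|$, so $\Delta\Phi=O(|S_1|+|S_2|)$ is absorbed into the actual \pc{Create} cost. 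A \pc{Delete}$(r)$ has actual cost $O(k_r+\sum_{i<k_r}|L_i|+|L|)$; the release of $c_1h_r$ pays for the $O(k_r)$ traversal since $k_r\le h_r$, and each of the $|L_i|$ new appearances releases one unit of the corresponding $\alpha_M$, so the amortised cost collapses to $O(1+|L|)$ for sufficiently large $c_1,c_2$.

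The main obstacle I anticipate is the cutoff ancestor at level~$k_r$, whose input set has implicitly changed (losing $r$ and gaining $L_{k_r-1}$) even though its output hull is unchanged: one must argue either that the merger representation is robust to such silent changes---because the newly exposed points remain strictly below the current hull and never resurface---or add a constant-cost ``silent update'' routine whose cost is covered by the $\alpha_M$ credits already allocated at \pc{Create} time. A secondary issue is the $O(n)$ space claim: naive summation of the per-merger bound $O(|A|+|B|)$ of Theorem~\ref{thm:merger} gives $O(\sum_p h_p)$, so one must share the base records of points across the tree via the locator mechanism so that each point contributes only $O(1)$ to the total, a bookkeeping detail I expect to be straightforward given how locators abstract positions inside the merger.
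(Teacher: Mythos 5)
Your skeleton matches the paper's (a binary tree of mergers, \pc{Create} on \pc{Join}, a bottom-up chain of \pc{Replace} calls on \pc{Delete}, a potential combining leaf depth with ``how high has this point surfaced''), and your contiguous-prefix observation is correct. But the obstacle you flag at the cutoff ancestor is not a deferrable detail --- it is the crux of the construction, and your proposed fix points the wrong way. \pc{Replace}$_B(r,L)$ is only defined for $r\in B\cap\UV(A\cup B)$, and removing a point of $B$ that lies in $\UCo(A\cup B)$ is not ``silent'': such a point may be a selected point carrying part of the separation certificate of the merger above your cutoff, so that merger cannot absorb the change in $O(1)$ without exactly the repair work that \pc{Replace} performs and that is only cheap when the removed point is on the merged hull. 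The paper therefore never partially deletes: if $r\notin\UV(S)$ the deletion is deferred entirely --- $r$ is merely marked deleted and remains physically present in every merger --- and a marked point is physically removed only later, when it surfaces on a merged hull inside some \pc{Replace} chain, at which moment it is recursively purged bottom-up from the subtree where it is now everywhere on the hull. Your scheme, by contrast, leaves the merger at level $k_r+1$ with a stale input and has no mechanism for the (now possible) event that $r$ later surfaces on that merger's output hull and propagates to $\UV(S)$ even though the user deleted it.

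The second gap is the space bound. Sharing base records via locators does not yield $O(n)$: each merger stores $\Theta(|A|+|B|)$ local records for its two input hulls (list nodes, splitter leaves, certificate records), so with $n$ points in convex position and a balanced join tree --- where every point is on the hull of every subset --- your one-merger-per-\pc{Join} design uses $\Theta(n\log n)$ space. The paper explicitly identifies this as the ``easy and natural $O(n\log n)$ space solution'' and rejects it; instead each \pc{Join} creates \emph{two} mergers and an initially empty extracted-set leaf $X$, and the invariant $2|L_v|\leq|L_{v'}|+3|X|$ is restored by \pc{CheckExtractor}, which recursively deletes the propagated hull points out of the left subtree, moves them into $X$, and rebuilds the top merger. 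The amortized analysis of this migration (and of the recursive purging of marked-deleted points) is where most of the proof's effort goes, and it has no counterpart in your proposal.
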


%%%%%%%%%%%%%%%%%%%%%%%%%%%%%%%%%%%%%%%%%%%%%%%%%%%%%%%%%%%%%%%%%%%%%%
\subsubsection*{Recursive construction}
%%%%%%%%%%%%%%%%%%%%%%%%%%%%%%%%%%%%%%%%%%%%%%%%%%%%%%%%%%%%%%%%%%%%%%

Each set $S$ is represented by a binary tree, and a reference to a
set~$S$ is a pointer to the root of the tree representing~$S$. Each
leaf corresponds to either a \pc{MakeSet} operation, denoted a
\defword{\pc{MakeSet}-leaf}, or a list of points in upper convex
position sorted left-to-right, denoted an \defword{extracted
  set}.  For each \pc{MakeSet} operation performed we have
exactly one \pc{MakeSet}-leaf (also after the point has been
deleted) and for each \pc{Join} operation we create one
extracted set and two internal nodes, where each internal node stores a merger
(see Figure~\ref{fig:join-extractor}).  The sole role of extracted sets
is to achieve linear space.

Since mergers only allow the deletion of points that are present on
the merged upper hull, it is essential to our solution to
postpone the actual deletion of a point in a Join-Delete data structure until it shows up on the upper hull
of all the mergers it is stored in. For a point
set~$S$ we store both the points in $S$ and possibly some deleted
points~$D$ that correspond to \pc{MakeSet} leaves in the binary tree
for~$S$.  Each point $p\in S\cup D$ is stored in exactly one leaf:
either the leaf corresponding to \pc{MakeSet}$(p)$ or in some
extracted set~$X$.
Note that~$X$ is a child of an ancestor of the leaf \pc{MakeSet}$(p)$.

We let $L_{\ell}$ denote the list of locators to the
points stored at the leaf~$\ell$. Each \pc{MakeSet}-leaf stores at most
one point and each extracted set stores a possibly empty list of
points in upper convex position.
For each internal node~$v$ with  with two children $u_1$ and $u_2$ we let $S_v$ denote the set of points
stored at the leaves of the subtree rooted at~$v$. At the node~$v$ we
store the list $L_v=\UV(S_v)$ of locators to the points on the upper hull of $S_v$,
sorted left-to-right.
We maintain $L_v$
using the merger of Theorem~\ref{thm:merger} on the two upper hulls
$L_{u_1}$ and $L_{u_2}$, using the observation that
$$L_v
=\UV(S_v)
=\UV(S_{u_1}\cup S_{u_2})
=\UV(\UV(S_{u_1}) \cup \UV(S_{u_2}))
=\UV(L_{u_1}\cup L_{u_2})\;.$$

For each point~$p$ created with a \pc{MakeSet}$(p)$ operation we store
a record together with the information below. A pointer to this
\defword{base record} is the locator for the point~$p$ in the Join-Delete
data structure.

\begin{itemize}
\item the point~$p$,
\item a pointer to the leaf $\ell$ storing the point~$p$,
\item a bit indicating if the point has been deleted,
\item a bit indicating if the point is on the upper hull of the set
  it is stored in, and 
\item a list of locators to the point~$p$ in the mergers at the
  ancestors of the leaf $\ell$ storing $p$ (note that by the interface
  of a merger we only have locators for the nodes~$v$ where
  $p\in\UV(S_v)$).
\end{itemize}

%%%%%%%%%%%%%%%%%%%%%%%%%%%%%%%%%%%%%%%%%%%%%%%%%%%%%%%%%%%%%%%%%%%%%%
\subsubsection*{Operations}
%%%%%%%%%%%%%%%%%%%%%%%%%%%%%%%%%%%%%%%%%%%%%%%%%%%%%%%%%%%%%%%%%%%%%%

The three operations \pc{MakeSet}, \pc{Join}, and \pc{Delete} are implemented as follows.

A \pc{MakeSet}$(p)$ operation creates a new tree consisting of a
single \pc{MakeSet}-leaf $\ell$ with $L_\ell=(p)$.
It creates a base record for $p$ with a pointer to~$\ell$, marks~$p$ as not being deleted, marks~$p$ as being on the
upper hull of its set, and stores a locator to $p$ in $L_\ell$.
We return a reference to the leaf~$\ell$ as the reference for the new set, and a
reference to the new base record as the locator for $p$ in the Join-Delete
data structure.

%% %{pstex} join-extractor 0.7
%WRITEMAG% join-extractor 0.7

\begin{figure}
  \begin{center}
    \inputfig{join-extractor}
    \caption{\pc{Join}$(S_1,S_2)$.}
    \label{fig:join-extractor}
  \end{center}
\end{figure}

For a \pc{Join}$(S_1,S_2)$ operation we get pointers to the two
roots $u_1$ and $u_2$ of the two binary trees representing~$S_1$
and~$S_2$, respectively.
Now there is an easy and natural $O(n\log n)$ space solution where we use directly one additional merger to create the data structure for $S=S_1\cup S_2$.
The call to \pc{Join}$(S_1,S_2)$ returns the locators created by the merger of Theorem~\ref{thm:merger} and for points on the current upper hull of~$S$, adds them to the base record of the point.

To achieve linear space, we create a new leaf with an initially empty extracted set $X$ and two nodes $v$ and $v'$, each storing a merger,
such that $u_1$ and $u_2$ are the children of $v'$, and $v'$ and $X$
are the children of the new root~$v$ (see
Figure~\ref{fig:join-extractor}). We mark all points in $L_{u_1}$ and
$L_{u_2}$ as not being on the upper convex hull of their set,
construct $L_{v'}$ and $L_v$ using two new mergers (where $X$ is empty),
and mark every point~$p$ in $L_v$ as being on the upper convex hull
and add to the record of $p$ the locators of $p$ in the
mergers~$L_{v'}$ and $L_v$. To guarantee linear space we call
\pc{CheckExtractor}$(v)$ (described below). Finally we return a
reference to $v$ (representing the new set) and the list of
Join-Delete locators of the points in~$L_v$.
\vote{is this really how we should present it? empty X, call checkextractor?}

We implement \pc{Delete}$(r)$ by first marking $r$ as being deleted.
If $r$ is not marked as being on the upper convex hull of its
set we are done, since we have to postpone the deletion of~$r$.
Otherwise, we delete~$r$ from the structure as follows. The point $r$
is stored in the mergers on a path of nodes $v_0,\ldots,v_h\,$, where
$v_0$ is the leaf~$\ell$ storing $r$ and $v_h$ is the root of the
tree.  We first delete $r$ from the list $L_{v_0}$ at the leaf $v_0$.
For each of the mergers at the internal nodes along the path
$v_1,\ldots,v_h$ we update $L_{v_i}$ by calling
\pc{Replace}$_A(r_i,\bar{L}_{i-1})$ or
\pc{Replace}$_B(r_i,\bar{L}_{i-1})$, depending on if $v_{i-1}$ is the
left or right child of $v_i$. Here $r_i$ is the locator of $r$ in the
merger for $L_{v_i}$ (the locators $r_0,\ldots,r_h$ are stored in the
base record of~$r$) and $\bar{L}_{i-1}$ is the list of non-deleted points
determined to replace $r$ in $L_{v_i-1}$ ($\bar{L}_0=\emptyset$).  
We compute $\bar{L}_i$ from 
the list $\hat{L}_i$ of points returned by the call to \pc{Replace}
at~$v_i$ as follows. If none
of the points in $\hat{L}_i$ are marked as deleted then
$\bar{L}_i=\hat{L}_i$.  Otherwise, each point $p$ in $\hat{L}_i$ being
marked as deleted is now on the upper hull in all the mergers $p$ is
stored in, and can now be deleted. We recursively delete $p$ from all
the mergers storing $p$ in the subtree rooted at $v_i$, and let $\bar{L}^p$ denote the
list of non-deleted points replacing $p$ in the merger at~$v_i$
(computing this can again trigger the recursive deletion of points marked
deleted and calls to \pc{CheckExtractor}). In $\hat{L}_i$ we replace
$p$ by $\bar{L}^p$.  We repeat this recursive deletion of points
marked deleted in $\hat{L}_i$ until all points in $\hat{L}_i$ are
marked non-deleted.  We then let $\bar{L}_i=\hat{L}_i\subseteq L_{v_i}$.  For
each point $p\in\bar{L}_i$, the new locator of $p$ in $L_{v_i}$ is
appended to the base record of $p$.  If $v_i$ has even depth (i.e., the
node corresponded to a root after a \pc{Join} operation) we call
\pc{CheckExtractor}$(v_i)$.  Finally at the end of \pc{Delete}$(r)$
all points $p\in \bar{L}_h$ are marked as being on the upper
convex hull of their point set and we return the Join-Delete locators
of the points in $\bar{L}_h$, i.e., to all newly surfacing non-deleted points on the upper
convex hull of the set.
\vote{(hard to follow the recursive computation of $L_i$) try to improve?}

%% Finally, to guarantee linear space we call
%% \pc{CheckExtractor}$(v_i)$ for each $v_i=\ldots,v_{h-4},v_{h-2},v_h$
%% bottom up that has even depth (i.e., the nodes that corresponded to
%% roots after a Join operation).  While deleting $r$ we keep track if
%% other deleted points now have surfaced on the upper convex hull of the
%% set, i.e., being marked both deleted and on the upper convex hull of
%% their set. We keep these points in a queue $Q$, and repeatedly delete
%% a point $q\in Q$ similarly as described above until no points on the
%% overall convex hull are marked as being deleted, i.e., until $Q$ is
%% empty.  

%%%%%%%%%%%%%%%%%%%%%%%%%%%%%%%%%%%%%%%%%%%%%%%%%%%%%%%%%%%%%%%%%%%%%%
\subsubsection*{Maintaining Extractors}
%%%%%%%%%%%%%%%%%%%%%%%%%%%%%%%%%%%%%%%%%%%%%%%%%%%%%%%%%%%%%%%%%%%%%%

To guarantee linear space we require that each internal node~$v$
with even depth, i.e., nodes where the right child is an extracted
set, satisfies the following two invariants. Let $X$ be the extracted
set of the right child of $v$ and $v'$ be the left child of $v$ (see
Figure~\ref{fig:join-extractor}).
\begin{eqnarray}
  X & \subseteq & L_v\;, \label{inv:extracted} \\
  2|L_v| & \leq & |L_{v'}| + 3|X|\;. \label{inv:propagated}
\end{eqnarray}
By the first invariant (\ref{inv:extracted}) we simply require that
$X\subseteq UV(S_v)$. The second invariant~(\ref{inv:propagated}) will
ensure total linear space as we prove below. The idea is that, if
(\ref{inv:propagated}) is not satisfied then we make $X=L_v$ 
(and therefore reestablish~(\ref{inv:propagated})) by
moving the set of points $X'=L_v\setminus X$ from $L_{v'}$ to $X$ by
recursively deleting the points from $S_{v'}$.

The operation \pc{CheckExtractor}$(v)$ checks if
(\ref{inv:propagated}) is violated for node~$v$. If the invariant is
not true we reestablish the invariant as follows.  First we identify
the set of points $X'=L_{v'} \cap L_v$ to be moved to $X$. We then
recursively delete each point $p\in X'$ from $S_{v'}$ as described
above (this can cause calls to \pc{CheckExtractor}$(w)$ for
descendants $w$ of $v'$). Note that we do not delete $p$ from the
mergers at the ancestors of $v'$. 
We then merge the lists $X$ and $X'$ and make the merged result a new list $X_{\textrm{\scriptsize new}}$. 
For the points from $X'$ we create new locators into $X_{\textrm{\scriptsize new}}$ and insert this locator as a replacement for the deleted
locators. Finally we create a new merger for $v$ by calling
\pc{Create}$(L_{v'},X_{\textrm{\scriptsize new}})$.  Note that now
$L_v=X_{\textrm{\scriptsize new}}$, but $L_v$ is still exactly the
same set of points, 
%as before moving the points from $X'$ to $S$, 
since
$S_v$ remains unchanged. We update the locators for all points in
$L_v$ to the new locators in~$L_v$.

%%%%%%%%%%%%%%%%%%%%%%%%%%%%%%%%%%%%%%%%%%%%%%%%%%%%%%%%%%%%%%%%%%%%%%
\subsubsection*{Analysis}
%%%%%%%%%%%%%%%%%%%%%%%%%%%%%%%%%%%%%%%%%%%%%%%%%%%%%%%%%%%%%%%%%%%%%%

\newcommand{\Lprop}{L_{v'}^{\textrm{\scriptsize prop}}}
\newcommand{\Lhide}{L_{v'}^{\textrm{\scriptsize hide}}}
\newcommand{\newLprop}{\underline{L}_{v'}^{\textrm{\scriptsize prop}}}
\newcommand{\newLhide}{\underline{L}_{v'}^{\textrm{\scriptsize hide}}}
\newcommand{\newX}{\underline{X}}
\newcommand{\newL}{\underline{L}}

In the following we prove that the performance of the Join-Delete data structure is as stated in Theorem~\ref{thm:mergeDS}. The correctness follows from that all points in a set are stored at exactly one leaf of the tree,  we bottom up maintain the upper hulls for each subset defined by a node of the tree, and all points at the root are not marked as deleted.

For the space usage we first note that if $n$ \pc{MakeSet} operations have been performed, 
then at most $n-1$ \pc{Join} operations can have been performed creating at most $2(n-1)$ internal nodes.
Since each \pc{MakeSet} and \pc{Join} operation creates a new leaf, 
the total number of leaves created is at most $2n-1$, i.e., the total number of nodes is at most $4n-3$.

To bound the space for the points in the mergers, 
consider an internal node $v$ at an even level, with left child $v'$ and right child an extracted set $X$. 
We partition the set $L_{v'}$ into two sets 
$\Lprop$ and $\Lhide$, where $\Lprop=L_{v'} \cap L_v$ are the points that are \emph{propagated} at the parent node $v$, and $\Lhide=L_{v'} \setminus L_v$ are the points that are \emph{hidden} at~$v$.  
Since $L_{v'}=\Lhide\cup\Lprop\subseteq L_v$ we have
$$
%%|\Lprop|\leq
|L_v|\leq 2|L_v|-|\Lprop| \leq 3|X|+|L_{v'}|-|\Lprop| = 3|X|+|\Lhide|\;,
$$
where the second inequality follows from invariant~(\ref{inv:propagated}).
Since each point is contained in at most one $X$ and one $\Lhide$ set, we get a linear bound on the total number of points in all $L_v$ sets at internal nodes at even levels, implying a total $O(n)$ bound on the number of points stored in all mergers.

To analyze the amortized time for \pc{Join} and 
\pc{Delete} we define a potential $\Phi_p$ for each point $p$ stored in a tree. 
Let $d(p)$ be the depth of the leaf storing~$p$,
and $D(p)$ the depth of the topmost node where $p$ is on the upper hull of a merger.
We let $\Phi_p=d(p)+D(p)+1$ if $p$ is not marked deleted, and $\Phi_p=d(p)-D(p)+1$ otherwise.
We first analyze the cost of the operations without the calls to 
\pc{CheckExtractor}. 

The worst-case cost of \pc{Join}$(S_1,S_2)$ is $O(|S_1|+|S_2|)$ to construct two new nodes $v$ and $v'$ and the associated mergers. 
The depth of all existing nodes in the two trees increases by two, i.e., the potential of all points not marked deleted increases by four. 
The potential of points marked deleted remains unchanged.
Since the total increase in potential is  $4|S_1|+4|S_2|$, the amortized cost of \pc{Join} becomes $O(|S_1|+|S_2|)$ (without the cost for \pc{Check\-Extractor}). 

For \pc{Delete}$(r)$, marking $r$ deleted
decreases the potential of $r$ by $2D(r)$.
If $r$ is not on the upper hull of its set, we are done having used amortized $O(1)$ time.
Otherwise $r$ will be deleted from $d(r)$ mergers, which can recursively trigger other points to appear on the upper hull in a merger and being inserted in a sequence of parent mergers (with \pc{Replace}) and other points already marked as deleted to be deleted from all the mergers they are stored in. The cost for handling each point will be charged to each point's decrease in potential.
A point $p$ marked deleted that now gets removed from the data structure is deleted from the $d(p)-D(p)+1$ mergers storing $p$ (with \pc{Replace}). By Theorem~\ref{thm:merger} it is sufficient to charge each such deletion $O(1)$ work, i.e., total work $O(d(p)-D(p)+1)$. 
Since $p$ releases potential $d(p)-D(p)+1$, the work is covered by the decrease in potential. 
A non-deleted point 
$p$ that newly appears on the upper hull of the merger at a node~$v$ (output of \pc{Replace} at $v$) is inserted into the merger at the parent of $v$ (input to  \pc{Replace} at the parent), except if $v$ is the root. If $p$ in total newly appears on the upper hulls of $k$ mergers it is inserted in at most $k$ new mergers, i.e., by Theorem~\ref{thm:merger} it is sufficient to charge $O(k)$ work to $p$. Since $D(p)$ decreases by $k$, the point $p$ releases potential $O(k)$ to cover this work also. In summary all work by a \pc{Delete} operation (except the cost for \pc{Check\-Extractor} operations) is paid by the decrease in potential.

Finally we consider the work of \pc{CheckExtractor}$(v)$,  when 
invariant (\ref{inv:propagated}) is violated, i.e., $2|L_v|>|L_{v'}| + 3|X|$.
Again we will charge this work to the decrease in potential.  Let $v'$ be the left child of $v$ and define $\Lprop$ and $\Lhide$ as above. In the following we let underlined variables denote the variables after the \pc{CheckExtractor} operation, and non-underlined variables the values before the operation. E.g., we have $\newX=X\cup\Lprop$ and $\newLprop=\emptyset$. 
The primary work consists of extracting the points $\Lprop$ from $L_{v'}$ and all descendants of $v'$, and building the new $\newX$ and the merger at~$v$. 
The later costs worst-case $O(|\newX|+|\newL_{v'}|)$. 
Assume the depth of $v'$ is $\Delta$. 
Each point $p\in\Lprop$ is deleted from $d(p)-\Delta$ mergers. Since $p$ is moved to $\newX$, we have $\underline{d}(p)=\Delta$, and $p$ releases $d(p)-\Delta$ potential for deleting $p$ from all structures. 
The remaining work for all points that during the recursive deletion of a point $p$ are propagated or are deleted because they were marked deleted is charged to these points' change in potential as described above for \pc{Delete}.
Finally we argue that the worst-case cost $O(|\newX|+|\newL_{v'}|)$ can be charged to the potential released from the points in $\Lprop$ and the points 
$q\in\newLhide\setminus\Lhide$ that all decreased their $D(q)$ value by being inserted into the merger of $v'$.

From invariant (\ref{inv:propagated}) being violated we 
have $2|L_v|>|L_{v'}|+3|X|$, and %we observe
$$|X|+|\Lhide| \leq 3|X|+|L_{v'}|-2|X|<2|L_v|-2|X|=2|\Lprop|\;.$$
Since $\newL_{v'}=\newLhide$ we have
$$
\begin{array}{rcl}
|\newX|+|\newL_{v'}|
& = & |X|+|\Lprop|+|\newLhide| \\[1ex]
& = & |X|+|\Lprop|+|\Lhide|+|\newLhide\setminus\Lhide| \\[1ex]
& < & 3|\Lprop|+|\newLhide\setminus\Lhide|\;.
\end{array}
$$
It follows that the remaining $O(|\newX|+|\newL_{v'}|)$ work can be charged to the potential released by the points in $\Lprop$ and $\newLhide\setminus\Lhide$.

%%%%%%%%%%%%%%%%%%%%%%%%%%%%%%%%%%%%%%%%%%%%%%%%%%%%%%%%%%%%%%%%%%%%%%
\subsection{Global Rebuilding and Logarithmic Method}
\label{sec:LogMethod}
%%%%%%%%%%%%%%%%%%%%%%%%%%%%%%%%%%%%%%%%%%%%%%%%%%%%%%%%%%%%%%%%%%%%%%

We use several, by now standard, techniques for dynamic data
structures that are explained in\linebreak[5] depth in~\cite{OvermarsPhD}.
We use the assumption that we know in advance the number~$n$ of points
stored in the data structure up to a constant factor. 
This is justified by the amortized version of the ``global
rebuilding'' technique from \cite[Theorem~5.2.1.1]{OvermarsPhD}.

We use the ``logarithmic method'' of Bentley and
Saxe~\cite{bentley80} to turn a semidynamic (deletion only) data
structure into a fully dynamic data structure.
%% As in~\cite{brodal00} 
We use this technique with a degree of~$\log n$. 
More precisely, we decompose the set~$S$ into subsets, called
\defword{blocks}~\cite{OvermarsPhD}.
Each block has a \emph{rank} and is stored in a semidynamic data structure
that maintains the vertices of the upper hull of the block explicitly
in a linked list. To delete a point, we delete it from its block.
For an insertion of~$p$, we create a new block of rank~0 containing only~$p$.
If we have $\log n$ blocks of the same rank~$r$, we join all of them
into one new block of rank~$r+1$.
In this way, there exist at most $\log n / \log\log n$ different
ranks, which is also an upper bound on the number of times a point can
participate in the joining of blocks.
Additionally, the total number of blocks is bounded by $\log^2 n /
\log\log n$.
These two bounds make it feasible to achieve fast queries by maintaining an interval tree as explained in detail in Section~\ref{sec:OutlineIT} and Section~\ref{sec:IT}.
For the (simpler) kinetic case discussed in Section~\ref{sec:KinHeap} the bounds allow the use of a single secondary structure.

\subsection{Kinetic Heap}
\label{sec:KinHeap}
%%%%%%%%%%%%%%%%%%%%%%%%%%%%%%%%%%%%%%%%%%%%%%%%%%%%%%%%%%%%%%%%%%%%%%

This section describes a data structure implementing a kinetic heap
using the logarithmic method.
%% It uses the merger in the same way as the parametric heap or the
%% dynamic convex hull data structure, but it does not use an interval
%% tree.
%% Still, there are  several concepts in common
%% with the construction of the interval tree of Theorem~\ref{thm:fast}
%% as explained in Section~\ref{sec:IT}.
%% It is simpler and illustrates several of these techniques nicely.
%% %Among the constructions presented in this article it is hence the one
%% %using the merger most directly.
%% We use some ideas of the paradigm of kinetic data structures, but we
%% do not think of it as a kinetic adaption of a static data structure.
%% \vote{do we?}
Recall that a kinetic heap stores a set~$S$ of linear functions and
has a current time~$t_c$.
It supports the following operations:

\begin{description}
\item[\pc{Insert}$(\ell)$] Inserts the line~$\ell$ into~$S$.
\item[\pc{Delete}$(\ell)$] Removes the line~$\ell$ from~$S$.
\item[\pc{kinetic-find-min}$(t)$] Evaluates $\min_{\ell\in S} \ell(t)$.
  Requires $t\geq t_c$ and sets $t_c:=t$.
\end{description}

In this section the Join-Delete data structure described in
Section~\ref{sec:SemidynamicInterface} 
is used in its dual form.
It stores lines and maintains the lower envelope of the lines. 
Vertical line queries return the line on the lower 
envelope intersected by a vertical line (see Figure~\ref{fig:duality}).

\begin{theorem}
\label{thm:KinHeap}
  There exists a kinetic heap implementation supporting \pc{Insert} and
  \pc{Delete} in amortized $O(\log n)$ time, and \pc{kinetic-find-min}
  in amortized $O(1)$ time.  The space usage of the data structure
  is~$O(n)$. 
  The parameter $n$ denotes the size of the stored set before the
  operation.
\end{theorem}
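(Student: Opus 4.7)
The plan is to apply the logarithmic method from Section~\ref{sec:LogMethod} with degree $\log n$ to decompose the set $S$ into blocks, each stored in a Join-Delete data structure of Section~\ref{sec:SemidynamicInterface}, working in the dual world so that the upper hull maintained at the root of each Join-Delete represents the block's lower envelope. There are at most $O(\log n/\log\log n)$ ranks and $O(\log^2 n/\log\log n)$ blocks in total at any time.

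For \pc{Insert}$(\ell)$ I create a rank-$0$ block by \pc{MakeSet}; whenever $\log n$ blocks of the same rank accumulate, I fold them into a single block of the next rank via a balanced tournament of \pc{Join} operations. The amortized $O(|S_1|+|S_2|)$ cost of \pc{Join} from Theorem~\ref{thm:mergeDS} distributes to amortized $O(1)$ per line per rank migration, and a line migrates through at most $O(\log n/\log\log n)$ ranks, giving amortized $O(\log n)$ per insertion. For \pc{Delete}$(\ell)$ I call \pc{Delete} in the single Join-Delete containing~$\ell$; Theorem~\ref{thm:mergeDS} bounds this by amortized $O(1+|L|)$, and the potential argument inside that theorem already bounds the total $|L|$ charged to each line over its life by $O(\log n)$.

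To answer \pc{kinetic-find-min}$(t)$ in amortized $O(1)$ time I maintain (i) in every block a finger to the segment of its lower envelope whose $x$-interval contains $t_c$, (ii) a min-priority queue $Q_{\min}$ keyed by the value at $t_c$ of each block's current segment, and (iii) an event queue $Q_{\mathrm{evt}}$ storing the next breakpoint of each block to the right of $t_c$. A query repeatedly extracts from $Q_{\mathrm{evt}}$ every event with time $\le t$, advances the corresponding finger to the next segment of its block's lower envelope, updates the block's entry in $Q_{\min}$, and re-inserts its successor breakpoint into $Q_{\mathrm{evt}}$. The query itself then returns the top of $Q_{\min}$ and sets $t_c:=t$. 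Beyond the work attributable to processed events, the query spends $O(1)$; each event costs $O(\log\log n)$ since both heaps have only $O(\log^2 n/\log\log n)$ entries. \pc{Insert} and \pc{Delete} additionally repair fingers and heap entries for the blocks they touch, which fits inside the $O(1+|L|)$ term of Theorem~\ref{thm:mergeDS}.

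The crux of the analysis is to amortize event work against insertions. Because each block is built once and only shrinks afterwards, once a line enters the upper hull of a block's root merger it remains there until that line is deleted or the block is merged; hence the total number of breakpoints ever appearing on a block's lower envelope is at most the block size. Summing over all blocks a line ever inhabits contributes $O(\log n/\log\log n)$ events per line, for $O(n\log n/\log\log n)$ events overall and $O(n\log n)$ total event-processing work, which is absorbed by the amortized $O(\log n)$ per insertion and leaves queries at amortized $O(1)$. The main obstacle I expect is the bookkeeping when a deletion produces a burst of surfacing segments straddling $t_c$: the finger must be moved onto the new segment that covers $t_c$, stale $Q_{\mathrm{evt}}$ entries for that block must be replaced, and the block's key in $Q_{\min}$ must be refreshed, all within the $O(1+|L|)$ amortized budget of the delete.
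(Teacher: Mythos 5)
Your block decomposition, the per-block finger/scan, and the event queue of breakpoints all match the paper's construction, and your accounting of joins and surfacing lines against insertions is sound. The gap is in $Q_{\min}$. You key it by the value of each block's current answer line \emph{at $t_c$} and then answer a query at time $t>t_c$ by returning its top, updating entries only when a breakpoint event fires. But the keys are linear functions of time, not static numbers: two answer lines $\ell_1,\ell_2$ with $\ell_1(t_c)<\ell_2(t_c)$ may satisfy $\ell_1(t)>\ell_2(t)$, so the identity of $\min_{\ell\in L_t}\ell(t)$ can change between queries even when no block's answer segment changes and no event in $Q_{\mathrm{evt}}$ fires. A comparison-based priority queue cannot maintain heap order under keys that drift at different rates, so the top of $Q_{\min}$ is simply not the correct answer; and re-keying all $\Theta(\log^2 n/\log\log n)$ entries per query destroys the $O(1)$ bound. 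The crossings among the current answer lines are an instance of the very problem you are trying to solve, on a smaller set.

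This is exactly why the paper's proof (Lemma~\ref{lem:KinBoost}) stores the set $L_t$ of current answer lines in a \emph{secondary kinetic heap} $H$ rather than a priority queue, with lazy versus forced deletions to keep $H$ small, and why Theorem~\ref{thm:KinHeap} is obtained by \emph{bootstrapping}: one starts from Preparata's insertion-only structure (deletions by rebuilding, $D(n)=O(n\log n)$), applies the construction once to get $D(n)=O(\log^2 n)$ on a set of size $O(\log^2 n/\log\log n)$, and applies it a second time to reach $O(\log n)$ deletions while queries remain amortized $O(1)$. Your proposal is missing this secondary structure and the two-level bootstrap entirely; without them the query is incorrect, and with them your remaining analysis would essentially reproduce the paper's.
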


We achieve Theorem~\ref{thm:KinHeap} by bootstrapping twice 
with the following construction:
\begin{lemma}[Bootstrapping kinetic queries]\label{lem:KinBoost}
  Let~$D$ be a nondecreasing positive function.  Assume there exists a
  kinetic heap data structure
  supporting \pc{Insert} in amortized $O(\log n)$ time, \pc{Delete} in
  amortized~$O(D(n))$ time, and \pc{kinetic-find-min} in $O(1)$
  amortized time, where~$n$ is the total number of lines inserted.
  Assume the space usage of this data structure is~$O(n)$.\conclusion
  Then there exists a kinetic heap data structure
  supporting \pc{Insert} in amortized $O(\log n)$ time and \pc{Delete}
  in amortized $O( D(2(\log^2 n)/\log\log n)+\log n )$ time, and \pc{kinetic-Find-min} in amortized $O(1)$ time, where~$n$ is the total number of
  lines inserted.  The space usage of this data structure is~$O(n)$.
\end{lemma}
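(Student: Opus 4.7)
The plan is to apply the logarithmic method with degree $\log n$ as sketched in Section~\ref{sec:LogMethod}. The set $S$ is partitioned into \emph{blocks}, at most $\log n /\log\log n$ distinct ranks exist, and the total number of blocks is at most $N := \log^2 n /\log\log n$. Each block is stored in the Join-Delete data structure of Theorem~\ref{thm:mergeDS} (used in its dual form so that each block maintains the lower envelope of its lines explicitly, with neighbors accessible through the merger locators). The ``single secondary structure'' will be an instance of the assumed kinetic heap, storing at every moment the $O(N)$ lines that are currently on the lower envelope of their block at the current time $t_c$, one per block. A \pc{kinetic-find-min}$(t)$ is then answered by first advancing $t_c$ up to $t$ and then calling the secondary heap's \pc{kinetic-find-min}$(t)$.

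I would implement the operations as follows. An \pc{Insert}$(\ell)$ creates a new rank-$0$ block containing $\ell$ and inserts $\ell$ into the secondary heap; whenever $\log n$ blocks of a rank $r$ accumulate, they are combined by $\log n -1$ \pc{Join} operations into one rank-$(r{+}1)$ block, the $\log n$ old representatives are deleted from the secondary heap, and the line that is the minimum of the new block at time $t_c$ is inserted instead. A \pc{Delete}$(\ell)$ deletes $\ell$ from its block through the Join-Delete structure, obtaining the list $L$ of lines now surfacing on that block's lower envelope; if the block's representative at time $t_c$ has changed, the secondary heap is updated accordingly (at most one \pc{Delete} and one \pc{Insert} in the secondary heap). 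To make \pc{kinetic-find-min} work, every block maintains its ``next swap time'' $\tau_i$ (the $x$-coordinate of the next vertex of its lower envelope to the right of $t_c$), and these times are kept in an ordinary priority queue over blocks. A call \pc{kinetic-find-min}$(t)$ repeatedly extracts the smallest $\tau_i\le t$, swaps the two lines of the corresponding block in the secondary heap, recomputes $\tau_i$, then finally queries the secondary heap at $t$ and sets $t_c:=t$.

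For the analysis I would show: every line participates in at most $O(\log n /\log\log n)$ joins, so the total work done inside the Join-Delete structures is $O(\log n/\log\log n)$ amortized per \pc{Insert} and $O(1+|L|)$ amortized per \pc{Delete} by Theorem~\ref{thm:mergeDS}, charging $|L|$ to the decrease of a standard hull potential. An \pc{Insert} additionally performs one \pc{Insert} in the secondary heap of size at most $N$, costing $O(\log N)=O(\log\log n)$; summing over the geometric series of joins at each rank gives the advertised $O(\log n)$ per \pc{Insert}. A \pc{Delete} triggers at most one secondary \pc{Delete} and one secondary \pc{Insert}, contributing $O(D(2N)+\log n)$. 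Queries cost amortized $O(1)$ once the swap events are charged elsewhere: each swap event is a vertex of some block's lower envelope, and I would charge its $O(D(2N)+\log\log n)$ cost to the line that is either newly surfacing (paid for by the $|L|$-term of the \pc{Delete} that produced it, or by the joining \pc{Insert}) or is being deleted.

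The hard part, and the step I would pay the most attention to, is the bookkeeping when a block changes: correctly maintaining the invariant ``the secondary heap contains exactly one line per block, namely its lower-envelope representative at the current time $t_c$,'' across \pc{Delete}s that may expose several new lines simultaneously, across \pc{Join}s that merge $\log n$ blocks, and across the asynchronous swap events processed during \pc{kinetic-find-min}. In particular, making sure that each ``representative change'' is charged to a distinct amortized resource (a new line surfacing via Join-Delete, a vertex of a lower envelope being consumed by time advancement, or a freshly inserted line) is the combinatorial core of the argument; everything else follows from plugging the block count $2N=2\log^2 n/\log\log n$ into $D(\cdot)$ and applying the assumed bounds for the underlying kinetic heap.
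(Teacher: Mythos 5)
Your architecture is the same as the paper's (logarithmic method with degree $\log n$, one Join-Delete structure per block, a single instance of the assumed kinetic heap holding the current per-block representatives, a priority queue of expiration/swap events), but you are missing the one idea that makes the amortization go through: \emph{lazy deletions} in the secondary heap. In your scheme, every time a block's representative changes --- at a swap event during \pc{kinetic-find-min}, or when $\log n$ old representatives are retired at a \pc{Join} --- you perform a genuine \pc{Delete} in the secondary heap, costing $O(D(2\log^2 n/\log\log n))$ amortized. A line can become and cease to be its block's representative up to once per level of the logarithmic method, i.e.\ $\Theta(\log n/\log\log n)$ times over its lifetime, so your proposed charging (to the surfacing line, the joining insert, or the deleted line) must absorb $\Theta(D(2\log^2 n/\log\log n)\cdot\log n/\log\log n)$ per inserted line. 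In the very first bootstrapping step of Theorem~\ref{thm:KinHeap} the base structure has $D(m)=O(m\log m)$, so $D(2\log^2 n/\log\log n)=\Theta(\log^2 n)$ and this charge is $\Theta(\log^3 n/\log\log n)$ per line --- far beyond the $O(\log n)$ insertion budget and incompatible with amortized $O(1)$ queries. The lemma as stated promises that only the explicit \pc{Delete} operation ever pays a $D(\cdot)$ term, so eager removal of stale representatives cannot work.

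The paper's fix is to distinguish \emph{forced} deletions (the line is actually removed from $S$ while sitting in the secondary heap; this is the only place a $D(\cdot)$ cost is paid, and it is charged to the \pc{Delete} operation) from \emph{lazy} deletions (the line is merely no longer its block's representative but is still in $S$): a lazy deletion only marks the line and leaves it in the heap, and the heap is rebuilt from scratch once the marked lines outnumber the unmarked ones, so a lazy deletion costs only as much as an insertion, $O(\log\log n)$. Correctness is preserved because a stale line still belongs to $S$, so its presence cannot make the heap's minimum smaller than the true minimum. This is also why the bound involves $D(2(\log^2 n)/\log\log n)$ rather than $D((\log^2 n)/\log\log n)$: up to half the secondary heap consists of marked lines. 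Without this mechanism (or an equivalent one), your analysis does not establish the claimed bounds.
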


\begin{proof} (of Theorem~\ref{thm:KinHeap})
  Preparata's semidynamic insertion only data
  structure~\cite{preparata79} supports insertions in $O(\log n)$ time
  and deletions in $O(n\log n)$ time.  It maintains the segments of
  the lower envelope explicitly in a sorted list.  By charging the
  deletions $O(n)$ additional work to pay for advancing the kinetic
  search over all segments, we achieve amortized~$O(1)$ kinetic
  queries.  Using this data structure in Lemma~\ref{lem:KinBoost}
  (bootstrapping) we get~$O(\log n)$ amortized insertions, $O(1)$
  amortized queries and amortized deletion cost
  $$O\left(\frac{2\log^2 n}{\log\log n}\cdot\log\frac{2\log^2 n}{\log\log
    n} +\log n\right)=O(\log^2 n) \,.$$ 
  Bootstrapping one more time reduces the amortized deletion cost
  further to $O(\log^2(2(\log^2 n)/\log\log n)+\log n)=O(\log n)$. 
  Here $n$ is the total number of lines inserted. 
  To achieve the space bound of Theorem~\ref{thm:KinHeap} we globally rebuild the data
  structure whenever half of the inserted elements have been deleted 
  by repeated inserting the non-deleted lines into a new empty data structure. 
%
%\vote{self-contained and clear?}
\end{proof}

\begin{proof}  (of Lemma~\ref{lem:KinBoost})
  We use the logarithmic method as explained in
  Section~\ref{sec:LogMethod} with degree~${\log n}$.
  This partitions the set~$S$ of input lines into at most $(\log^2
  n)/\log\log n$ 
  blocks, each stored in a Join-Delete data structure.
  Every line~$\ell$ that defines a segment of the lower envelope of its
  block gives rise to an \defword{activity interval}~$I_\ell$ by the projection
  of the segment to the time axis.
  This activity interval is implied by the neighboring lines on the envelope of the block and hence accessible by the Join-Delete data structure.
%interface of the merger at the root of the merging tree representing the block
  The activity interval $I_\ell$ is the interval of query-values for which~$\ell$ is the correct
  \pc{kinetic-find-min} answer from its block.
  
  The Join-Delete data structure of Theorem~\ref{thm:mergeDS} 
  storing a block supports \pc{kinetic-find-min}
  queries in amortized $O(1)$ time:
  We remember the last answer, and for a new query we scan along the
  lower envelope of the block until we find the segment containing the new query
  time.
%%  The time spent to perform this scan is charged to the \pc{join}
%%  operation:
%
A line is \emph{scanned over} by a \pc{kinetic-find-min} query advancing time from $t$ to $t'$ if the activity interval $I_\ell\subset ]t,t'[$. 
Since a line can be scanned over at most once during all \pc{kinetic-find-min} queries (see Figure~\ref{fig:activity-interval}) we can charge this scanning to the \pc{join} operation creating the block.
\begin{figure}[htb]
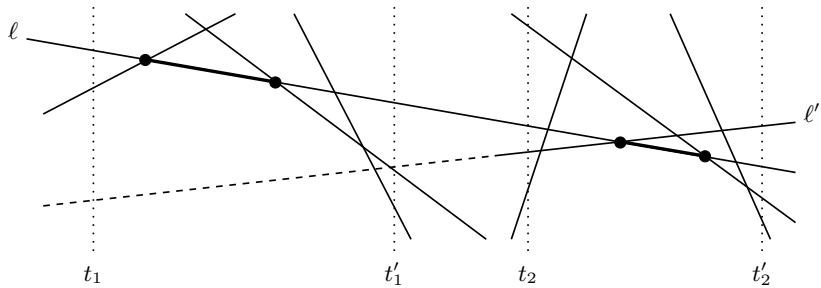

  \begin{center}
    \inputfig{activity-interval}
%WRITEMAG% activity-interval .7
    \caption{A line $\ell$ cannot be scanned over (bold line segments)
	       during two kinetic \pc{kinetic-find-min}
      operations on a block advancing the
      current time from $t_1$ to $t_1'$ and $t_2$ to~$t_2'$,
      respectively. The line $\ell'$ would hide the line $\ell$ during the
      time interval $[t_1,t_1']$ (dashed), since each block is a
      semi-dynamic deletion-only data structure.}
    \label{fig:activity-interval}
  \end{center}
\end{figure}
  The right endpoint~$e_\ell$ of the activity interval~$I_\ell$ of the
  current answer line~$\ell$ for a block gives the time until which~$\ell$ stays the
  correct answer.
  We refer to~$e_\ell$ as the \defword{(expiration) event} of~$\ell$.
  For the current time~$t$ the set~$L_t$ contains all lines that are
  answers for the \pc{find-min} queries on the up to~$(\log^2 n)/\log\log
  n$ Join-Delete data structures.
  The set of expiration events is stored in a priority queue~$Q$,
  implemented e.g. as a (2,4)-tree.
  This gives access to the earliest expiration event and the
  corresponding element in~$O(1)$ time, and allows for amortized
  $O(\log |L_t|)$ insertions, and amortized~$O(1)$ deletions
  of events (we only delete events that are stored).
  
  The set~$L_t$ is stored in a kinetic heap data structure~$H$
  with performance according to the assumption in the lemma.
  We call~$H$ the \defword{secondary structure}.
  For each block we store in~$H$ the line whose 
  activity interval contains the current time~$t$.
  In~$H$ we perform two types of deletions:
  If a line~$\ell$ gets deleted from the set~$S$ while stored in~$H$, we
  delete it from~$H$.
  This situation is called a \defword{forced deletion}.
  In contrast, if a line no longer is the current answer line for a
  block, but the line is still in $S$,
% in the secondary structure because the current time changed, 
  we perform a \defword{lazy deletion}.
  Leaving lazily deleted lines in~$H$ does not influence the \pc{find-min}
  queries for the (new) current time, since the lines are still in~$S$.
  Using the idea of a periodic global rebuild, we can perform lazy
  deletions with the amortized cost of an insertion.
  To do this, a lazy deletion in~$H$ merely marks the lazily deleted line as
  deleted, but leaves it in $H$.
  Only if the number of deleted lines in~$H$ exceeds the number of not deleted
  lines, we perform a global rebuild, i.e., we create a new secondary
  structure storing only the current answer lines of the blocks.
  This build operation is paid by the lazy deletions.
  Additionally, the size of the secondary structure is at most twice
  the number of not-deleted lines in it, i.e. two times the number of blocks.
% such that the running time of the operations increases only by a constant factor.
  
  For a \pc{kinetic-find-min} query for time~$t'>t$ do the following:
  Check if $t'$ is smaller than the earliest event in the priority
  queue~$Q$.
  If so, conclude $L_t=L_{t'}$, and there is no further change to the
  data structure.
  Otherwise, perform delete-min operations on $Q$ until
  the new minimum is larger than~$t'$.
  This identifies the set~$X$ of lines that should be lazily deleted
  from~$H$.
  For every block storing a line of~$X$, perform a
  \pc{kinetic-find-min} query for time~$t'$, leading to~$L_{t'}$.
  Insert the resulting line into~$H$ and the corresponding 
  expiration events into~$Q$.
  After these updates, $L_{t'}$ is represented in the secondary
  structure~$H$, and the query amounts to a \pc{kinetic-find-min} query for
  time~$t'$ in~$H$.
  
  The joining of $\log n$ blocks by the logarithmic method is
  implemented by $\log n-1$ \pc{join} operations arranged in a
  balanced binary tree of height $O(\log\log n)$.  The affected expiration
  events
  are removed from~$Q$, and lazy deletions are
  performed on the lines in~$H$.  The newly created Join-Delete data structure
  is queried with the current time and the answer-line is inserted
  in~$H$, and the expiration event into $Q$.
  
  A \pc{Delete}$(\ell)$ operation deletes~$\ell$ from the
  Join-Delete data structure it is stored in.
%  This is achieved by calling the delete operation of the merger with
%  the base record of the line.
  Additionally, if~$\ell$ is currently stored in the secondary
  structure~$H$, 
  perform a forced deletion of~$\ell$ from $H$, and remove the corresponding 
  expiration event from~$Q$ if $\ell$ was the current minimum of a block.
  Perform a \pc{kinetic-find-min} query for the current time on the
  Join-Deleted data structure storing the block that changed.
  Insert the resulting line into~$H$.
%the secondary structure and its event into the priority queue.
  If $\ell$ was defining an expiration event in the priority queue
  (i.e., $\ell$ was the current minimum of a block or the line on the
  lower envelope immediately to the right of the current minimum), we
  update~$Q$ by inserting the expiration event of the current
  line for a \pc{kinetic-find-min} query in the block of $\ell$.

  Note that the priority queue $Q$ and the secondary structure~$H$ store
  up to~$\log^2 n/\log\log n$ and $2\log^2 n/\log\log n$ elements, 
  respectively, and that updates of $Q$
  and insertions into~$H$ take amortized~$O(\log\log
  n)$ time.  Every line pays at each of the at most $\log n / \log\log
  n$ levels of the logarithmic method one insertion into~$Q$, 
  one insertion into~$H$, and for participating in
  $O(\log\log n)$
  \pc{join} operations (Theorem~\ref{thm:mergeDS}).  
  This includes deletions from $Q$ 
  and lazy deletions from $H$.  It totals to~$O(\log n)$
  amortized time, charged to the insertion of the line.  The kinetic
  queries on the Join-Delete data structure are paid for by the cost of
  the \pc{join} operations, and totals to amortized~$O(\log n)$ per line.  A
  deletion pays for the forced deletion of one line in $H$
  and for querying and reinserting one event into~$Q$
  and a line into~$H$.  Hence, the
  data structure achieves the amortized time bounds claimed.
  The space bound follows from Theorem~\ref{thm:mergeDS}.  
  This concludes the proof of Lemma~\ref{lem:KinBoost}
\end{proof}

%%%%%%%%%%%%%%%%%%%%%%%%%%%%%%%%%%%%%%%%%%%%%%%%%%%%%%%%%%%%%%%%%%%%%%
%\subsection{Interval Tree and Speed-Up Construction}
\subsection{Bootstrapping Extreme Point / Vertical Line Queries}
\label{sec:OutlineIT}   
%%%%%%%%%%%%%%%%%%%%%%%%%%%%%%%%%%%%%%%%%%%%%%%%%%%%%%%%%%%%%%%%%%%%%%

%% The interval tree is naturally described in the dual setting.
%In our setting, the logarithmic method then ``drives'' an efficient
%``on the fly  merge'' of several lower envelopes, 
In the following we consider how to use the logarithmic method to
support extreme point queries, i.e., in the dual vertical line queries.
Every block of the logarithmic method provides the vertices of the
upper hull of a block, i.e., in the dual the segments of a lower envelope,
in the following abbreviated as envelope
%%Creation of new envelopes, merges of envelopes and the changes within
%%a block induced by the deletion of points 
(see Figure~\ref{fig:dual-delete}).
%%lead to the following interface of the interval tree:
%%
%% GSB: Interface not used elsewhere in paper
%% \begin{description}
%% \item[Insert] a complete envelope represented as a list of segments.
%% \item[Delete] a complete envelope.
%% \item[Exchange] one segment of an envelope by a list of segments.
%% \item[Query] for the intersection of the resulting envelope with a
%%   vertical line.
%%\end{description}
The interval tree (as explained in Section~\ref{sec:IT}) 
%% has a
%% performance overhead of amortized $O(\log \log n)$ for every element
%% and every update operation it participates in.
together with the logarithmic method and the Join-Delete data structures allow the following bootstrapping construction, for which
Section~\ref{sec:IT} gives the proof:

\begin{figure}[htb]
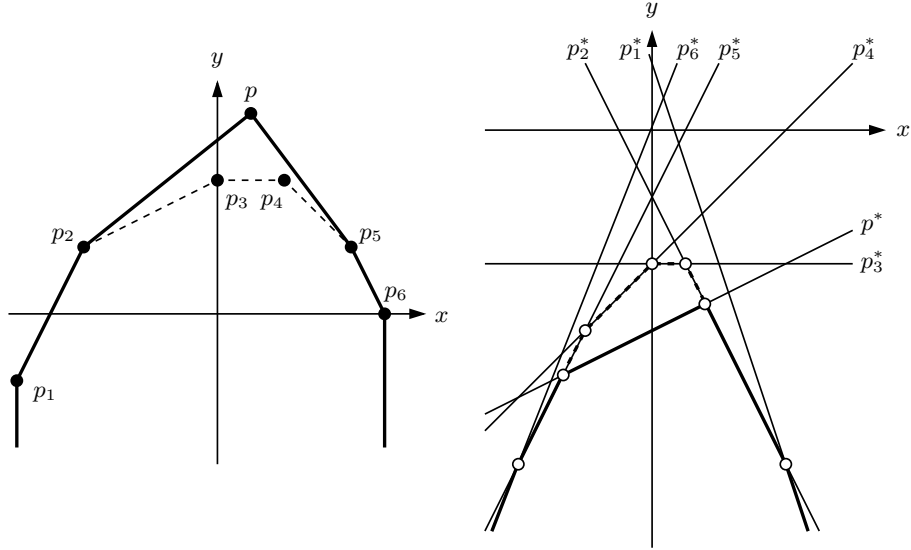

  \begin{center}
    \inputfig{dual-delete}
%WRITEMAG% dual-delete .7
    \caption[Deleting a point $p$ in the primal and the dual line
    deletion] {Deleting a point $p$ in the primal (left) and the
    corresponding deletion of the line $p^*$ in the dual (right). The
    two points $p_3$ and $p_4$ become visible on the upper hull, and
    correspondingly $p_3^*$ and $p_4^*$ become visible on the lower
    envelope. Note that on the lower envelope the visible part of the
    adjacent lines $p_2^*$ and $p_5^*$ is expanded.}
    \label{fig:dual-delete}
  \end{center}
\end{figure}

\begin{theorem}[Bootstrapping vertical line queries]  
\label{thm:boost}
  Let~$D$ be a nondecreasing positive function.
  Assume there exists a dynamic lower envelope data structure
  supporting \pc{Insert} in amortized~$O(\log n)$ time, \pc{Delete} in
  amortized~$O(D(n))$ time, and \pc{Vertical Line Query} in $O(\log
  n)$ time, with~$O(n)$ space usage, where~$n$ is the total number of
  lines inserted.
  \conclusion Then there exists a dynamic lower envelope data
  structure  supporting \pc{Insert} in amortized $O(\log n)$
  time, \pc{Delete} in amortized $O( D(\log^4 n)^2 + \log n )$ time,
  and \pc{Vertical Line Query} in $O(\log n)$ time, where~$n$ is the
  total number of lines inserted.
  The space usage of this data structure is~$O(n)$.
\end{theorem}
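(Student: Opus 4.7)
}
The plan is to combine the logarithmic method of Section~\ref{sec:LogMethod} (at degree $\log n$) with the Join-Delete data structure of Theorem~\ref{thm:mergeDS}, and to route vertical line queries through a dynamic interval tree of secondary structures, each instantiated with the data structure given in the hypothesis. First I would decompose the current set of lines into at most $O(\log^2 n/\log\log n)$ blocks, each stored in its own Join-Delete data structure, which provides explicit access to the left-to-right sequence of envelope segments. Each inserted line participates in $O(\log n/\log\log n)$ joins, and by Theorem~\ref{thm:mergeDS} the total work for all joins a line is involved in amortizes to $O(\log n)$ charged to the line's insertion. This takes care of insertions and of the local maintenance of the per-block envelopes.

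To answer vertical line queries across all blocks within $O(\log n)$, I would build an interval tree whose leaves correspond to the $x$-coordinates arising as segment endpoints in the current block envelopes. Following the scheme that will be fleshed out in Section~\ref{sec:IT}, every envelope segment is assigned to the deepest interval-tree node whose slab contains the segment, and each such node $v$ carries one secondary structure storing the \emph{lines} supporting the segments assigned to~$v$. A query then descends a single root-to-leaf path, and the secondary structures along this path must be arranged (through the interval-tree layout of Section~\ref{sec:IT}) so that after one $O(\log n)$ query at the topmost relevant node, the remaining nodes can be visited with $O(1)$ work each, yielding $O(\log n)$ total. A separate key accounting step is to bound the size of every secondary structure by $O(\log^4 n)$: at most $O(\log^2 n/\log\log n)$ blocks are active, and each block contributes $O(\log n)$ segments crossing any single interval-tree node, giving the bound with room to spare.

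For Delete, the line is removed from its block via Theorem~\ref{thm:mergeDS}, the block's Join-Delete data structure returns the constant-or-$L$ number of envelope changes, and each changed segment must be reflected in the interval tree. Here I would adopt the \emph{lazy movement / forced moves} bookkeeping referred to in the Relation-to-Previous-Work discussion: segments whose assigned interval-tree node has not yet changed remain in place, while forced moves trigger a delete--insert pair on secondary structures. Since each secondary structure has size $O(\log^4 n)$, an insertion into it costs $O(\log(\log^4 n))=O(\log\log n)$ and a deletion costs $O(D(\log^4 n))$. I would then show that each top-level deletion triggers $O(D(\log^4 n))$ such forced moves, so the dominant deletion cost is $O(D(\log^4 n)\cdot D(\log^4 n)) = O(D(\log^4 n)^2)$, with an additive $O(\log n)$ for the block deletion and interval-tree traversal. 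Space is linear because each Join-Delete data structure uses $O(n)$ space and every envelope segment is stored in exactly one secondary structure, so the total secondary-structure size is $O(n)$.

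The main obstacle is the second paragraph: making queries cost $O(\log n)$ rather than $O(\log n\cdot \text{depth})$, which forces the interval-tree layout to amortize the per-node query work to $O(1)$ after a single $O(\log n)$ step; and, in the third paragraph, proving that a top-level Delete provokes only $O(D(\log^4 n))$ forced moves into the secondary structures, so that cascading deletes are charged cleanly either to the inserting or to the deleting line (the two cases of forced moves announced in the Relation-to-Previous-Work section). Once these two amortization arguments are in place, the time and space bounds in the theorem statement fall out.
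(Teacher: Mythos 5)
Your overall architecture matches the paper's: logarithmic method with degree $\log n$, one Join-Delete structure per block, an interval tree whose nodes carry secondary structures of size $O(\log^4 n)$ instantiated with the hypothesized structure, and a lazy/forced-move discipline for deletions. However, two load-bearing steps are asserted rather than solved, and as stated they would fail.

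First, the query bound. You propose that after one $O(\log n)$ query at the topmost relevant node the remaining nodes on the path can be visited with $O(1)$ work each; no such arrangement exists here, and nothing in the hypothesis lets you amortize a worst-case $O(\log n)$ query bound across the nodes of a single search path. The actual mechanism is different: the interval tree is a B-tree of degree $\log n$, so the search path has only $O(\log n/\log\log n)$ nodes, and each secondary structure has size $O(\log^4 n)$, so each of the per-node queries costs $O(\log(\log^4 n))=O(\log\log n)$, giving $O(\log n)$ total. Moreover, your size bound for secondary structures is unjustified: with "assign each segment to the deepest node whose slab contains it," a single block can park arbitrarily many lines at one node (e.g., all activity intervals straddling one slab boundary). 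The paper needs the additional device of \emph{chunks} of $\Theta(\log n/\log\log n)$ consecutive activity intervals, with canonical nodes defined per chunk, plus the invariant that at most half the lines at a node are superfluous or lazily moved; only then does the product (degree)$\times$(chunk size)$\times$(number of blocks) give $O(\log^4 n)$.

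Second, the forced-move count. You state that each top-level \pc{Delete} triggers only $O(D(\log^4 n))$ forced moves, but give no mechanism, and the natural per-operation bound is $\Theta(\log n/\log\log n)$: a deleted line can be a "witness" keeping up to $2\log n/\log\log n$ hidden lines confined to their current nodes, all of which may then require forced moves. With that count the deletion cost is $O(D(\log^4 n)\cdot\log n/\log\log n)$, which is \emph{not} $O(D(\log^4 n)^2+\log n)$ when $D$ is small --- exactly the regime of the second bootstrapping step in the proof of Theorem~\ref{thm:fast}, where $D(\log^4 n)$ is polyloglog. The paper closes this with a global accounting argument (witness pairs, coins, and $b(n)$ barrier levels of the logarithmic method) that trades $O(n\cdot b(n))$ forced moves charged to insertions against $O(d\cdot\frac{\log n/\log\log n}{b(n)+1})$ charged to deletions, and then sets $b(n)=\lfloor\log n/D(\log^4 n)\rfloor$. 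This tunable trade-off is the heart of the deletion bound and is absent from your plan.
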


\begin{figure}[htb]
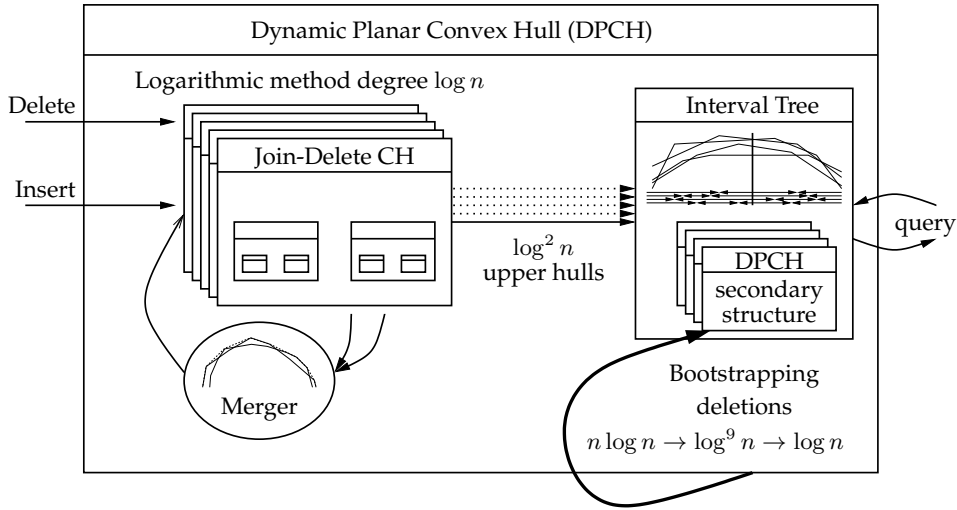

  \begin{center}
    \inputfig{hierarchy}
%WRITEMAG% hierarchy .7
    \caption
    {Illustrating the overall structure of the Dynamic Planar Convex Hull data structure.}
    \label{fig:hierarchy}
  \end{center}
\end{figure}
  
%  \subsection{Bootstrapping}

\begin{proof}  (of Theorem~\ref{thm:fast})
We use the bootstrapping of Theorem~\ref{thm:boost}
%speed-up construction 
twice to prove our main
Theorem~\ref{thm:fast}.  The overall construction is summarized in
Figure~\ref{fig:hierarchy}.  For a first bootstrapping step we use
Preparata's data structure~\cite{preparata79} with $O(\log n)$ 
insertion and query time, and $D(n)=O(n\log n)$ deletion time, where deletions are supported by rebuilding the structure using insertions.  
This yields a data
structure with deletion time~
$$D(n)=O((\log^4 n\cdot \log(\log^4 n))^2+\log n)=O(\log^9 n)\;.$$
In a second
bootstrapping step we get a data structure with
$$D(n)=O((\log^9 (\log^4 n))^2 + \log n)=O(\log n)\;.$$
%%and~$I(n)=O(\log n)$.

In the above, $n$ is the total number of lines inserted. 
To achieve the statement of Theorem~\ref{thm:fast}, where $n$ is the current number of lines, we globally rebuild the data structure when half of the inserted lines have been deleted by repeatedly (re-)inserting the current lines into a new empty data structure. 
\end{proof}

%% We can use the same interval-tree to efficiently answer more general
%% queries, as explained in Section~\ref{sec:ext}.

%%%%%%%%%%%%%%%%%%%%%%%%%%%%%%%%%%%%%%%%%%%%%%%%%%%%%%%%%%%%%%%%%%%%%%%%%%%%%
\subsection{Interval Tree: Extreme Point / Vertical Line Queries}
\label{sec:IT}
%%%%%%%%%%%%%%%%%%%%%%%%%%%%%%%%%%%%%%%%%%%%%%%%%%%%%%%%%%%%%%%%%%%%%%%%%%%%%

In this section we give a proof of Theorem~\ref{thm:boost}.
Similar to the construction in Section~\ref{sec:KinHeap} that uses the
Join-Delete data structure to implement a fast kinetic
heap we in this section use the
Join-Delete data structure to achieve a
fast parametric heap, or dually (Figure~\ref{fig:duality})
a fast fully dynamic convex hull data
structure supporting extreme point queries 
as stated in Theorem~\ref{thm:fast}.

Most of the ideas in Section~\ref{sec:KinHeap} for the
kinetic heap are used here again:
We use 
%% in the same way 
the logarithmic method with degree~$O(\log n)$,
ensuring that we at any time have $O(\log^2 n/\log\log n)$ 
lower envelopes,
and the concept of bootstrapping. %/speed-up construction.
%% and that each line participates in $O(\log n/\log\log n)$ Join operations.
What is naturally more complicated here is the ``on the fly
merging'' of the lower envelopes stemming from the semidynamic sets.
Where in the kinetic setting it is sufficient to store the lines
whose activity interval 
%% \vote{recap?} 
intersects the current time, now we
%% we have many secondary structures that 
must
allow queries for an arbitrary
time~$t$.
Conceptually, we want to find the lines whose activity interval
intersects~$t$ and determine their lowest intersection with the
vertical line defined by~$t$.
This naturally suggests the use of an \defword{interval tree} to
organize several secondary structures.
This construction is made precise in Section~\ref{sec:ITDetails}, and
it achieves that each secondary structure stores only~$O(\log^4 n)$
lines.
Starting from a secondary structure that already has the aimed-at
performance of~$O(\log n)$ for insertions and queries, we get the speed
up for deletions
stated in 
% (by performing a lot of them lazily).
%% The properties of this construction are already summarized in 
Theorem~\ref{thm:boost}.

%% Geometrically, this construction is very similar to Chan~\cite{chan01},
%% %% and~\cite{brodal00}, 
%% in particular the reasoning for the correctness
%% of the queries is 
%% %% nearly 
%% the same.
%% Our improved efficiency comes from our refined methods of lazily
%% moving 
%% segments between secondary structures, and in particular a reduced
%% number of deletions of
%% segments from secondary structures.

%% \todo{More details on comparison with Chan}
%% The construction here allows the same queries with the same
%% performance as~\cite{brodal00}, only the updates are improved
%% to~$O(\log n)$.

%%%%%%%%%%%%%%%%%%%%%%%%%%%%%%%%%%%%%%%%%%%%%%%%%%%%%%%%%%%%%%%%%%%%%%
\subsubsection{Details of the Interval Tree}
\label{sec:ITDetails}
%%%%%%%%%%%%%%%%%%%%%%%%%%%%%%%%%%%%%%%%%%%%%%%%%%%%%%%%%%%%%%%%%%%%%%

A traditional interval tree is a data structure that stores a set
%~$J$
of intervals in $\Rset$, and a query reports all intervals 
%in $J$
containing a given $x\in\Rset$.  This data structure is due to
Edelsbrunner~\cite{edelsb80} and McCreight~\cite{McCreight80} and is
described in detail for example in the textbook~\cite[Section
  10.1]{berg08}.  The central idea is to store the intervals at the
nodes of a balanced binary search tree having as leaves the
endpoints of the intervals.  An interval~$I$ is stored at exactly one
node of the tree, namely at the lowest common ancestor node of the two
leaves storing the endpoints of $I$. This guarantees that all
intervals containing a query value~$x$ are stored along the search path
to~$x$ in the tree.  
%% The traditional interval tree stores the
%% intervals at an internal node in two sorted
%% lists, to support the reporting of the $k$ intervals
%% containing an $x\in\Rset$ in $O(k+\log n)$ time.

We take a variant~$\IT$ of an interval tree.  The underlying tree
structure is an insertion-only B-tree~\cite{actainf72bm} with degree
parameter~${\log n}$, i.e., the height of~$\IT$ is~$O(\log n/\log\log
n)$.  A line with a nonempty activity interval $I$ is stored exactly
at one common ancestor of the leaves of the endpoints of $I$ (not
necessary the lowest common ancestor). Lines with empty activity
intervals are allowed to be stored up to once in the tree at an
arbitrary node, these are denoted \defword{superfluous} lines.  For
every node of~$\IT$ we have as a secondary structure a fully dynamic
parametric heap for all the lines stored at the node.  This allows to
correctly answer \pc{Vertical Line Query}$(x)$ queries by collecting the answers
from the $O(\log n/\log\log n)$ secondary structures along the search path in~$\IT$ for~$x$ and taking
the minimum.

To bound the sizes of the secondary structures we move intervals among
the secondary structures of the 
nodes of the interval tree. We do this efficiently by adopting the
following concepts.  Every lower envelope is partitioned into
\defword{chunks} of $\Theta(\log n /\log\log n)$ consecutive activity
intervals.  The lower bound on the number of lines in a chunk might be
violated if
%% there are too few lines on the lower envelope.  
the lower envelope contains $O(\log n/\log\log\ n)$ intervals and all
intervals are stored in a single chunk.
For every
chunk~$c$ we determine its activity interval~$I_c$ as the union of the
activity intervals of the lines in~$c$.  The leaves of the interval
tree~$\IT$ store the endpoints of the activity intervals of the
chunks. For every activity interval~$I\subseteq I_c$ in a chunk~$c$
we define the \defword{canonical node} of $I$
in~$\IT$ to be the lowest common ancestor of the leaves storing the
endpoints of~$I_c$.  Every chunk has a pointer to its canonical node,
and every node has a linked list of the lines for which it is the
canonical node.  If a line is stored in a node above or below its
canonical node, we say that it is a \defword{lazily moved} line. Every
node has counters for the number of superfluous and lazily moved lines
stored at the node.

We maintain the invariant that in each secondary structures at most
half of the lines are superfluous or lazily moved, or
equivalently, at least half of the lines have this
node as their canonical node. This guarantees that the size of a
secondary structure is bounded by
$$O\left(2 \cdot \log n \cdot \frac{\log
  n}{\log\log n} \cdot \frac{\log^2 n}{\log\log
  n}\right)=O(\log^4 n)\;,$$ 
where the terms stem respectively
from the  superfluous and
lazily moved lines,
the degree of~$\IT$ (which bounds the number of chunks from one
semidynamic lower envelope with the same canonical node), the chunk size,
and the number of semidynamic sets (lower envelopes).

%
%% Even if we allow secondary structures of size $\log^{O(1)}n$, we
%% achieve vertical line queries in~$O(\log n)$ time (using the
%% assumption that secondary structures allow queries in logarithmic time).
%
%% Unlike Chan, 
%% We  allow a line to be stored anywhere on the path
%% from the root to the canonical node of its activity interval in~$\IT$.
%% This does not compromise the correctness of the queries, but, as we
%% will see, it saves
%% time when determining for a line the appropriate node of~$\IT$ to
%% store it.
%% So far, the description applies to the construction of~\cite{brodal00}
%% as well.
%% Here, we additionally use this freedom to perform the movement of
%% lines lazily (as a result of changed lower envelopes in the blocks,
%% i.e., because of merge or delete operations). 

%%%%%%%%%%%%%%%%%%%%%%%%%%%%%%%%%%%%%%%%%%%%%%%%%%%%%%%%%%%%%%%%%%%%%%
\subsubsection{Insert}
%%%%%%%%%%%%%%%%%%%%%%%%%%%%%%%%%%%%%%%%%%%%%%%%%%%%%%%%%%%%%%%%%%%%%%
      
We handle the insertion of a new line as given by the logarithmic method. 
We first create a Join-Delete data structure storing the line as a
singleton set.  The line is a chunk by itself with
activity interval~$\Rset$ and the root of $\IT$ as its canonical node.
We insert the line in the root's secondary structure.
If no merging is required by the logarithmic method then we are done.

Otherwise, several Join-Delete data structures storing several  
sets of lines  are joined into one set by the
logarithmic method, and we replace several lower envelopes by one new lower
envelope as computed by the Join-Delete data structure. 
The new lower envelope is partitioned into chunks.  Before
updating the activity intervals of the lines, we insert the endpoints
of the activity intervals of the new chunks into the leaves of the
B-tree~$\IT$.  As in a standard B-tree, this can make it necessary to
split nodes of~$\IT$.  We split a node~$u$ by creating a new
sibling~$v$ of~$u$, and making it a new child of $u$'s parent~$w$,
possibly succeedingly requiring $w$ to be recursively split. The keys
and (for non-leaf nodes) the child pointers of $u$ are split
among $v$ and $u$. For all lines that have $u$ as their canonical
node, we recompute from the activity intervals of their (old) chunks
their new canonical node, which is either $u$, $v$ or~$w$. Finally we
destroy the associated secondary structure of $u$ and re-insert all
the lines from it at their canonical secondary structures, except for
superfluous lines which are not re-inserted.

After having updated $\IT$ we  identify the  new canonical nodes for
all new chunks and lines
participating in the joining.  Since
for each of the lines participating in the join the activity interval
can only shrink, the lines' current positions in~$\IT$ remain valid,
but the lines can become superfluous or lazily moved.  For each node
$u$ where the majority of lines now are superfluous or lazily moved we
perform a \defword{clean} operation: The secondary structure of $u$ is
destroyed and we re-insert all the lines from it in the secondary
structures of their canonical nodes, except for superfluous lines
which are not re-inserted.
%% \todo{time for clean; time for recomputing canonical nodes}

%%%%%%%%%%%%%%%%%%%%%%%%%%%%%%%%%%%%%%%%%%%%%%%%%%%%%%%%%%%%%%%%%%%%%%
\subsubsection{Delete}
%%%%%%%%%%%%%%%%%%%%%%%%%%%%%%%%%%%%%%%%%%%%%%%%%%%%%%%%%%%%%%%%%%%%%%

To delete a line~$\ell\in S$ we first delete it from the
Join-Delete data structure, and
check if $\ell$ is stored
in a secondary structure. If it is not stored in a secondary structure we
are done. Otherwise we delete it from the secondary structure were it
is stored, and we are again done if $\ell$ was superfluous, i.e., it
had an empty activity interval because it was not on the lower
envelope of the semidynamic set containing~$\ell$. 

The interesting case is if $\ell$ was on the lower envelope of its
set (see Figure~\ref{fig:deletion-lower-enevelope}).
From the Join-Delete data structure 
%storing the blocks 
we get a
piece~$L$ of the lower envelope that replaces~$\ell$.
%The deletion of a line~$\ell\in S$ can enlarge activity intervals.
Let $C$ be $L$ together with the chunks~$c_l$ and~$c_r$ containing
$\ell$'s left and right neighbors $\ell_l$ and $\ell_r$ on the lower
envelope. In the case $c_l$ and~$c_r$ are the same chunk and $C$
is below the minimal chunk size we include a neighboring chunk into~$C$.
We divide $C$ into new chunks of size $\Theta(\log n/\log\log n)$. 
The endpoints of the chunks are inserted at the
leaves of~$\IT$, possibly causing nodes to be split as described for
the insert operation.

\begin{figure}[tb]
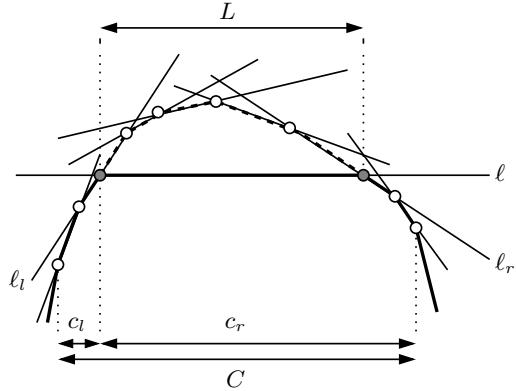

  \begin{center}
    \inputfig{intervaltreedelete}
%WRITEMAG% intervaltreedelete .7
    \caption[Deleting a line from a lower envelope]
    {Deleting a line~$\ell$ from a lower envelope.}
    \label{fig:deletion-lower-enevelope}
  \end{center}
\end{figure}

For every new chunk we determine the canonical node of~$\IT$.  For
every line~$h$ in~$C$ we assign it to its canonical node.  If $h$ is
already stored in~$\IT$ at a node spanning its activity interval, we
leave it in its secondary structure (if this is not its canonical node,
we are in the case that 
$h$ is lazily moved). If $h$ is not currently stored in a secondary
structure we insert it into the secondary structure of its canonical
node. The final case is when $h$ is stored at a node $u$ that is not
spanning its new activity interval. In this case~$h$ has to be stored
at a different node of~$\IT$.  We perform a \defword{forced move} of
$h$ by deleting $h$ from the  secondary
structure at~$u$ and inserting $h$ in the secondary structure of its
canonical node.

%%%%%%%%%%%%%%%%%%%%%%%%%%%%%%%%%%%%%%%%%%%%%%%%%%%%%%%%%%%%%%%%%%%%%%
\subsubsection{Analysis}
%%%%%%%%%%%%%%%%%%%%%%%%%%%%%%%%%%%%%%%%%%%%%%%%%%%%%%%%%%%%%%%%%%%%%%

We first analyze the cost of maintaining the Join-Delete data
structures. During $n$ insertions each line participates in at
most $O(\log n)$ Join operations, since we join sets  in a
perfect balanced binary tree 
(the joining of $O(\log n)$ sets in the logarithmic
method is performed by joining in a balanced binary tree manner). By
Theorem~\ref{thm:mergeDS}, the total cost for the Join operations 
becomes amortized $O(n\log n)$. For deletions
we by Theorem~\ref{thm:mergeDS} need to bound the total number of lines 
surfacing on the lower envelopes during all deletions.
Each line can surface on a
lower envelope at most once for each of the $O(\log n/\log\log n)$ levels
in the logarithmic method. 
By Theorem~\ref{thm:mergeDS} the total work for deletions in the Join-Delete data structures is amortized $O(n\log n/\log\log n)$. 
It follows that the total work for maintaining the Join-Delete data is $O(n\log n)$. 

To bound the size of $\IT$ we first consider the number of chunks
created by all insertions and deletions. Each chunk created is either
\textit{i}) the result of the joining in the logarithmic method,
\textit{ii}) a sequence of lines surfacing because of a deletion, or
\textit{iii}) one of the $O(1)$ chunks created at the ends of a
deletion region (can contain both newly surfacing lines and lines
already on the lower envelope of the set before the deletion). Since
each line can surface at most once on each of the $O(\log n/\log\log
n)$ levels of the logarithmic method, the total number of
chunks created in the first two cases is $O(n(\log n/\log\log n)/(\log
n/\log\log n))=O(n)$. There are at most $n$ deletions, i.e., the
number of chunks created by the last case is also $O(n)$. It follows
that the total number of chunks created is $O(n)$, which is a
bound on the total number of leaf insertions into~$\IT$. Since the nodes of
$\IT$ have degree $\Theta(\log n)$ it follows that $\IT$ has $O(n/\log
n)$ internal nodes, which is also a bound on the number of node splits.  
Since a node can be split in $O(\log n)$
time, if we ignore the cost of maintaining the secondary structures, the
cost of rebalancing $\IT$ becomes at most $O(n)$. 

For each of the $O(n)$ chunks created, we use time $O(\log n)$ to
identify the canonical node of the chunk in the interval tree, i.e.,
we spend $O(n\log n)$ time to identify the canonical nodes of all
chunks and lines to be inserted.

To analyze the cost of using the secondary structures, we need
to consider the number of insertions and deletions in the secondary
structures.  Since all secondary structures have size $O(\log^4 n)$,
our bootstrapping assumption in Theorem~\ref{thm:boost} implies that
the time for each insertion into and deletion from a secondary structure
is $O(\log(\log^4 n))=O(\log\log n)$ and $O(D(\log^4 n))$,
respectively. 

Line insertions into secondary structures are caused by the creation
of new chunks, clean operations, and node splittings. 
Since a total of $O(n)$ chunks are created, at most $O(n\log
n/\log\log n)$ lines are inserted by the creation of new chunks.  
For
the clean operations we bound the number of reinsertions at the
canonical nodes by the number of lazily moved lines and superfluous
lines.  These lines cause the clean operation and are removed from the
node during the clean operation and inserted at their canonical nodes.
Since each line can at most become superfluous once for each of the 
$O(\log n/\log\log n)$ levels of
the logarithmic method, the total number of superfluous lines
introduced is $O(n\log n/\log\log n)$.  Lazily moved lines are a
subset of the lines in the chunks created, i.e., we can also bound
these by $O(n\log n/\log\log n)$.

To bound the number of lines reinserted at a node due to a node split,
we first observe that the lowest three levels of $\IT$ have
$O(n/\log^i n)$ nodes, for $i=1..3$. On the remaining levels there is
a total of $O(n/\log^4 n)$ nodes. This is simultaneously a bound on
the number of node splits in $\IT$ on the different levels. We have
already argued that for any node at most $O(\log^4 n)$ lines can have
the node as their canonical node. For the nodes on the three lowest
levels we have that a node on level~$i$ has $O(\log^i n)$ chunk
endpoints stored in its subtree, and therefore it can only be the
canonical node of $O(\log^i n)$ chunks, i.e., the canonical node of
$O(\log^{i+1} n/\log\log n)$ lines.  The total number of segments
being reinserted at their canonical node due to node splits becomes
$$O\left(
\frac{n}{\log^4 n}\log^4 n
+ \sum_{i=1}^3 \frac{n}{\log^i n}\cdot \frac{\log^{i+1} n}{\log\log n}\right)
= O\left(n\frac{\log n}{\log\log n}\right)\;.$$
It follows that the total number of insertions into secondary
structures is $O(n\log n/\log\log n)$, taking total time $O(n\log n)$.

The deletion of a line from a secondary structure is either due to one
of the $O(n)$ explicit deletions of a line from the fully dynamic
parametric heap structure, or due to a forced move during the deletion
of a line.  Assume a superfluous or lazily moved line~$h$ is
forcefully moved due to the deletion of a line~$\ell$.  The line $h$
is stored at the current node~$v$, because on the lower envelope of
some block that contained~$h$, the activity interval of $h$ was
contained in the span of~$v$ and was defined by two lines $\ell_l$
and~$\ell_r$.  We say that $\ell_l$ and $\ell_r$ were the
\emph{witnesses} of~$h$.  Since $h$ is forcefully moved the activity
interval of $h$ must have increased, i.e., at least either $\ell_l$ or
$\ell_r$ must have been deleted at some point of time.

We bound the number of forced moves by the following accounting
argument. We assume that each forced move costs one \emph{coin}.  Each
line~$h\in \IT$ that is not on the lower envelope of its current set
either \textit{i}) has a coin assigned, or \textit{ii}) has two
witnesses assigned, i.e., lines $\ell_l$ and $\ell_r$ such that the
interval where $h=\min(\ell_l,h,\ell_r)$ is contained in the span of
the node where $h$ is currently stored. We let $W(\ell')$ denote the
set of lines for which $\ell'$ is a witness, and maintain the
following invariant: If $\ell'$ is stored in a block at level~$k$ of
the logarithmic method, then $|W(\ell')|\leq 2k$, and $\ell'$ is only
the witness for lines contained in the same block as~$\ell'$.

Whenever the joining in the logarithmic causes a line $h$ to be hidden
from the lower envelope, the two neighbors $\ell_l$ and $\ell_r$ of
$h$ \emph{just before the merge} become the witnesses of $\ell$. When
a node $\ell'$ advances one level in the logarithmic method, then
potentially $\ell'$ becomes furthermore the witness of its two
neighbors just before the merge. Whenever a line~$\ell'$ is deleted we
charge the deletion $|W(\ell')|\leq 2\log n/\log\log n$ coins, that
are distributed to the lines for which $\ell'$ was a witness. A line
with a pair of witnesses cannot require a forced move. Only nodes that
have a coin can potentially be forcefully moved. After the forced move
of a line, the line is now on the lower envelope of its set and does
not require a coin, i.e., each forced move releases a coin to pay for
the forced move.

To limit the number of coins charged to a single deleted line we
introduce a parameter~$b(n)$, where $0\leq b(n)\leq \log n/\log\log
n$, that allows us to shift some of the cost for forced moves to the
insertions.  We introduce $b(n)$ equally distributed \defword{barrier
  levels} of the logarithmic method.  After every
$\frac{\log n/\log\log n}{b(n)+1}$ level the next level is a barrier
level (if $b(n)=0$ there are no barrier levels).  At a barrier level,
all lines in the created block get a coin to pay for a forced move,
allowing the witness sets of all lines in the created block to be made
empty. This way every inserted line is charged up to $b(n)$ coins.  We
now have the invariant that if a line~$\ell'$ is stored in a block
$k$~levels from the last barrier level then $|W(\ell')|\leq 2k$.  The
deletion of a line now only has to be charged $O(\frac{\log n/\log\log
  n}{b(n)+1})$ coins.

The total number of forced moves during a sequence of $n$ line
insertions and $d$ line deletions can be bounded by $O(n\cdot
b(n)+d\cdot\frac{\log n / \log\log n}{b(n)+1})$.  The total cost for the
forced deletions is $O(D(\log^4 n)\cdot(n\cdot b(n)+d\cdot\frac{\log n /
  \log\log n}{b(n)+1}))$, which is bounded by $O(n\log n+d\cdot
D(\log^4 n)^2)$ by setting $b(n)=\lfloor\log n/D(\log^4 n)\rfloor$.

Summarizing, $n$ line insertions and $d$ line deletions require total
time $O(n\log n+d\cdot D(\log^4 n)^2)$.  
Queries take time $O(\log(\log^4 n)\cdot\log n/\log\log n)=O(\log n)$.
This yields the amortized
time bounds claimed in Theorem~\ref{thm:boost}.

%%%%%%%%%%%%%%%%%%%%%%%%%%%%%%%%%%%%%%%%%%%%%%%%%%%%%%%%%%%%%%%%%%%%%%
\subsubsection{Space Usage}
%%%%%%%%%%%%%%%%%%%%%%%%%%%%%%%%%%%%%%%%%%%%%%%%%%%%%%%%%%%%%%%%%%%%%%

To analyze the space usage of the data structure we first observe that
Theorem~\ref{thm:mergeDS} implies that all Join-Delete data
structures require $O(n)$ space. In the interval tree we create at
most $O(n)$ leaves, i.e.,
the interval tree uses~$O(n)$ space.
Every line is
stored in at most one secondary structure, where it by our
bootstrapping assumption in Theorem~\ref{thm:boost}
 uses~$O(1)$ space.
This totals to a space usage of~$O(n)$.
This finishes the proof of Theorem~\ref{thm:boost}.

%%%%%%%%%%%%%%%%%%%%%%%%%%%%%%%%%%%%%%%%%%%%%%%%%%%%%%%%%%%%%%%%%%%%%%
\subsection{Tangent / Arbitrary Line Queries}
\label{sec:ext}
%%%%%%%%%%%%%%%%%%%%%%%%%%%%%%%%%%%%%%%%%%%%%%%%%%%%%%%%%%%%%%%%%%%%%%

%% In the previous section we described how to bootstrap the performance
%% of extreme point
%% queries on the convex hull (or equivalently vertical line queries in the dual)
%% using an interval tree. 
% In this section we show how 
The 
interval tree in Section~\ref{sec:IT}
can also be used to support the following query: Given a
point~$p \notin\UC(S)$, what are the two common tangent
lines of~$p$ and~$S$?  Translating the query into the dual setting, we
ask for the two intersection points of an arbitrary line~$p^*$\vote{good?} with the
lower envelope of~$L_{S^*}$ (see Figure~\ref{fig:duality}).
The geometric principles of how to use the interval tree
are the very same as of Chan~\cite{chan01}.  

Given that we may perform vertical line queries (Section~\ref{sec:IT}), we can easily verify
a hypothetic answer.  Therefore it is sufficient to consider the
situation under the assumption that the line~$\ell$ intersects the lower
envelope.  In the case that it does not intersect the lower
envelope, the verification by vertical line queries will fail.
  
We consider the dual problem of finding the right intersection of
line~$\ell$ with the lower envelope~$\LE(S)$ for a set of lines~$S$, i.e.,
we ask for the line of~$S$ with a slope strictly smaller than~$\ell$
that intersects~$\ell$ furthest to the left.  The following lemma, illustrated in Figure~\ref{fig:tangent-lemma},
provides the geometric argument that allows us to use queries in the
secondary structures to navigate in the interval tree~$\IT$.  
For a node in the interval tree, $S'\subseteq S$ will be 
the lines in the secondary structures from the node to the root
and $t_1$ and~$t_2$ are the vertical lines 
stemming from the search keys at the parent node in
the interval tree.

%WRITEMAG% tangent-lemma 0.7
%... tangent-lemma 0.9 -s15
\begin{figure}[htb]
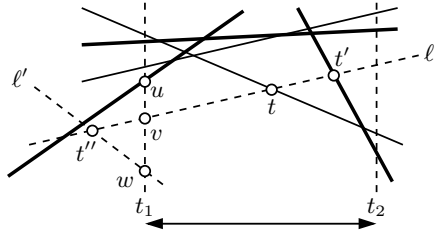

  \begin{center}
    \inputfig{tangent-lemma}
    \caption{Illustration of proof of Lemma~\ref{lem:tangent}, where $S$ are
      the solid lines and $S'\subseteq S$ are the bold lines.}
    \label{fig:tangent-lemma}
  \end{center}
\end{figure}

\begin{lemma}
  \label{lem:tangent}
  Let~$t_1$ and~$t_2$ be two vertical lines, $t_1$ to the left of~$t_2$.
  Let~$S'\subseteq S$ be a subset of the lines~$S$ such that the lower
  envelope of~$S'$ at~$t_1$ and~$t_2$ coincides with the lower envelope
  of~$S$.  Assume that a query line~$\ell$ has a right intersection point~$t$ with 
  the lower envelope
  of~$S$, and that an arbitrary line query for~$\ell$ on~$S'$ results in
  the right intersection point~$t'$.  If~$t'$ lies between~$t_2$ and~$t_2$
  then~$t$ lies between~$t_1$ and~$t_2$.
\end{lemma}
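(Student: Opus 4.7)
The plan is to exploit the pointwise monotonicity $\LE(S) \le \LE(S')$ (which holds because $S' \subseteq S$) together with the pointwise coincidence $\LE(S)(t_i) = \LE(S')(t_i)$ for $i=1,2$, and the fact that both $\ell - \LE(S)$ and $\ell - \LE(S')$ are convex (a linear function minus a concave one).

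To show $t \le t'$, I would observe that for $x > t'$ the line $\ell$ lies above $\LE(S')$, and since $\LE(S') \ge \LE(S)$, also above $\LE(S)$; hence no crossing of $\ell$ with $\LE(S)$ lies to the right of $t'$, giving $t \le t' < t_2$.

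To show $t \ge t_1$, I would introduce the \emph{left} intersections $t_L$ of $\ell$ with $\LE(S)$ and $t'_L$ of $\ell$ with $\LE(S')$ (taking $t_L = t$ and $t'_L = t'$ in the tangent cases), and split on the location of $t'_L$ relative to $t_1$. If $t'_L \ge t_1$, then the identity $\ell(t_L) = \LE(S)(t_L) \le \LE(S')(t_L)$ places $t_L$ in the nonpositive region $[t'_L, t']$ of the convex function $\ell - \LE(S')$, so $t \ge t_L \ge t'_L \ge t_1$. If instead $t'_L < t_1$, then the hypothesis $t_1 < t'$ places $t_1$ in $(t'_L, t')$, where $\ell \le \LE(S')$; by the coincidence at $t_1$ this transfers to $\ell(t_1) \le \LE(S)(t_1)$, and since the nonpositive set of the convex $\ell - \LE(S)$ is exactly the interval $[t_L, t]$, we get $t_1 \in [t_L, t]$ and therefore $t \ge t_1$.

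The main obstacle is that, in the second case, the information $\ell(t_1) \le \LE(S')(t_1)$ available from the geometry of $\LE(S')$ does not immediately pin down the position of $t_1$ relative to $[t_L, t]$; one must pass through the pointwise equality $\LE(S)(t_1) = \LE(S')(t_1)$ and use the convex structure of $\ell - \LE(S)$ to conclude. The case split on whether $t'_L$ is to the left or right of $t_1$ is chosen exactly so that this transfer of information is possible, and it avoids needing any stronger hypothesis (such as $\LE(S) = \LE(S')$ outside $[t_1,t_2]$) beyond the pointwise coincidence at the two endpoints.
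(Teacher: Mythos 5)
Your proof is correct and follows essentially the same route as the paper's: first $t\le t'$ from $S'\subseteq S$, then a case split on whether the left intersection of $\ell$ with $\LE(S')$ lies to the left of $t_1$, with the envelope coincidence at $t_1$ doing the work in the second case. The only difference is presentational: where you use convexity of $\ell-\LE(S)$ to place $t_1$ directly in the interval $[t_L,t]$, the paper exhibits a witness line $\ell'\in S\setminus S'$ of smaller slope through $t$ and derives a contradiction at $t_1$ — the two arguments are equivalent.
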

\begin{proof}
  By the definition of the right intersection points 
  $t$ and $t'$ as the leftmost
  intersection of~$\ell$ with the lines of smaller slope in~$S$
  and~$S'$, respectively,
  we immediately have that~$t$ is not to the right of~$t'$ and
  hence not to the right of~$t_2$.
    
  Assume first that $\ell$ has also a left intersection with~$\LE(S')$ 
  between~$t_1$ and~$t_2$.
  If intersections of~$\ell$ with~$\LE(S)$ exist, they are
  between the intersections of~$\ell$ with~$\LE(S')$.
  In this case the lemma holds.
%%  \vote{figure needed?}
  Otherwise we know that the intersection~$v$ of~$\ell$ with~$t_1$ is
  below the intersection~$u$ of~$\LE(S')$ with~$t_1$.
  Assume by contradiction that~$t$ is to the left of~$t_1$.
  Let~$\ell'$ be a line of~$S \setminus S'$ that contains~$t$ (denoted~$t''$
  in Figure~\ref{fig:tangent-lemma}) and has
  smaller slope than~$\ell$.
  Then the intersection~$w$ of~$\ell'$ with~$t_1$ is below~$v$.
  But then the lower envelope of~$S$ intersects~$t_1$ at or below~$v$,
  contradicting the statement that the two lower envelopes coincide
  in $u$ on~$t_1$.
\end{proof}
  
%%\pagebreak

Using this lemma, we can process an arbitrary line query for~$\ell$ in
the following way. Consider the search for the rightmost intersection
with the lower envelope. Starting at the root, we perform the query
for~$\ell$ at the secondary structure~$H$ at the root.  If we find
that there are no rightmost intersection of~$\LE(H)$ with~$\ell$, 
we proceed to the rightmost slab
(since $\ell$ has smaller slope than all lines in $H$).
%% we know that there is no intersection of~$\ell$ with~$S$.
%% \todo{Why? Is this correct?}  
Otherwise let~$u$ be the right intersection point found.
We find the slab that is defined by the keys stored at the root that
contains~$u$.  This slab identifies a child~$c$ of the root.  We
continue the search at~$c$ in the same way, that is, we perform
another arbitrary line query to the secondary structure at~$c$.  Now
we take the leftmost of the two results (stemming from~$c$ and the
root) and use it to identify a child of~$c$.  This process we continue
until we reach the leaf level.  There we take the leftmost of all
answers we got from secondary structures, and verify it by performing
a vertical line query.  It is necessary to use the currently leftmost
answer as we allow lines to be stored higher up in the tree.  It is
necessary to verify the outcome, since we cannot exclude the case that all
queries to secondary structures find two intersection points, whereas
there is no intersection of~$\ell$ and~$\LE(S)$.
 
For the bootstrapping we observe that Preparata's data structure~\cite{preparata79}
supports arbitrary
line queries in~$O(\log n)$ time.
The arbitrary line query in~$\IT$ performs precisely one query to a
secondary structure at every level of~$\IT$.
Hence the overall time bound is~$O(\log n)$, just like the
extreme-point query.

%  We should be able to detect this: the presumed left intersection is
%  to the right of the right intersection.  If they are identical, we
%  queried for a point on a segment. 

%%%%%%%%%%%%%%%%%%%%%%%%%%%%%%%%%%%%%%%%%%%%%%%%%%%%%%%%%%%%%%%%%%%%%%
\subsubsection{Further Queries}
%%%%%%%%%%%%%%%%%%%%%%%%%%%%%%%%%%%%%%%%%%%%%%%%%%%%%%%%%%%%%%%%%%%%%%

If we run a tangent query for a point~$p\in S$, we can determine whether~$p\in
\UV(S)$ and if this is the case, we find the neighbors of~$p$
in~$\UV(S)$ (with tie-breaking formulated carefully).  We can also
realize if~$p$ is a point on a segment $(p_1,p_2)$ of~$\UH(S)$
(in the dual $p^*$ passes through the lower envelope 
at the intersection of $p_1^*$ and $p_2^*$).  Tangent queries
allow us to report a stretch of~$k$ consecutive points on the upper
hull of~$S$ in time~$O(k\cdot \log n)$.
This is by an~$O(\log n)$ factor slower compared to an explicit representation of the convex
hull in a linear list of a leaf-linked search tree, like in~\cite{OvL81}.

%%%%%%%%%%%%%%%%%%%%%%%%%%%%%%%%%%%%%%%%%%%%%%%%%%%%%%%%%%%%%%%%%%%%%%%%%%%%%
\section{Geometry of Merging: Separation Certificate}\label{sec:SeparationCertificate} 
%%%%%%%%%%%%%%%%%%%%%%%%%%%%%%%%%%%%%%%%%%%%%%%%%%%%%%%%%%%%%%%%%%%%%%%%%%%%%

In this section we introduce the geometric concepts of certificates used by
our merger to achieve the performance stated in
Theorem~\ref{thm:merger} in Section~\ref{sec:merger-interface}.
The purpose of a merger is to efficiently maintain the upper hull of
the union of two upper hulls $A$ and $B$ that change over time, 
where the change is limited to replacing a point by a list of points such
that the upper hulls do not get bigger (see Figure~\ref{fig:merger-example}).
%% This is precisely the change to the upper hull of a general point set
%% induced by a deletion of a point on its upper hull.
%% As a naming convention, 
In the following sections we 
%%% assume given two point sets $A$~and~$B$ in upper convex position and
%  get merged, i.e.,
describe how to efficiently maintain
$\UV(A\cup B)$ from $A$ and $B$.
%% from $\UH(A)$~and~$\UH(B)$ the upper hull $\UH(A\cup B)$.

The construction is heavily based on the idea of maintaining a 
geometric certificate
asserting that all
equality points (intersections) between the two hulls $\UH(A)$ and $\UH(B)$ are identified.
Between two identified equality points of $\UH(A)$ and $\UH(B)$ we have a
\defword{separation certificate}
asserting that the two hulls are separated.
Sections~\ref{sec:Streaks}-\ref{sec:TangentCertificates} introduce the different components of a certificat, and 
Section~\ref{sec:CompleteCertificate} summarizes the complete certificate.
The algorithmic challenge is to ``repair'' the certificate  
after the deletion of a point and adapt
it to the changed geometric situation, which is discussed in Sections~\ref{sec:monotonicity}-\ref{sec:mainReestablish}.
%
%% By its definition, this certificate is usually not unique, and is
%% unnecessarily complicated for the static case.
%% However, in the dynamic setting, it is helpful to have the freedom to
%% choose a certificate that is similar to the already existing one.

For simplicity we in the following sections assume that input
points in~$A\cup B$ are in general position, i.e., 
no two points have the same $x$-coordinate, 
no three input points are on a line, and
no three lines defined by pairs of input points meet in one point.
Section~\ref{sec:degenerate} describes how to modify the algorithm for
input that violates these assumptions.

%%%%%%%%%%%%%%%%%%%%%%%%%%%%%%%%%%%%%%%%%%%%%%%%%%%%%%%%%%%%%%%%%%%%%%%%%%%%%
\subsection{Equality Points and Streaks}
\label{sec:Streaks}
%%%%%%%%%%%%%%%%%%%%%%%%%%%%%%%%%%%%%%%%%%%%%%%%%%%%%%%%%%%%%%%%%%%%%%%%%%%%%
   
A point on both upper hulls $p\in \UH(A) \cap \UH(B)$ is called an
\defword{equality point}.
By the general position assumption, equality points are
isolated points and every equality point is the intersection of one segment
of~$\UH(A)$ and one segment of~$\UH(B)$.
%% Identifying and maintaining all equality points is the central task of
%% the presented construction.

We call the slab between two neighboring equality points
a~\defword{streak}, and the horizontal span of the streak the
\emph{extent} of the streak.
For any two vertical lines $\ell_1$ and~$\ell_2$ within one streak, the
intersection of~$\ell_1$ with $\UH(A)$ is above the intersection
of~$\ell_1$ with~$\UH(B)$ if and only if the intersection of~$\ell_2$ with
$\UH(A)$ is above the intersection of~$\ell_2$ with~$\UH(B)$.
This defines the \defword{polarity} of the streak ($A$~over~$B$, or
$B$~over~$A$).
%, independent of the choice of~$\ell$.
The situation is illustrated in Figure~\ref{fig:simp_equ}.

\begin{figure}[tb]
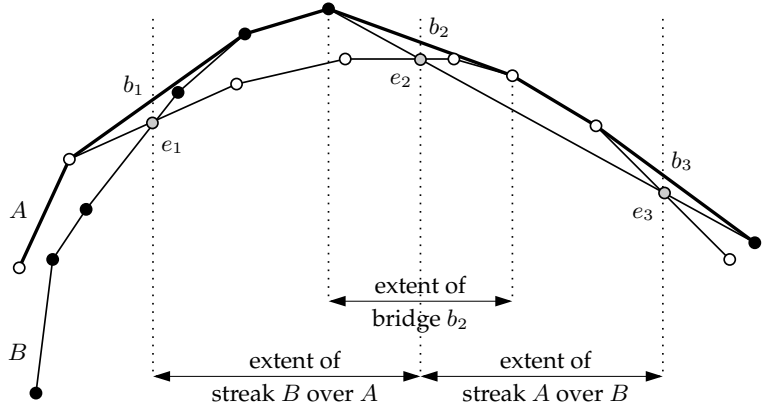

  \begin{center}
    \inputfig{simp_equ}
%WRITEMAG% simp_equ 0.7
    \caption
    {Merging two hulls ($A$ white nodes and $B$ black nodes) leads to
      equality points $e_i$ (grey nodes), streaks and bridges~$b_i$.}
    \label{fig:simp_equ}
  \end{center}
\end{figure}

% The purpose of the separation certificate is to identify the streaks
% formed by the two upper hulls, and that indeed inside
% %% the slab that forms
% an alleged streak, the two hulls do not intersect.
% For a streak of polarity $A$ over~$B$ this means, that for every
% vertical line we have some proof that indeed $\UH(A)$ is
% above~$\UH(B)$ on that line.

%%%%%%%%%%%%%%%%%%%%%%%%%%%%%%%%%%%%%%%%%%%%%%%%%%%%%%%%%%%%%%%%%%%%%%%%%%%%%
\subsection{Bridges}
\label{sec:bridges}
%%%%%%%%%%%%%%%%%%%%%%%%%%%%%%%%%%%%%%%%%%%%%%%%%%%%%%%%%%%%%%%%ss%%%%%%%%%%%%%

A segment of~$\UH(A\cup B)$ that has one endpoint in~$A$ and the
other endpoint in~$B$ is called a \defword{bridge}
and the horizontal span of the bridge the
\emph{extent} of the bridge (see Figure~\ref{fig:simp_equ}).
Such a bridge has the property that it is a tangent to both
$\UH(A)$~and~$\UH(B)$.
Every tangent to both $\UH(A)$~and~$\UH(B)$ is a bridge, and it is a
local condition on the two endpoints and their neighbors in~$A$
and~$B$ that a segment is indeed a bridge.
Furthermore, if the points are in general position, there is a
one-to-one correspondence between equality points and bridges.

%%%%%%%%%%%%%%%%%%%%%%%%%%%%%%%%%%%%%%%%%%%%%%%%%%%%%%%%%%%%%%%%%%%%%%
\subsection{Shortcuts}
\label{sec:sc_geom}
%%%%%%%%%%%%%%%%%%%%%%%%%%%%%%%%%%%%%%%%%%%%%%%%%%%%%%%%%%%%%%%%%%%%%%

\relax

%% Selecting points can be understood as reducing the complexity of the
%% locally inner hull with the guarantee not to change the equality
%% points.
To simplify the locally outer hull we introduce
\defword{shortcuts}.
%% This reduces work and unifies the description of the algorithm in the
%% context of finding ray intersections.
We consider a non-vertical line~$\ell$ and the closed half-plane~$h_\ell$
below~$\ell$ and replace in our considerations~$\UC(B)$ with $\UC(B) \cap
h_\ell$.
This situation is exemplified in Figure~\ref{fig:shortcut}.
For the line~$\ell$ we determine the \defword{cutting
  points}~$\ell\cap\UH(B)$ and call the line-segment~$\ell\cap\UC(B)$
between the cutting points a \defword{shortcut} extending
(horizontally) between the two cutting points.
%% We use 
Shortcuts reduce the number of segments on the locally outer
hull by ``cutting away'' the part of the convex hull that is
above~$\ell$.
More precisely, we define for a set~$H_B$ of non-vertical lines
%% that intersect~$\UCo(B)$ 
the \defword{shortcut version} of~$B$ to be the
set of points
%% (usually called~$B'$)
%%%  $$ \SCarg{B}{H_B} = \UV\Big(\UC(B) \cap \bigcap_{\ell\in H_B} h_\ell \Big)\;.$$
  $$ \SC{B} = \UV\Big(\UC(B) \cap \bigcap_{\ell\in H_B} h_\ell \Big)\;.$$
%\Big(B \cap \bigcap_{\ell\in H} h_\ell \Big)
%      \cup \bigcup_{\ell\in H} (\ell\cap \UH(B))\;,$$
%% where $h_\ell$ denotes the half-space below the line~$\ell$.
%
%%In the following we 
%%% ommit the superscript and let 
%%$\SC{A}=\SCarg{A}{H_A}$ and $\SC{B}=\SCarg{B}{H_B}$ 
%%for two given sets of
%%shortcuts $H_A$ and $H_B$ on $A$ and $B$, respectively.
Similarly we define $\SC{A}$ for a set of shortcuts $H_A$ on $A$.

\begin{figure}[thb]
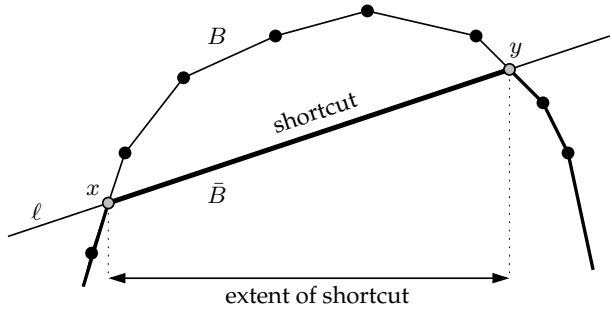

  \begin{center}
    \inputfig{shortcut}
%WRITEMAG% shortcut 0.7
%.85
    \caption{A shortcut defined by the line~$\ell$ with cutting points
      $x$~and~$y$.  The resulting shortcut version~$\SC{B}$ of~$B$ is
      indicated with bold line segments.}
    \label{fig:shortcut}
  \end{center}
\end{figure}

A set~$H_B$ of shortcuts on $B$ is \defword{conservative} 
w.r.t.~$A$ if $H_B$ is above~$\UH(A)$.
%% do not introduce new equality points with~$\UH(A)$, 
%% i.e., $\UC(A) \subseteq \UCo(\bigcap_{\ell\in H_B} h_\ell)$.
%%i.e., $\UC(A)\cap\UC(B) = \UC(A) \cap \UC(\SC{B})$, 
%% is denoted a set of \emph{conservative} shortcuts.
Two shortcuts $\ell_1$~and~$\ell_2$ are
\defword{non-overlapping} 
if their extent is disjoint.
% if there is a vertical line separating the
% cutting points of~$\ell_1$ from the cutting points of~$\ell_2$.
%% This disallows the intersection of $\ell_1$~and~$\ell_2$ to be
%% inside~$\UCo(B)$, and also that $\ell_1$ makes $\ell_2$ superfluous ($\ell_1$
%% ``cuts away''~$\ell_2$).
An \defword{effective} shortcut has at least one vertex of $\UV(B)$ above it.
We will only use non-overlapping sets of effective and conservative shortcuts.

\begin{lemma}
\label{lem:FewShortcuts}
  Let $\overline{xr}$ and $\overline{ry}$ be two consecutive segments of
  $\UH(B)$ and $H_B$ a set of non-overlapping effective shortcuts
  on~$B$.
%  \conclusion%
  Then at most three lines of~$H_B$ intersect
  $\overline{xr}$ and $\overline{ry}$.  
\end{lemma}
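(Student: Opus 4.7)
The plan is to classify each shortcut of $H_B$ that crosses $\overline{xr} \cup \overline{ry}$ by where its two cutting points lie, and then bound the total count using the disjointness of extents. Two preliminary observations drive the argument. First, $\UH(B)$ is the graph of a concave piecewise-linear function, so any line meets it in at most two points; hence every shortcut has exactly two cutting points, and its extent is the horizontal interval spanned by these points. Second, the two cutting points of a single shortcut cannot both lie on a single edge of $\UH(B)$: otherwise the shortcut line would coincide with the line through that edge, forcing the cutting points to be the endpoints of the edge rather than interior points, contradicting general position.

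The central claim is that if a shortcut $\ell \in H_B$ has a cutting point on $\overline{xr}$ or $\overline{ry}$, then its extent contains at least one of the three $x$-coordinates $x.x$, $r.x$, $y.x$. I would establish this by a short case analysis. Suppose $\ell$ has a cutting point $p$ on $\overline{xr}$; by the second observation its other cutting point $q$ lies on a different edge of $\UH(B)$. If $q$ is to the right of $p$, then $q.x > r.x \geq p.x$, so the extent $[p.x, q.x]$ contains $r.x$; if $q$ is to the left of $p$, then $q.x < x.x \leq p.x$, so the extent $[q.x, p.x]$ contains $x.x$. The symmetric subcases with $p$ on $\overline{ry}$ produce extents containing $r.x$ or $y.x$.

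The lemma then follows immediately. Since the shortcuts of $H_B$ are non-overlapping, their extents are pairwise disjoint, so each of the values $x.x$, $r.x$, $y.x$ can lie in the extent of at most one shortcut. By the claim, every shortcut of $H_B$ crossing $\overline{xr} \cup \overline{ry}$ has its extent containing at least one of these three values, which yields the bound of three. The only subtle point is handling degeneracies where a cutting point coincides with one of the input points $x$, $r$, $y$, but these are excluded by the general-position assumption made at the end of Section~\ref{sec:SeparationCertificate}; hence I do not anticipate any serious obstacle.
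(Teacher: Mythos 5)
Your proof is correct and follows essentially the same route as the paper's: the paper likewise observes that the two cutting points of an effective shortcut must lie on different segments and then counts at most one shortcut whose extent covers $r$ plus at most two more covering $x$ and $y$, which is exactly your pigeonhole on $\{x.x, r.x, y.x\}$ combined with disjointness of extents. The only cosmetic difference is that the paper justifies the different-segments observation via \emph{effectiveness} (a shortcut with both cutting points on one segment of $\UH(B)$ has no vertex of $\UV(B)$ above it) rather than via general position, but both justifications work.
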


\begin{proof}
  The cutting points of an effective shortcut need to be on different
  segments.
  Hence, there is at most one shortcut cutting away~$r$ and at most
  two more intersecting $\overline{xr}$ and~$\overline{ry}$.
\end{proof}

%% To avoid problems with the representation of coordinates, 
All the shortcuts we use will be defined by lines through two input points 
(possibly points that have been deleted).

%%%%%%%%%%%%%%%%%%%%%%%%%%%%%%%%%%%%%%%%%%%%%%%%%%%%%%%%%%%%%%%%%%%%%%
\subsection{Rays}
%%%%%%%%%%%%%%%%%%%%%%%%%%%%%%%%%%%%%%%%%%%%%%%%%%%%%%%%%%%%%%%%%%%%%%

For every point $p$ in a point set $X$ in upper convex position
we define two \defword{rays} $\Lray{p}$ and $\Rray{p}$.
Let $\overline{xp}$~and~$\overline{py}$ be the segments of $\UH(X)$ that are incident to $p$,
where $x$ is to the left of~$p$ and $y$ is to the right.
These segments define the tangent lines $\ell_x$~and~$\ell_y$ through~$p$ by $x \in \ell_x$ and $y\in \ell_y$.
Define~$\Lray{p} \subset \ell_y$ to be the half-line left of~$p$, the \defword{left ray} of~$p$.
Symmetrically we define $\Rray{p} \subset \ell_x$ to be \defword{right ray} of~$p$.
See Figure~\ref{fig:rays} for an illustration.
If $p$ is the leftmost point in $X$ then $\Rray{p}$ is the vertical half-line above~$p$. 
Symmetrically, if $p$ is the rightmost point in $X$ then $\Lray{p}$ is the vertical half-line above~$p$.
\begin{figure}[htb]
  \begin{center}
%WRITEMAG% rays 0.7
    \inputfig{rays}
    \caption{The two rays $\protect\Lray{p}$ and $\protect\Rray{p}$ of~$p$,
      and a line~$\ell$ like in Lemma~\ref{lem:rayAbove}.}
    \label{fig:rays}
  \end{center}
\end{figure}

The following lemma summarizes the geometric property that our construction is built upon:
\begin{lemma}\label{lem:rayAbove}
Let~$X$ be a set of points in upper convex position.
For any point $p\in X$ with rays $\Lray{p}$ and~$\Rray{p}$, let~$\ell$ be a 
line that intersects both rays~$\Lray{p}$ and~$\Rray{p}$.
Then the line~$\ell$ does not intersect~$\UCo(X)$.
\end{lemma}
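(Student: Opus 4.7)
The plan is to use that $\ell_x$ and $\ell_y$ are support lines of the convex set $\UC(X)$ at $p$, trapping $\UC(X)$ inside the closed ``lower wedge'' $W$ they define, and then to argue that the hypothesis on $\ell$ prevents $\ell$ from entering the interior of~$W$.

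Since $\overline{xp}$ and $\overline{py}$ are segments of $\UH(X)$, the lines $\ell_x\supset\overline{xp}$ and $\ell_y\supset\overline{py}$ are support lines of $\UC(X)$: all of $\UC(X)$ lies in the closed lower half-plane of each. Hence $\UC(X)$ is contained in the closed wedge $W$ obtained as the intersection of these two half-planes. The wedge $W$ has apex~$p$; using that the slope of $\overline{xp}$ is strictly greater than that of $\overline{py}$ (a consequence of upper convexity), one checks that the two bounding half-lines of~$W$ are precisely the half of $\ell_x$ containing~$x$ and the half of $\ell_y$ containing~$y$, i.e.\ the halves \emph{opposite} to $\Rray{p}$ and $\Lray{p}$. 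In particular $\UCo(X)\subseteq \operatorname{int}(W)$.

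Now assume $\ell\neq \ell_x,\ell_y$. By hypothesis $\ell$ meets $\ell_x$ in a point $q_R\in\Rray{p}$ and $\ell_y$ in a point $q_L\in\Lray{p}$, and these are the only intersections of $\ell$ with $\ell_x,\ell_y$. Both $q_R$ and $q_L$ lie on the halves of $\ell_x,\ell_y$ opposite to the boundary half-lines of~$W$, so $\ell$ does not cross $\partial W$ except possibly at the apex~$p$. A short computation with the two slopes shows that every point of $\Rray{p}\setminus\{p\}$ lies strictly above $\ell_y$, and hence outside~$W$; symmetrically for $\Lray{p}\setminus\{p\}$. So $\ell$ contains a point outside the convex set~$W$ and does not cross~$\partial W$; by convexity of~$W$ this forces $\ell\cap \operatorname{int}(W)=\emptyset$, which together with $\UCo(X)\subseteq \operatorname{int}(W)$ yields $\ell\cap\UCo(X)=\emptyset$.

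The main care-points are degeneracies rather than the geometric core. If $\ell\in\{\ell_x,\ell_y\}$, then $\ell$ is itself a support line of $\UC(X)$ and is automatically disjoint from $\UCo(X)$. If $p$ is the leftmost (respectively rightmost) point of~$X$, then $\ell_x$ (respectively $\ell_y$) is replaced by the vertical supporting line at~$p$, $W$ becomes a right-angled region, and the wedge argument transfers verbatim. The only remaining subtlety is a line $\ell$ through~$p$ whose only intersection with each ray is the shared endpoint~$p$; this is excluded under the natural reading that ``intersects the ray'' refers to a point distinct from~$p$ (and does not arise in the applications of the lemma in the paper).
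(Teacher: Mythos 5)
The paper states Lemma~\ref{lem:rayAbove} with no proof at all (it is introduced only as ``the geometric property that our construction is built upon''), so there is nothing to compare your argument against; what you have written supplies the missing justification, and it is correct. The support-line/wedge argument is the natural one: $\UC(X)\subseteq W=h_{\ell_x}\cap h_{\ell_y}$, the boundary half-lines of $W$ are the halves of $\ell_x,\ell_y$ containing $x$ and $y$ (because the slope of $\overline{xp}$ exceeds that of $\overline{py}$), so the rays minus $p$ lie strictly outside $W$, and a line meeting $\ell_x$ and $\ell_y$ only at such points cannot cross $\partial W$ and hence cannot reach the interior. You also isolate exactly the right degeneracies: $\ell\in\{\ell_x,\ell_y\}$ is a support line, the leftmost/rightmost case just replaces one bounding line by the vertical support line consistent with the paper's convention on extreme points, and a line through $p$ meeting both rays only at $p$ is the one genuine exception to the literal statement (e.g.\ the vertical line through an interior vertex). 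Your reading that the intersections are at points distinct from $p$ is the intended one and matches every use of the lemma in the paper, where the ray intersections $\LrayI{p},\RrayI{p}$ are taken for $p$ strictly inside the other hull and are therefore distinct from~$p$.
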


%%%%%%%%%%%%%%%%%%%%%%%%%%%%%%%%%%%%%%%%%%%%%%%%%%%%%%%%%%%%%%%%%%%%%%
\subsection{Point Certificates, Shadows, and Selected Points}
\label{sec:PointCertificate}
%%%%%%%%%%%%%%%%%%%%%%%%%%%%%%%%%%%%%%%%%%%%%%%%%%%%%%%%%%%%%%%%%%%%%%

Now, we define the building block of our separation certificate, the
point certificate.
In the following we let $\SC{A}$ and $\SC{B}$ be the shortcut versions of 
$A$ and $B$ by two non-overlapping sets of effective and conservative shortcuts $H_A$ and~$H_B$.
Note that conservativeness ensures that the equality points of the original and the shortcut versions are identical
and that the set of interior points is also the same
i.e., $\UH(A)\cap\UH(B)=\UH(\SC{A})\cap\UH(\SC{B})$, 
$A\cap\UCo(B)=\SC{A}\cap\UCo(\SC{B})$ and 
$B\cap\UCo(A)=\SC{B}\cap\UCo(\SC{A})$.

We describe the situation of point certificates for streaks of
polarity $B$ over~${A}$.  Every $p\in {A}$ inside
$\UCo(\SC{B})$ defines a \defword{point certificate}.  Both the rays $\Lray{p}$
and $\Rray{p}$ intersect~$\UH(\SC{B})$.  We denote these points the
\defword{ray intersections} $\LrayI{p}$ and $\RrayI{p}$, respectively.
By Lemma~\ref{lem:rayAbove}, we are sure that there is no equality point
of~$\UH({A})$ and~$\UH({B})$ (in the vertical slab) between
$\LrayI{p}$~and~$\RrayI{p}$, since the segment $\LrayI{p}\,\RrayI{p}$
would be a geometrically conservative shortcut.  We call~$p,\LrayI{p}\,\RrayI{p}$ a
\defword{point certificate} that extends (horizontally)
from~$\LrayI{p}$ to $\RrayI{p}$.  See Figure~\ref{fig:shadow} for an
illustration.

\begin{figure}[htb]
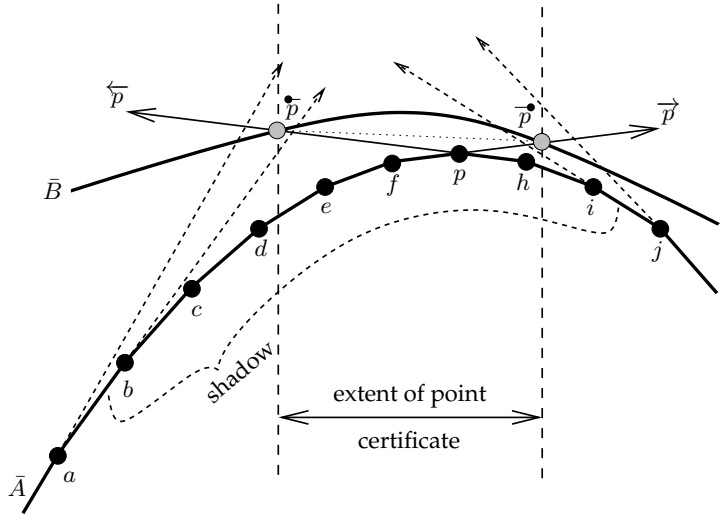

  \begin{center}
%WRITEMAG% shadow 0.7
    \inputfig{shadow}
    \caption{The shadow around the point certificate of~$p$ and implied by the
      rays $\protect\Lray{p}$ and~$\protect\Rray{p}$.
      The points~$b$--$i$ are in the shadow of~$p$, the points~$a$
      and~$j$ are not. The hull $\UH(\SC{B})$ is depicted
      as a convex curve, since it is here not important that it is a polygon.}
    \label{fig:shadow}
  \end{center}
\end{figure}

The efficiency of the complete certificate relies upon carefully
choosing the points for which point certificates are instantiated and
the ray intersections are determined.  This choice of point
certificates for all streaks of polarity $B$ over~$A$ is
reflected by the set~$Q_A\subseteq A \cap\UCo(\SC{B})$ of
\defword{selected points}.  Symmetrically, for all streaks of
polarity~$A$ over~$B$ there is a set of selected
points~$Q_B\subseteq B\cap\UCo(\SC{A})$.

For the set~$Q_A$ of selected points we require that the extent of all
point certificates are horizontally disjoint.  We say that all point
certificates enjoy the \defword{disjointness condition}.  We
furthermore require that $Q_A$ is \defword{maximal}, i.e., no further
point can be selected without violating the disjointness condition.
For a selected point~$p\in{A}$ the disjointness condition disallows
the selection of several other points of~${A}\cap\UCo(\SC{B})$, a
range in the left-to-right ordering of~${A}$ around~$p$, the
\defword{shadow} of~$p$, as illustrated in Figure~\ref{fig:shadow}.
More precisely, a point~$q\in{A}$ is in the right (left) shadow of
$p$, if $q\in\UCo(\SC{B})$ is to the right (left) of $p$ and the extent
of the point certificates of $p$ and $q$ overlap.

%%%%%%%%%%%%%%%%%%%%%%%%%%%%%%%%%%%%%%%%%%%%%%%%%%%%%%%%%%%%%%%%%%%%%%
\subsection{Tangent Certificates}
\label{sec:TangentCertificates}
%%%%%%%%%%%%%%%%%%%%%%%%%%%%%%%%%%%%%%%%%%%%%%%%%%%%%%%%%%%%%%%%%%%%%%

Given that we require that the extent of point certificates for 
selected points do not overlap, we have
another type of certificate, so called \defword{tangent certificates}, to
ensure that there is no further equality point outside the chosen
point certificates.
A tangent certificate is defined by a tangent line~$\ell$ containing a segment of the locally
inner hull and an interval 
%% (open or closed) 
defining the extent, in
which the line~$\ell$ is inside the other hull 
(see Figure~\ref{fig:dangling}).
%% \todo{Draw this in a figure}
%
These come in two main versions, the ordinary one for the gap
between two point certificates, and a boundary certificate between
a point certificate and an equality point.
Some special cases are considered to be tangent certificates as well, see Section~\ref{sec:trivial-streak}.

%%%%%%%%%%%%%%%%%%%%%%%%%%%%%%%%%%%%%%%%%%%%%%%%%%%%%%%%%%%%%%%%%%%%%%
\subsubsection{Tangent Certificate between Point Certificates}
\label{sec:NormalTangentCertificate}
%%%%%%%%%%%%%%%%%%%%%%%%%%%%%%%%%%%%%%%%%%%%%%%%%%%%%%%%%%%%%%%%%%%%%%

To cover the gap between the point certificates between 
two selected points, we have an \defword{ordinary tangent certificate}.
Consider the gap between two neighboring point
certificates for two selected points~$p,q\in Q_A$,
where~$p$ is to the left of~$q$, such that no point in the slab
between~$p$ and~$q$ is selected (is in~$Q_A$).
%
%% Let~$f$ be the right ray rooted at~$p$ with ray
%% intersection~$v$, and~$e$ be the left ray rooted
%% at~$q$ with ray intersection~$u$. 
%
The disjointness condition states that~$\RrayI{p}$ is left of~$\LrayI{q}$.
The situation is illustrated in Figure~\ref{fig:dangling}.
%\vote{Gerth wanted to restrict the tangent to be on a segment}
%

\begin{figure}[htb]
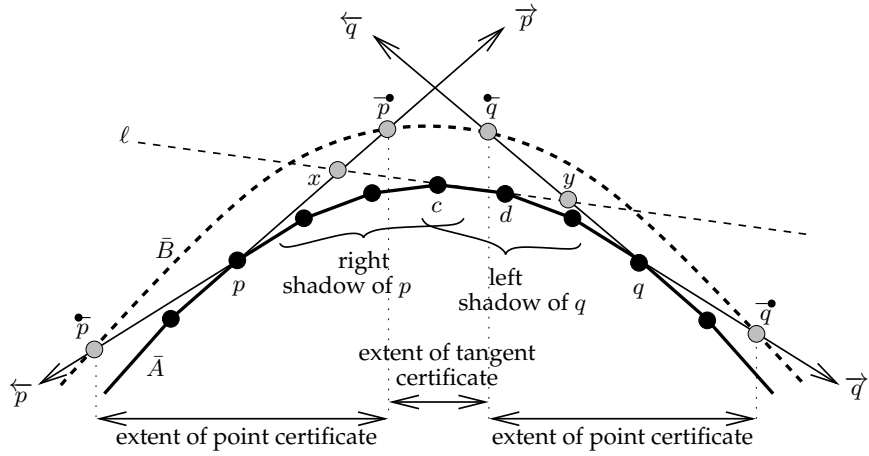

  \begin{center}
%WRITEMAG% dangling 0.7
%... dangling 0.7 -D +0:200
    \inputfig{dangling}
    \caption{A tangent certificate.  Hull~$A$ is below hull~$B$.  The
      points~$p$ and~$q$ are selected.
%% , their  rays are depicted as a solid line.  
      The gap between the point certificates is
      the slab between~$\protect\RrayI{p}$ and~$\protect\LrayI{q}$. 
      The tangent line~$\ell$ through $c$ and $d$ is a tangent certificate, 
	  since $c$ and $d$ are in the shadows of $p$ and~$q$, respectively.
      The tangent certificate extends (horizontally) from~$x$ to~$y$,
      covering the gap between the point certificates.}
    \label{fig:dangling}
  \end{center}
\end{figure}

A tangent certificate closes the gap of the separation certificate between $\RrayI{p}$ and $\LrayI{q}$. 
%% It is usually not made explicit, but 
%% is an important concept to ensure the correctness of the certificate.
It consists of a tangent line~$\ell$ containing a segment of~$\UH(\SC{A})$ and its
intersections $x$ and $y$ with the rays~$\Rray{p}$ and~$\Lray{q}$, respectively.
The tangent certificate \emph{extends} from $\RrayI{p}$ to~$\LrayI{q}$.
If~$x$ and~$y$ are inside~$\UC(\SC{B})$ 
(the order of $x$ and $\RrayI{p}$ on~$\Rray{p}$
and respectively of $y$ and~$\LrayI{q}$ on~$\Lray{q}$),
then by the convexity of~$B$, there is no equality point in the slab between
$y$~and~$x$ and we have a valid tangent certificate.

Every point of~$A$ between $p$ and~$q$ is in at least one of the
shadows of $p$ and~$q$, 
since by the maximality of $Q_A$ no additional point between $p$ and $q$ can be selected. 
Therefore there exists two neighboring points $c$ and $d$ in $A$
such that $c$ is in the right shadow of $p$ and $d$ is the left shadow of $q$
(possibly $c=p$ and/or $d=q$).
By the definition of shadows, the tangent~$\ell$ through $c$ and $d$ is a valid tangent certificate.

%%%%%%%%%%%%%%%%%%%%%%%%%%%%%%%%%%%%%%%%%%%%%%%%%%%%%%%%%%%%%%%%%%%%%%
\subsubsection{Boundary Certificate} \label{sec:boundaryCert}
%%%%%%%%%%%%%%%%%%%%%%%%%%%%%%%%%%%%%%%%%%%%%%%%%%%%%%%%%%%%%%%%%%%%%%

\relax
Between the point certificate of a selected point and an equality point, we have a \defword{boundary certificate}.
Let~$e$ be an equality point and~$p\in A$ a selected point, such
that there is no selected point in the slab between $e$~and~$p$, where $p$ is to left of~$e$ ($p$ to the right of $e$ is symmetric).
%% Let~$i\in A$ be the rightmost point of~$A$ in the streak.
The situation is illustrated in Figure~\ref{fig:half-open}.

\begin{figure}[tb]
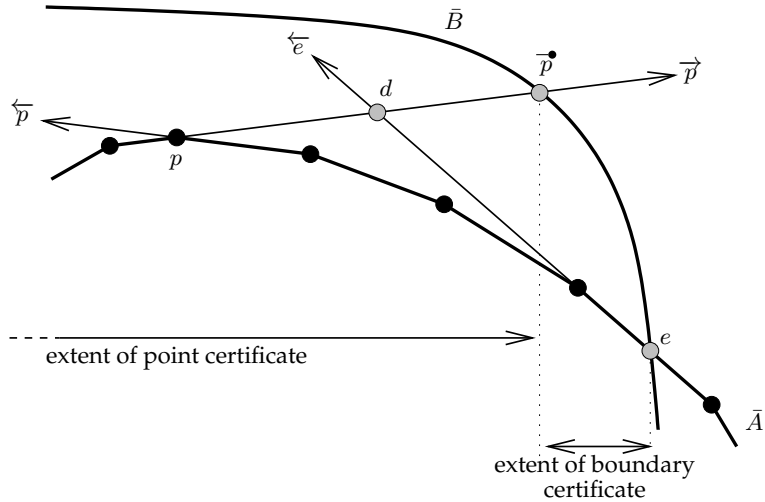

  \begin{center}
%WRITEMAG% half_open 0.7
    \inputfig{half_open}
    \caption[A boundary certificate]
    {A boundary certificate: $p$ is the rightmost selected point in the
      streak and $e$ is the equality point.
%%, $t$ the ray.
%%      Point~$i$ defines the ray~$h$.
      The intersection~$d$ of 
	the rays $\Rray{p}$ and $\Lray{e}$
%% $t$~and~$h$ 
      shows that the boundary certificate is valid.}
    \label{fig:half-open}
  \end{center}
\end{figure}

Let~$\Lray{e}$ be the left ray of $e$ for $A\cup\{e\}$, 
i.e., the points to the left of $e$ on the line containing the
segment of $\UH(\SC{A})$ containing $e$.
%% The equality point~$e$ is on the line defined by~$h$.
%% Let~$t$ be the right ray of~$p$.
Since we cannot select another point between $p$~and~$e$,
all points between $p$ and $e$ are in the shadow of~$p$,
i.e., $\Lray{e}$ intersects~$\Rray{p}$ inside~$\UH(\SC{B})$.
Therefore there cannot be an equality point between~$e$ and~$p$,
and the boundary certificate is valid.
The extent of the boundary certificate is between $e$ and~$\RrayI{p}$

%%%%%%%%%%%%%%%%%%%%%%%%%%%%%%%%%%%%%%%%%%%%%%%%%%%%%%%%%%%%%%%%%%%%%%
\subsubsection{Special Cases}
%%%%%%%%%%%%%%%%%%%%%%%%%%%%%%%%%%%%%%%%%%%%%%%%%%%%%%%%%%%%%%%%%%%%%%
\label{sec:trivial-streak}

\begin{figure}[tb]
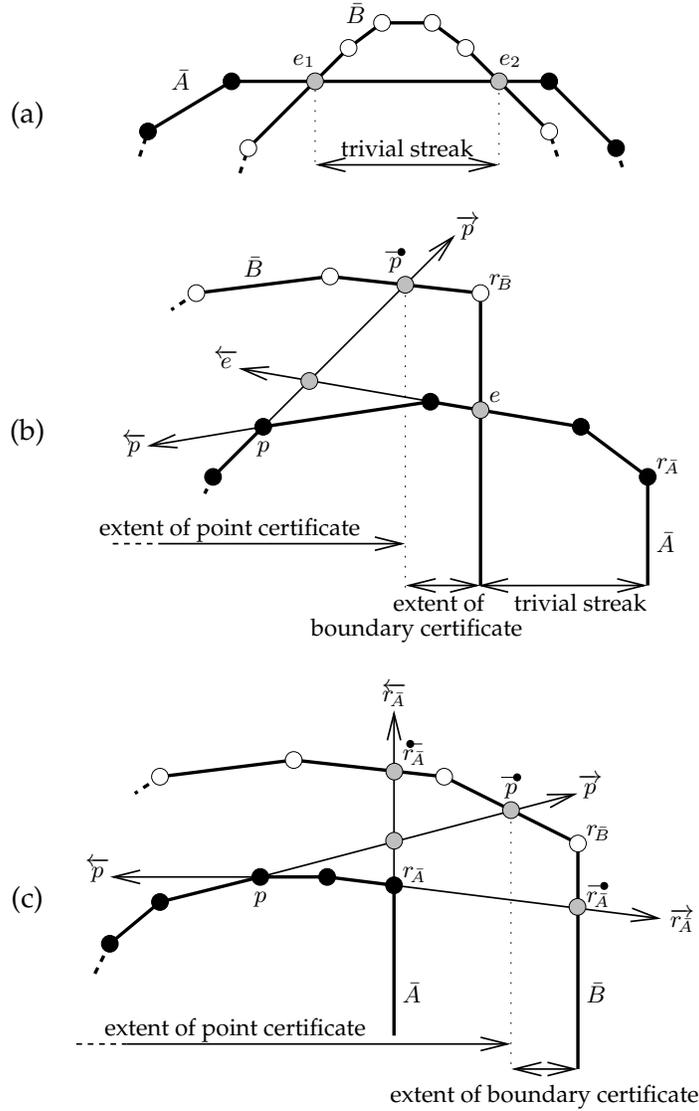

  \begin{center}
%WRITEMAG% trivial-streak 0.7
%WRITEMAG% special-boundary-1 0.7
%WRITEMAG% special-boundary-2 0.7
    \begin{tabular}{c@{\quad}c}
    \raisebox{4ex}{(a)} & \inputfig{trivial-streak} \\[2ex]
    \raisebox{15ex}{(b)} & \inputfig{special-boundary-1} \\[2ex]
    \raisebox{15ex}{(c)} & \inputfig{special-boundary-2} 
    \end{tabular}
    \caption[Special boundary certificates]
    {Special cases of boundary certificates.}
    \label{fig:special-boundary-cert}
  \end{center}
\end{figure}

We say that a streak between two equality points~$e_1$ and $e_2$ is \defword{trivial}
if the locally inner hull $\UH(\SC{A})$
does not have a point of~$\SC{A}$ inside the streak, i.e., the locally
inner hull is a part of a segment of~$\UH(\SC{A})$, and the
endpoints of this segment are outside $\UC(\SC{B})$.
The convexity of $\UH(\SC{B})$ implies that the slab between $e_1$ and $e_2$
% two equality points on one segment cannot
%forms a streak, i.e., does not 
cannot contain any nother
equality point.
The situation is illustrated in Figure~\ref{fig:special-boundary-cert}(a).

Recall that on $\UH(\SC{A})$ and $\UH(\SC{B})$ we have vertical infinite segments below
the leftmost and rightmost points of $\SC{A}$ and $\SC{B}$. 
In the following we consider the rightmost points. The leftmost points
are handled symmetrically.
Let $\rA$ and $\rB$ be the rightmost points of $\SC{A}$ and $\SC{B}$, respectively, and
assume $\SC{A}$ is below $\SC{B}$ at the vertical line through the leftmost point of $\rA$ and $\rB$.
There is obviously no equality point to the right of this line.
Let
$p$ be the rightmost selected point of~$\SC{A}$, possibly $p=\rA$.
If $\rA$ is to the right of $\rB$, we have an ordinary boundary certificate
by the ray $\Lray{e}$ where $e$ is the rightmost equality point of $\UH(\SC{A})$ and $\UH(\SC{B})$.
The situation is illustrated in Figure~\ref{fig:special-boundary-cert}(b).
%
%(only that one of the segments is the special rightmost vertical segment
%of~$B$).
%
%% \vote{draw figure(s)?} -Done:-)
%
Otherwise $\rB$ is to the right of~$\rA$.
If $p=\rA$ we are done. Otherwise $p$ is to the left of $\rA$, and $\Rray{p}$ must intersect 
$\Lray{\rA}$ inside $\UH(\SC{B})$, 
since otherwise $\rA$ would be selectable.
It follows that the point certificate
of $p$ spans at least the slab $p$ to $\rA$, 
and ensures that there is no equality point in this slab.
The situation is illustrated in Figure~\ref{fig:special-boundary-cert}(c).

To unify the algorithm description, we consider the two vertical half-lines
below $\rA$ and $\rB$ to meet at an equality point at minus infinity.
With this view the slab between $\rA$ and $\rB$ in the first case,
Figure~\ref{fig:special-boundary-cert}(b), becomes a trivial streak,
and in the second case, Figure~\ref{fig:special-boundary-cert}(c), the
vertical line through $\rA$ becomes a boundary certificate.

%%%%%%%%%%%%%%%%%%%%%%%%%%%%%%%%%%%%%%%%%%%%%%%%%%%%%%%%%%%%%%%%%%%%%%
\subsection{Complete Certificate}
\label{sec:CompleteCertificate}
%%%%%%%%%%%%%%%%%%%%%%%%%%%%%%%%%%%%%%%%%%%%%%%%%%%%%%%%%%%%%%%%%%%%%%

There is a complete certificate of separation between the two upper convex point
sets $A$ and $B$,
if every vertical line not through an identified equality point is in the extend of some certificate.
%%, and the point certificates do not overlap.
A \defword{complete certificate} is geometrically described
by~$(Q,H)$:
\begin{description}
\item[$E$] The set of equality points.
\item[$H=(H_A,H_B)$] Sets of shortcuts on the locally outer
  hulls.
  The shortcuts are required to be effective, non-overlapping,
  conservative.
% (not touching the locally inner hull), 
\item[$Q=(Q_A,Q_B)$] Maximal sets of selected points on the locally
  inner hulls.
  The selected points, together with the shortcuts,
  define ray intersections and point certificates.
  Point certificates are required to be non-overlapping.
  By the maximality of $Q$, 
  there exists a tangent (normal or boundary) certificate
  between any two neighboring point certificates or equality points.
\end{description}

By the maximality of $Q$,
every non-trivial streak between two equality points, i.e., every streak with at least one point
on the locally inner hull, contains at least one selected point.
%Otherwise there would not be a certificate between the two equality
%points.

%% The spirit of our geometric certificate is to have few
%% connections between the locally outer and locally inner hull.
%
%% On the outer hull these are ray intersections and equality points.
The role of the shortcuts is to ``trivialize'' the locally outer hull
between ray intersections and equality points 
to contain only a constant number of segments. This is
made precise by the following requirement for our certificate.
% definition of aggressive shortcuts.

\begin{description}
\item[\rm\defword{Aggressive shortcuts}]
For any four consecutive segments of $\SC{B}$  
within a streak of polarity $\SC{B}$ over $\SC{A}$ 
(symmetric for polarity $\SC{A}$ over $\SC{B}$),
at least one of the four segments contains
an equality point or an intersection with a ray
from a selected point in $Q_A$.
\end{description}

% \begin{definition}
% \label{def:Aggressive}
%   Let
% %% $A$ be a set of points, and~
% $H_B$ be a set of non-overlapping
%   shortcuts on~$B$ and 
% %% , defining $\UH(\SC{B})$.
%   let~$I\subset \UH(\SC{B})$.
% %% be a set of points (intersections)  on the shortcut version of~$B$.
%   We say that~$H_B$ is \emph{aggressive} w.r.t.~$I$, if any four
%   contiguous segments of~$\UH(\SC{B}) \setminus \UC(\SC{A})$
%   contain at least one point of~$I$.
% \end{definition}
%
%% We require that the shortcuts $H$ are aggressive with respect to $I$
%% being the equality points and ray intersections defined by~$Q$.  
%
%% Section~\ref{sec:SCAlgorithms} discusses the algorithms that maintain
%% a set of aggressive shortcuts for our certificate.

%

\subsection{Properties of a Certificate}

The representation of a complete certificate as a data structure is
considered in Section~\ref{sec:Representation}.
Here, we discuss some geometric properties of a complete certificate.

\begin{lemma}\label{lem:shortcutAllowed}
  Consider a (valid) point certificate and let $u$ and $v$ be the corresponding ray intersections or equality points on the outer hull that define the extent of the certificate, namely for a point certificate of a selected point~$p$ we have $(u,v)=(\LrayI{p} ,\RrayI{p})$, for a tangent certificate between two selected points~$p$ and~$q$ we have $(u,v)=(\RrayI{p} ,\LrayI{q})$, and for a boundary certificate between a selected point~$p$ and equality point~$e$ we have $(u,v)=(e,\LrayI{p})$ or $(u,v)=(\RrayI{p},e)$.
Then there always is a point certificate that ensures that the line connecting $u$~and~$v$ is outside the locally inner hull, and hence any shortcut with cutting points between $u$~and~$v$ will not intersect the locally inner hull.
\end{lemma}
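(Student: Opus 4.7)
The claim has two components: (a) the segment $\overline{uv}$ lies above $\UH(\SC{A})$ throughout the slab from $u$ to $v$, and (b) this upward separation is inherited by every shortcut whose cutting points $c_1,c_2\in\UH(\SC{B})$ lie strictly between $u$ and $v$. Part~(b) is a short convexity remark: the arc of $\UH(\SC{B})$ from $u$ to $v$ is concave and sits above the chord $\overline{uv}$, so $c_1$ and $c_2$ are above $\overline{uv}$, hence so is their chord $\overline{c_1c_2}$, and then above $\UH(\SC{A})$ by~(a). So the shortcut $\overline{c_1c_2}$ does not meet $\UCo(\SC{A})$. The substance of the proof is therefore~(a), which I would handle by case analysis on the certificate type.

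The point certificate case $u=\LrayI{p},\ v=\RrayI{p}$ is immediate from Lemma~\ref{lem:rayAbove}: the line through $u$ and $v$ meets both rays $\Lray{p}$ and $\Rray{p}$, so it avoids $\UCo(\SC{A})$ entirely.

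For a tangent certificate between selected points $p$ and $q$ (so $u=\RrayI{p}$, $v=\LrayI{q}$), I would use as separating witness the tangent line $\ell$ through the segment $\overline{cd}\subseteq\UH(\SC{A})$ that defines the certificate. Since $\ell$ supports $\UH(\SC{A})$ from above, $\UC(\SC{A})$ lies weakly below $\ell$. Validity of the certificate places $x=\ell\cap\Rray{p}$ strictly inside $\UC(\SC{B})$, so along $\Rray{p}$ the order is $p$ (in $\UC(\SC{A})$, hence at or below $\ell$), $x\in\ell$, then $\RrayI{p}\in\UH(\SC{B})$; crossing $\ell$ at $x$ therefore places $\RrayI{p}$ on the side of $\ell$ opposite $p$, i.e., above $\ell$. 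The symmetric argument on $\Lray{q}$ puts $\LrayI{q}$ above $\ell$. With both endpoints of the chord $\overline{\RrayI{p}\LrayI{q}}$ in the upper half-plane of $\ell$, the whole chord lies above $\ell$, and hence above $\UH(\SC{A})$ in the slab.

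The boundary certificate case is handled analogously, taking $\ell$ to be the line containing the segment of $\UH(\SC{A})$ through the equality point $e$ (so $\Lray{e}\subseteq\ell$, or $\Rray{e}\subseteq\ell$ in the symmetric subcase). Then $e\in\ell$, while the ray intersection $\RrayI{p}$ (resp.\ $\LrayI{p}$) is pushed above $\ell$ by the same sidedness argument applied at the intersection $d=\ell\cap\Rray{p}\subset\UC(\SC{B})$, which is exactly where validity of the boundary certificate locates the ray crossing. The degenerate configurations of Section~\ref{sec:trivial-streak} fold into this framework via the convention that the vertical rays below the extreme points meet at an equality point at infinity. The only real obstacle is the sidedness bookkeeping in cases~2 and~3; once the tangent line $\ell$ is identified as the witness, each case reduces to the single observation that the ray of a selected point crosses $\ell$ inside $\UC(\SC{B})$, throwing the ray intersection with $\UH(\SC{B})$ onto the far side of $\ell$.
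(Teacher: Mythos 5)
Your proof is correct. The paper itself dismisses this lemma with the bare citation ``By Lemma~\ref{lem:rayAbove}''; your argument uses that lemma verbatim only for the point-certificate case and then supplies, for the tangent and boundary cases, the supporting-line argument the paper leaves implicit: the line $\ell$ through the certificate's segment of $\UH(\SC{A})$ supports $\UC(\SC{A})$, and validity of the certificate (the crossings $\ell\cap\Rray{p}$, $\ell\cap\Lray{q}$, resp.\ $\Lray{e}\cap\Rray{p}$, lie in $\UC(\SC{B})$, i.e.\ before the ray intersections along the rays) pushes $\RrayI{p}$ and $\LrayI{q}$ onto the upper side of $\ell$, so the chord $\overline{uv}$ and, by concavity of $\UH(\SC{B})$, any shortcut with cutting points between $u$ and $v$ stay above $\UH(\SC{A})$. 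This is a faithful elaboration rather than a different method, but the extra step is genuinely needed: a naive two-point analogue of Lemma~\ref{lem:rayAbove} (a line meeting $\Rray{p}$ and $\Lray{q}$ for distinct $p,q$ avoids $\UCo$) is false, so the detour through the tangent line and the validity condition cannot be skipped. One harmless deviation: you prove the claim for the segment $\overline{uv}$ within the extent rather than for the infinite line; this is exactly the form in which the lemma is used (Section~\ref{sec:newSC} speaks of the segment $\overline{uv}$), and the statement should be read that way, since beyond the extent the line can re-enter $\UCo(\SC{A})$ in an adjacent streak of opposite polarity.
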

\begin{proof}
 By  Lemma~\ref{lem:rayAbove}.
\end{proof}
\begin{lemma}
\label{lem:LimitedRays}
  Let $(Q,H)$ be a complete certificate for the merging of 
%% the sets
  $A$ and $B$, and let $\overline{xy}$ a segment of $\UH(\SC{B})$.
%% the shortcut version $\SC{B}$ of~$B$.
%\conclusion%
  Then there are at most 4 ray intersections on~$\overline{xy}$.
\end{lemma}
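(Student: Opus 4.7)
Since $\overline{xy}$ is a segment of the outer hull $\UH(\SC{B})$, ray intersections on it can only arise from selected points in $Q_A$ inside a streak of polarity $B$ over $A$; each such selected point $p$ contributes at most the two intersections $\LrayI{p}$ and $\RrayI{p}$. The plan is therefore to show that at most two distinct selected points in $Q_A$ can contribute any ray intersection to $\overline{xy}$, which immediately yields the bound of four.

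I would rely on two structural facts. First, by convexity of $\UH(A)$, if $p_1<p_2<\cdots$ denote the selected points in $Q_A$ ordered left-to-right on $\UH(A)$, then the slopes of their ray-lines satisfy the strict interleaving
$s^R_{p_1}>s^L_{p_1}>s^R_{p_2}>s^L_{p_2}>\cdots$,
because each $s^R_{p_i}$ (resp.\ $s^L_{p_i}$) is the slope of the $\UH(A)$-segment immediately to the left (resp.\ right) of $p_i$, and slopes decrease monotonically along a convex upper chain. Second, by the disjointness condition on point certificates, the ray intersections appear on $\UH(\SC{B})$ in the corresponding order $\LrayI{p_1},\RrayI{p_1},\LrayI{p_2},\RrayI{p_2},\ldots$.

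Suppose for contradiction that $\overline{xy}$ carries rays from three distinct selected points $p_i<p_{i+1}<p_{i+2}$. By the disjointness order above, the middle point $p_{i+1}$ must then have \emph{both} of its ray intersections $\LrayI{p_{i+1}}$ and $\RrayI{p_{i+1}}$ inside $\overline{xy}$, and at least one of $\RrayI{p_i}$ or $\LrayI{p_{i+2}}$ also lies in $\overline{xy}$. Applying Lemma~\ref{lem:rayAbove} to $p_{i+1}$, the line $L_{xy}$ containing $\overline{xy}$ must avoid $\UCo(A)$; in particular $L_{xy}$ lies weakly above $\UH(A)$ throughout the slab of $p_{i+1}$. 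Combining this with the slope interleaving and the fact that every selected point lies strictly below $L_{xy}$ (since $Q_A\subseteq\UCo(\SC{B})$), the hit position of $\Rray{p_i}$ on $L_{xy}$ must lie to the left of $\LrayI{p_{i+1}}$ (by the slope order $s^R_{p_i}>s^L_{p_{i+1}}$ and the $x$-coordinate order), and symmetrically for $\Lray{p_{i+2}}$ on the right. The required configuration then forces the extent of $p_{i+1}$'s point certificate to overlap either that of $p_i$ or $p_{i+2}$, contradicting the disjointness condition.

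The main obstacle is the precise case analysis that turns the slope-interleaving plus the hit-ordering into the claimed geometric contradiction; in particular one must handle separately where $p_{i+1}$ lies relative to the top of $\UH(A)$ (so that signs of $s^R_{p_{i+1}}$ and $s^L_{p_{i+1}}$ allow both rays to reach $L_{xy}$ from below) and verify that the resulting constraints on the $x$-coordinates of the hit points on $L_{xy}$ cannot be simultaneously satisfied.
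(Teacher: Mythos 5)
Your reduction is where the proof breaks. You reduce the lemma to the claim that at most two selected points can contribute a ray intersection to $\overline{xy}$, and then try to refute the existence of three contributors $p_i<p_{i+1}<p_{i+2}$. But that claim is strictly stronger than the lemma, and it is not refutable: the configuration in which $\RrayI{p_i},\LrayI{p_{i+1}},\RrayI{p_{i+1}},\LrayI{p_{i+2}}$ all lie on $\overline{xy}$ (one complete point certificate flanked by two half certificates) has three contributors but only four intersections, and it is consistent with the disjointness condition, with Lemma~\ref{lem:rayAbove}, and with all selected points lying strictly below the line $L_{xy}$. Your final sentence --- that the configuration ``forces the extent of $p_{i+1}$'s point certificate to overlap either that of $p_i$ or $p_{i+2}$'' --- is asserted, not derived; the slope interleaving and hit-ordering you establish beforehand only reproduce the left-to-right order $\RrayI{p_i}<\LrayI{p_{i+1}}<\RrayI{p_{i+1}}<\LrayI{p_{i+2}}$, which is exactly what disjointness already gives and which entails no overlap. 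So in the four-intersection, three-contributor case there is no contradiction to be had, and the proof cannot close.

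The fix is to aim the contradiction at the right target. Five intersections on $\overline{xy}$ form a contiguous block of the interleaved sequence, and any contiguous block of length five contains two \emph{complete} pairs, i.e., two selected points $p$ (left) and $q$ (right) whose point certificates lie entirely in the slab of $\overline{xy}$. That is the configuration the paper refutes, by an argument you do not have: by convexity of $\UH(A)$ the two outer rays $\Lray{p}$ and $\Rray{q}$ lie entirely below the line through $p$ and $q$; since $p$ and $q$ are both strictly below $L_{xy}$, the line $pq$ crosses $L_{xy}$ only to the left of $p$ or only to the right of $q$, so on at least one side the line $pq$ (and hence the outer ray on that side) stays below $L_{xy}$ and cannot produce the assumed intersection with $\overline{xy}$. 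Your Lemma~\ref{lem:rayAbove} observation about $L_{xy}$ avoiding $\UCo(A)$ is correct but plays no role in either version of the argument.
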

\begin{proof}
  Assume there are 5 ray intersections.
  Then there are 2 point certificates with extent contained inside the slab between $x$ and~$y$.
  Let~$p$ left of~$q$ be the corresponding selected points.
%%  Let~$\Lray{p}$ be the left ray of~$p$ and $\Rray{q}$ be the right ray of~$q$.
  Now, both rays $\Lray{p}$ and~$\Rray{q}$ must be completely below the line through
  $p$ and~$q$.
  Because $p$ and~$q$ are below $\overline{xy}$, an intersection
  between the line defined by $x$ and~$y$ and the line defined by $p$
  and~$q$ must be left of~$p$ or right of~$q$.
  If the intersection is on the right, then on the left side~$\Lray{p}$
  cannot intersect~$\overline{xy}$ because it is completely below the
  line through $p$ and~$q$.
  Similarly, it is impossible for~$\Rray{q}$ to intersect~$\overline{xy}$ if
  the intersection is on the left.
  Hence, it is impossible that one segment has more than 4 ray
  intersections of a complete certificate.
\end{proof}

\begin{lemma}
\label{lem:LimitedImpact}
  Let $(Q,H)$ be a complete certificate for the merging of 
%% the sets
  $A$ and $B$, and let $\overline{xr}$ and $\overline{ry}$ be two consecutive segments
  of $\UH({B})$.
%\conclusion%
  Then at most 11 selected points have ray intersections on
  $\overline{xr}\cup\overline{ry}$ or on shortcuts intersecting
  $\overline{xr}$ and/or $\overline{ry}$.
\end{lemma}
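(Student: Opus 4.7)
The plan is to bound the number of line segments of $\UH(\SC{B})$ that carry the relevant ray intersections, apply Lemma~\ref{lem:LimitedRays} to each, and then use the disjointness of point-certificate extents to control how many distinct selected points these ray intersections come from.

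First, by Lemma~\ref{lem:FewShortcuts}, at most three shortcuts of $H_B$ have a cutting point on $\overline{xr}\cup\overline{ry}$: at most one shortcut cutting both segments, and at most two more, each cutting exactly one. Together with the at most two exposed subsegments of $\overline{xr}$ and $\overline{ry}$ that remain on $\UH(\SC{B})$ between consecutive cutting points, these partition the portion of $\UH(\SC{B})$ lying in the horizontal slab $[c_0,c_3]$ spanned by $\overline{xr}\cup\overline{ry}$ together with the offending shortcuts into at most five line segments, and all ray intersections the lemma asks us to count lie on these segments.

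Next, Lemma~\ref{lem:LimitedRays} applied to each of the at most five segments gives at most $5\cdot 4=20$ ray intersections in total. These come from selected points in $Q_A$ only (rays from $Q_B$ hit $\UH(\SC{A})$, not $\UH(\SC{B})$), and each such point contributes one or two ray intersections.

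To pass from a bound on ray intersections to a bound on selected points, I classify each contributing point~$p$ into: (a) both $\LrayI p$ and $\RrayI p$ lie in $[c_0,c_3]$, contributing two ray intersections, or (b) exactly one of them does, contributing one. In case~(b) the extent $(\LrayI p,\RrayI p)$ of $p$'s point certificate strictly contains either $c_0$ or $c_3$; by the disjointness condition at most one extent strictly contains $c_0$ and at most one strictly contains $c_3$, so case~(b) accounts for at most two selected points. A point whose extent straddles all of $[c_0,c_3]$ contributes no ray intersection, and by disjointness precludes any case~(b) point, so it cannot threaten the bound.

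Finally, letting $k$ be the number of case~(a) points and $m\le 2$ the number of case~(b) points, the ray-intersection count yields $2k+m\le 20$, so the number of contributing selected points satisfies
\[
k+m \;=\; \frac{2k+m}{2}+\frac{m}{2} \;\le\; 10+1 \;=\; 11.
\]
The crucial step — and where I expect the main subtlety lies — is the disjointness observation that bounds the ``boundary'' case~(b) contribution by two independently of the number of shortcuts, which is precisely what converts the crude count of $20$ ray-ends into the stated count of $11$ selected points.
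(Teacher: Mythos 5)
Your proof is correct and follows essentially the same route as the paper's: decompose the relevant portion of $\UH(\SC{B})$ into at most five segments (the two input segments plus at most three intersecting shortcuts), apply Lemma~\ref{lem:LimitedRays} to get at most $20$ ray intersections, and observe that at most two of these are unpaired, yielding $11$ selected points. Your explicit case analysis of which ray intersections fail to come in pairs, and the accompanying arithmetic, is just a spelled-out version of the paper's one-line remark that ``at most two of these are not part of a pair stemming from the same selected point.''
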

\begin{proof}
  Because shortcuts are required to be non-overlapping, there can be
  at most 3 shortcuts intersecting $\overline{xr}\cup\overline{ry}$.
  These 5 consecutive segments of $\UH(\SC{B})$
%% the shortcut version~$B'$ of~$B$ 
  can
  have by Lemma~\ref{lem:LimitedRays} at most 20 ray intersections,
  and since at most two of these are not part of a pair stemming
  from the same selected point, there can be at most 11 selected
  points with ray intersections on the described segments.
\end{proof}

%% Note that in case of the deletion of~$r$, not all 20 ray intersections
%% can change, but only rays that pass through $\overline{xr}$ or
%% $\overline{ry}$ can change the intersection.
%% This reduces the number of potentially affected ray intersections
%% %%% Riko thinks it is at most 5 selected points affected
%% to~8, leading to at most 6 affected selected points.
%% For the sake of simplicity, we do not use this improved bound in the
%% <algorithms.

%%%%%%%%%%%%%%%%%%%%%%%%%%%%%%%%%%%%%%%%%%%%%%%%%%%%%%%%%%%%%%%%%%%%%%
\section{Geometric Monotonicity}
\label{sec:monotonicity}
%%%%%%%%%%%%%%%%%%%%%%%%%%%%%%%%%%%%%%%%%%%%%%%%%%%%%%%%%%%%%%%%%%%%%%

In this section we consider the geometric consequences of applying
\pc{Replace} operations to the two point sets $A$ and $B$.  Recall
that these operations are required to ensure that $\UC(A)$ and
$\UC(B)$ shrink monotonically (see Figure~\ref{fig:merger-example}).
A point~$p$ that eventually is inserted into $A$ by a \pc{Replace}$_A$
operation goes through the following states during its
\defword{life-cycle}, in the given order but possibly skipping some
states (see Figure~\ref{fig:lifecycle}).

\begin{enumerate}
  \item\label{lc:a} $p$ is not inserted into~$A$ yet ($p\in\UCo(A)$).
  \item\label{lc:b} $p$ is in $A$ and below $\UH(B)$ ($p\in A\cap\UC(B)$).
  \item\label{lc:c} $p$ is selected ($p\in Q_A \cap\UC(B)$).
  \item\label{lc:d} $p$ \defword{surfaces} but is hidden by a bridge ($p\in A\cap\UCo(A\cup B)\setminus\UC(B)$).
  \item\label{lc:f} $p$ is on the merged upper hull ($p\in\UV(A\cup B)$).
  \item\label{lc:g} $p$ is deleted ($p\notin\UC(A)$).
\end{enumerate}

Note that we include a point being selected into the above life-cycle,
since our construction will ensure that a selected point remains selected until the point surfaces.

\begin{figure}[thb]
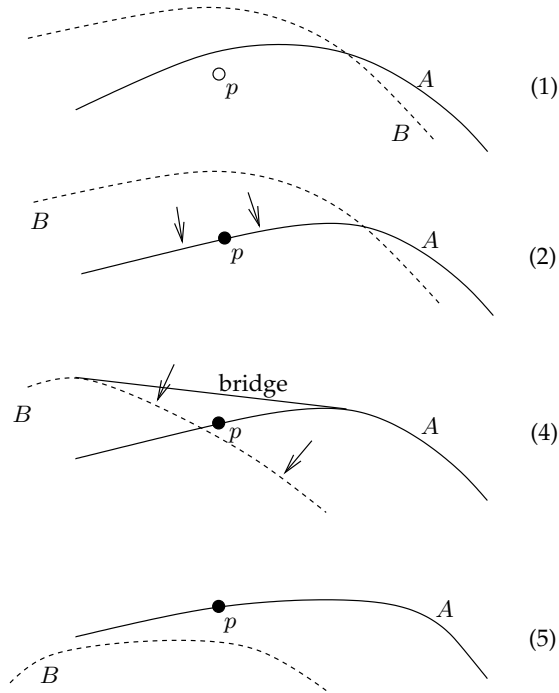

  \begin{center}
    \inputfig{lifecycle}
%WRITEMAG% lifecycle .5
    \caption[Life-cycle of a point]
    {The different stages of the life-cycle of a point in the merging.
    The hulls are depicted as curves, and the arrows indicate changes
    induced by deletions/replacements.}
    \label{fig:lifecycle}
  \end{center}
\end{figure}

Next, we consider the geometric influence of a
\pc{Replace}$_B(r,L)$ operation on a set of non-overlapping, effective
and conservative shortcuts $H=(H_A,H_B)$ and a maximal set of
selected points $Q=(Q_A,Q_B)$ with non-overlapping point
certificates (see Figure~\ref{fig:merger-example}, changes to $A$ are
handled symmetrically). Let $B'$ denote the new version of $B$ after
replacing $r$ by $L$, i.e., $B'=L\cup B\setminus\{r\}$.

The extent of each shortcut in $H_B$ can only shrink by replacing $r$
by~$L$, since $\UC(B')\subseteq\UC(B)$, i.e., $H_B$ remains a set of
non-overlapping shortcuts.  Since $A$ does not change, $H_B$ also
remains conservative shortcuts. All shortcuts in $H_B$ remain
effective, except possibly a single shortcut in $H_B$ that only
intersected the two segments adjacent to $r$ on $\UH(B)$ and that now
is above $\UH(B')$. Since $A$ does not change and
$\UC(B')\subseteq\UC(B)$, all shortcuts in $H_A$ remain
non-overlapping, conservative, and effective.

%\todo{lemma: (hypothetical) point certificate - extent shrinks}

%The following lemma is stated for the sets of points including everything that will ever show up on the hulls. 
%This is clearly equivalent to successive replacement.
The following considerations show that shadows never get bigger.
Remember that the extent of the point certificate of a point is given by its ray intersections with the shortcut version of the other hull. 
If the point is on the locally outer hull, this is the singleton set of the x-position of the point.
Note that in the following lemma, it is irrelevant which points are selected.
\begin{lemma}[Monotonic Extent]
  \label{lem:MonotonicExtent}
  For the sets $A$ and~$B$ in convex position, assume there is a complete certificate with shortcuts $H=(H_A,H_B)$. % and selected points~$Q=(Q_A,Q_B)$.  \todo{perhaps equality}
  For an input point~$p \in A \cup B$, let the interval~$I$ denote the extent of the point certificate of~$p$.
  Let $I'$ be the extent of~$p$ after the operation \pc{Replace}$_B(r,L)$ for $r\neq p$.
  Then $I' \subseteq I$.
\end{lemma}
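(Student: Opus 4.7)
The plan is to split on whether $p \in A$ or $p \in B$, using two monotonicity facts that follow directly from the $\pc{Replace}_B$ interface: the set $A$ (and thus $\UH(A)$ and $\SC{A}$) is unchanged, and $\UC(B')\subseteq\UC(B)$ implies $\UC(\SC{B'})\subseteq\UC(\SC{B})$, since with the fixed lines $H_B$ one has $\UC(\SC{B}) = \UC(B)\cap\bigcap_{\ell\in H_B} h_\ell$ and intersecting with a smaller set gives a smaller set. In each case the argument is that the ray intersections defining the endpoints of $I$ move horizontally no farther from $p.x$, or that $I$ collapses to $\{p.x\}$ when $p$ passes from the locally inner to the locally outer hull.

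For Case~1 ($p \in A$), the rays $\Lray{p}$ and $\Rray{p}$ depend only on $p$ and its two neighbors in $\UH(A)$ and so are unchanged. If $p$ remains in $\UCo(\SC{B'})\subseteq\UCo(\SC{B})$, then a straightforward parameterization of each ray, combined with the fact that $\UH(\SC{B'})$ lies pointwise at or below $\UH(\SC{B})$, yields $\LrayI{p}'.x\geq\LrayI{p}.x$ and $\RrayI{p}'.x\leq\RrayI{p}.x$, so $I'\subseteq I$. If $p$ transitions from $\UCo(\SC{B})$ to the locally outer hull, then $I'=\{p.x\}\subseteq[\LrayI{p}.x,\RrayI{p}.x]=I$; the reverse transition is prevented by $\UC(\SC{B'})\subseteq\UC(\SC{B})$, and if $p$ is on the locally outer hull both before and after, $I'=I=\{p.x\}$.

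For Case~2 ($p \in B$, $p \neq r$), $\UH(\SC{A})$ is unchanged but $p$'s rays may shift as its neighbors in $\UH(B)$ change; under general position, $p\in B'$ together with $p\in\UV(B)$ implies $p\in\UV(B')$. The key observation is a monotonic \emph{tangent-cone widening}: any line through $p$ with $\UC(B)$ below it automatically has the smaller $\UC(B')$ below it, so the slope of $p$'s left edge can only increase and that of its right edge only decrease. Therefore $\Rray{p}'$ is no flatter than $\Rray{p}$ and $\Lray{p}'$ is no steeper than $\Lray{p}$; in both directions the new ray lies weakly above the old one beyond $p$, and so meets the unchanged concave curve $\UH(\SC{A})$ at an $x$-coordinate no farther from $p.x$. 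Transitions between locally inner and outer hull status cannot occur because $\SC{A}$ is unchanged, so once again $I'\subseteq I$.

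The main obstacle I anticipate is the clean formulation of the tangent-cone widening argument to cover all sub-cases (including neighbors changing because of new points from $L$ or because of older neighbors exposed by the removal of $r$) and confirming uniformly that $\UC(\SC{B'})\subseteq\UC(\SC{B})$ is preserved even when a shortcut in $H_B$ becomes ineffective. Beyond those geometric bookkeeping steps the argument reduces to routine comparisons of linear rays against a fixed, or pointwise lowered, concave upper envelope.
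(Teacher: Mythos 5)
Your proof is correct and follows essentially the same route as the paper's: a case split on $p\in A$ versus $p\in B$, where for $p\in A$ the unchanged rays meet the shrunken $\UC(\SC{B})$ no farther from $p$, and for $p\in B$ the rays rotate outward against the unchanged $\UH(\SC{A})$ — your tangent-cone widening is just a repackaging of the paper's observation that the adjacent edge's slope increases (resp.\ decreases) when $r$ is the left (resp.\ right) neighbor. The boundary cases (point locally outside, $r$ not a neighbor of $p$) are dispatched the same way in both arguments.
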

\begin{proof}
  As before, denote by~$B'$ the version of~$B$ after the replace operation.
  For~$p \in A$, both rays of~$p$ remain unchanged.
  The intersection with such a ray and~$\UH(B')$ can only move closer to~$p$ (if it still exists after the deletion). 
  The statement of the lemma follows.

  For~$p \in B \cap \UCo(A)$, if $r$ is the left neighbor of~$p$ on~$\UH(B)$, its right ray increases its slope, and hence the right ray intersection moves to the left.
  Symmetrically, if $r$ is the right neighbor of~$p$ on~$\UH(B)$, its left ray intersection moves to the right.
  In both cases $I'\subseteq I$.
  If~$p\not \in\UCo(A)$, $I'=I$ is a singleton.
  If~$p$ is not a neighbor of~$r$, the rays and the ray intersections do not change, leading to~$I'=I$.
\end{proof}

\begin{lemma}[Monotonic Shadow]
\label{lem:MonotonicShadows}
  Consider \pc{Replace}$_B(r,L)$ and shortcuts $H=(H_A,H_B)$ and selected points~$Q=(Q_A,Q_B)$.  

  If $p\in Q_A$ and $q\in A$ is in the shadow of $p$ after replacing $r$ by $L$ in $B$, 
  then $q$ was in the shadow of $p$ before the deletion.  

  If $p\in Q_B$ and $q \in B'$ is in the shadow of $p$ after replacing $r$ by $L$ in~$B$, 
  then $q$ is either a new point from $L$, or $q$ was also in the shadow of $p$ before the deletion.
\end{lemma}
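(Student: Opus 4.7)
The plan is to derive both statements as direct consequences of Lemma~\ref{lem:MonotonicExtent} (Monotonic Extent) together with the observation that a \pc{Replace}$_B(r,L)$ operation only shrinks the relevant closures. Recall that the shadow of a selected point~$p$ is defined by three conditions on the candidate~$q$: (i)~$q$ lies on the appropriate side of~$p$, (ii)~$q$ belongs to $\UCo$ of the opposite shortcut hull, and (iii)~the extents of the point certificates of $p$ and $q$ overlap. The strategy is to argue that each of these three conditions, when satisfied after the operation, must have been satisfied beforehand.

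For (i) the side ordering of $p$ and $q$ by $x$-coordinate is clearly unchanged. For (ii) note that $\UC(B')\subseteq\UC(B)$ and $H_B$ is neither enlarged nor altered, so $\UC(\SC{B}')\subseteq\UC(\SC{B})$; since $A$ (and $\SC{A}$) is untouched by a \pc{Replace}$_B$ operation, we also have $\UCo(\SC{A}')=\UCo(\SC{A})$. Thus any $q\in A$ with $q\in\UCo(\SC{B}')$ already lies in $\UCo(\SC{B})$, and any $q\in B$ with $q\in\UCo(\SC{A}')$ lies in $\UCo(\SC{A})$. For (iii) I invoke Lemma~\ref{lem:MonotonicExtent}, which says that for every input point $s\neq r$ the extent of its point certificate only shrinks. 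If the new extents of $p$ and $q$ overlap, then so did the (larger) old extents.

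For the first claim ($p\in Q_A$, $q\in A$), both $p,q\in A$ and $r\in B$, so $p\neq r$ and $q\neq r$; hence Monotonic Extent applies to both and the three observations above give that $q$ was in the shadow of $p$ already. For the second claim ($p\in Q_B$, $q\in B'$), if $q\in L$ the conclusion is the first disjunct ($q$ is a new point). Otherwise $q\in B\setminus\{r\}$; also $p\in Q_B\subseteq B$ and $p$ still exists after the operation, so $p\neq r$. Monotonic Extent again bounds the extents of $p$ and $q$, and the opposite-hull condition collapses because $\SC{A}$ does not change, giving that $q$ was in the shadow of $p$ before.

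No real obstacle is expected; the only mildly delicate point is keeping track that Monotonic Extent has the hypothesis $s\neq r$, which must be verified separately in each case (trivial for $A$-side points since $r\in B$, and immediate for $B$-side points because $r\notin B'$ and the selected point $p$ survives the operation). Everything else is bookkeeping against the three defining clauses of ``shadow''.
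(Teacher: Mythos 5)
Your proof is correct and follows the same route as the paper: the paper's entire proof is ``By the definition of shadows as non-empty intersection of extents and Lemma~\ref{lem:MonotonicExtent}.'' You have simply unfolded that one-liner, checking the side-ordering and $\UCo$-membership clauses of the shadow definition and the $r\neq p$ hypothesis of Lemma~\ref{lem:MonotonicExtent} explicitly, which is sound but not a different argument.
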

\begin{proof}
  By the definition of shadows as non-empty intersection of extents and Lemma~\ref{lem:MonotonicExtent}.
\end{proof}

The situation of Lemma~\ref{lem:MonotonicShadows} is illustrated in Figure~\ref{fig:shadowlemma}.
Note that Lemma~\ref{lem:MonotonicShadows} also covers cases where $p$ and~$q$ are initially in different streaks, i.e., separated by a streak of the other polarity.
% This is obvious by the fact that the extent of a (hypothetical) point certificate never contains an equality point.

Our algorithm maintains geometric monotonicity, except for one special case.
To achieve that shortcuts are defined only by (possibly deleted) input points, it does change shortcuts in a way that the extent of a point certificate of a non-selected point can become bigger (cf. Section~\ref{sec:newSC}). 
Still, the algorithm guarantees the monotonicity of shadows of selected points, which is sufficient.

\begin{figure}[tb]
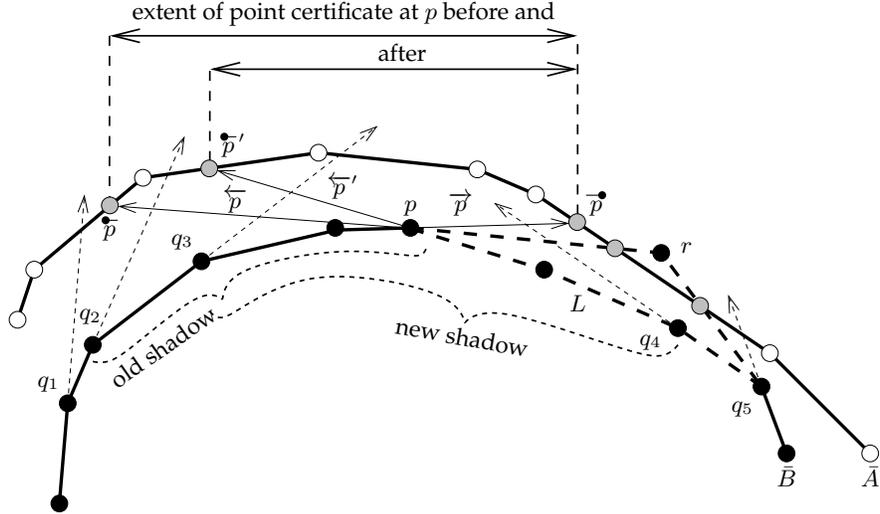

    \begin{center}
%WRITEMAG% shadowlemma .7
      \inputfig{shadowlemma}
      \caption[Ray changing]
      {The situation of a ray changing during a deletion.}
      \label{fig:shadowlemma}
    \end{center}
\end{figure}

\subsection{Reestablishing Tasks}\label{sec:geomReestablishing}
From the above discussion, we have the following reestablishing tasks
if we have a complete certificate $(Q,H)$ before a
\pc{Replace}$_B(r,L)$ operation. In the Sections~\ref{sec:algRepl} and~\ref{sec:mainReestablish} we describe
in detail how to handle each of the tasks below.

\begin{description}
\item[Shortcut effectiveness] At most one redundant shortcut needs
  possibly to be eliminated, namely a shortcut that intersected both
  segments adjacent to $r$ on $\UH(B)$.
\item[Valid selected points] The selected points $Q_A \setminus
  \UCo(\SC{B'})$ are now on the locally upper hull and cannot be
  selected any longer.  Note that these points must have had a ray
  intersection with the above eliminated shortcut on $\UH(\SC{B})$
  or a segment on $\UH(\SC{B})$ adjacent to $r$ or 
  adjacent to the eliminated shortcut, since
  otherwise their ray intersections would not change and we would
  still have point certificates proving that the points are on the
  inner hull.
\item[Identifying new equality points] In the extent of the deleted
  segments on $\UH(\SC{B})$, an arbitrary number of new equality points might appear.
\item[Updating bridges adjacent to $r$] Update bridges that had an
  endpoint in the deleted point~$r$.
\item[New bridges] Each new equality point defines a new bridge, or
  the possible change of an existing bridge.
\item[Maximality of selected points] Since shadows shrink and new
  points are introduced, we need to identify possible new points to
  select to assure the maximality of the set of selected points $Q_A$ and $Q_B$, i.e.,
  that all points on the inner hull is in the shadow of at least one
  selected point.
\item[Aggressiveness of shortcuts] The replacement of $r$ by $L$ in
  $B$ can introduce new segments in $\UH(\SC{B'})$, and segments of
  $\UH(\SC{A})$ may change from being on the inner hull to being on
  the outer hull. 
  We potentially need to introduce new shortcuts in both $H_A$ and $H_B$,
  to ensure the aggressiveness of the shortcuts after the update.
\end{description}

%%%%%%%%%%%%%%%%%%%%%%%%%%%%%%%%%%%%%%%%%%%%%%%%%%%%%%%%%%%%%%%%%%%%%%
\section{Helper Data Structures}
\label{sec:helpers}
%%%%%%%%%%%%%%%%%%%%%%%%%%%%%%%%%%%%%%%%%%%%%%%%%%%%%%%%%%%%%%%%%%%%%%

This section introduces the two data structures \emph{split-array} and
\emph{splitter} crucial for the efficient maintenance of our complete
certificate. 
Whereas splitters are part of the data structure between deletions, split arrays are only used temporarily during the reestablishing of the complete certificate.

% The connections to the geometry of a certificate are sketched in \todo{adjust to new structure}
%%Section~\ref{sec:AlgOutline}, and made precise in the algorithms of
%%Section~\ref{sec:SCAlgorithms} and~\ref{sec:deletion}. 

%%%%%%%%%%%%%%%%%%%%%%%%%%%%%%%%%%%%%%%%%%%%%%%%%%%%%%%%%%%%%%%%%%%%%%
\subsection{Split-Array}
\label{sec:split-array}
%%%%%%%%%%%%%%%%%%%%%%%%%%%%%%%%%%%%%%%%%%%%%%%%%%%%%%%%%%%%%%%%%%%%%%

A \defword{split-array} stores a sorted sequence of elements.  Its
main purpose is to allow efficient searches as part of a split
operation.  The interface to a split-array is:
\begin{description}
\item[\pc{Build}($e_1,\ldots,e_k$)] Given elements
  $e_1<e_2<\cdots<e_k$, returns a new split-array~$S$ containing the
  elements $e_1,\ldots,e_k$.
\item[\pc{Split}($S,t$)] Given a split-array $S$ and an element $t$,
  splits $S$ into two split-arrays $S_1$~and~$S_2$ such that $S_1\leq
  t < S_2$, i.e., $x\in S_1 \Rightarrow x\leq t$, and $y\in S_2
  \Rightarrow t<y$.  The split-array~$S$ is destroyed.
\item[\pc{Min}($S$), \pc{Max}($S$)] Return the smallest and largest
  elements in the split-array~$S$, respectively.
\item[\pc{Delete-Min}($S$), \pc{Delete-Max}($S$)] Remove and return
  the smallest and largest elements in the split-array~$S$,
  respectively.
\end{description}

\begin{lemma}
\label{lem:split-array}
  There is a data structure implementing a split-array such that
  \pc{Build} takes amortized~$O(k)$ time, where~$k$ denotes the number
  of elements in the split-array, and \pc{split}, \pc{Min}, \pc{Max},
  \pc{Delete-Min} and \pc{Delete-Max} take amortized constant time.
\end{lemma}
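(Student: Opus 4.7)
My plan is to implement a split-array by combining two standard techniques: a deque-style representation (for the amortized constant-time Min/Max/Delete-Min/Delete-Max) and a chunked, doubly-linked backbone with chunks of geometrically increasing sizes at both ends (to enable efficient two-sided searching during Split). Build would construct the backbone in a single left-to-right sweep over the already sorted input, taking $O(k)$ time. Min and Max return the head/tail elements in $O(1)$ via explicit endpoint pointers, and Delete-Min/Delete-Max remove the corresponding endpoint and advance into the next chunk when the boundary chunk is exhausted; standard deque bookkeeping yields $O(1)$ amortized cost per such operation.

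For Split$(S,t)$, I would locate the split position $p$ of $t$ by a simultaneous exponential (``galloping'') search from both ends of $S$. Because the chunks are of geometrically increasing size from each end, the search inspects $O(1)$ chunk boundaries and at most a single chunk in detail before halting, giving worst-case time $O(\log(\min(p,|S|-p))+1)$. Once $p$ is located, the linked-list backbone and the chunk overlay can be severed in $O(1)$ by pointer surgery, yielding $S_1$ and $S_2$.

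The main obstacle is bringing the Split cost down to amortized $O(1)$. My plan is to use a potential function $\Phi$ proportional to the total number of elements across all live split-arrays, so that Build of $k$ elements deposits $\Theta(k)$ potential in $O(k)$ work (matching the claimed $O(k)$ amortized bound) and Delete-Min/Max each releases a constant amount to pay for themselves. Since $\Phi$ is invariant under Split, the search cost must instead be covered by a direct charging scheme: I would charge the $O(\log m+1)$ search work of each Split to the $m$ elements forming the smaller side, exploiting that the two-sided exponential search reads only $O(1)$ words per smaller-side chunk. The delicate point I would need to nail down is that no element is charged too often over its lifetime; I expect this to be controlled by the geometric chunking, which ensures that once an element is cut off into the smaller side it lies deep in the new, smaller split-array and rarely resurfaces as a charging target. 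If the pure size-based potential turns out to be insufficient for this bookkeeping, I would augment it with a second term counting boundary chunks, so that the extra cost of each Split is exactly matched by the change in the number of boundary chunks created or destroyed.
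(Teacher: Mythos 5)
Your structural ingredients are fine (Build in one sweep, endpoint pointers for \pc{Min}/\pc{Max}/\pc{Delete-Min}/\pc{Delete-Max}, and a two-sided exponential search giving a \pc{Split} cost of $O(1+\log\min(p,|S|-p))$), but the heart of the lemma --- that \pc{Split} is amortized $O(1)$ --- is exactly the step you leave open. As you yourself observe, a potential proportional to the total number of stored elements is invariant under \pc{Split}, so it pays for nothing; and your substitute, ``charge the $O(\log m)$ search work to the $m$ elements of the smaller side and hope the geometric chunking prevents repeated charging,'' is not an argument. In fact the chunking is irrelevant to this question: what controls repeated charging is that whenever an element is charged it ends up in a split-array of size $m\leq n/2$, where $n$ was the size before the split, so the sizes at successive charges of a fixed element halve, and the per-element total $\sum_j O((1+\log m_j)/m_j)$ with $m_{j+1}\leq m_j/2$ is $O(1)$. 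Without this (or an equivalent) observation your accounting is incomplete, and the proposed fallback term ``counting boundary chunks'' does not obviously supply the needed $\Theta(\log\min(n_1,n_2))$ of released credit per split.

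The paper closes this gap with a cleaner potential: each split-array of size $n$ carries $\phi(n)=n-\log n$, so \pc{Build} is amortized $O(k)$ and splitting into sizes $n_1\leq n_2$ releases $\log n_1+\log n_2-\log n\geq\log n_1-1$, which pays for the one-sided exponential search (started from the end chosen by a single comparison with the middle element) whose cost is $O(\log n_1)$. Note also that the paper's structure is just one array with a left and a right index, with sub-arrays reused after splits; your chunked doubly-linked overlay is unnecessary, and it introduces an unaddressed detail of its own, namely that after ``$O(1)$ pointer surgery'' the freshly cut ends no longer have geometrically growing chunks, so you would have to argue that subsequent galloping searches from those ends are still fast. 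Either adopt the $n-\log n$ potential or make the halving-sizes charging argument explicit; as written, the amortized $O(1)$ bound for \pc{Split} is asserted rather than proved.
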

\begin{proof}
  A \pc{build} operation stores the points in an array.  All
  subsequent split-arrays resulting from \pc{split} operations reuse
  appropriate sub-arrays by remembering the leftmost and rightmost
  positions into the array.  The search for the split point of a
  \pc{split} operation starts by comparing~$t$ with the element~$m$
  stored at the middle position of the array.  If~$t$ is larger
  than~$m$, we perform an exponential search in the array from the
  right, otherwise from the left.  An exponential search from the left
  compares~$t$ with the elements at positions $1, 2, 4, 8, \ldots$
  until it finds an element that is greater than~$t$.  From there it
  performs a standard binary search, i.e., it compares~$t$ with the
  middle element of the currently still possible outcomes.  Such an
  exponential search takes~$O(\log h)$ time, where~$h$ is the size of
  the smaller resulting split-array.  \pc{min} and \pc{max} simply
  return the leftmost and rightmost element in the sub-array,
  respectively, and \pc{delete-min} and \pc{delete-max} furthermore
  also increment and decrement the leftmost and rightmost positions,
  respectively.  These four operations clearly take worst-case $O(1)$ time.

  For the amortized analysis we define the potential of a split-array
  storing $n$ elements to be $\phi(n)=n - \log n$.  Then, clearly,
  \pc{build} takes amortized linear time and all other operations
  except \pc{Split} take amortized constant time.  A split operation releases
  some potential: Let $n_1+n_2=n$ be the respective sizes of the
  resulting split-arrays $S_1$ and $S_2$.  Assuming $n_1\leq n_2$
  (w.l.o.g.) we have $n_2 \geq n/2$.  The linear terms in the
  potential of $S$, $S_1$, and $S_2$ cancel and we get a release in
  potential of $\phi(n)-\phi(n_1)-\phi(n_2) = -\log n + \log n_1 +
  \log n_2 \geq \log n_1 + \log \frac{n}{2} - \log n = \log n_1 - \log
  2$.  Hence, the split operation takes amortized constant time.
\end{proof}

The data structure presented above is based on index calculations in
an array.  To completely avoid such calculations, one can use a
level-linked (2,4)-tree \cite{HuddlestonMehlhorn82,hoffmann86} and the search procedure of
Section~\ref{sec:ImplementationSplitter} without increasing the time
bounds.

%%%%%%%%%%%%%%%%%%%%%%%%%%%%%%%%%%%%%%%%%%%%%%%%%%%%%%%%%%%%%%%%%%%%%%
\subsection{Splitter}
\label{sec:splitter}
%%%%%%%%%%%%%%%%%%%%%%%%%%%%%%%%%%%%%%%%%%%%%%%%%%%%%%%%%%%%%%%%%%%%%%

A \defword{splitter} is a generalization of a split-array where a
split operation might have to be \emph{suspended} because a comparison
in the search for a splitting point can currently not be answered.  In
our application a search in a splitter is the search for a new
selectable point between two selected points $p$ and $q$, and a search
is suspended when a tangent certificate has been found instead.  
The protected intervals defined below are the shadows of $p$ and~$q$ (see Figure~\ref{fig:dangling}).

%%%%%%%%%%%%%%%%%%%%%%%%%%%%%%%%%%%%%%%%%%%%%%%%%%%%%%%%%%%%%%%%%%%%%%
\subsubsection{Protected Intervals Search}
\label{sec:Game}
%%%%%%%%%%%%%%%%%%%%%%%%%%%%%%%%%%%%%%%%%%%%%%%%%%%%%%%%%%%%%%%%%%%%%%

Consider a list of elements~$e_1,\ldots,e_n$.  Initially, all elements
are colored both \defword{red} and \defword{blue}.  As an invariant
over time, all red elements form a prefix of $e_1,\ldots,e_n$
(possibly empty), and all blue elements form a suffix of
$e_1,\ldots,e_n$ (possibly empty).  These two color intervals are
called \defword{protected intervals}.  The task of a search is to
\emph{inspect} the colors of a limited number of elements, to either 
\begin{enumerate}
\item identify an uncolored element where we can split the list into two new
lists, or 
\item to verify that all elements are still colored by at least
one color which suspends the search. 
\end{enumerate}
We need to continue the suspended search if the red and blue protected intervals had a chance to change. 
Here, it is important that they can only shrink, i.e.\ that some elements have lost their red and/or blue color (corresponding to the monotonicity of shadows,
see Lemma~\ref{lem:MonotonicShadows}).
Note that all elements have at least one color if and only if $e_1$ is blue or $e_n$ is red, or there is an~$i$ where $e_i$ is red and $e_{i+1}$ is blue.
Figure~\ref{fig:spl_game} illustrates one protected intervals search.  
In some way, the whole data structure and its algorithms are built around this suspended search.  
The actual algorithm using it is described in Section~\ref{sec:mainReestablish}.

\begin{figure}[htb]
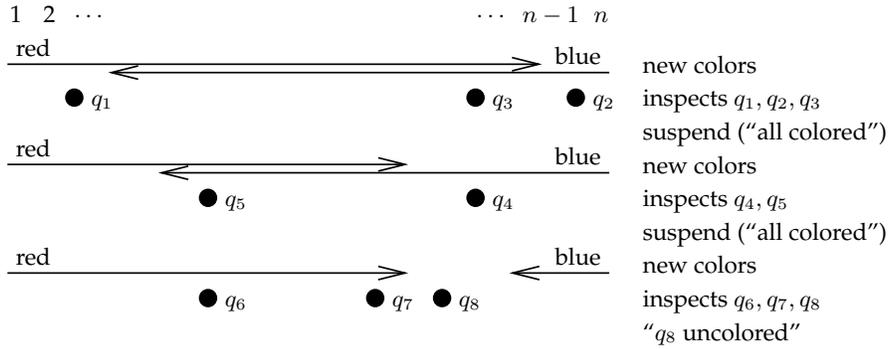

  \begin{center}
    \inputfig{spl_game}
%WRITEMAG% spl_game  .7
    \caption
    {Protected intervals search.}
    \label{fig:spl_game}
  \end{center}
\end{figure}

The following observation is crucial for our search strategy:
\begin{lemma}
\label{lem:SplitterMove}
  Let~$e_i$ be colored red-only (blue-only) at some stage of the
  protected intervals search.  Assume at this stage or a later stage
  some element is uncolored.  Then, there is an uncolored $e_j$ where
  $j\geq i$ ($j\leq i$).
\end{lemma}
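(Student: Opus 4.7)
The plan is to prove the red-only case directly and to invoke symmetry (reversing the ordering swaps the roles of red/prefix and blue/suffix) for the blue-only case. The argument will rest on two facts that are already spelled out in the excerpt: the structural invariant that at every stage the red elements form a prefix of $e_1, \ldots, e_n$ and the blue elements form a suffix, and the monotonicity statement that both the red prefix and the blue suffix can only shrink over time.

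First I would fix the stage $\sigma$ at which $e_i$ is red-only and the (possibly later) stage $\sigma'$ at which some element is uncolored. Step one is to observe that at stage $\sigma$ the element $e_i$ is not blue, so the blue suffix at that stage is contained in $\{e_{i+1}, \ldots, e_n\}$. By the monotonicity of the blue suffix, this containment persists at stage $\sigma'$, and in particular $e_i$ is still not blue at $\sigma'$.

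Step two is the case split at stage $\sigma'$. If $e_i$ is uncolored at $\sigma'$, then $j := i$ satisfies the claim. Otherwise, since $e_i$ is not blue, $e_i$ must be red at $\sigma'$. By the prefix invariant, every $e_k$ with $k < i$ is then also red at $\sigma'$, hence colored. Consequently every uncolored element $e_j$ at stage $\sigma'$ must have index $j \geq i$, and at least one such element exists by hypothesis. This gives the desired $j$.

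The blue-only case is obtained by the symmetric argument (reversing the sequence turns a blue suffix into a red prefix and vice versa), so no separate work is required. There is no real obstacle here: the proof is essentially a one-shot application of the prefix/suffix invariant combined with the monotonic shrinking of the protected intervals; the only thing that must be handled with a little care is the observation that being red-only at stage $\sigma$ is preserved to the extent of remaining not-blue at stage $\sigma'$, which is exactly where monotonicity is used.
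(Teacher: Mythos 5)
Your proof is correct and follows essentially the same route as the paper's: both rest on the monotonic shrinking of the protected intervals (so $e_i$ cannot regain the blue color) together with the prefix/suffix invariant, and both split on whether $e_i$ is itself uncolored at the later stage or still red, in which case every uncolored element must lie to its right. Your write-up is just a slightly more explicit version of the paper's argument, with the stages named and the use of the prefix invariant spelled out.
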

\begin{proof}
  Since the protected intervals can only shrink, $e_i$ cannot be
  colored blue again once it is red-only.  If~$e_i$ is not colored red
  anymore, $i=j$ proves the lemma.  Otherwise, let~$e_j$ be some
  uncolored point.  Since~$e_i$ is colored red, $e_j$ must be to the
  right of~$e_i$.  The statement in parenthesis follows symmetrically.
\end{proof}

As an immediate consequence of Lemma~\ref{lem:SplitterMove}, a search
can focus on an interval $[l,r]$ between the rightmost inspected
previously red-only element~$e_l$ and the leftmost inspected
previously blue-only element~$e_r$, and only inspect elements in this
interval.  Initially $[l,r]=[0,n+1]$.  If $r=l+1$, we inspect if $e_l$
and $e_r$ are still colored (we only inspect $e_l$ and $e_r$ if $1\leq
l$ and $r\leq n$, respectively).  If the inspected elements are still
colored, then all elements are still colored.  Otherwise, we have
found an uncolored element.  When $l+1<r$ we inspect an element $e_i$,
where $l<i<r$.  If $e_i$ is not colored the search terminates, and if
$e_i$ has only one color we can shrink the interval using
Lemma~\ref{lem:SplitterMove}, by setting $l=i$ ($r=i$) if $e_i$ is
red-only (blue-only).  Otherwise, $e_i$ has both colors, and all
elements are still colored and the search is suspended.  The next
reinvocation of the search starts inspecting~$e_i$ one more time.  By
choosing~$e_i$ according to an exponential search (similar to the
search in the split-array in Section~\ref{sec:split-array}), the
search uses a limited number of inspections:

\begin{lemma}
  For a protected intervals search there is a strategy that leads
  to~$O(m+\log \min\{i,n-i+1\})$ inspections, where $m$~is the number
  of suspends and $e_i$~is the uncolored element eventually
  identified.
\end{lemma}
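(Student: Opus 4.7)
The plan is to use an interleaved exponential (galloping) search from both ends of a maintained interval $[l,r]$ that is guaranteed to contain an uncolored element. The search state consists of $(l,r,k_L,k_R)$, initialized to $(0,n+1,0,0)$, and the entire state persists across suspends and across invocations. In each active invocation the algorithm alternates between a left-gallop inspection of $e_{l+2^{k_L}}$ and a right-gallop inspection of $e_{r-2^{k_R}}$. For a left-gallop inspection at $j=l+2^{k_L}$: if $e_j$ is uncolored, terminate and return $e_j$; if red-only, set $l \leftarrow j$ and $k_L \leftarrow k_L+1$ (validated by Lemma~\ref{lem:SplitterMove}); if blue-only, set $r \leftarrow j$ and switch to a standard binary search inside $(l,r)$; if bi-colored, suspend without updating any component of the state, so that the next reinvocation simply re-inspects the same $e_j$. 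Right-gallop inspections are treated symmetrically.

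Because a suspending inspection leaves the state untouched, the total number of suspending inspections equals $m$. For the non-suspending inspections, assume without loss of generality that $i \leq n-i+1$, so $\min\{i,n-i+1\}=i$. After $k$ consecutive red-only left-gallop inspections the pointer $l$ has advanced by $\sum_{j=0}^{k-1}2^j = 2^k-1$, and the next left-gallop target lies at position $l + 2^k = 2^{k+1}-1$. Once $2^{k+1}-1 \geq i$, i.e., $k=O(\log i)$, this target is at position $\geq i$ and the inspection cannot be red-only: it is either terminating (uncolored), suspending (bi-colored, already charged to $m$), or triggers a binary search inside an interval of size at most $2^{k+1}$, which finishes in another $O(k)=O(\log i)$ non-suspending inspections. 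Since the strict alternation keeps the left and right inspection counts within a factor two of each other, the total number of non-suspending inspections is $O(\log i)=O(\log \min\{i,n-i+1\})$, giving the claimed $O(m+\log\min\{i,n-i+1\})$.

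The main technical subtlety is guaranteeing that no exponential progress is lost across suspends or across the interleaving of left and right galloping. This is achieved precisely by the persistence of $(l,r,k_L,k_R)$: a bi-colored outcome updates nothing, so the single re-inspection after the resume is paid for by incrementing $m$; a red-only or blue-only result on one side shrinks $[l,r]$ but does not reset or decrease the gallop counter on the other side. Lemma~\ref{lem:SplitterMove} supplies the geometric justification that the advances of $l$ and $r$ based on previously observed colors remain valid even though the protected intervals may have shrunk further between invocations, so the monotone narrowing of the search is preserved throughout.
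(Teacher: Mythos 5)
Your strategy is in substance the paper's: maintain the interval $[l,r]$ justified by Lemma~\ref{lem:SplitterMove}, make progress by exponential search followed by binary search, and charge each bi-colored (suspending) inspection separately to $m$. The only structural difference is cosmetic: the paper first probes the middle element and then gallops from the one end indicated by its color (as in the split-array of Section~\ref{sec:split-array}), whereas you gallop from both ends in strict alternation; both give $O(\log\min\{i,n-i+1\})$ non-suspending inspections.

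One step of your argument is not quite right, though it does not endanger the bound. You claim that once a left-gallop target lies at a position $\geq i$ the inspection ``cannot be red-only.'' Lemma~\ref{lem:SplitterMove} only gives that a red-only element at position $p$ forces the eventually identified element to lie at position $\geq p$, i.e., $p\leq i$; so a target at position exactly $i$ \emph{can} still be red-only at the moment it is inspected, because $e_i$ need only be uncolored at the time it is finally identified, not throughout the search. In that case $l$ advances to $i$, the next target is strictly greater than $i$ and genuinely cannot be red-only, so you lose only $O(1)$ inspections — but it also means the element you must eventually return can end up being an \emph{endpoint} of the current interval. Your ``standard binary search inside $(l,r)$'' never re-examines $e_l$ or $e_r$, so as literally described it could fail to identify $e_i$ at all in this case. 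The fix is the endgame the paper spells out: when $r=l+1$, repeatedly re-inspect $e_l$ and $e_r$ themselves, suspending while both remain colored. With that amendment your strategy and its $O(m+\log\min\{i,n-i+1\})$ count are correct.
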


%%%%%%%%%%%%%%%%%%%%%%%%%%%%%%%%%%%%%%%%%%%%%%%%%%%%%%%%%%%%%%%%%%%%%%
\subsubsection{Splitter Interface}
\label{sec:splitterIF}
%%%%%%%%%%%%%%%%%%%%%%%%%%%%%%%%%%%%%%%%%%%%%%%%%%%%%%%%%%%%%%%%%%%%%%

Beyond supporting protected interval searches, splitters have a richer
interface tailored to their use in mergers.  A splitter stores a
sorted sequence of elements $e_1,\ldots,e_n$ and can be in two states:
\defword{open} and \defword{closed}.  Open splitters support the
insertion and deletion of rightmost and leftmost elements.  An open
splitter $M$ can be split into two new open splitters $M_1$ and $M_2$
by a protected intervals search: First we make $M$ closed, and
initialize all elements in $M$ red and blue.  Repeatedly now we query
the splitter, where the splitter for each query verifies that all
elements are still colored or terminates the search by having
identified an uncolored element~$e_i$ where $M$ can be split.  
The current coloring of the elements is provided as a
function~$\mathcal{C}$, that can determine the color of an element in
$O(1)$ time.  The splitter $M$ remains closed until an uncolored
element has been identified.  Finally, we have an operation to join
two open splitters $M_1$ and~$M_2$ with some fresh points between them
to form a closed splitter $M$.  Initially, $M_1$ is colored red-only
and~$M_2$ is colored blue-only, and the fresh points are colored both
red and blue.

The precise interface to a splitter is:
\begin{description}
\item[\pc{Build}($e_1,\ldots,e_k$)] Given elements $e_1<\cdots<e_k$,
  returns a new open splitter~$M$ containing $e_1,\ldots,e_k$.
\item[\pc{Min}($M$), \pc{Max}($M$)] Return the smallest or largest
  element in the splitter~$M$, respectively.
\item[\pc{Extend}($M,e$)] Given an open splitter $M$ and an element
  $e$, where $e<$~\pc{Min}$(M)$ or \pc{Max}$(M)<e$, inserts $e$ as the first element of $M$.
  if $e<M$ or as the last element of $M$ if $e>M$.
\item[\pc{Delete-min}($M$), \pc{Delete-max}($M$)] Given an open
  splitter~$M$, removes the smallest and largest elements from $M$,
  respectively.
\item[\pc{Close}($M$)] Given a nonempty open splitter $M$, makes $M$ closed.  
  With respect to monotonicity of colors all elements in $M$ both red and blue, i.e., there is no restriction.
\item[\pc{Split}($M,\mathcal{C}$)] Given a closed splitter
  $M=e_1,\ldots,e_k$ and a new coloring $\mathcal{C}$ of the elements
  of $M$ satisfying the monotonicity property, inspects colors of
  elements in $M$ until one of the following can be returned: If all
  elements of $M$ are colored, $\bot$ is returned. Otherwise, an
  uncolored element $e_i$ exists, and $(M_1,e_i,M_2)$ is returned,
  where $M_1=e_1,\ldots,e_{i}$ and $M_2=e_{i},\ldots,e_k$ are new
  open splitters.  In the latter case $M$ is destroyed.
\item[\pc{Join}($M_1,(e_1,\ldots,e_k),M_2$)] Given two open
  splitters~$M_1$ and~$M_2$ and a list of elements $e_1,\ldots,e_k$,
  where \pc{Max}$(M_1)<e_1<\cdots<e_k<$~\pc{Min}$(M_2)$, constructs a closed splitter storing
  $M_1,e_1,\ldots,e_k,M_2$, where all elements in $M_1$ are colored
  red-only, $e_1,\ldots,e_k$ are colored both red and blue, and all
  elements in $M_2$ are colored blue-only.  Both $M_1$~and~$M_2$ are
  destroyed.
\end{description}

%%%%%%%%%%%%%%%%%%%%%%%%%%%%%%%%%%%%%%%%%%%%%%%%%%%%%%%%%%%%%%%%%%%%%%
\subsubsection{Implementation of the Splitter}
\label{sec:ImplementationSplitter}
%%%%%%%%%%%%%%%%%%%%%%%%%%%%%%%%%%%%%%%%%%%%%%%%%%%%%%%%%%%%%%%%%%%%%%

Despite the somewhat unusual interface, it is sufficient to implement
a splitter by a level-linked (2,4)-tree supporting exponential
searches~\cite{hoffmann86}.  A~\defword{finger} into a level-linked
(2,4)-tree is a pointer to the leaf containing the element~$x$.
\begin{theorem}[\cite{hoffmann86}]
\label{thm:hoffmann86}
  Level-linked (2,4)-trees support the following operations for a
  finger~$x$:
  \begin{itemize}
  \item 
    Finger-searches for an element~$y$ in time $O(\log k)$, where~$k$
    is the minimum of the number of elements between $x$ and~$y$, the
    number of elements smaller than~$y$, and the number of elements
    greater than~$y$.
  \item 
    Insertion of a new neighbor of~$x$ or the deletion of~$x$ in
    amortized $O(1)$ time.
  \item 
    The joining of two trees or the splitting of the tree at~$x$ in
    amortized $O(\log k)$ time, where~$k$ is the size of the smallest
    of the two involved trees.
  \end{itemize}
\end{theorem}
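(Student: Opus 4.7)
The plan is to implement the data structure as a plain (2,4)-tree (every internal node has between $2$ and $4$ children, all leaves at the same depth) augmented with \emph{level-links}: doubly-linked lists connecting consecutive nodes at every depth. A finger into the tree is a pointer to a leaf, together with (implicit) access to its ancestors via parent pointers. I would first verify that the level-link invariants can be restored in $O(1)$ work per local structural change (split, fuse, or rotation at a single node), so that adding level-links to a standard (2,4)-tree does not asymptotically increase the cost of any operation.

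For the finger-search from $x$ to $y$, I would walk upward from $x$, and at each level use the level-links to inspect the current subtree and a constant number of its neighbors on the side of $y$; as soon as one of them is seen to contain $y$, descend into it. Because a subtree of height $h$ spans at least $2^h$ leaves, the walk stops after $O(\log k)$ upward steps when $k$ leaves separate $x$ from $y$. The two variants where $k$ is the rank of $y$ or the rank of $y$ counted from the right follow by the same argument applied starting from the appropriate end finger (the leftmost/rightmost leaf, which one maintains explicitly), and the final bound is the minimum of the three, obtained by running the three searches in parallel and stopping at the first success.

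For insertions and deletions at a finger, the finger provides $O(1)$ access to the modification point, so only the rebalancing cost needs bounding. I would use the classical potential argument for (2,4)-trees, with the potential equal to the number of internal nodes of degree exactly $2$ or exactly $4$ (the ``dangerous'' nodes); each split or fuse converts such nodes to safe degree-$3$ nodes, releasing potential that pays for any cascade, while an isolated non-cascading local fix costs amortized $O(1)$. For join of trees $T_1, T_2$ of heights $h_1 \le h_2$, I would descend the leftmost path of $T_2$ to depth $h_2 - h_1$, splice $T_1$ in as a new leftmost subtree, and propagate rebalancing upward; the work is $O(h_2 - h_1) = O(\log |T_1|)$. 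Split at a finger $x$ is dual: cut the root-to-$x$ path into a left spine and a right spine, and reassemble each spine bottom-up as a sequence of joins whose costs telescope.

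The main obstacle I expect is the amortized analysis of split in the $O(\log k)$ form, where $k$ is the smaller of the two output sizes. A naive per-level accounting gives $O(\log n)$, and shaving it down requires that each join along the spine be charged against the height difference of the two trees being joined, so that the total telescopes to the height of the smaller output rather than the height of the original tree; this in turn forces a careful choice of which subtree ``absorbs'' which at each step, and must be made consistent with the potential function used for insertion and deletion so that the three bounds compose cleanly when the operations are interleaved.
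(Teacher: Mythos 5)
The paper does not prove this statement at all: Theorem~\ref{thm:hoffmann86} is imported verbatim from the cited reference~\cite{hoffmann86} and used as a black box for the splitter, so what you are really attempting is a reconstruction of that classical result. Your finger-search argument (ascend from~$x$, use level links to test a constant number of neighboring subtrees, height~$j$ subtrees cover at least $2^j$ leaves) is the standard one and is fine, as is obtaining the three-way minimum by dovetailing searches from~$x$ and from the two end fingers.

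The join and split bounds, however, have genuine gaps, and both stem from the same missing idea. For join you descend from the root of the larger tree $T_2$ to depth $h_2-h_1$; that costs $\Theta(h_2-h_1)$, which is \emph{not} $O(\log|T_1|)$ (take $|T_1|=O(1)$ and $|T_2|=n$: you spend $\Theta(\log n)$ while the theorem demands $O(\log\min)=O(1)$ amortized). The correct implementation walks \emph{bottom-up} along the spine of the larger tree, starting from its end finger and ascending only $h_1$ levels before splicing, with the rebalancing cascade along the spine paid by the potential; this bottom-up access is exactly what the level links and maintained end fingers provide. For split the problem is worse: cutting the entire root-to-$x$ path already costs $\Theta(\log n)$ actual work, and no re-charging of the spine joins against height differences can push the bound below the height of the original tree --- telescoping only helps the reassembly, not the cutting, and there is no potential stored on an arbitrary root-to-$x$ path that could absorb $\log n-\log k$. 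The classical fix is to ascend from~$x$ only until the current ancestor lies on the left or right spine of the tree (testable in $O(1)$ with level links, and reached after $O(\log\min(n_1,n_2))$ levels, since the height-$j$ spine node covers at least $2^j$ extreme leaves), detach there, and let the fuse/share cascade along the spine of the larger remaining piece be paid by the degree potential. Finally, a smaller slip: your potential ``number of internal nodes of degree exactly $2$ or exactly $4$'' pays for fusing cascades but not for splitting cascades, since a node of degree $4$ (potential $1$) splits into nodes of degree $2$ and $3$ (still potential $1$), releasing nothing; you need weights with $w(4)>w(2)+w(3)$, e.g.\ counting degree-$4$ nodes twice, as in the Huddleston--Mehlhorn analysis.
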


\begin{theorem}
\label{theorem:splitter}
  If the colors of an element can be inspected in constant time, then
  there exists an implementation of a splitter supporting
  \begin{itemize}
  \item \pc{build} and \pc{join} in amortized~$O(k)$ time, where~$k$
    is the number of new elements $e_1,\ldots,e_k$, and
  \item \pc{split}, \pc{extend}, \pc{delete-min} and \pc{delete-max}
    in amortized~$O(1)$ time.
  \end{itemize}
\end{theorem}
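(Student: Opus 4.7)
The plan is to represent each splitter by a level-linked $(2,4)$-tree as guaranteed by Theorem~\ref{thm:hoffmann86}, maintaining for every splitter two fingers at its leftmost and rightmost leaves, and, when the splitter is closed, two additional fingers at the leaves $e_l$ and $e_r$ that delimit the current search interval $[l,r]$ of the protected-intervals search of Section~\ref{sec:Game}. Closed/open status is stored as a single bit in the root. The interface operations then translate directly into tree primitives.

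\pc{Build}$(e_1,\ldots,e_k)$ starts from an empty tree and performs $k$ finger-insertions at the right finger; by Theorem~\ref{thm:hoffmann86} this costs $O(k)$ amortized. \pc{Extend}, \pc{Delete-Min}, and \pc{Delete-Max} are single finger updates at an endpoint, each amortized $O(1)$. \pc{Close}$(M)$ flips the status bit and initializes $l=0$, $r=|M|+1$ via sentinel positions, with fingers placed at the leaves containing the minimum and maximum. \pc{Split}$(M,\mathcal{C})$ performs one round of the exponential-search inspection strategy of Section~\ref{sec:Game}, using finger searches from $e_l$ or $e_r$ (whichever is closer) to the next inspection position; depending on the colors returned by $\mathcal{C}$, the splitter either shrinks $[l,r]$ and returns~$\bot$, or identifies an uncolored element~$e_i$ and performs a finger-tree split at $e_i$, returning the two new open splitters. \pc{Join}$(M_1,(e_1,\ldots,e_k),M_2)$ builds a small tree on the $k$ fresh elements in $O(k)$ time and concatenates via two tree-joins of $M_1$ and $M_2$ to produce a closed splitter, finally initializing $[l,r]=[0,|M_1|+k+|M_2|+1]$.

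The amortization is carried by a potential of the form $\Phi = \Phi_{\mathrm{tree}} + c\log(r-l+1)$ on each closed splitter (the second term being zero on open splitters and $\Phi_{\mathrm{tree}}$ being the inherent potential of Theorem~\ref{thm:hoffmann86}). Each suspending \pc{Split} call does $O(1)$ actual work and shrinks $r-l$ by a factor of at least two by the exponential-search choice, releasing $\Omega(c)$ units and giving $O(1)$ amortized; the single terminating call pays for its $O(\log\min(|M_1|,|M_2|))$ exponential-search-plus-tree-split work out of the remaining $c\log(r-l+1)$ credit, since $\min(|M_1|,|M_2|)\le r-l+1$. \pc{Extend} and \pc{Delete-Min}/\pc{Max} change $\log|M|$ by $O(1/|M|)$ and are $O(1)$ amortized. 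For \pc{Build} and \pc{Join}, the $O(k)$ per new element is immediate; the $O(\log\min(|M_1|,|M_2|))$ tree-join cost is amortized $O(1)$ by Theorem~\ref{thm:hoffmann86} using the credit that $M_1$ and $M_2$ carry as part of $\Phi_{\mathrm{tree}}$, set aside when those splitters themselves were built or produced by an earlier \pc{Split} or \pc{Join}.

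The main obstacle is to calibrate the constants and the potential so that three competing bookkeeping tasks balance simultaneously: (i) each suspended \pc{Split} must release enough potential to pay for the final terminating call, (ii) \pc{Close} must not be permitted to deposit $\Theta(\log n)$ units of potential at only $O(1)$ amortized cost, and (iii) the log-factor in \pc{Join}'s tree-joins must be absorbed within the $O(k)$ amortized budget even when $|M_1|$ and $|M_2|$ are much larger than $k$. The resolution is to arrange for the $\log(r-l+1)$ term to be pre-installed at the moment the splitter becomes closed (either by \pc{Close} or directly by \pc{Join}), and to verify that the $\Phi_{\mathrm{tree}}$ credits ferried by the elements themselves, guaranteed by Hoffmann's analysis, suffice to cover both the (2,4)-tree joins inside \pc{Join} and the setup of the $c\log(r-l+1)$ term at the start of a split session.
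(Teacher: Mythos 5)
Your \pc{Join} is where the argument breaks. You concatenate $M_1$ and $M_2$ into a single $(2,4)$-tree via two tree-joins; by Theorem~\ref{thm:hoffmann86} this costs amortized $\Theta(\log\min(|M_1|,|M_2|))$, which can be arbitrarily larger than the $O(k)$ budget (e.g.\ $k=1$ with $|M_1|=|M_2|=n/2$). No potential function can rescue this: alternating \pc{Split} and \pc{Join} with $k=O(1)$ fresh elements per round returns the structure to essentially the same configuration each round while forcing $\Theta(\log n)$ of tree restructuring per round, so the total work is $\Theta(m\log n)$ against a budget of $O(m)$. Your proposed resolution --- charging the join to ``$\Phi_{\mathrm{tree}}$ credits ferried by the elements themselves, guaranteed by Hoffmann's analysis'' --- does not exist: the $O(\log(\text{smaller tree}))$ bound for concatenation in \cite{hoffmann86} is already the amortized bound after the tree's internal potential has been accounted for, and there is no surplus credit proportional to tree size that can be cashed in to make concatenation $O(1)$.

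The paper avoids this by never concatenating the trees. \pc{Join} produces a \emph{lazy} closed splitter represented as the triple $(T_1,E,T_2)$, where $E$ is kept as a plain list. Because $M_1$ is initially colored red-only and $M_2$ blue-only, Lemma~\ref{lem:SplitterMove} guarantees that the eventual splitting point lies in $\{\max(M_1)\}\cup E\cup\{\min(M_2)\}$; consequently a subsequent \pc{Split} only ever moves elements of $E$ one at a time into $T_1$ or $T_2$ (constant amortized cost per element, at most $k$ such moves over the lifetime of the joined splitter) and possibly deletes $\max(T_1)$ or $\min(T_2)$ --- it never splits $T_1$ or $T_2$ in their interiors and never pays a $\log$ of their sizes. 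A secondary, smaller issue in your write-up: installing the $c\log(r-l+1)$ term at \pc{Close} time is itself a $\Theta(\log|M|)$ deposit that \pc{Close} cannot afford at $O(1)$ amortized; the paper instead places an $n-\log n$ potential on \emph{open} splitters (paid $O(1)$ per element by \pc{build}/\pc{extend}), whose release of $\log\min\{n_1,n_2\}-1$ at the terminating split covers the whole close-plus-$t$-splits session.
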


\begin{proof}
  We first describe a solution without the \pc{Join} operation.  The
  elements of a splitter are stored at the leaves of a level-linked
  (2,4)-tree \cite{hoffmann86}.  Internal nodes store search keys that
  are double-linked pointers to the corresponding leafs.  We keep
  pointers to the root and the leftmost and rightmost leaves of the
  tree.  For a closed splitter, we have a pointer to an element~$q$ in
  the tree, encoding an ongoing protected interval search.

  The operations \pc{build}, \pc{extend}, \pc{delete-min} and
  \pc{delete-max} are performed as described in~\cite{hoffmann86}, and
  a \pc{close} operation changes the status of the splitter to closed
  and lets $q$ be an element at the root.  A call to \pc{split}
  continues the suspended search for an uncolored element as encoded
  with~$q$, essentially as described in Section~\ref{sec:Game} using
  Lemma~\ref{lem:SplitterMove}.  The precise order for inspecting the
  colors of the elements in the (2,4)-tree is to first inspect one of
  the elements stored at the root, and then to perform a finger
  search, either from the leftmost or from the rightmost leaf,
  depending of the outcome of the first comparison.  If we start at
  the leftmost leaf, we inspect bottom-up the keys on the path to the
  root, effectively moving to the right, until we inspect a key and
  find that we have to move left.  From there we follow the standard top-down
  search procedure of the (2,4)-tree.  If the current $q$ is not
  colored, we split the search tree of $M$ at $q$ (using the algorithm
  from \cite{hoffmann86}) into $M_1$, $q$, and $M_2$ and mark $M_1$
  and $M_2$ as open splitters, and return $(M_1,q,M_2)$.  If $q$ has
  both colors, we return~$\bot$, since all elements are still colored
  and the search needs to be suspended.

  If $q$ has precisely one color, we continue as described in
  Section~\ref{sec:Game} until two neighboring elements have been
  inspected. These two elements are now repeatedly inspected until one
  of them can be returned as the splitting point.  After each
  inspection identifying that both elements are still colored (i.e.,
  all elements are still colored), we return $\bot$.
  
  By Theorem~\ref{thm:hoffmann86}, \pc{build} takes amortized $O(k)$
  time and \pc{extend}, \pc{delete-min} and \pc{delete-max} take
  $O(1)$ time.  A call to \pc{close} followed by $t$ calls to
  \pc{split} takes $O(t+\log \min\{n_1,n_2\})$, where $n_1$~and~$n_2$
  denote the size of the two resulting splitters (possibly after
  further calls to split).  To argue about the amortized $O(1)$ time
  of the \pc{close} and \pc{split} operations, we assign an additional
  $n-\log n$ potential to an open splitter of size~$n$.  This only
  increases the amortized cost of \pc{extend} by $O(1)$.  Similar to
  the split-array (Lemma~\ref{lem:split-array}), eventually splitting
  such an open splitter into two open splitters of size
  respectively~$n_1$ and~$n_2$ releases $\log \min\{n_1,n_2\}-1$
  potential, achieving the claimed amortized time bounds.

  To support \pc{Join}, we construct a special type of closed splitter
  consisting of the triple $(T_1,E,T_2)$, where $T_1$ and $T_2$ are
  the (2,4)-trees for $M_1$ and $M_2$, and $E=(e_1,\ldots,e_k)$. By
  Lemma~\ref{lem:SplitterMove} and the initial coloring of $M_1$ and
  $M_2$ it is guaranteed that eventually there will be a splitting
  point within the range $\max(M_1),E,\min(M_2)$. A \pc{split}
  operation on such a closed splitter is performed as follows: While
  the leftmost element $e$ of $E$ is colored red-only, we move $e$
  from $E$ to $T_1$.  We return $\bot$, if $e$ and $\max(T_1)$ are
  colored, or $E=\emptyset$ and both $\max(T_1)$ and $\min(T_2)$ are
  colored. Otherwise, we move the remaining elements from $E$ to
  $T_2$. Now either $\max(T_1)$ or $\min(T_2)$ is uncolored. Let $e$
  be one of these uncolored elements and remove $e$ from the relevant
  $T_i$. Let $T_1$ and $T_2$ be new open splitters $M_1$ and $M_2$,
  and return $(M_1,e,M_2)$. The \pc{join} operation with $k$ new
  elements followed by $t$ \pc{split} operations in total perform at
  most $k$ insertions into the (2,4)-trees, taking amortized $O(k)$
  time, and the increase in potential from the open splitter~$M$ to
  the open splitters~$M_1$ and $M_2$ is at most $O(k)$. The total time
  for these operations becomes amortized $O(k+t)$.
\end{proof}

%%%%%%%%%%%%%%%%%%%%%%%%%%%%%%%%%%%%%%%%%%%%%%%%%%%%%%%%%%%%%%%%%%%%%%%%%%%%%
\section{Invariants and Representation of the Certificate}\label{sec:Representation}
%%%%%%%%%%%%%%%%%%%%%%%%%%%%%%%%%%%%%%%%%%%%%%%%%%%%%%%%%%%%%%%%%%%%%%%%%%%%%

%\newcommand{\refineAB}{$\mathrm{Refine}_{\SC{A}/\SC{B}}$}
%\newcommand{\refineBAq}{$\mathrm{Refine}^{?}_{\SC{B}/\SC{A}}$}

In this section we describe our representation of the two upper hulls $A$ and $B$ of a merger and the representation of a complete geometric separation certificate for $A$ and $B$ as described in Section~\ref{sec:SeparationCertificate}.
More precisely, we will discuss the interface of the data structure that encapsulates this representation that is the backbone of our algorithms in two roles:
\begin{itemize}
\item between two replacement operations to store the complete certificate of the static situation as described in Section~\ref{sec:SeparationCertificate}, and
\item enabling the reestablishing algorithm ``Replace'' of Section~\ref{sec:algRepl}, that rests upon the geometric considerations of Section~\ref{sec:geomReestablishing}.
\end{itemize}

These concepts rely upon the following types of points that are the basis of the geometric certificate:
\begin{itemize}
\item input points, possibly selected, possibly deleted, possibly above a shortcut (they must know if this is the case, but there is no need to link to the shortcut)
\item cutting points of the shortcuts; know their shortcut
\item equality points; know their partner and the corresponding bridge
\item ray intersections; know their selected point
\end{itemize}

Note that for a point that is not an input point, its position is the intersection of two lines defined by at most four input points, also allowing vertical lines through inputs points.
This is important with respect to avoiding numeric precision issues, because it allows us to work solely with constant degree polynomials of input values.

\subsection{Tangent certificate and splitter}
\label{sec:certSplitter}

Geometrically, the complete certificate requires a maximal set of selected points, meaning that no further point can be selected, as certified by all not-selected locally inside points being in some shadow of a selected point.
We place the points between neighboring selected points into one splitter and identify the shadows of the selected points with the protected intervals (that is colors).
In general, the left and right selected points $p$ and $q$ of a splitter define the red and blue color by their shadows (cf. Section~\ref{sec:PointCertificate}):
a point in the right shadow of $p$ is red, and each point in the left shadow of $q$ is blue.
The colors of a point $c$ can be determined in constant time.
It is red (blue) if and only if the line through $c$ and its right (left) neighbor~$d$ intersects the line segment between $p$ and $\RrayI{p}$ (between $q$ and $\LrayI{q}$). 
See Figure~\ref{fig:dangling}, and remember that the selected points $p$ and~$q$ are part of the splitter.
By the definition of shadows and the disjointness of point certificates, $p$ is not blue and $q$ is not red. 
In a complete certificate (i.e., once the reestablishing algorithm is finished), by the maximality of the set of selected points, each point in a splitter has at least one color.
This means that no split of the splitter is allowed and means that there exists a tangent certificate (as explained in Section~\ref{sec:NormalTangentCertificate}).
% Such a splitter contains the selected points as endpoints and is closed in the static situation.

Summarizing, the non-trivial colors are defined by a single ray intersection.  
We can assume that the splitter stores this ray intersection point and is hence in the position to keep any of the two colors unchanged even if the defining ray intersection is actually already deleted.\vote{lost?}

We also place the locally inside points between a selected point and a neighboring equality point inside an open splitter.
If there is a valid boundary certicate (see Section~\ref{sec:boundaryCert}), we know that one of the protected intervals covers the whole splitter. 
This is a geometric condition between the point certificate of the selected point and the tangent at the equality, which fits to the splitter being open, i.e. not having an ongoing search. 
This will become important in the reestablishing algorithm because only open splitters can be shrunk, extended or joined.

\subsection{Representation details}
\label{sec:representation-details}

The interface to the outside (Section~\ref{sec:merger-interface}) is to receive points and  provide locators.
In our setting, an input point is a pointer to a base record to avoid copying coordinates.
A locator is a pointer to the merger internal record representing the input point.

The important geometric promise with respect to the low level representation of the certificate is that there are at most constantly many auxiliary points on each segment of the original and shortcut hull(s) (as argued for in Section~\ref{sec:navigation}). 

We have the following 5 types of records:
\begin{description}
\item[input point] a local record representing a point, storing an immutable pointer to the base record storing the $x,y$ coordinates of the point; references to these records are the locators.
If the point is selected there are two pointers to (and from) the input segment or shortcut containing the ray intersection.
\item[input segment] a segment of $\UH(A)$ or $\UH(B)$ storing pointers to other records, representing the constantly many auxiliary (ray intersection, cutting and equality) points on the segment (by pointers as detailed below). 
The pointers to the defining input points are immutable.
The sets $A$ and $B$ are two disjoint doubly linked lists alternating between an input point and an input segment of the respective participating upper hulls.
\item[shortcut] a record pointing immutably to the to base records of the points geometrically defining the line of the shortcut; pointers to (and from) the two input segments containing the cutting points. If the cutting point is an input point defining the shortcut, the first segment cut by this shortcut is considered to contain the cutting point.
Space for constantly many ray intersections.
\item[bridge and equality point] a record for an equality point and its corresponding bridge.
Doubly linked to (immutably) the two input segments defining the equality point and (mutably) the two input points defining the bridge.
Pointers to and from the splitters that end at the equality point.
\item[splitter] a wrapper around the splitter data structure of the previous chapter; the splitter stores (pointers to) input points;
the bounding selected point or the bounding equality point are doubly linked to this record.
This allows to find from one selected point the next selected point in constant time.
(The selected points 'know' the splitters they are in, not selected inside points are in a splitter, but they do not 'know' in which splitter they are).
The special case of a trivial streak, where the locally inside hull consists of a part of a single segment of one of the hulls, has an empty open splitter. 
\end{description}

% \subsection{Requirements on the data structure}
Observe that this specific representation is one of many similarly reasonable choices. 
The interface used in the pseudocode (mainly for the reestablishing Algorithm presented in Section~\ref{sec:mainReestablish}) is the following:\vote{needed?}
\begin{description}
\item[\pc{Delete}($r$)] Given a locater~$r$ of an input point on hull~$A$ (wlog), remove~$r$ from the representation of~$A$ (and $A \cup B$) and return the two neighboring points~$X$ and~$Y$ on $\UH(A\cup B)$ (there are no auxiliary points on bridges), the two neighboring points~$x,y\in A$ and a constant length list of auxiliary points (cutting,  equality) (actually their partners/what they know) on the deleted edges $\overline{xry}$.
(the lost ray intersections just lead to null pointers at the selected point)
\item[\pc{Delete}($a$)] Delete the auxiliary point~$a$ (actually only used for a ray intersection) from the representation of the (shortcut) hull.
\item[\pc{Insert}($a,n$)] The (input or auxiliary) point~$a$ of the currently existing hull~$A$ (wlog) (some special case for empty ds), $n$ is a new input or auxiliary point.
  It is guaranteed that for an input point $n$, the neighbor~$a$ is also an input point in the new situation.
  For an auxiliary point, it is guaranteed that~$n$ is on the same edge of the (input) hull as~$a$.
\item[\pc{nextRight}$_M(a)$] Given the point~$a\in A$ (wlog), return the next point (auxiliary) point to the right of~$a$ of the type specified in~$M$ 
\end{description}

% The main interface is (again) that new points can be inserted, and as a result representatives/handles/pointer are returned.
% These are then the arguments and return values of the interface.

\subsubsection{Invariants in and not in a construction site}

The reestablishing algorithm of Section~\ref{sec:algRepl} and~\ref{sec:mainReestablish} defines a construction site that encloses parts of the certificate that are not yet reestablished.
The following invariants are maintained unless the points are enclosed in such a construction site:
\begin{itemize}
\item all equality points are identified,
\item a splitter that ends at an equality point is open, and
\item every point in a splitter has at least one color.
\end{itemize}

The following invariants are maintained even in construction sites
\begin{itemize}
\item separation of shortcuts,
\item separation of selected point certificates, and
\item constantly many auxiliary points on edges.
\end{itemize}
\vote{mention more?}

\subsubsection{Navigation}\label{sec:navigation}

\begin{figure}[t]
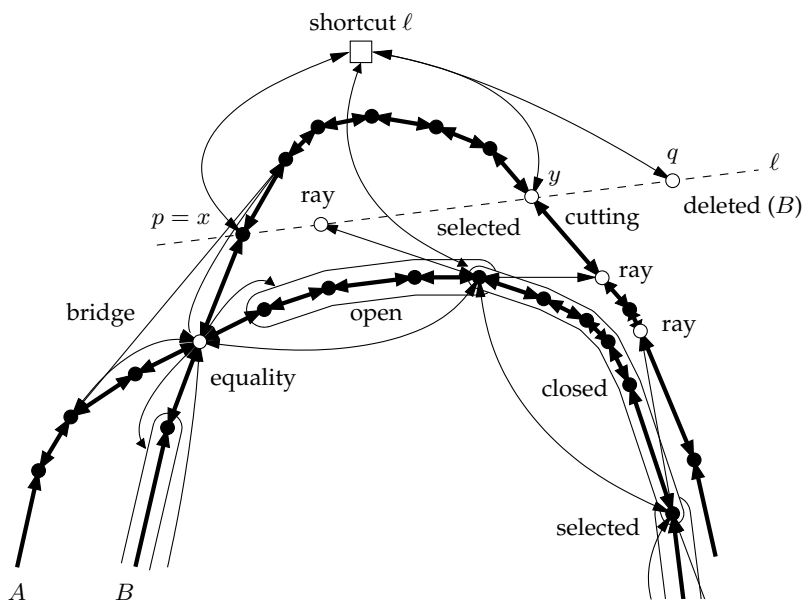

  \begin{center}
    \inputfig{representation}
%WRITEMAG% representation  .9
    \caption
    {The pointers and records of the representation of a certificate.}
    \label{fig:representation}
  \end{center}
\end{figure}

Figure~\ref{fig:representation} illustrates all the pointers
stored. Black points are the points in $A$ and $B$.  All pointers are
bidirectional except for the pointers from selected and equality
points to the adjacent splitters.
% The bold edges ``constitute'' $\UH(A)$ and $\UH(B)$. \todo{update figure and text}

\subsection{Potential for the runtime analysis}\label{sec:DSpotential}

Remember that the performance of the merger data structure is amortized constant per inserted point.
Our argument for this running time is based on a potential function. 

One explanation of the operations of the data structure is to see it as one algorithm that is interrupted when a separation certificate is established, and continued when the next replace operation takes place, very much in the spirit of the continued search in the splitter.
Accordingly, we define the potential function for the data structure in a way that is valid both during the reestablishing algorithm and in the static situation between replace operations.

In Section~\ref{sec:monotonicity} \vote{move this there?}
the concept of the life-cycle of a point was introduced, based on a geometric monotonicity inherent in the situation. 
Here, we turn this idea into a potential function.
Because the details of the replace algorithm are not yet introduced, we base this on a parameter~\CSP{} that reflects the stages a point can go through within the replace algorithm. 

\begin{definition}[Potential of the certificate]\label{def:potentialCert}
  Given a parameter~$\CSP$, the potential is the following sum over all input points (in $A$ and $B$):
  Every non-selected locally inside input point (in a splitter) has potential $\CSP+3$, every selected point has potential 3.
  Every locally outside that is hidden by a bridge has potential 2, a point on the merged upper hull has potential 1.
\end{definition}

Observe that in particular, every locally inside point has potential at least~5.
%Perhaps we need more here: potential of the split array plus scanning??\todo{check}

\section{Algorithm Replace}\label{sec:algRepl}\label{sec:DeleteCases}
%X%%             \paragraph{Interfaces/ naming}
This section describes how to handle a \pc{Replace} operation.
As detailed in Section~\ref{sec:merger-interface} and illustrated in Figure~\ref{fig:merger-example}, recall that the interface is:
 
%\FullCallNoArgs{replace}
%\todo{make this automatic from alg:replace} % does not work nicely
\pc{Replace}$_B(r,L)$ - Given the locator to a point $r\in B\cap\UV(A\cup B)$ and a list $L=(b_1,b_2,\ldots)$ of points replacing $r$ in $B$, return a list $L'$ of locators to points replacing $r$ in $\UV(A\cup B)$, i.e., the points $\UV(A\cup (B\setminus\{r\}) \cup L) \setminus \UV(A\cup B)$.

Before going into the details, let us illustrate one of the challenges of the task, namely the variety of cases that we try to treat uniformly.
Consider the possibilities how equality points can be affected by the deletion.
We classify by the number of equality points on $\overline{xr}$ and $\overline{ry}$ before the replace operation, where $x$ and~$y$ are the former \vote{previous, so far, late} neighbors of~$r$ on~$\UH(B)$.
Because~$B$ is convex and we assume general position, a single edge of~$B$ can contain at most 2 equality points.
We do not consider mirrored situations (for cases (b), (d) and (e)) separately, leading to the $6$ cases illustrated in Figure~\ref{fig:cases}. 

\begin{figure}[htb]
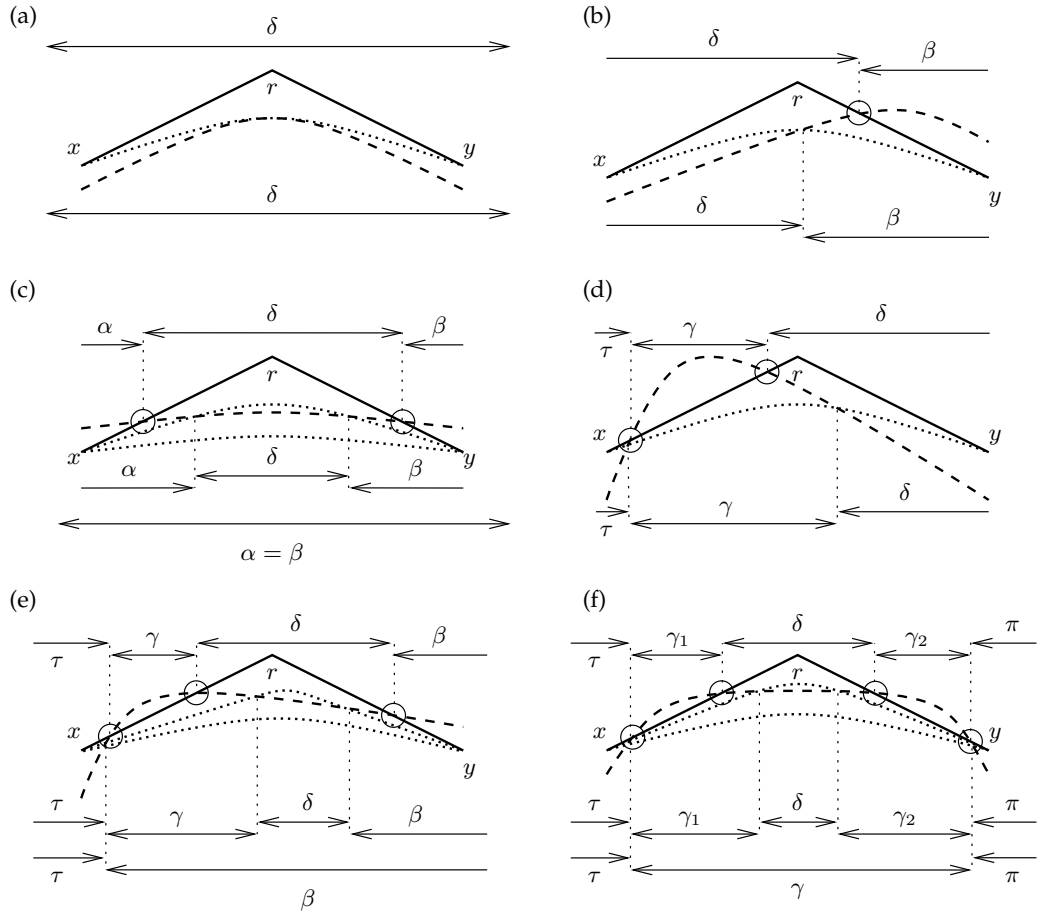

  \begin{center}
    \inputfig{cases}
%WRITEMAG% cases 1
    \caption[Different cases of lost equality points]{% 
Different cases of lost equality points and new streaks.  
The arrows above the situation indicate the extent of streaks before the replace operation,
arrows below the situ\-ation afterwards.
The streaks named~$\gamma,\gamma_1,$ and $\gamma_2$ before the replace operation are trivial (the locally inner hull is entirely on $\overline{xr}$~or~$\overline{ry}$, there are no selected points and the splitters are empty, the shortcut outer hull consists of less than~6 segments).
In the cases~(c), (e)~and~(f) there are two outcomes possible, e.g.\ in case~(c) we might have to join the two non-trivial streaks~$\alpha$ and~$\beta$.
In the new streaks named~$\delta,\tau,\pi$ there can always be an even number of new additional equality points, whereas in the new streaks $\alpha, \beta$ and $\gamma$ there are no equality points.
The streaks $\alpha, \beta, \gamma, \gamma_1$ and $\gamma_2$ always expand their extent, possibly merging with another streak.
%The symmetric case for (b) is referred to as (b') ($\beta$ is then called~$\alpha$), the same for (d)~and~(e).
}
\label{fig:cases}
  \end{center}
\end{figure}
%             \paragraph{Interface to next section (reestablish)}
\clearpage
\begin{figure}[htb]
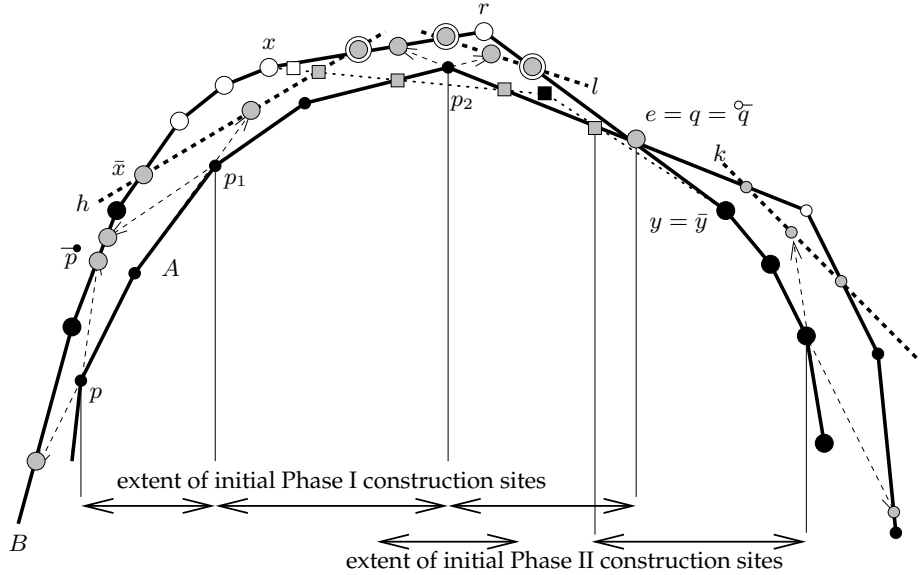

  \begin{center}
    \inputfig{constructionsites}
%WRITEMAG% constructionsites .75
    \caption[Algorithm using Construction Sites]{% 
      The point~$r\in B$ gets replaced by a list~$L$ of points (black and white squares).
      White points are input points cut away by one of the shortcuts $h,l\in H_B$ and $k\in H_A$.
      Gray points and gray squares are non-input points, namely  cutting points, ray intersections and equality points.
      Small points are points in~$A$ or on $\SC A$, large points and squares are in $B$ or on $\SC B$.
      Before the replace operation the segments $\overline{xr}$ and $\overline{ry}$ are cut by the two shortcuts $l$ and~$h$.
      Considering shortcuts, the points $\SC r \subset \SC B$ (doubled circled points, here all cutting points)
      are replaced by $\SC L$ (gray and black squares). 
      Further, there is the lost equality point~$e$.
      Section~\ref{sec:mainReestablish} introduces the shown construction sites and notation like~$\RrayP{p}$.
}
\label{fig:constructionsites}
  \end{center}
\end{figure}
% \todo{figure constructionsites only in sec:mainReestablish, make sure $\SC r$ is introduced}

More sources of variety need to be handled:
If $r$ is the endpoint of a bridge, this bridge needs to be adjusted. 
Additionally, the corresponding equality point can also be affected (but it does not have to be), and it could even disappear (see e.g. Figure~\ref{fig:cases}~(c)).
Similarly, a ray intersection could change because the point defining its slope is removed.
It could also disappear because the selected point shooting the ray is no longer locally inside (Figure~\ref{fig:constructionsites}, point~$p_2$).

In the following, we discuss how a call to \pc{Replace} is implemented to produce the described return values and update the representation of the certificate detailed in Section~\ref{sec:Representation}.
Reestablishing the separation certificate (see also Section~\ref{sec:CompleteCertificate}) is the core of this task and detailed in Section~\ref{sec:mainReestablish}.
Maintaining correct bridges  is described in Section~\ref{sec:handleBridges} (see also Section~\ref{sec:bridges}) and ensuring aggressive shortcuts is described in Section~\ref{sec:SCAlgorithms} (see also Section~\ref{sec:sc_geom}).
An example of how the whole algorithm executes is given in Figure~\ref{fig:constructionsites}.

The guiding principle of this exposition is to unify the algorithm as much as we can, and use concepts that make it immediate that  we focus on what is doable within the time budget. 
In particular, we avoid (perhaps obvious) optimizations if this unifies or simplifies the algorithms and does not exceed the time budget.
%Here, the guiding principle is to remember what only enough to achieve correctness and to not violate the time budget of amortized constant time per point and replace operation.

\subsection{Replace$_B$}
\vote{change typesetting of return value in pseudocode}
\code{replace}
The pseudocode of the \pc{Replace} Algorithm, Algorithm~\ref{alg:replace} is augmented by the interfaces to the algorithms of Sections~\ref{sec:SCAlgorithms}, and \ref{sec:mainReestablish}.
Beyond the function call, the algorithmic task is sketched by the situation before the call (marked with ``\textbf{Pre:}'') and when returning from the call (``\textbf{Post:}'').
It is important to note that our approach (or at least the presentation) is not functional, and the main interaction and communication between the different functions is by side effect.
In other words, the main purpose of the functions is to change (update) parts of the representation of the separation certificate.
Accordingly, return values and parameters are mainly points of the certificate that enclose affected parts of the certificate.
The main reestablishing task is split up into two phases (Lines~\ref{algrep:reestI} and~\ref{algrep:reestII}).
First, we reestablish a certificate wherever $B$ is above~$A$, identifying all new or changed equality points in the process. 
This is the heart of the whole construction that achieving overall amortized constant time by avoiding to scan along any part of the upper hull of~$A$ that is still inside~$B$.
The variables/parameters $e_x$ and $e_y$ (Line~\ref{algrep:ifReest}) are only different from~$\bot$ if the corresponding point~$x$ (or respectively $y$) is inside~$A$. 
In that case, we take the (single) equality point on~$\overline{xr}$ as a point of~$A$ that we use to start searching for the new equality point delimiting a streak of polarity~$B$ above~$A$ if it exists.
% This call convention is natural in the context of a general construction site that we introduced in Section~\ref{sec:genericConstructionSite}. 
Then we  reestablish a certificate wherever $A$ is above~$B$, a considerably easier task because here both participating hulls can be scanned along (see also Section~\ref{sec:monotonicity}, both points on $A$ and~$B$ advance in their life-cycle, as made precise with the potential in Section~\ref{sec:representation-details}).
As part of the algorithm a \defword{working slab} is maintained that encloses the part of the certificate that might not be valid.
It is storeded in the variables $\SC{x}$ and $\SC{y}$ that enclose~$\SC L$, the changes to~$\SC B$.
The points $x',y' \in B$ enclose all new or changed parts of the certificate of polarity $B$ above~$A$, and $x'',y''\in \SC A$ changed parts of the certificate with polarity $A$ above~$B$.

These two tasks work on the shortcut version of the hulls.
Accordingly, we start by applying shortcuts (modifying the deletion to reflect currently existing shortcuts) in Line~\ref{algrep:applySC}, and finish by creating new shortcuts that are aggressive in Line~\ref{algrep:createSC}.
Similarly, the handling of bridges happens as an initial removing of bridges (Line~\ref{algrep:rmBr}), and a final searching of new bridges (Line~\ref{algrep:searchBr}).
To prevent the bridge searching from scanning over the same part of one of the hulls several times, we remember the previous extent of a bridge in a so called bypass (see also Figure~\ref{fig:protectors}).
Such bypasses allow the bridge searching to work the same for all equality points (new or old) that have a new or changed bridge.

The analysis of termination and amortized runtime is intertwined with the exposition of the algorithm and is centered around the potential. 
Note that it is legitimate to charge runtime or initial potential of construction sites to the points of~$L$ several times.
Between the two phases, there will be a potential of 4 for every locally outside point of~$A$ in the working slab and not protected by a bypass.
After reestablishing the certificate, before creating new shortcuts, there will be a potential of 3 for every (freshly) locally outside point of~$A$ that is in the working slab.
Before bridge finding, there will be a potential of 2 for every (freshly) locally outside point of~$A$ that is in the working slab.

We present the different ingredients not in the order in which they are used, but in an order that reduces the number of forward references and that tries to first introduce the concepts in their simplest form. 
\vote{mention polarity?}
We start by explaining how searching for a bridge works (Algorithm~\ref{alg:FindBridge}), which makes it natural what we need to remember about lost bridges and equality points (Algorithm~\ref{alg:rmBr}).
Similarly, we start by introducing the simplest version of the recursive reestablishing (Algorithm~\ref{alg:refineAB}) based on the concept of construction sites in an easy situation, % (that could be handled by many different algorithms as well),
then we show its generalization (Algorithm~\ref{alg:refineBA}), the heart of the construction. 
Finally we present the initialization (Algorithms~\ref{alg:reestablishBA} and \ref{alg:reestablishAB}) that connect to Algorithm~\ref{alg:replace}.
\clearpage

\subsection{Bridge Searching}\label{sec:BridgeSearch}\label{sec:handleBridges}
%%%%%%%%%%%%%%%%%%%%%%%%%%%%%%%%%%%%%%%%%%%%%%%%%%%%%%%%%%%%%%%%%%%%%%

As part of the \pcreplace{} algorithm (Line~\ref{algrep:searchBr} of Algorithm~\ref{alg:replace}), new bridges must be identified, either because new equality points of $\UH(B)$ and $\UH(A)$ come into existence or the endpoint of an existing bridge has been deleted.
By the geometric considerations of Section~\ref{sec:bridges}, we know that there is precisely one bridge above each equality point. 
Here, we assume that equality points requiring a new bridge are identified.
To find such a bridge, we use a variant of the bridge finding algorithm of Overmars and van Leeuwen~\cite{OvL81}, achieving amortized constant time per point on the participating hulls.
The local situation at an equality point is that of two hulls separated by a vertical line and we can use the geometric insights of~\cite{OvL81}:
Given a line-segment (tentative bridge) connecting the two hulls, the tangents on the hulls allows on at least one of the two sides to decide on a direction towards the endpoint of the actual bridge.
In~\cite{OvL81}, this is used to guide two simultaneous binary searches, one on each hull.
Here, we use a degenerate version of such a binary search. 

We start at the neighbors of an equality point.
Following the geometric guidance of~\cite{OvL81}, we move either on the left or on the right side further away from the equality point, until we find the bridge.

For some points, the bridge above it can change repeatedly over a sequence of \pcreplace{} operations. 
Then we should avoid scanning over this point several times because we can pay for the scan only by an advance in the points life-cycle, see Section~\ref{sec:monotonicity}.
We use a so called \defword{bypass}~$w$ consisting of the two points $w.s$ and $w.t$ on the same hull, say~$A$, where~$s$ is the horizontally closest to the equality point.
Assume the bridge-searching algorithm seeks a bridge that goes from~$B$ on the left to~$A$ on the right.
If this search considers a candidate bridge with right endpoint~$w.s$ and realizes that this endpoint is too far to the left, instead of moving to the right neighbor of~$w.s$ it takes as the next right endpoint of the candidate bridge~$w.t$.
In particular if~$w.t$ turns out to be too far to the right, the search continues to the left neighbor on~$A$ of~$w.t$.
This is consistent with bypasses being directed from $w.s$ to $w.t$ and because a binary search never returns to the same candidate, a bypass is used at most once.
More over, as long as a point remains properly below a bypass, it is not touched by the bridge search, unless it is no longer below a bridge. 

We place a bypass from the neighbor of the equality point to the former endpoint of the bridge or the neighbor of the deleted endpoint of the bridge, see also Figure~\ref{fig:protectors}.
The creation of bypasses is spelled out in Algorithm~\ref{alg:rmBr} (\pcrmBr{}). 
For the sake of uniformity of our algorithms, we also recreate a bridge if its equality point is affected by the deletion. 
This can happen for the equality point further away from~$r$ if a deleted segment has two equality points.
In that situation we place two bypasses that are both necessary to make sure that finding the already existing bridge again is only scanning over new points. 
Alternatively, we could keep such bridges, remember them, and do not start a bridge search from an equality point that is below an existing bridge, but instead link the old bridge with the new equality point.

In Line~\ref{alg:rmBr:foreach} of \pcrmBr{}, we loop over all bridges where either the bridge or the corresponding equality point is affected by the deletion.
These are at most 6 bridges (2 with endpoint~$r$ and 4 with a deleted equality point), considering 12 possible bypasses. 
Hence the number of bypasses is bounded by a 12 (actually, by geometry, by~6, but that is not important).
\vote{more details about rmBr?}
Because there is no need to keep bypasss between different deletions, we store them in local variables of Algorithm~\ref{alg:replace} (\pcreplace), where they are created when removing bridges using Algorithm~\ref{alg:rmBr} (\pcrmBr{}) and used when searching for new bridges using Algorithm~\ref{alg:FindBridge} (\pcFindBridge).

This achieves that the bridge finding scans only over points that have the potential to pay for it:
If the point remains under a bypass, it is not scanned over.
If it is under a bypass but is scanned over, it is no longer under a bridge and hence has an according decrease in potential to pay for scanning over the point.
Otherwise the point is freshly locally outside, i.e., a freshly locally outside point of~$A$ or a point of~$L$.
Hence, also in that case there is one unit of potential released to pay for the scan.
Everything else, like creating the bypasses, takes only constant time.

\code{FindBridge}

%%%%%%%%%%%%%%%%%%%%%%%%%%%%%%%%%%%%%%%%%%%%%%%%%%%%%%%%%%%%%%%%%%%%%%

  \begin{figure}[htb]
%WRITEMAG% protectors 0.5 -s40
    \begin{center}
      \inputfig{protectors}
      \caption[Placing bypasses under obsolete bridges]
      {The situation of placing bypasses.  When processing
        the deletion of point~$r$ we have to prevent parts of the
        upper hulls of~$A$ and~$B$ that remains inside from being scanned again
        when we search for the new bridge.}
      \label{fig:protectors}
    \end{center}
  \end{figure}

\code{rmBr}

\clearpage

\subsection{Shortcut algorithms}\label{sec:SCAlgorithms}

In this section we describe the two steps of the replace operation that concern shortcuts.
The initial step (Line~\ref{algrep:applySC} of Algorithm~\ref{alg:replace}) is to apply the already existing shortcuts to the new segments of~$L$ on~$B$ before finding new equality points by establishing a new certificate without aggressive shortcuts.
The final step makes sure that the set of shortcuts is aggressive (Line~\ref{algrep:createSC} of Algorithm~\ref{alg:replace}) by inserting new shortcuts if necessary.

\subsubsection{Applying Existing Shortcuts}\label{sec:ApplySC}
%%%%%%%%%%%%%%%%%%%%%%%%%%%%%%%%%%%%%%%%%%%%%%%%%%%%%%%%%%%%%%%%%%%%%% 
During a \pcreplace$(r,L)$ operation, a shortcut defined for~$B$ by the line~$\ell\in H_B$ can be shortened or even become not effective because it is completely outside of~$\UH(B\setminus{r}\cup L)$.
Let the points $x,y \in B$ be the left and right neighbors of $r$ on $\UV(B)$.

Let~$\ell$ be the defining line of some effective shortcut on~$B$.
If~$\ell$ does not intersect $\overline{xr}$ or~$\overline{ry}$, then there is no change to the shortcut, either because $x,r,y$ are all above~$\ell$, or because they are all below~$\ell$ (the shortcut is to the left of~$x$ or the right of~$y$).
Otherwise $\ell$ has a lost cutting point and only for points of~$L$ it is not a priori clear if they are above~$\ell$ or not.
The line~$\ell$ defines now a smaller shortcut on~$B$, and the new cutting point(s) are defined as the intersection of~$\ell$ and a segment of the upper hull of $\{x,y\}\cup L$.
If there is no point of~$B$ above~$\ell$ any longer, the shortcut~$\ell$ is not effective and is removed.
This situation is illustrated in Figure~\ref{fig:sc_appl}.
\vote{is this too verbose?}
\vote{do we need the figure in the new style?}

%WRITEMAG% sc_appl 0.8 -s16
\begin{figure}[htb]
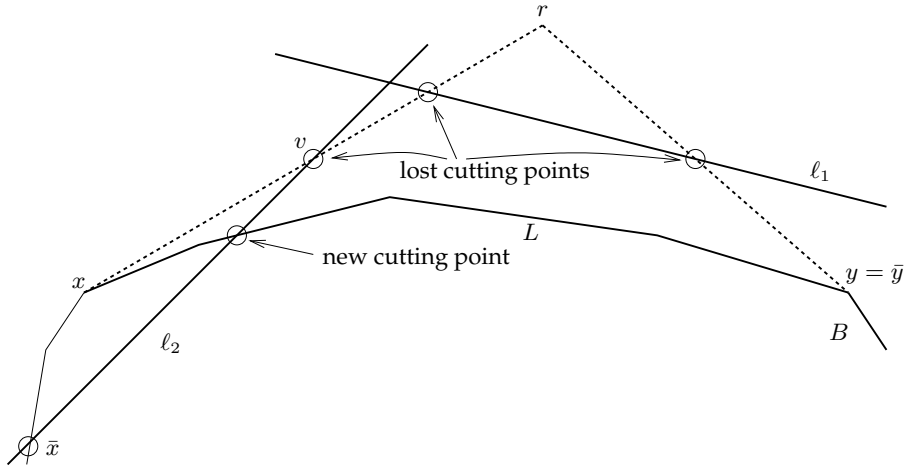

  \begin{center}
    \inputfig{sc_appl}
    \caption
    {Applying shortcuts to~$B$ after the deletion of~$r$.  
      The shortcut defined by line~$\ell_1$ becomes not effective.
      The shortcut defined by~$\ell_2$ changes one of its cutting points and gets shorter.}
    \label{fig:sc_appl}
  \end{center}
\end{figure}

By Lemma~\ref{lem:LimitedImpact} there are at most 3 existing shortcuts that have a cutting point on $\overline{xr}$ or~$\overline{ry}$ and are affected.
Altogether, the changes to the shortcut version of the hull are the deletion of up to 4~consecutive points (cutting points, $r$) that are replaced by a (possibly empty) list of points, some of which are new cutting points defined by previously existing shortcuts.
This new part of~$\SC B$ is called $\SC L$ and is enclosed by the points~$\SC x$ and $\SC y$.

The algorithm \pcApplySCn{}, Algorithm~\ref{alg:ApplySCn}, establishes all new cutting points of existing shortcuts.
Additionally, it removes from the representation of~$B$ ray intersections (on~$\SC B$) and equality points that are lost. 
The corresponding selected points and lost equality points on~$A$ are marked to have lost their correspondence on~$B$.
The selected points of~$A$ might either be outside~$B$ now, or they will remain selected, and the corresponding ray intersection will be determined in \pcreestablishBA, Algorithm~\ref{alg:reestablishBA}, Section~\ref{sec:initBA}. 
For uniformity, this is true even if there is a ray intersection on a modified shortcut that would remain valid.
Similarly, the lost equality points on~$A$ will be deleted from the data structure later.
This means that both hulls are intact again, only the connections are disturbed.
More precisely, we determine the closest two points $x'$ and~$y'$ of~$B$ that are unchanged by the deletion (not defined by $\overline{xr}$ or $\overline{ry}$).
Further, we create base records for all the new input and cutting points, and reestablish previously existing shortcuts.
This approach is made precise in % the following 
Algorithm~\ref{alg:ApplySCn}.

For the running time analysis it suffices to note that because the number of involved shortcuts is constant, the overall amortized time is linear in~$|L|$.

\code{ApplySCn} 

\subsubsection{Reestablishing Aggressive Shortcuts}\label{sec:newSC}
%\subsubsection{Creating a Shortcut for one certificate}\label{sec:CreateSC}

The structure of \pcreplace{} (Algorithm~\ref{alg:replace}) is that the establishing of the new complete certificate (\pcreestablishBA, Algorithm~\ref{alg:reestablishBA} and \pcreestablishAB, Algorithm~\ref{alg:reestablishAB} in Section~\ref{sec:mainReestablish}) is separated from creating new shortcuts.
\vote{too verbose?}
The already existing set of shortcuts on~$A$ and $B$ might not be aggressive as defined in Section~\ref{sec:CompleteCertificate}. 
%This means that there might be stretches of more than 3 consecutive segments of the shortcut version on the locally outer hull without an equality point or ray intersection.
In this section, we show how to reestablish that shortcuts are aggressive by creating new shortcuts (perhaps changing existing ones that have no longer ray intersections).
See Fig.~\ref{fig:createSC}.

\code{CreateSC}

The pseudocode of \pcCreateSC (Algorithm~\ref{alg:CreateSC}) is based on a scan of the locally outer hull (potentially alternating between $A$ and $B$) between two points $u_0$ and~$v_0$.
Aggressive shortcuts require that in the extent of every point certificate and tangent certificate there are at most 5 segments of the shortcut version of the locally outer hull. 
For one such certificate, let $u$ and $v$ be the corresponding ray intersections or equality points on the outer hull that define the extent of such a certificate.
Geometrically, the segment~$\overline{uv}$ is outside the locally inner hull, and hence any shortcut above~$\overline{uv}$ is conservative, see also Lemma~\ref{lem:shortcutAllowed}.
On the other hand, all existing shortcuts between $u$ and~$v$ that do not contain a ray intersection can be deleted without violating monotonicity of shadows (cf. Section~\ref{sec:monotonicity}).
Hence, we create new shortcuts (see also Figure~\ref{fig:createSC}) by the two input points between~$u$ and~$v$ that are furthest away from each other and are not above a shortcut that needs to remain because it defines $u$ or~$v$.

\begin{figure}[tb]
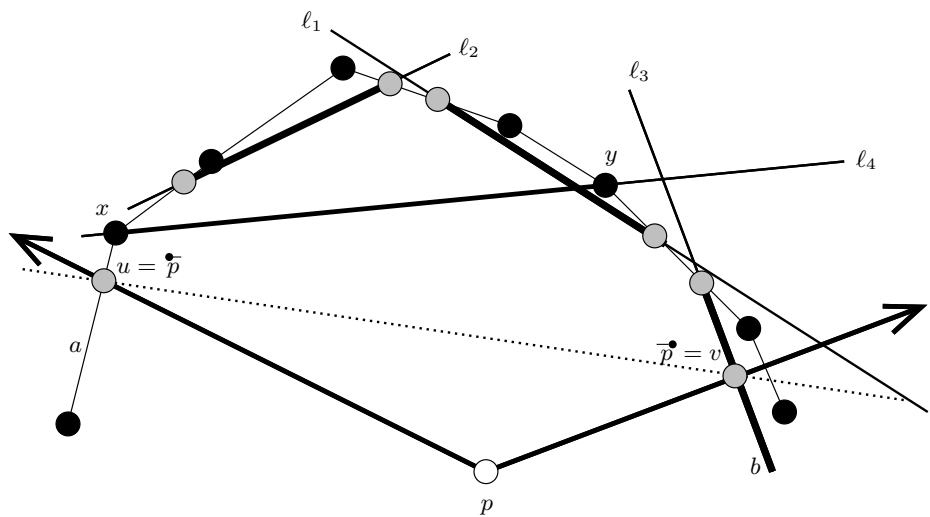

    \begin{center}
%WRITEMAG% create_sc 1 
      \inputfig{create_sc}
      \caption[Creating a new shortcut]
      {The situation of creating a new shortcut~$\ell_4$ replacing the shortcuts~$\ell_1$ and~$\ell_2$.  The na\"{\i}ve and most
        aggressive choice~$\overline{uv}$ is forbidden by shortcut
        separation ($\overline{uv}$ and $\ell_3$ would intersect). 
        The new shortcut~$\ell_4$ preserves monotonic
        ray intersections and is defined directly by two input
        points.}
      \label{fig:createSC}
    \end{center}
\end{figure}

%%%%%%%%%%%%%%%%%%%%%%%%%%%%%%%%%%%%%%%%%%%%%%%%%%%%%%%%%%%%%%%%%%%%%%
\subsubsection{Running Time Considerations}
%%%%%%%%%%%%%%%%%%%%%%%%%%%%%%%%%%%%%%%%%%%%%%%%%%%%%%%%%%%%%%%%%%%%%%

As noted before, every point on the shortcut locally outer hull has one unit of potential available to pay for creating new aggressive shortcuts. 
This is clearly enough to make the additional overhead only constant.

% \vote{good place? Incorporated?}
% The running time of creating a shortcut is proportional to the number of
% points freshly marked as ``cut by a shortcut'' and the number of deleted
% shortcuts.   
% Since every point is marked in this way only once, and every shortcut
% is deleted only once, this amounts to amortized constant time per
% call, point in the input set, and shortcut.
% %
% \vote{mention that only constantly many shortcuts can be deleted in
%   one go? Yes, nifty. Really?}

% The running time of \pcApplySCn\ is linear in the number of new
% points~$|L|$ with some constant overhead per call.
% Since every input point is in such a list~$L$ for every merger at most
% once, this is amortized constant per element and call.
% This amortized analysis is unaffected if the creation of the shortcut
% has to pay for its deletion already.

%%%%%%%%%%%%%%%%%%%%%%%%%%%%%%%%%%%%%%%%%%%%%%%%%%%%%%%%%%%%%%%%%%%%%%

\section{The main reestablishing algorithm}\label{sec:mainReestablish}
%X%% \paragraph{Reminder Interface.}

The algorithmic task discussed in this section is the reestablishing of all point and tangent certificates as part of a \pc{replace} operation between two points $\SC x, \SC y\in \SC{B}$, where the previously existing shortcuts have been applied.
This includes finding all new equality points and by this identifying all streaks of the new geometric situation.
In the representation of~$A$, there might be lost equality and lost ray intersection points.
The tasks correspond to the geometric considerations of Section~\ref{sec:geomReestablishing}.
\vote{other refs needed? not really}

As a result, the certificate and its representation is updated.
Additionally, two unchanged (input, equality or ray intersection) points on the locally upper hull are returned, that enclose all possible changes (by these procedures, i.e., ignoring bridges) to the certificate, in particular ray intersections and equality points.

The algorithm consists of the following two phases (already introduced in Section~\ref{sec:algRepl}, namely Algorithm~\ref{alg:replace}, Line~\ref{algrep:reestI} and Line~\ref{algrep:reestII}):
\begin{description}
\item[Phase I] \pcreestablishBA{} finds all new equality points and establishes the certificates for all new/changed streaks of polarity $B$ over $A$;
  the central work is done by the recursive procedure \pcrefineBA.
%  requires to know lost equalities and previously selected points
%\todo{unclear note: pa!}
%\newcommand{\refineBAq}{$\mathrm{Refine}^{?}_{\SC{B}/\SC{A}}$}
\item[Phase II] \pcreestablishAB{} establishes certificates in all streaks of polarity $A$ over $B$;
  (the the streaks are identified in Phase~I)
  the central work is done by the recursive procedure \pcrefineAB.
\end{description}

Observe that Phase~I can be understood as the algorithmic heart of the whole construction: While it is acceptable to scan over the new points~$L$ of~$B$ (on the locally outer hull), the locally inner hull~$A$ is unchanged and can not be scanned over within the aimed at time bounds.
\vote{Keep?}
In contrast, for Phase~II it would be acceptable to scan over both hulls.
To emphasis the (geometric) similarity of the two tasks, we present for Phase~II a simplified version of the algorithm for Phase~I.

% \subsubsection{Structure of this Section}
The structure of Sections~\ref{sec:refineAB}--\ref{sec:initAB} is the following.
%This section is structured didactically rather than by the flow of the algorithm.
We start with the simplest recursive refinement of Phase~II, i.e., with
polarity $A$ above~$B$ in Section~\ref{sec:refineAB}. 
Then we discuss the generalized refinement algorithm (for the case of polarity $B$ above~$A$) in Section~\ref{sec:refineBAq}.
It can discover and handle inversions.
Finally we describe the initialization of Phase~I in Section~\ref{sec:initBA} and the more complicated case of initializing Phase~II in Section~\ref{sec:initAB}, where we also cover the situation of extending and joining streaks of polarity $A$ over~$B$.

% Both phases return the leftmost and rightmost change they did to the construction site, or rather an unchanged point that encloses this change.

Both phases start by defining initial \emph{construction sites} and then have a main recursion that continues until all slabs of the respective polarity within the construction site have a certificate established everywhere.
%X%% \paragraph{General Intro Construction Sites: progress}
%\textsl{Both phases are driven by the concept of a construction site, that we will make precise here and in Section~\ref{sec:refineAB} and Section~\ref{sec:refineBAq}. }
The role of a construction site is to enclose a slab and by this a piece of the certificate that might still need some repair, i.e., where the certificate is not yet verified to be valid.
%\subsection{Generic Construction Site}
\label{sec:genericConstructionSite}
%The whole algorithm revolves around the concept of a construction site.
Because we will use them in both polarities, we introduce them already here with the names $C$ and~$D$ (with $\{C,D\} = \{A,B\}$). \vote{needed?}
\begin{definition}[Generic Construction Site]
  \label{def:genericConstructionSite}
  A \defword{construction site} of polarity $D$ over~$C$ is defined by the 5-tuple $(p,q,\RrayP{p},\LrayP{q},T)$.
  The points $p,q$ are on~$C$.
  The point $p$ is to the left of~$q$, and all points of~$C$ between them are stored in a single splitter~$M$. 
  The \defword{alignment intersection} $\RrayP{p}\in \UH(\SC D)$ is a point on the shortcut version of the upper hull of~$D$ that is aligned with~$p$.
  It can be a ray intersection if~$p$ is a selected point, an equality point at the same position as~$p$, or, a point on the same vertical line as~$p$. 
  The input points and cutting points of the shortcut hull~$\SC{D}$ between $\RrayP{p}$ and $\LrayP{q}$ are stored in a split-array~$T$.
  The \defword{extent} of the construction site is the slab between~$p$ and $q$.

  There are no further selected points, ray intersections, lost or identified equality points in the data structure in the extent of the construction site.
  The extent of different construction sites does not overlap.
\end{definition}

This need for repair might be obvious from non-geometric considerations, for example if a selected point is missing one of its ray intersections.
In contrast, we can have situations that have the structure of a certificate, but the geometry has not yet been considered, like a selected point with established right ray with an open splitter to the right ending at an equality point, but where it is not a valid boundary certificate.
Importantly, checking if we actually look at a valid geometric certificate is a single test, in the example of a boundary certificate, checking that the ray intersection is above the tangent through the equality point.
When our algorithm performs this test, there are of course both outcomes possible.
In the positive case we consider the situation a valid certificate (and stop working there). 
In the negative case we typically gained some geometric insight that allows us to shrink or split the construction site and continue.
The algorithm can be understood as a recursive refinement procedure, that either calls itself with a single, simpler construction site (`closer' to valid) or two (possible more complicated but geometrically smaller) construction sites, or it finishes because the construction site is empty or a valid certificate.
Accordingly, the recursive functions are named \pc{refine}, and the initialization functions that perform the initial calls to them \pc{reestablish}, whose calls are already spelled out in Algorithm~\ref{alg:replace}.
%At any moment, no two construction sites overlap.
\vote{make the invariants inside and outside CS precise? somewhat done}

In this context, the closed splitters can also be understood as a continuation of this adaptive refinement process between several replacement operations.
\vote{useful?}

%\subsubsection{Control flow and Analysis: Definition of a potential}
\label{sec:potentialCS}
%{Cases, Control Flow and Progress of the Recursive Algorithm}
%X%% \paragraph{table / priority of 6 types of ends X open/close }

The runtime analysis of the algorithms working with construction sites are based on a potential of a construction site that extends the potential of the certificate in the static situation, where we defined~$\CSP$ as the potential of a non-selected point in a splitter.
Here, we need~$\CSP\geq5$, which means that we can set~$\CSP=5$.

\begin{definition}[Potential of a construction site]\label{def:potentialCS}
  Let $(p,q,\RrayP{p},\LrayP{q},T)$ be a construction site.
  The non-selected points stored in the splitter have a potential of~$\CSP$ each, as defined for the static situation.
  The construction site has potential of~\hbox{$|T|+2$}, i.e., the number of points stored in the split-array plus 1 for each boundary.
    There is an additional potential of 1 for each boundary that is formed by a vertical line only, i.e., the alignment point is not a ray intersection or an equality point.
    If the splitter of the construction site is closed, the potential is increased by~1.
\end{definition}

Observe that the potential of a construction site is linear in the number of participating points. 
The details will be discussed as part of the reestablishing algorithm.

% The analysis of the algorithm follows the following structure:
% The boundaries of one construction site and a closed splitter can have a potential of at most 22.
% If the recursive algorithm performs a (tail) recursive call to the same construction site (same splitter), this part of the potential is reduced.
% Alternatively, the construction site can be split at selected points, releasing a potential of~24, paying for the overhead of the new boundaries.
% Further, the algorithms do scans on the hulls, paid for by the potential of the points inside the splitter and split-array that are scanned over.

% Besides describing the algorithms, we also argue that all operations are paid for by released potential. 
% This shows that the recursive algorithm finishes and uses the claimed overall amortized constant time per participating point.

\subsection{Phase~II Recursion: Refine Construction Sites Known as $A$ above $B$}\label{sec:refineAB}
%X%% \paragraph{9.2.0 Assumptions: one streak, initially endpoints=equalities}
In this section we consider the simpler refinement algorithm for Phase~II (of Algorithm~\ref{alg:reestablishAB}), operating on a streak of known polarity $A$ over~$B$.
The equality points have been established in Phase~I.
%The initial construction site comprises this streak. 
The task is to create a complete valid separation certificate as detailed in Section~\ref{sec:SeparationCertificate}.
\vote{state what and situation?}

%\subsubsection{Simple Construction Site}\label{sec:simpleConstrSite}

A simple construction site is a special case of the construction site of Definition~\ref{def:genericConstructionSite}.

\begin{definition}[simple construction site guaranteed $A$ over $B$]\label{def:simpleConstructionSite}
  A \defword{simple construction site} of polarity $A$ over~$B$ is defined by the 5-tuple $(p,q,\RrayP{p},\LrayP{q},T)$ with $p,q\in B$, $T$ on~$A$, whose extend is part of a streak of polarity~$A$ over~$B$.
  % All input points of~$B$ between $p$ and~$q$ are in an open or closed splitter~$M$.
  % The alignment $\RrayP{p}$ is the right ray intersection~$\RrayI{p}$ if this is identified.
  %Otherwise, it is the intersection of the vertical line through~$p$ with~$\UH(\SC A)$.
  %Similarly for $\LrayP{q} \in \UH(\SC A)$.
\end{definition}

Note that if the left endpoint of a construction site is the equality point~$e$, then it is identical to its left alignment intersection~$\RrayP{e}=e$.
In this situation $\RrayP{e}$ is not considered vertically above $e$.
Because we assume that all points are in general position, a selected point has $p\neq \RrayP{p}$.
\vote{here representation details?}

Algorithm~\ref{alg:refineAB} provides the pseudocode of the recursive function~\pcwo{refineAB} that has as argument a simple construction site.
The following Sections~\ref{sec:shootRay}--\ref{sec:searchSplit} explain the case distinction and actions summarized in that pseudocode.
The overall control flow is that the first applicable case is executed.

\code{refineAB}

\subsubsection{Establish Ray Intersection at Selected Point}\label{sec:shootRay}\label{sec:shootRayBA}

(See Alg.~\ref{alg:refineAB}, Line~\ref{alg:refineAB:shootRay})
Assume wlog the left boundary $p\in B$ is a selected point but without established ray, i.e., $\RrayP{p}$ is vertically above~$p$. 
To establish the ray intersection $\RrayI{p}$, we identify the line~$\Rray{p}$ of the ray and scan on~$\SC{A}$ starting from $\RrayP{p}$ to the right until the intersection $\RrayI{p}$ with~$\Rray{p}$ (or reaching the other end of the construction site).
Then we establish $\RrayI{p}$ as part of the representation of~$\UH(\SC{A})$.
The scanned over points of~$A$ are removed from the front of the split array~$T$.
By the geometry of the situation, when $q$ is selected or an equality point, it is guaranteed that the ray intersection~$\RrayI{p}$ is to the left of~$\LrayP{q}$.
As required by the interface, the split-array~$T$ stores all points of~$\SC A$ between $\RrayP{p}$ and~$\LrayP{q}$, and we can recurse with the modified simple construction site.

The situation without an established left ray when~$q$ is a selected point is handled symmetrically.

The change in potential by reducing the size of the split-array can pay for the scan.
The constant overhead of the function call (up to the tail recursion) is paid for by the reduction in potential of the boundary.

\subsubsection{Finish at Valid Boundary Certificate}\label{sec:finishBoundary}

(See Alg.~\ref{alg:refineAB}, Line~\ref{alg:refineAB:finishBoundary})
The construction site between a selected point~$p\in B$ and an equality point~$e=q=\RrayP{q}$ can be a valid boundary certificate if the left ray~$\Lray{e}$ through~$e$ (given by the segment of~$\UH(B)$) passes below~$\RrayI{p}$ (see Section~\ref{sec:boundaryCert}, in particular Figure~\ref{fig:half-open} and observe that $A$ and~$B$ are exchanged in the exposition).
In this case, the construction site has a valid certificate and the function returns.
Note that this includes the situation that the streak is trivial (the splitter is empty, i.e., both $p$ and $q$ are equality points and on the same segment of~$\UH(B)$).
(geometrically, this is a special case of a valid boundary certificate, see Section~\ref{sec:trivial-streak}).

Otherwise, if the boundary certificate is not valid, at least one more point inside the construction site needs to be selected.
We use the split functionality of the splitter for this purpose.
The control structure is shared with selecting more points if a tangent certificate is not valid, as explained in Section~\ref{sec:searchSplit}.

If the boundary certificate is valid, the function call is paid for by the potential of the construction site.
Otherwise, selecting a point releases potential $\CSP \geq 5$ which pays for the potential of 2 in each of the two new boundaries of the construction sites and this function call.

\subsubsection{Selecting More Points / Finish Construction Site} \label{sec:searchSplit}\label{sec:finishAtEqBA}%\label{sec:searchSplitBA}

(See Alg.~\ref{alg:refineAB}, Line~\ref{alg:refineAB:searchSplit})
If none of the special cases of the previous sections finishes the construction site, we are (finally) using the splitter to check if the construction site consists of a valid point certificate. 
% If this is not the case, the split operation of the splitter will return some point~$o$ of~$B$ that can be selected. 
We define the colors of the splitter by the geometric situation:
The point~$p$ defines the red color of~$M$.
If $p$ is a selected point, this is the shadow, the function~${\mathcal C}(u)$ checks if a point~$u \in B$ is red by determining if its left ray $\Lray{u}$ passes below~$\RrayI{p}$.
This determines the shadow correctly for~$u$ because it is by assumption inside~$\UCo(A)$.
This geometric definition leads in particular to the selected point~$p$ being colored red, as desired.
If~$p$ is an equality point there is no red colored point (${\mathcal C}(u)$ never returns red) because even the neighbor of an equality point is allowed to be selected.
Symmetrically, the point~$q$ defines the blue color.
Note that only if~$M$ is closed we have to worry about monotonicity of the colors, which we will address when creating such a construction site.

With ${\mathcal C}$ as the definition of colors, we perform a split operation on~$M$.
If the result of this split is~$\bot$ (this is impossible for an invalid boundary certificate), the interface of the splitter guarantees the existence of a valid tangent certificate (cf. Section~\ref{sec:NormalTangentCertificate}) of the whole construction site, and the function returns. 
Otherwise, the split operation returns a point~$o\in M$, and splits~$M$ at~$o$ into the open splitter $M_1$ and $M_2$. 
We select~$o$, which by the definitions of red and blue is allowed, and we create two new wrapper records for the new splitters and link them into the data structure.
This creates pointers between the extreme points of the splitters and the splitters, in particular between~$o$ and $M_1$ and~$M_2$ (denoted in the pseudocode as ``in data structure'').
We split~$T$ at $\LrayP{o}=\RrayP{o}$, being the point on the segment of~$\UH(\SC{A})$ intersecting the vertical line through~$o$, resulting in $T_1$ and~$T_2$.
The whole construction site is split vertically at~$o$ into a left one defined by $p$, $o$, $M_1$ and $T_1$, and a right one defined by $o$, $q$, $M_2$ and $T_2$.
We recurse on these two simple construction sites.

If the boundary certificate is valid, the function call is paid for by the potential of the construction site.
Otherwise, selecting a point releases potential $\CSP \geq 5$ which pays for the potential of 2 in the two new boundaries of the construction sites and this function call.

% \subsubsection{Running Time Analysis}

% Algorithm~\ref{alg:refineAB} is run on an already explored (new) streak of polarity $A$ over~$B$ and we should argue that its running time is linear in the number of points involved, i.e., the size of~$T$ and the splitter~$M$.

% The scan for a ray intersection (Section~\ref{sec:shootRayBA}) removes points from~$T$ such that its non-constant behaviour is overall bounded by $|T|$.
% The constant overhead of establishing rays and the tail recursive function call happens once for each ray.
% Beyond that, a single call (not counting recursive ones) has constant complexity, in particular if there are no recursive calls because split returns~$\bot$.
% The two recursive calls after selecting a new point can be charged to this point being selected.

% Actually, this running time analysis should be part of the overall running time analysis (where some potential remains between different calls to \pcreplace), and we will phrase this as a potential of a construction site, as detailed in the following sections.

\subsection{Phase~I Recursion: Refine Construction Sites With Hypothesis $B$ above $A$}\label{sec:refineBAq}

In this section, we describe the recursive procedure \pcrefineBA{}, that for a general construction site between~$p,q\in\UH(A)$, identifies all streaks $B$ above~$A$, including equality points, between~$p$ and~$q$, and establishes valid certificates for them.
We say that this algorithm operates under the hypothesis that the whole construction site is part of a streak of polarity~$B$ above~$A$.
This is only a hypothesis because the replace operation on~$B$ might have led to an arbitrary number of new streaks of the opposite polarity.
Actually, the primary goal of the discussed algorithm is to determine the new geometric situation, in particular all new equality points.
Still, this hypothesis guides us well in the sense that it forces us to avoid scan operations on~$\UH(A)$.
If the hypothesis holds, this efficiently leads to a certificate.\vote{bad repeat?}
Otherwise, the hypothesis fails in a benign way, namely by the algorithm finding an inversion, that is a (to be selected) point of~$A$ that is now outside~$B$, and we can use a scan to find the extent of the newly discovered inverted streak.
With the new boundaries of that streak, the algorithm continues to establish a certificate. 

\subsubsection{General Construction Sites}  \label{sec:generalConstructionSite}
%{: data structure representation} 

The following definition of a general construction site is a special case of the construction site of Definition~\ref{def:genericConstructionSite}.
It is possible that there are (not yet known, arbitrarily many) new equality points and streaks of polarity $A$ over~$B$ in the extent of such a construction site.
If the algorithm finds any pair of aligned points of opposite polarity, we say that there is an inversion.
The purpose of the algorithm working with this kind of construction site is to either discover these, or to establish a certificate that shows that they are not present.

\begin{definition}[General construction site hypothesized $B$ over~$A$]
  \label{def:generalConstructionSite}
  A \defword{general construction site} is a construction site  defined by the 5-tuple $(p,q,\RrayP{p},\LrayP{q},T)$ with $p,q\in A$, $T$ on~$B$,
  based on the hypothesis that the polarity is $B$ over~$A$.
  If $\RrayP{p}$ is below~$p$ it is on the vertical line through~$p$, an \defword{inverted} situation.
  If $p$ is an equality point~$\RrayP{p}=p$.
  If $p$ is a lost equality point, $\RrayP{p}$ is vertically below~$p$ (inverted).
  Similarly for $\LrayP{q} \in \UH(\SC B)$.
  % The input points and cutting points of~$\SC{B}$ between $\RrayP{p}$ and $\LrayP{q}$ are stored in a split-array~$T$. 
  The splitter~$M$ must be open unless both $p$ and~$q$ are selected points.
  % The \defword{extent} of the construction site is the slab between~$p$ and $q$.
  %(See also Figure~\ref{fig:constructionsites}).
\end{definition}

It is easy to implement a construction site that fits to the representation of the certificate.
We spell out some of the details in Section~\ref{sec:reprCS}.

The following table summarizes the possible boundaries of a general construction site.
The first lines, marked with +1, induce a unit of extra potential as described in Section~\ref{sec:potentialCS}.
\begin{description}
%\item \label{rf:unkn} (rather not) $a$ can be a preselected point, undetermined
\item[+1] \label{rf:loeq} a lost equality point (inversion)
\item[+1] \label{rf:slin} a selected point found outside the other hull (inversion)
\item[+1] \label{rf:slwo} a selected point, whose ray intersection
  with~$\UH(B)$ inside the construction site is not yet determined (no inversion)
\item[0] \label{rf:nweq} an equality point
\item[0] \label{rf:slwr} a selected point with determined ray intersection
\end{description}

%X%% \paragraph{Finish: everything B/A has valid cert}

\subsubsection{Interface of \pcrefineBA}
\code{refineBA}

Let us now be more specific about the interface of Algorithm~\ref{alg:refineBA}. %, \pcrefineBA.
It is a recursive algorithm, an augmented variant of Algorithm~\ref{alg:refineAB} in Section~\ref{sec:refineAB}.
The notion of a construction site is generalized as already introduced as Definition~\ref{def:generalConstructionSite}.
The generalized refinement process needs the capability to handle the situation that an inversion, i.e., a new streak of polarity~$A$ over~$B$ is discovered.
\vote{summarize operation?}
% This is handled by allowing the boundary of a construction site to be \defword{inverted}, i.e., of polarity~$A$ over~$B$.
% It is manifested by the boundary point~$p\in A$ being vertically above its alignment intersection~$\LrayP{p}$ to~$\UH(\SC{B})$ (and similarly for~$q$) (See Section~\ref{sec:scanEq}). 
% In such a situation the algorithm first shrinks the construction site (using a scan that can be paid for by points advancing in their life-cycle) to an equality point, such that the new boundary has the correct polarity (and then performs one tail-recursive call).
% This possibility is used not only in the recursive calls (where it trivializes the detection and handling of an inversion in the form of an attempt to select a point that leads to an inversion instead),  but also in the initial calls that will be presented in Section~\ref{sec:initBA} (where such inversions naturally arise at lost equality points, i.e., equality points that no longer exist because of the replace operation at hand).
Additionally, \pcrefineBA{} must be ready to work with the already existing certificate, most prominently with the previously selected points as boundaries and already existing splitters that might be closed.
\vote{ref to init?}
%We will come back to this in Section~\ref{sec:initBA} in the initializing Algorithm~\ref{alg:reestablishBA}, \pcreestablishBA.

Within one construction site, when returning, all parts of a (new) streak of polarity~$B$ over~$A$ will have a valid separation certificate.
The other parts of the construction site are now guaranteed to have polarity~$A$ over~$B$, and will get a separation certificate when \pcreestablishAB{} is called in Phase~II.
%X%% \paragraph{Scanning for eq on A/B once is allowed.}

Algorithm~\ref{alg:refineBA} summarizes the recursion in pseudocode.
The following Sections~\ref{sec:refineBAfirst}--\ref{sec:refineBAlast}, together with the already discussed cases in Sections~\ref{sec:shootRay}--\ref{sec:searchSplit} reflect the case distinction, again tried in this order.
%Curiously, it is most natural to explain the cases in reverse order of the pseudocode of Algorithm~\ref{alg:refineBA}.

\subsubsection{Opening a Closed Splitter in the case of an Inversion}\label{sec:openInverted}
(See Alg.~\ref{alg:refineBA}, Line~\ref{alg:refineBA:open})
This case is invoked if both $p$ and~$q$ are previously selected points, $M$ is closed and at least one of $p$ and $q$ is inverted.
The purpose is to open the splitter to be able to shrink the construction site at the inversion, a situation that does not arise from recursive calls, but an important possibility as a first call in the initialization by Algorithm~\ref{alg:reestablishBA}.

Assume~$p$ is inverted. 
We define red (the shadow of~$p$) to be empty.
If~$q$ is also inverted, also blue is empty.
Otherwise $q$ has (because of the case in Line~\ref{alg:refineBA:shootRay}, discussed for a simple construction site in Section~\ref{sec:shootRay}) the ray intersection established, i.e., $\LrayP{q}=\LrayI{q}$, such that we can define that a point~$u$ is blue if its right ray passes below~$\LrayI{q}$.
This makes sure that (at least) the shadow of~$q$ is colored blue.
Because~$\Lray{q}$ is unchanged, $\LrayI{q}$ did not move away from~$q$, and this definition of blue is monotonic with respect to the last split operation of~$M$.
With this definition of colors, we continue as follows, similarly as in Section~\ref{sec:searchSplit}, sharing the code with the case of Section~\ref{sec:searchSplitBA}.

The algorithm continues with the split operation of~$M$.
With the discussed colors, this will return a point~$o \in A$ that is geometrically allowed to be selected because it is not in the left shadow of~$q$, and (importantly) two open splitters.
Observe that it is possible that the splitter decides on~$o=p$, a trivial split.
In that case the search in Line~\ref{alg:refineBA:vertical} for~$\RrayP{o}=\LrayP{o}$ in~$T$ is also not going to split~$T$, and one of the recursive calls happens with a geometrically trivial construction site with both endpoints inverted.
This is benignly handled in Line~\ref{alg:refineBA:scanEq} of the algorithm.
The same line will also handle the other recursive call, and because the splitter is now open the construction site can be shrunk.

If there is a split of the construction site, the to be selected point looses potential $\CSP$ that pays for the call and the new boundaries. 
If there is no split, it remains one construction site and the additional potential of 1 of the closed splitter pays for the call.

\subsubsection{Scan for Equality point}\label{sec:scanEq}\label{sec:refineBAlast}
(See Alg.~\ref{alg:refineBA}, Line~\ref{alg:refineBA:scanEq})
This case is invoked if $p$ is inverted and $M$ is open.
We shrink the construction site on the left by scanning for an equality point.
This is a simultaneous scan on $A$ and $B$, implemented by removing elements from $M$ and $T$, analyzing their geometry given by their neighbors in $\UH(A)$ and $\UH(B)$ respectively, searching for the new equality point~$e$.
If we do not find such an equality point before reaching~$q$ (the right boundary of the construction site), there is no (part of a) streak of polarity $B$ over~$A$ within this construction site, and we are finished.
Otherwise we create the found new equality point~$e$ as part of the data structure for $\UH(A)$ and $\UH(B)$. 
We recurse with this~$e$ as the new left endpoint, i.e., with the construction site between~$e$ and~$q$.

The mirrored situation at~$q$ is handled symmetrically.

The scan is paid for by the potential of points in the split array and (overpaid by $\CSP$) by the potential of points in the splitter.

\subsubsection{Selecting another point}\label{sec:searchSplitBA}\label{sec:refineBAfirst}
We start the discussion with the (easy) case that resembles Section~\ref{sec:searchSplit}, namely that of selecting another point~$o$.
This~$o$ is determined by a call to \pc{split} on~$M$ with the colors defined as in Section~\ref{sec:searchSplit} that guarantee that~$o$ is not in the shadow of a selected point.
We determine $\RrayP{o}=\LrayP{o}$ using the split operation of the split-array~$T$.
The important difference to the simple case is that now the hypothesis that the whole construction site consists of a single streak of polarity $B$ above~$A$ can fail.
This can happen in Line~\ref{alg:refineBA:vertical} if the point~$o$ determined by the splitter for selection ($o$ is not in any shadow/color) is inverted, visible by $\RrayP{o}=\LrayP{o}$ being below~$o$.
Perhaps curiously, our formulation of the algorithm leaves reacting to this inversion to the following two recursive calls, leading to the otherwise implausible cases that are handled before that. 
\vote{too verbose?}

%Importantly, this failing hypothesis reveals the point on $o\in\UH(A)$ that we just realized is no longer inside the other hull~$B$ from which we can scan towards the new equality points as already described in Section~\ref{sec:scanEq}.

The potential of the two new construction sites is that of the original construction site plus two new boundaries, but with one more selected point. 
The potential of the so far unselected inside point~$o$ drops by $\CSP$ which can pay for this and the function call.

\subsection{Phase~I Initializing: Construction Sites $B$ above $A$ (Hypothesized Polarity)}\label{sec:initBA}
%X%% \paragraph{remind interface}
% As already mentioned at the beginning of Section~\ref{sec:mainReestablish}, we have the  situation in which 
Initializing general construction sites of hypothesized polarity $B$ above~$A$ is the first step of Phase~I.
This operates on the shortcut version of~$B$, where up to four points are replaced by a list of points $\SC{L}$ that are identified in the data structure by the points $\SC x \in \SC B$ and~$\SC y\in \SC B$, so far without any auxiliary points or connections to~$A$. 
While the corresponding part of~$A$ could be arbitrarily complicated, the previously existing certificate has only constant complexity.
It is easy to find two neighboring connections between the hulls that are unaffected by the change,
and the constantly many lost connections on~$A$ form the boundaries of the construction sites.
This initialization is summarized in pseudocode (including calling convention) as Algorithm~\ref{alg:reestablishBA} and discussed in the  following.
\vote{Remove $e_x,e_y$ as argument? It could be scanned for}
See also Figure~\ref{fig:constructionsites}.

The amortized running time of this procedure is $O(|L|)$, taking into account the potential in the data structure before (mainly in the points and splitters of~$A$) and in the constantly many created construction sites.

\code{reestablishBA}

%  % $(x',y')$ = reestablishI$(\SC x,\SC y \in SC B)$\\
% \FullCallWith{reestablishBA}{(x',y')}{\SC x,e_x,e_y,\SC y \in B}
% \todo{copy situation and result in?}
\vote{connection to lines / subsections as before?}
Here, the calling convention includes only points of~$\SC B$, and only in the case of previously existing equality points, they are provided as arguments $e_x, e_y$.
Lost equality points and previously selected points of~$A$ between $\SC x$ and ~$\SC y$ are still part of the data structure of~$A$.
After establishing the initial construction sites, for all of them the refinement algorithm of Section~\ref{sec:refineBAq}  is called.
Hence, when returning, all new equality points are found and established as part of the certificate of polarity~$B$ over $A$.
The returned points on the locally outer hull enclose all new (or changed) equality points.

To prepare the construction sites of polarity $B$ over~$A$, we create a split array~$T$ for points of~$\SC B$, essentially containing the points of~$\SC L$, but sometimes a constant number of extra points.

By their definition, $\SC x$ and $\SC y$ define the slab with possible changes to $\SC B$.
We consider the constantly many existing point certificates whose extent intersects with the slab of $\SC x$ and~$\SC y$.
The lost equality points of~$\UH(A)$ and previously selected points of~$A$ are the internal boundaries of the construction sites.\vote{keep this repetition?}
For the leftmost and rightmost certificate that contain $\SC x$ and respectively $\SC y$ (or perhaps both), we extend the construction site to an (unchanged) connection between the two hulls, and use it as the boundary.

If~$\SC x$ is outside of~$\UH(A)$ we scan to the left along~$\SC B$ (importantly the shortcut version) until we find the first
\begin{itemize}
\item right ray intersection $\RrayI{p}$, leading to $p$ as the selected point and $\RrayP{p}=\RrayI{p}$ as boundary; or 
\item equality point~$e$.  We choose $p:=e$, and $\RrayP{p}=e$ as boundary.
\end{itemize}
 
If~$\SC x$ is inside of~$\UH(A)$ necessarily $\SC x=x$ and $\overline{xr}$ contains precisely one lost equality point~$e_x$.
In this case the (bottom) boundary of the construction site is set to $p=e_x$.
By the geometry of the situation we know that $x$ is to the left of the vertical line through~$p=e_x$, and we search through $\SC L\subset \SC B$ for the intersection $\RrayP{p}$ with the vertical line through~$e_x=p$.

With the symmetric arguments we determine the rightmost point $q\in A$ and its corresponding point $\LrayP{q} \in \UH(\SC{B})$.
We place the points of the new $\SC B$ between~$\RrayP{p}$ and $\LrayP{q}$ into one initial split-array~$T$.
These are at most constantly more points than~$\SC{L}$.

To create initial construction sites, we split the situation at all lost equality points and selected points $p_i \in A$.
We split~$T$ to find the corresponding $\LrayP{p_i}=\RrayP{p_i}$ on the vertical line through~$p_i$.
On each of them we call the recursive \pcrefineBA{} algorithm of Section~\ref{sec:refineBAq}.

Observe that this creates only constantly many construction sites:
There can be at most 4 lost equality points, and by Lemma~\ref{lem:LimitedImpact} at most eleven affected selected points. 
Hence, the initial potential of the construction sites is justified in different ways:
The potential in the boundaries is only constant, together with the potential in the split-arrays, this is $O(L)$.
The potential in the splitters (unselected points) comes directly from the potential of the static data structure before the replace operation.

We return $\RrayP{p}$ and~$\LrayP{q}$ as the unchanged points that enclose all changes on the locally upper hull.

\subsection{Phase~II Initializing: Extending and Establishing Streaks of Known Polarity $A$ over $B$}\label{sec:initAB}

It remains to describe how to create (initial) construction sites for all affected streaks of polarity~$A$ over~$B$, initializing the adaptive refinement procedure of Section~\ref{sec:refineAB}.
By Phase~I, all equality points and streaks of polarity~$B$ over~$A$ of the new situation are established, together with a valid separation certificate.
Perhaps, as in the case of joining streaks, this phase consisted of just one scan over the new and freshly surfacing points, showing that no such streak or equality point exists.
What remains is to create or repair the separation certificate for all affected streaks of polarity~$A$ over~$B$.
The situation, calling convention, and procedure is summarized in pseudocode as Algorithm~\ref{alg:reestablishAB}.
% Again, the amortized running time of the procedure is~$O(|L|)$.

\code{reestablishAB}

% \begin{itemize}
% \item \ref{algrep:reestII} $(E,x'',y'')$ = reestablishII$(\SC x,\SC y)$\\
%   returns $x'',y''$ on locally outer hull, enclosing all changes to intersections and the set $E$ of new (changed) equality points .\\
%   Creates all (parts of) certificates $A$ above $B$; 
% \end{itemize}

%\FullCall{reestablishAB}{(E,x'',y'')}{\SC x,\SC y}

%X%% \paragraph{remind interface: x,y, B/A established, A fresh or small}
The initialization starts from the two points~$\SC x,\SC y\in \SC B$, enclosing the replaced points~$\SC r$ on~$\SC B$.
(Line~\ref{alg:reestablishAB:scan}) A scan along the points of~$\SC{L}$ takes $O(L)$ time and will find all new or changed equality points.
These are stored to be returned.%\todo{what we use, and return?}

(Line~\ref{alg:reestablishAB:newCS}) It is easy to deal with pairs of equality points in this region that enclose a streak of polarity $A$ over~$B$.
For such a streak we place all points of~$\SC A$ between the equality points into a split array~$T$, and all points of~$B$ between the equality points into a new splitter~$M$.
With this, we call the recursive refinement of Section~\ref{sec:refineAB}.
Note that this includes the streaks called~$\gamma$ in Figure~\ref{fig:cases}, case~(d), (e) and~(f).
Aggressive shortcuts make sure that the part of~$\SC A$ that was already outside~$B$ has constant size. 
Hence, the potential of~$T$ is linear in the number of points of~$A$ that are freshly outside~$B$ as a result of the replace operation, which means that their potential drops at least by 1 from 3 to 2, allowing to pay for this.
All points freshly placed in a splitter are points of~$L$, and any edge on~$L$ can have at most two equality points.
Hence the potential in the splitters and initial boundaries of all these new construction sites is~$O(L)$.

If the leftmost equality point~$e$ has left of it a streak of polarity $A$ over~$B$, we need to extend or adjust the certificate of an already existing streak of that polarity.
This can only happen if there was one equality point on~$\overline{xr}$, and hence $\SC x = x$.

(Line~\ref{alg:reestablishAB:adjustRay}) If~$x$ is a selected point of~$B$, its left ray changes, and we create a construction sites that has a right boundary at~$x$.
Remember that by our definitions, even if~$x$ should be the leftmost point of the situation, there is always an implicit equality point at infinity. 
Scanning for the new ray intersections of~$x$ with the shortcut hull~$\SC{A}$ touches only constantly many points of~$A$ that are not freshly surfacing.
Because $A$ and~$B$ do not change to the left of~$x$, the splitter to the left of~$x$ is still completely inside~$\UCo(A)$.
Hence, Lemma~\ref{lem:MonotonicShadows} guarantees monotonicity of shadows. 
With this, we call the recursive refinement of Section~\ref{sec:refineAB}.
The symmetric procedure works for a rightmost equality point whose right streak is of polarity $A$ over~$B$.
Such a construction site has constant additional potential ($T$ has constant size, the splitter has its potential from the static situation, and the boundaries are constant).

(Line~\ref{alg:reestablishAB:adjustEq}) Even if $x$ is not selected, it is the rightmost point in an open splitter~$M$ of a potentially invalid (ray through the equality point changed) right boundary certificate of a selected point~$p\in B$ (it could be~$p=x$, $M=\{p\}$).
We extend~$M$ to the right by the points of~$L\subset B$ between~$x$ and~$e$.
We scan to the left on~$\SC A$ starting at~$e$ and place the points into a split array~$T$ up to the ray intersection~$\RrayI{p}$, which is the first ray intersection we meet.
The potential in~$M$ is at most $O(|L|)$ and $T$ has constant size such that this step is amortized $O(|L|)$.

(Line~\ref{alg:reestablishAB:join}) A last possibility is that no new equality point exists.
If the replace happened completely outside~$A$, there is no new or changed streak of polarity $A$ over~$B$, and we are done.
Otherwise, we are in the case of two streaks of polarity $A$ over~$B$ joining, as illustrated in Figure~\ref{fig:cases}, case~(c).
Necessarily, we have $\SC x=x$ and $\SC y=y$ and $\SC L=L$.
Further, $x$ is the rightmost point of the open splitter $M_1$ of the boundary certificate at the selected point~$p$ of the lost equality point on~$\overline{xr}$.
Similarly, $y$ is the leftmost point of the open splitter $M_2$ of the boundary certificate at~$q$ of the lost equality point on~$\overline{ry}$.
We create a new closed (crucial!) splitter~$M$ by a join of $(M_1,L,M_2)$.
We place the points of~$\SC A$ between $\RrayP{p} = \RrayI{p}$ and $\LrayP{q}=\LrayI{q}$ into the split array~$T$.
% The potential of this is released by all but constantly many of the points freshly surfacing and hence having a potential difference of~1.
This yields a construction site as in Section~\ref{sec:refineAB}. %, only that the splitter is closed, which is allowed for a simple construction site of Algorithm~\ref{sec:refineAB}.
If $x$ or~$y$ are selected, their rays change, which is already handled in Line~\ref{alg:reestablishAB:adjustRay}.

\vote{section Analysis?}
It remains to argue for the initial potential of this joining construction site.
The points in the interior of the splitters are either from~$L$ (new), or have sufficient potential from before the replace operation.
Because we work only with the shortcut version of~$A$, the potential in the split-arrays of these construction sites is paid by points of~$A$ being no longer inside~$B$ (plus a constant), which releases, as noted before, a potential of 1 per point.

All in all, after the construction sites are finished, we established complete certificates for all new or changed streaks of polarity $A$ over~$B$.
To be able to return the working slab in the form of the leftmost and rightmost point of change on the locally upper hull, we keep track of the extreme points ever put in a split-array.

\subsection{Representation of construction sites}\label{sec:reprCS}\vote{needed?}

We implement a construction site in the straight forward way:
Because a splitter always contains the two enclosing (selected) points as first and last element, but not an enclosing equality point.
The connection to the data structure of the certificate for~$A$ is clear.
Only the pointer from the selected point to the two splitter containing it as extreme point need to be maintained explicitly.
For~$B$ this is slightly more complicated because $\RrayP{p}$ and $\LrayP{q}$ can be on the segments of~$\UH(\SC{B})$ and~$T$ can be empty.
This complication is easily addressed by storing a pointer to a neighboring point of $\RrayP{p}$ and $\LrayP{q}$ on~$\UH(\SC{B})$ and start operations by exploring the geometry of the neighbors of the pointers. 
This approach also works when splitting a construction site by calling the split operation of~$T$:
If the resulting split-arrays are non-empty we can find appropriate points from the new leftmost or rightmost element of the split-array.
If one of the resulting split-arrays is empty, appropriate pointers are neighbors of the corresponding pointer of~$T$.
Hence we can assume that we have direct access to $\RrayP{p}$ and $\LrayP{q}$ on~$\UH(\SC{B})$.

%%%%%%%%%%%%%%%%%%%%%%%%%%%%%%%%%%%%%%%%%%%%%%%%%%%%%%%%%%%%%%%%%%%%%%
\section{Dealing with Degeneracy}
\label{sec:degenerate}
%%%%%%%%%%%%%%%%%%%%%%%%%%%%%%%%%%%%%%%%%%%%%%%%%%%%%%%%%%%%%%%%%%%%%%%%%%%%%

\vote{out, refer to standard techniques instead?}
So far we used the assumption that input points are in general
position to avoid dealing with degenerate cases.
This is convenient, as it allows us to concentrate on the important
situations instead of getting drowned by special cases.
Here we summarize the situations that benefited from this assumption,
and suggest how to modify the algorithms to correctly deal
with the degenerate cases.

In the data structure (and algorithm) we have to be prepared to find
points that act in two roles simultaneously, for example an
input point that is an equality point or an input point that is also a
ray intersection.

If the same points can be both in $A$~and~$B$, we might see a stretch
of~$\UH(A)$ that coincides with a stretch of~$\UH(B)$.
We can easily handle this by treating such a stretch as an extended
equality point.  Introducing one more stage in the life-cycle of a
point, there is no problem with the accounting.
We have to allow the deletion of points that are in such an equality
stretch.

If two oppositely directed rays intersect the other hull at the same
position, both the defining points are allowed to be selected (the
alternative would be to consider them to be in the shadow of each
other).
\vote{do we need to identify all cases that need extra treatment?}

%% %
%% Naturally we allow these ``touching'' certificates to not have a gap. 
%% %
%% In return we have to catch the special case of a tangent certificate
%% being based on one line (two guards, no candidate point) coinciding
%% with a segment of the other hull, because this certificate contains an
%% equality point (which is identified by checking for the case).

In the bridge-finding we have to break ties in a way that the
resulting hull does not have two collinear segments.

\vote{argue that this is all?}

%%%%%%%%%%%%%%%%%%%%%%%%%%%%%%%%%%%%%%%%%%%%%%%%%%%%%%%%%%%%%%%%%%%%%%%%%%%%%
\section{Lower Bounds}
\label{sec:LowerBounds}
%%%%%%%%%%%%%%%%%%%%%%%%%%%%%%%%%%%%%%%%%%%%%%%%%%%%%%%%%%%%%%%%%%%%%%%%%%%%%

In this section we prove Theorem~\ref{thm:CHLB}, implying that any convex hull data structure with
$O(n^{1-\varepsilon})$ query time, for any constant $\varepsilon>0$,
must have amortized $\Omega(\log n)$ query and insertion time, i.e., we derive
lower bounds on running times that asymptotically match the quality of
the data structures we presented in the previous sections.
\vote{do we need to check if this improves Chazelle?}
% We prove the results in a non-uniform setting where the algorithm of
% the data structure can depend on the number of stored items and the
% number of queries, but not on the values of the stored items and
% queried values.  
The amortized running time functions in the following
theorems are defined by the property that the insertion of~$n$ 
elements (starting with an empty data structure) 
takes total~$n\cdot I(n)$ time.
Note that we do not require the functions to be non-decreasing.

Our method is reduction based.  We use 
a semidynamic insertion-only convex hull data structure 
to solve a parametrized decision problem, arriving at the
lower bound.  The lower bound on the decision problem holds for
algebraic computation trees~\cite{ben-or83}.  A real-RAM algorithm 
for a decision problem can
be understood as generating a family of decision trees, the height of
the tree corresponds to the worst-case execution time of the
algorithm.  This is the model used in the seminal work by Ben-Or, from
where we take the main theorem \cite[Theorem~3]{ben-or83} that bounds
the depth of a computation tree in terms of the number of connected
components of the decided set.  We consider the following decision
problem, a variant of element-uniqueness.

\begin{definition}
%\vbox{%
  For a vector~$z=(x_1,\ldots,x_n,y_1,\ldots,y_k) \in\Rset^{n+k}$ we
  have $z\in \DiSeP\subset\Rset^{n+k}$ if and only if $y_1 <
  y_2 < \cdots < y_k$ and for all~$i$ and~$j$ it holds~$x_i\neq
  y_j$.
\end{definition}
\vspace{-1.3ex}
\begin{lemma}
\label{lem:DSLB}
%  There is a constant~$c > 0$ such that 
  For any natural numbers
  $k$~and~$n$ with $10\leq k\leq n$, the depth~$h$ of an algebraic
  computation tree (the running time of a real-RAM algorithm) deciding
  the set \DiSeP\ is lower bounded by~$h \geq \frac{1}{57}\cdot n\cdot\log k$.
\end{lemma}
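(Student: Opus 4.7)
The plan is to apply the connected components lower bound of Ben-Or~\cite{ben-or83} to the set $\DiSeP \subseteq \mathbb{R}^{n+k}$. Ben-Or's Theorem~3 states that an algebraic computation tree deciding $W\subseteq\mathbb{R}^N$ must have depth $h$ satisfying $h \geq c_1 \log \#C(W) - c_2 N$ for absolute constants $c_1,c_2$. So the task reduces to (i) giving a large lower bound on the number of connected components $\#C(\DiSeP)$, and (ii) doing the numerical bookkeeping to extract the concrete constant $\tfrac{1}{57}$.

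For (i) I would partition $\DiSeP$ by combinatorial type. For each map $\sigma:\{1,\ldots,n\}\to\{0,1,\ldots,k\}$ define
\[
R_\sigma \;=\; \bigl\{(x,y)\in\mathbb{R}^{n+k} : y_1<\cdots<y_k \text{ and } y_{\sigma(i)}<x_i<y_{\sigma(i)+1} \text{ for all } i\bigr\},
\]
with the convention $y_0=-\infty$, $y_{k+1}=+\infty$. Each $R_\sigma$ is a non-empty open convex subset of $\DiSeP$, hence connected. For $\sigma\neq\sigma'$ any continuous path inside $\DiSeP$ from $R_\sigma$ to $R_{\sigma'}$ would, by the intermediate value theorem, force some $x_i$ to cross some $y_j$; but such a crossing point violates $x_i\neq y_j$ and so lies outside $\DiSeP$. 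Hence the $R_\sigma$ lie in distinct components, giving $\#C(\DiSeP)\geq (k+1)^n$.

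For (ii) Ben-Or's theorem then yields $h \geq c_1 \cdot n\log(k+1) - c_2(n+k)$. Using the hypotheses $10\leq k\leq n$, the additive correction satisfies $c_2(n+k) \leq 2c_2 n = \frac{2c_2}{\log k}\cdot n\log k \leq \frac{2c_2}{\log 10}\cdot n\log k$, so this correction can be absorbed into the main $n\log k$ term at the cost of a constant factor, yielding $h \geq \alpha\, n\log k$ for an explicit $\alpha>0$. A careful substitution of the specific constants appearing in Ben-Or's Theorem~3 (which are small and depend on the base of the logarithm used) produces the stated $\alpha = \tfrac{1}{57}$ once $k\geq 10$.

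The combinatorial step is essentially routine once the parametrization by $\sigma$ is set up; the only genuine obstacle is arithmetic. Controlling the final constant $\tfrac{1}{57}$ requires picking a clean version of Ben-Or's inequality, keeping track of the base change between $\log_2$, $\log_3$, and $\ln$, and checking that the slack afforded by $k\geq 10$ is enough to fully absorb the $O(n+k)$ correction into the leading term without further loss.
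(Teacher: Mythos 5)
Your proposal is correct and follows essentially the same route as the paper: both count $(k+1)^n$ connected components by the distribution of the $x_i$ into the $k+1$ intervals determined by the $y_j$, apply Ben-Or's Theorem~3, and use $10\leq k\leq n$ to absorb the additive $O(n+k)$ term into the leading $n\log k$ term. The paper simply carries out the final arithmetic explicitly (from $2^h3^{n+k+h}\geq(k+1)^n$ to $(1-\frac{\log 9}{\log 10})/(1+\log 3)\geq\frac{1}{57}$), which is the step you defer.
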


\begin{proof}
  Let $y_i=i$ for $i=1,\ldots,k$.
  There are~$(k+1)^n$ ways of distributing the values~$x_i$ into the
  $k+1$ intervals formed by $\Rset\setminus \{y_1,\ldots,y_k\}$.
  No two vectors that have different
  such distributions can be in the same connected component of
  \DiSeP.
  Using \cite[Theorem~3]{ben-or83} this implies~$2^h
  3^{n+k+h}\geq(k+1)^n$.
  Taking the  base~2 logarithm and rearranging terms yields
  $h(1+\log 3) \geq n \log (k+1) - (n+k) \log 3 \geq 
  ( 1- \frac{2\log 3}{\log k})n\log k \geq 
  ( 1- \frac{\log 9}{\log 10}) n\log k $,
  where we use $\log (k+1) > \log k$ and $10\leq k\leq n$.
  The claimed lower bound follows from
  $(1-\frac{\log 9}{\log 10})/(1+\log3)\geq\frac{1}{57}$.
%%
%  This is immediate if $k\geq 10$,
%%
%  otherwise $k=9$, and we do not estimate $\log k+1$ as $\log k$, but
%  instead use
%%
%  $\frac{\log 10}{\log 9} - 1 > 1-\frac{\log 9}{\log 10}$.  
\end{proof}
  
\begin{definition}
  The \pc{semidynamic (kinetic) membership} problem asks for a data structure that
  maintains a set~$S$ of real numbers under insertions, and allows for
  a value~$x$ the membership query~$x\in S$.
  In the kinetic case the membership query~$x$ is required to be not smaller than
  any previously performed query.
\end{definition}

\begin{theorem}
\label{thm:KinLB}
  Let~$\cal A$ be a data structure for the \pc{semidynamic
    kinetic membership} problem.  
  Assume that the time for $n$
  \pc{Insert} operations is bounded by~$n\cdot I(n)$ and the
  amortized running time for the \pc{kinetic-find-min} query is
  bounded by~$q(n)$.
%\conclusion
  Then we have
%  $ I(n) = \Omega \left( \log \frac{n}{q(n)} \right)$.
  $ I(n) = \Omega ( \log \frac{n}{q(n)} )$.
\end{theorem}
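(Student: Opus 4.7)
The plan is to reduce $\DiSeP$ to the semidynamic kinetic membership problem and then combine the reduction with Lemma~\ref{lem:DSLB}. Given an instance $(x_1,\ldots,x_n,y_1,\ldots,y_k)$ in which the $y_j$'s are already sorted, I would initialize $\cal A$, insert $x_1,\ldots,x_n$ via the \pc{Insert} operations, and then perform kinetic membership queries for $y_1,y_2,\ldots,y_k$ in this order. The kinetic precondition is satisfied because $y_1 < \cdots < y_k$, and the input lies in $\DiSeP$ exactly when every query reports non-membership. So this real-RAM procedure decides $\DiSeP$, and, using the hypothesis of the theorem, its worst-case running time is at most $n \cdot I(n) + k \cdot q(n) + O(n+k)$.

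Since the procedure runs on the real-RAM, it generates an algebraic computation tree of depth at most the running time above. Feeding this bound into Lemma~\ref{lem:DSLB}, whenever $10 \leq k \leq n$, I obtain
\[
n \cdot I(n) + k \cdot q(n) \;=\; \Omega(n \log k)\,.
\]
The remaining task is to choose $k$ so that the insertion term dominates. If $q(n) \geq n/20$, the claimed bound degenerates to $I(n) = \Omega(1)$ and is trivial, so I may assume $q(n) < n/20$ and set $k = \lceil n/(\alpha\, q(n))\rceil$ for a sufficiently large constant $\alpha$. This guarantees $10 \leq k \leq n$, and $k \cdot q(n) \leq n/\alpha + q(n) = O(n)$ is dwarfed by $n \log k = \Theta(n \log(n/q(n)))$, so choosing $\alpha$ large enough absorbs the query term into, say, half of the right-hand side of the displayed inequality. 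Rearranging gives $I(n) = \Omega(\log k) = \Omega(\log(n/q(n)))$, as required.

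The only mildly delicate step is the bookkeeping around the choice of $k$: one has to verify simultaneously that $k$ lies in the admissible range of Lemma~\ref{lem:DSLB}, that $\log k$ is within a constant factor of $\log(n/q(n))$, and that $k \cdot q(n)$ is negligible against $n \log k$. The amortized-to-total-time conversion is immediate, because the theorem's hypothesis is already phrased as bounds on the total cost of $n$ insertions and on per-query amortized cost, which matches exactly what the lower bound on the computation-tree depth requires.
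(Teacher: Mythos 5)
Your proof is correct and follows essentially the same route as the paper's: the identical reduction from \DiSeP{} (insert the $x_i$, then query the sorted $y_j$ in order), an application of Lemma~\ref{lem:DSLB}, and the choice $k\approx n/q(n)$. The only difference is cosmetic bookkeeping: the paper takes $k=\floor{n/q(n)}$ and disposes of $k\leq 9$ as a trivial case, which sidesteps the tension in your choice $k=\ceil{n/(\alpha\,q(n))}$ between making $\alpha$ large (to absorb the query term) and keeping $k\geq 10$ (for large $\alpha$ your $k$ can fall below $10$ even when $q(n)<n/20$, but in that regime $\log(n/q(n))=O(1)$ and the claimed bound is trivially $\Omega(1)$ anyway).
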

\begin{proof}
  Fix an arbitrary~$n$ and choose the parameter $k= \floor{n/q(n)}$.
  If $k\leq 9$ we have $\log\frac{n}{q(n)}\leq\log 10$, hence the
  theorem is true for such~$n$.%
  \vote{discuss that always $I(n)=\Omega(1)$? Hence -2 does not hurt}
  Otherwise we have $k\geq 10$, and we consider the following reduction
  from \DiSeP.
  Let
% the vector
  $z=(a_1,a_2\ldots,a_n,b_1,b_2\ldots,b_k) \in \Rset^{n+k}$
  be some input to \DiSeP.
  We check in time~$k$ whether we have~$b_1< b_2 < \cdots < b_k$.  
  If this is not the case, we reject.
  We insert all the~$n$ values~$a_i$ into~$\cal A$.
  Then we perform~$k$ queries for~$b_1,\ldots,b_k$.
  If one of the queries returns~$b_j\in S$, i.e., $b_j=a_i$ for
  some~$i$ and~$j$, we reject, otherwise we accept.
  This algorithm correctly solves the \DiSeP\ problem.

\pagebreak[2]  
  By Lemma~\ref{lem:DSLB} and the running time of this algorithm we
  get
  $ I(n)\cdot n + (q(n)+1) \cdot k \geq \frac{1}{57} \cdot n \cdot\log k $.
% for the constant~$c$ in the proof of Lemma~\ref{lem:DSLB}. 
%
  Using our choice of~$k$ we get 
  $I(n)\cdot n+n+k\geq \frac{1}{57}\cdot n\cdot\log\floor{n/q(n)}$.  
  Dividing by~$n$ and rearranging terms
  yields $ I(n) \geq \frac{1}{57}\cdot\log( \floor{n/q(n)}) - 2$.
%  Bounding the right hand side and rearranging arrives at
%  $$ q(n) \geq \frac{n}{2^{(I(n)+1)/c} + 1} = \frac{n}{2^{O(I(n))}}. $$
\end{proof}

\begin{theorem}
\label{thm:MembLB}
  Consider a data structure for the \pc{Semidynamic
    Membership} problem on the real-RAM supporting
  \pc{Member} queries in amortized $q(n)$ time, 
%  and \pc{Insert} in amortized~$I(n)$ time 
  for size parameter~$n$. 
%  \conclusion 
  Then we have
%$$ I(n) = \Omega \left( \log \frac{n}{q(n)} \right)\;, $$
%  and for any~$I(n)$ we have
$ q(n) = \Omega( \log n ) $.
\end{theorem}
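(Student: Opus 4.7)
The plan is to prove Theorem~\ref{thm:MembLB} by a reduction from \DiSeP, following the same template as the proof of Theorem~\ref{thm:KinLB}, but with the parameters of the reduction swapped so that the unbounded insertion cost is diluted by performing very many queries. Concretely, in Theorem~\ref{thm:KinLB} the roles of $n$ (the inserted set size) and $k$ (the number of queries) are arranged so that $n \gg k$; here we will arrange $m \gg n$ so that the amortized contribution of insertions per query becomes negligible.

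I would apply Lemma~\ref{lem:DSLB} with the roles of its two parameters exchanged: for $10\leq n\leq m$, deciding $\textup{\textrm{DisjointSet}}_{m,n}$ (with $m$ values $x_1,\dots,x_m$ and $n$ sorted values $y_1,\dots,y_n$) requires depth at least $\frac{1}{57}\,m\log n$ in any algebraic computation tree. The reduction is then the natural one: verify $y_1<\cdots<y_n$ in $O(n)$ time, insert $y_1,\dots,y_n$ into the semidynamic membership data structure in total time at most $n\cdot I(n)$, and perform the $m$ membership queries $x_1,\dots,x_m$ in total amortized time at most $m\cdot q(n)$; accept iff every query returns ``not a member''. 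Since this correctly decides $\textup{\textrm{DisjointSet}}_{m,n}$, Lemma~\ref{lem:DSLB} gives
\[
 O(n) + n\cdot I(n) + m\cdot q(n) \;\geq\; \tfrac{1}{57}\, m\log n.
\]

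Now I would choose $m$ large enough to absorb the insertion term. Specifically, for any fixed $n\geq 10$, pick $m = m(n)$ so that $m\geq n$ and $n\cdot I(n) \leq \tfrac{1}{200}\, m\log n$, which is possible as long as $I(n)$ is finite (any finite $m\geq \max(n,\,200\, n\, I(n)/\log n)$ works). Substituting into the displayed inequality and dividing by $m$ yields $q(n) \geq \frac{1}{57}\log n - \frac{n\, I(n) + O(n)}{m} \geq \frac{1}{57}\log n - \frac{1}{100}\log n = \Omega(\log n)$, which is the desired bound.

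The main subtlety is that the theorem makes no assumption on $I(n)$: it constrains only the amortized query time. This is why the reduction must be run with a number of queries $m$ that can depend on $I(n)$; the $\DiSeP$ inequality holds for every legal choice of $m$, so we may pick the worst one for the data structure. Beyond this, the only minor technicality is verifying that the sortedness check can be absorbed into the $O(n)$ additive term and that the amortized guarantee $m\cdot q(n)$ is indeed an upper bound on the total query cost of a sequence of $m$ queries on a structure of size $n$ (which is the standard reading of ``amortized $q(n)$ per query, for size parameter~$n$'').
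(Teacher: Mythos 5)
Your proposal is correct and is essentially the paper's own argument: the paper also reduces from \DiSeP{} by inserting the sorted values and issuing a large number of membership queries chosen to swamp the (finite but unconstrained) insertion cost — it fixes the number of queries as $I_k+k$ where $I_k$ is the worst-case time to insert the $k$ set elements, whereas you take any sufficiently large $m$; this is only a cosmetic difference. The resulting inequality and the division by the number of queries to obtain $q(n)\geq \tfrac{1}{57}\log n - O(1)$ match the paper's proof.
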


\begin{proof}
  Let~$I_k$ be the worst-case time it takes to insert~$k$ elements into an
  initially empty data structure,  and let~$n= I_k + k$.
  We make a reduction from \DiSeP{} as follows.
  Let
% the vector~
  $z=(a_1,\ldots,a_n,b_1,\ldots,b_k) \in \Rset^{n+k}$
  be an input to \DiSeP.  We check in time~$k$ whether we
  have~$b_1< b_2 < \cdots < b_k$.  If this is not the case,
  we reject.  
  We insert the values~$b_1,\ldots,b_k$ into the data structure.
  Now we perform~$n$ queries for the values $a_1,\ldots,a_n$.
  This algorithms correctly solves~\DiSeP.
  By Lemma~\ref{lem:DSLB} we get for $k\geq 10$ 
%  and some constant~$c$ 
  the inequality
  $ k + I_k + n\cdot q(k)  \geq \frac{1}{57}\cdot  n\cdot \log k$.
  Using~$k +I_k = n$ and dividing by~$n$ we get
%  $q(k) \geq c\cdot\log k -  k\cdot(1+I(k))$
  $ q(k) \geq \frac{1}{57} \log k - 1$.
\end{proof}

A data structure for the \pc{Semidynamic Predecessor Problem} maintains a
set~$S$ of real numbers under insertions, and allows queries for~$r$,
reporting the element~$s\in S$, such that $s\leq r$, and there is
no~$p\in S$ with~$s<p\leq r$.
From Theorem~\ref{thm:KinLB} and Theorem~\ref{thm:MembLB} follows the
next corollary. 
\begin{corollary}
  Consider a data structure implementing the \pc{Semidynamic
    Predecessor Problem} on the real-RAM that supports
  \pc{Predecessor} queries in amortized~$q(n)$ time, and \pc{Insert} in
  amortized~$I(n)$ time for size parameter~$n$.
%  Assume that~$q$ and~$I$ are smooth functions. 
 Then we have
$ q(n) = \Omega( \log n )$ and
%  \qquad\hbox{and}\qquad
%  $I(n) = \Omega \left( \log \frac{n}{q(n)} \right).$
  $I(n) = \Omega ( \log \frac{n}{q(n)} ).$
\end{corollary}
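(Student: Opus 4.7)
The plan is to observe that a Semidynamic Predecessor data structure directly simulates both a Semidynamic Membership and a Semidynamic Kinetic Membership data structure with the same asymptotic update and query times, and then to apply Theorem~\ref{thm:MembLB} and Theorem~\ref{thm:KinLB} respectively.

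First I would set up the reduction from membership to predecessor. Given a predecessor data structure supporting insertions in amortized $I(n)$ time and predecessor queries in amortized $q(n)$ time, a membership query ``$x \in S$?'' is answered by performing a predecessor query on $r = x$ and testing in $O(1)$ extra real-RAM time whether the returned element equals $x$. The resulting membership data structure inherits amortized insertion time $I(n)$ and amortized membership query time $q(n) + O(1) = O(q(n))$. By Theorem~\ref{thm:MembLB} applied to this simulated membership structure, we immediately obtain $q(n) = \Omega(\log n)$.

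Next I would apply the same reduction to obtain a kinetic membership data structure: the predecessor-based membership simulation above clearly also handles queries whose values are non-decreasing over time (the kinetic restriction is a special case of the general query model). Hence Theorem~\ref{thm:KinLB} applies, yielding $I(n) = \Omega(\log(n/q(n)))$.

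There is essentially no obstacle here — the only thing worth double-checking is that the $O(1)$ additive overhead of the equality test does not affect the asymptotic statements of the invoked theorems, which it does not since both theorems are stated up to $\Omega(\cdot)$ and the constant extra work per query is absorbed. The corollary follows by combining the two lower bounds.
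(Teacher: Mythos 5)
Your proof is correct and matches the paper's (implicit) argument: the paper simply states that the corollary "follows from Theorem~\ref{thm:KinLB} and Theorem~\ref{thm:MembLB}," and the reduction you spell out — answering a membership query by a predecessor query plus an equality test, and noting that this also serves kinetic membership — is exactly the intended route.
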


\begin{theorem}
\label{thm:KinHeapLB}
  Consider a kinetic heap data structure.  Assume that 
	the time for $n$
	\pc{Insert}
  operations is bounded by~$n\cdot I(n)$ and the amortized running time for
  the \pc{kinetic-find-min} query is bounded by~$q(n)$.  Then we have
%  $ I(n) = \Omega \left( \log \frac{n}{q(n)} \right)$.
  $ I(n) = \Omega ( \log \frac{n}{q(n)} )$.
\end{theorem}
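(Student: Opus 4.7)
The plan is to reduce the \pc{semidynamic kinetic membership} problem of Theorem~\ref{thm:KinLB} to a kinetic heap, thereby inheriting the desired lower bound. Given any kinetic heap with insertion bound $I(n)$ and amortized \pc{kinetic-find-min} bound $q(n)$, I would build a semidynamic kinetic membership data structure whose bounds differ from these only by additive constants and then invoke Theorem~\ref{thm:KinLB}.

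The reduction uses the tangent lines of the parabola $y=-x^2$. To insert a real number $a$ into the membership set $S$, I insert into the kinetic heap the line $\ell_a(x)=-2ax+a^2$, which is tangent to $y=-x^2$ at $x=a$. For any query time $t$,
\[
\min_{a\in S}\ell_a(t) \;=\; -t^2 \,+\, \min_{a\in S}(a-t)^2,
\]
so this minimum equals $-t^2$ precisely when $t\in S$, and is strictly greater otherwise. A membership query at $t$ therefore consists of one \pc{kinetic-find-min} query, one evaluation of the returned line at $t$, and one comparison against $-t^2$, adding only $O(1)$ real-RAM work per query. Kinetic membership queries are by definition made in non-decreasing order of $t$, which matches exactly the ordering requirement of \pc{kinetic-find-min}, so the reduction respects the kinetic constraint.

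The derived semidynamic kinetic membership data structure therefore satisfies $I'(n)\leq I(n)+O(1)$ and $q'(n)\leq q(n)+O(1)$. Plugging these into Theorem~\ref{thm:KinLB} yields
\[
I(n)+O(1) \;=\; \Omega\!\left(\log\frac{n}{q(n)+O(1)}\right) \;=\; \Omega\!\left(\log\frac{n}{q(n)}\right),
\]
where the second equality uses $q(n)=\Omega(1)$ (any query takes at least constant time). Rearranging gives the claimed $I(n)=\Omega(\log(n/q(n)))$.

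There is essentially no substantive obstacle: the whole argument hinges on exhibiting a line family whose lower envelope at $t$ carries one bit of information about the predicate $t\in S$, and the tangents of a fixed concave parabola provide exactly this. Since each line is constructed from its associated value by a constant number of arithmetic operations on a single input, the reduction stays within the algebraic real-RAM model underlying Theorem~\ref{thm:KinLB}.
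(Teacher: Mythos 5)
Your reduction is correct and is essentially the paper's own proof: the paper also inserts, for each value $a_i$, the tangent to $y=-x^2$ at $(a_i,-a_i^2)$ and answers a kinetic membership query $b_j$ via a single \pc{kinetic-find-min}$(b_j)$, then invokes Theorem~\ref{thm:KinLB}. The only cosmetic difference is that you test membership by evaluating the returned line at $b_j$ and comparing with $-b_j^2$, whereas the paper checks whether the returned line is the tangent through $(b_j,-b_j^2)$; both are valid $O(1)$ tests.
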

\begin{proof}
%  By reduction. 
%  Let~$\cal A$ be a data structure that implements a kinetic heap.
  We use a kinetic heap to solve \pc{semidynamic kinetic
  membership}.   
For an insertion of~$a_i$ we insert the tangent on the curve $y=-x^2$ at the
point~$(a_i,-a_i^2)$ with slope $-2a_i$.
For a member query~$b_j$ we perform \pc{kinetic-find-min}$(b_j)$. 
See Figure~\ref{fig:lower-bound}, left. 
If and only if the query returns the tangent line through~$(b_j,-b_j^2)$, we
answer~``$b_j\in S$''.
The Theorem follows from Theorem~\ref{thm:KinLB}.
\end{proof}

\begin{figure}[htb]
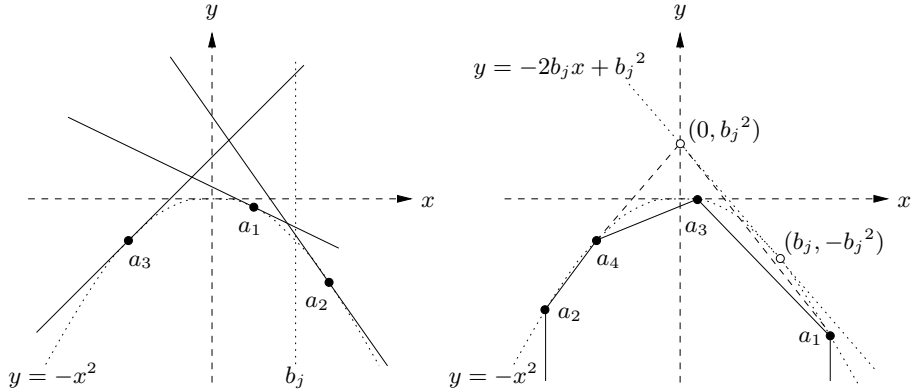

  \begin{center}
    \inputfig{lower-bound}
%WRITEMAG% lower-bound 0.7
    \caption[Lower bound reductions]
    {Lower bound reductions: Kinetic find min queries (left) and 
    tangent queries (right).}
    \label{fig:lower-bound}
  \end{center}
\end{figure}

Finally we conclude with the proof of the main lower bound theorem.

\begin{proof} (of Theorem~\ref{thm:CHLB})
  A semidynamic insertion-only convex-hull data structure can by duality be used
  as a kinetic heap (Section~\ref{sec:duality}).
  The lower bound on the insertions follows from  Thoerem~\ref{thm:KinHeapLB}.
  The lower bound on the queries follows from Theorem~\ref{thm:MembLB} 
  by reducing semidynamic membership queries to lower envelope queries using
  the same geometric reduction as in Theorem~\ref{thm:KinHeapLB}
  (Figure~\ref{fig:lower-bound}, left).
\end{proof}

Note that for~$q(n) = O(n^{1-\varepsilon})$, Theorem~\ref{thm:KinLB}
implies $I(n)=\Omega(\log n)$. 
Another example is that $I(n)=O(\log\log n)$ yields 
$q(n)=\Omega( n / \log^{O(1)} n)$. 
Theorem~\ref{thm:KinLB} shows that the amortized insertion times of
the data structure of Theorem~\ref{thm:KinHeap} and
Theorem~\ref{thm:fast} are asymptotically optimal.

If instead the convex-hull data structure provides only
tangent-queries, the same lower-bounds hold.
This follows because we can solve the semidynamic membership problem
by mapping a finite set $S\subset\Rset$ to the convex point set
$\{(a_i,-{a_i}^2)\mid a_i\in S\}$, and a query $b_j$ to the tangent
query for the point $(0,{b_j}^2)$.
The value $b_j$ is in $S$ if and only if one of the tangents is
the line~$y=-2b_ix+{b_i}^2$ (see Figure~\ref{fig:lower-bound}, right).
%% %
%% Instead of querying with the slope (the value~$b_i$ in the proof of
%% Corollary~\ref{cor:KinHeapLB}, used in the primal setting) of the line
%% that presumable exist, we query with a point on that line, for example
%% with the point~$(0,b_i^2)$.
%% %
%% If~$b_i$ is in the set, one answer line will be~$(y=2b_ix+b_i^2)$.

%% For the ``neighbor on the convex hull'' query (in our insertion-only
%% setting) we do not get a stronger lower bound than for the offline
%% convex hull computation.
%% %
%% More precisely let~$h$ be the number of points on the convex hull of
%% the inserted points, then queries are constant time, and insertions
%% are $\Theta(\log h)$. 
%% %
%% This follows from the optimal output-sensitive algorithm by
%% Kirkpatrick and Seidel~\cite{kirkpatrick86}.
%% \vote{useful info?}

%%%%%%%%%%%%%%%%%%%%%%%%%%%%%%%%%%%%%%%%%%%%%%%%%%%%%%%%%%%%%%%%%%%%%%
\subsection{Trade-Off}
\label{sec:tradeoff}
%%%%%%%%%%%%%%%%%%%%%%%%%%%%%%%%%%%%%%%%%%%%%%%%%%%%%%%%%%%%%%%%%%%%%%

%  The above lower bounds apply for all non-decreasing functions
%  $q(n)$~and~$I(n)$. 
%
%%  The standard data structures for membership
%% queries on the real-RAM are balanced search trees.  This establishes a
%% matching upper bound only for the cases where insertions are required
%% to take~$\Omega(\log n)$ time, namely for $q(n)=O(n^{1-\varepsilon})$.
%% We have the same situation for the dynamic planar convex hull problem.
%% This raises the question, whether there are data structures that match
%% the lower bound for other combinations of insertion and query times as
%% well.
For query time $O(n^{1-\varepsilon})$, the above lower bounds state
that query time and update time $O(\log n)$ is the best possible. Our
dynamic planar convex hull data structure matches these bounds and for
set membership queries the bounds are matched by balanced binary
search trees. In the following we present data structures that match
the lower bound for combinations of insertion and query times, where
the insertion time is $o(\log n)$.

There is one simple idea achieving a trade-off between insertion and
query times: we maintain several (small) search structures and
insert into one of them.  In return the query operation has to query
all the search structures.
We will describe the predecessor problem and use balanced search trees
as the underlying data structure.  
%% We focus on the insertion only case.  
%% This does not change the (spirit of)
%% the result, it only makes it more complicated to describe.  The
%% argument works for worst-case and amortized complexities.

We choose a parameter function~$s(n)$ that tells the data structure
how many elements might be stored in one search tree.  We assume
that~$s(n)$ is easy to evaluate (one evaluation in~$O(n)$ time
suffices) and non-decreasing.
The data structure keeps two lists of trees, one with the trees that
contain precisely~$s(n)$ elements and the other with the trees
containing less elements.  For an \pc{Insert}$(e)$ operation we
insert~$e$ into one of the search trees that contains less than~$s(n)$
elements.  If no such tree exists, we create a new one.  When~$s(n)$
increases, we join the two lists (all trees are now smaller
than~$s(n)$) and create an empty list of full search trees.  For a
query operation we query all the search trees and combine the result.

We achieve (amortized) insertion time $I(n)=O(\log s(n))$ and query time 
$q(n)=O(\frac{n}{s(n)} \log s(n) )$.  From Theorem~\ref{thm:KinLB} we
have that  given this query time the lower bound for the insertion time is
\vote{back to $-\log\log s(n)$?}%\relpenalty=100
%We consider the term 
$$\Omega\left(\log \frac{n}{q(n)}\right) = \Omega\left(\log \frac{s(n)}{\log s(n)}\right)
= \Omega\big(\log s(n)^{1-\varepsilon}\big) = \Omega(\log s(n))\;,$$
i.e.,  we have achieved % according to Theorem~\ref{thm:MembLB}
optimal (amortized) insertions.

If we are interested in data structures for the membership,
predecessor or convex hull problems that allow queries in~$q(n)$ time
for a smooth, easy to compute function~$q$,  then this technique
allows us to achieve data structures with asymptotically optimal
insertion times (because, in the terminology of
Overmars~\cite[Chapter~VII]{OvermarsPhD}, all these problems are decomposable search
problems).

If we want to accommodate deletions, a deletion should delete the
element from the search structure containing the point,
combined with standard global rebuilding whenever
$n$ has doubled or is halved. 
% Note that to achieve $o(\log n)$ deletion time it is
% crucial that we know for each element which search structure the element is
% stored in.\todo{GSB: Remove last comment?}
% %% we have to perform global rebuilding following a doubling technique.  

%%%%%%%%%%%%%%%%%%%%%%%%%%%%%%%%%%%%%%%%%%%%%%%%%%%%%%%%%%%%%%%%%%%%%%
\section{Open Problems}\label{sec:openProblems}
%%%%%%%%%%%%%%%%%%%%%%%%%%%%%%%%%%%%%%%%%%%%%%%%%%%%%%%%%%%%%%%%%%%%%%

It remains open whether a data structure achieving worst-case~$O(\log
n)$ update times and fast extreme-point queries exists.  
It is also unclear if other queries (like the segment of the convex
hull intersected by a line) can also be achieved in~$O(\log n)$ time,
or if it is possible to 
%maintain a balanced search tree of the points
%currently on the convex hull.
report $k$ consecutive points on the convex hull in $O(k+\log n)$ time.
Furthermore it would be desirable to come up with a simpler data
structure achieving matching running times.
\todo{last to do}
\vote{last vote}

%%%%%%%%%%%%%%%%%%%%%%%%%%%%%%%%%%%%%%%%%%%%%%%%%%%%%%%%%%%%%%%%%%%%%%%%%%%%%

%\bibliographystyle{acmtrans} 
%\bibliographystyle{elsart-num} 
\bibliographystyle{../sokalpha} 
%\bibliographystyle{alpha}
%\bibliography{../ch}

%
%\clearemptydoublepage
%\chapter*{}% References
%\bibliographystyle{alpha} 
%\addcontentsline{toc}{chapter}{Bibliography}
%\pdfbookmark{Bibliography}{Bibliography}
\bibliography{ch}

%\bibliographystyle{plain}

%\clearemptydoublepage
%\chapter*{}% References
%% \addcontentsline{toc}{chapter}{Index}
%% \pdfbookmark{Index}{Index}
%% \printindex

%\inputfig{allcode.tex}
\clearpage
\tableofcontents
%%%%%%%%%%%%%%%%%%%%%%%%%%%%%%%%%%%%%%%%%%%%%%%%%%%%%%%%%%%%%%%%%%%%%%
\clearpage
\section{Notation}
\begin{tabular}{cl}
  Notation & Description \\
  \hline
  $n$ & number of points, $n=|S|$ \\
  $k$ & output size ($k$ consecutive points on hull), $k$-level \\
  $S$ & current set of points \\
  $S_1, S_2, \ldots $ & subset of points \\
  $A,B$ & subsets maintained by a merger \\
  $\SC{A},\SC{B}$ & shortcut versions of $A$ and $B$ \\
  $r$ & point $r\in B$ replaced by $L$ \\
  $L$ & list of points to replace $r$ on $B$ \\
  $L'$ & new visible points due to a replace \\
  $l,r$ & left and right (e.g.\ $\ell_l$, $\ell_r$) \\
  $x,y$ & left and right neighbors of $r$ \\
  $e,e_1,e_2$ & equality point \\
  $b_1,b_2,...$ & bridges \\
  $P$ & set of bypasses \\
  $p,q$ & point (selected) \\
  $x,y$ & points, typical neighbors \\
  $\overline{xy}$ & line segment \\
  $u_0,v_0$ & points, define enclosure of replace \\
  $w=(w_s,w_t)$ & bypass in $P$ \\
  $\ell,\ell',\ell_1,\ell_2$ & lines, lines defining shortcuts \\
  $h_\ell$ & halfspace below line $\ell$ \\
  $r,h,\ell$ & root,height,leaf in join-tree \\
  $H=(H_A,H_B)$ & shortcuts on $A$ and $B$ \\
  $Q=(Q_A,Q_B)$ & selected points \\
%  $O$ & lOst auxiliary points (only(?) in \pcreplace)\\
  $\alpha$ & inverser Ackerman, slope of line \\
  $b(\cdot)$ & number of barrier levels \\
  $q(n)$ & query time \\
  $d$ & query direction \\
  $I(n)$ & insertion time \\
  $t,t'$ & time ($x$-value) in kinetic/parametric data structure \\
  $\IT$ & interval tree \\
  $\varepsilon$ & constant $\varepsilon>0$ \\
  $\UV(X)$ & vertices on upper hull of $X$ \\
  $\UC(X)$ & upper closure of $X$ \\
  $\UCo(X)$ & interior of upper closure of $X$ \\  
  $\UH(X)$ & upper hull of $X$ \\
  \hline
\end{tabular}

%%%%%%%%%%%%%%%%%%%%%%%%%%%%%%%%%%%%%%%%%%%%%%%%%%%%%%%%%%%%%%%%%%%%%%

\end{document}